\let\oldnl\nl% Store \nl in \oldnl
\newcommand{\nonl}{\renewcommand{\nl}{\let\nl\oldnl}}% Remove line number for one line
\newcommand{\eqdef}{\mathrel{\mathop:}=}
\newtheorem{theorem}{Theorem}[section]
\newtheorem{lemma}[theorem]{Lemma}
\newtheorem{proposition}[theorem]{Proposition}
\newtheorem{definition}{Definition}[section]
\newtheorem{remark}{Remark}[section]
\newcommand{\bsmat}{\begin{bmatrix} }
\newcommand{\esmat}{\end{bmatrix} }
\DeclareMathOperator{\erf}{erf}
\begin{document}

\title{\bf Differential Privacy with Random Projections\\ and Sign Random Projections \vspace{-0.1in}}

\author{
  \textbf{Ping Li, \ \ Xiaoyun Li  } \\
  LinkedIn Ads\\
  700 Bellevue Way NE, Bellevue, WA 98004, USA\\
  \texttt{ \{pinli,  xiaoyli\}@linkedin.com}
}
\date{\vspace{-0.36in}}
\maketitle

\begin{abstract}%\vspace{-0.02in}

\noindent In this paper, we develop a series of differential privacy (DP) algorithms from a family of random projections (RP), for general applications in machine learning, data mining, and information retrieval. Among the presented algorithms, \textbf{iDP-SignRP} is remarkably effective under the setting of ``individual differential privacy'' (iDP), based on sign random projections (SignRP). Also, \textbf{DP-SignOPORP} considerably improves existing algorithms in the literature under the standard DP setting, using ``one permutation + one random projection'' (OPORP), where OPORP is a variant of the celebrated count-sketch method with fixed-length binning and normalization. Without taking signs, among the DP-RP family, \textbf{DP-OPORP} achieves the best performance.

\vspace{0.1in}

\noindent The concept of iDP (individual differential privacy) is defined  only on a particular dataset of interest. While iDP is not strictly DP, it might be useful in certain applications, such as releasing a dataset (including sharing  embeddings across companies or countries). In our study, we find that \textbf{iDP-SignRP} is remarkably effective for search and machine learning applications, in that the utilities  are exceptionally good even at a  very small privacy parameter $\epsilon$ (e.g., $\epsilon<0.5$).

\vspace{0.1in}

\noindent Privacy can be protected at various stages of data life cycle. In this study, our methods can be applied as early as at data collection time. Instead of directly using the original $p$-dimensional vector $u$, we let $x_j = \sum_{i=1}^p u_i w_{ij}$, where $j=1$ to $k$ and $w_{ij}$ is sampled from the  Gaussian distribution. This is known as random projections (RP). In this paper, we assume the projection matrix $\{w_{ij}\}$ is also  released to the public, as a stronger privacy setting. The previous algorithm  named ``DP-RP-G'' can be improved quite considerably by setting the noise more carefully. The proposed new variant, named ``DP-RP-G-OPT'', compares favorably with directly adding noise to the original data. DP-RP-G-OPT can be further improved by ``DP-RP-G-OPT-B'' by replacing the Gaussian projections with Rademacher projections due to the smaller sensitivity. Finally, via the fixed-length binning mechanism, ``DP-OPORP'' slightly outperforms DP-RP-G-OPT-B. Empirical evidence shows that these DP-RP algorithms require $\epsilon\approx 20$ to achieve a good utility.

\vspace{0.12in}

\noindent Our key idea for improving DP-RP is to take only the signs, i.e., $sign(x_j) = sign\left(\sum_{i=1}^p u_i w_{ij}\right)$, of the projected data. The intuition is that the signs  often remain unchanged when the original data ($u$) exhibit small changes (according to the ``neighbor'' definition in DP). In other words, the aggregation and  quantization operations themselves provide good privacy protections. We develop a technique called \textbf{``smooth flipping probability''} that incorporates this intuitive privacy benefit of SignRPs and improves the standard DP bit flipping strategy. Based on this technique, we propose \textbf{DP-SignOPORP} which satisfies strict DP and outperforms other DP variants based on SignRP (and RP), especially when $\epsilon$ is not very large (e.g., $\epsilon = 5\sim10$). Moreover, if an application scenario accepts individual DP, then we immediately obtain an algorithm named \textbf{iDP-SignRP} which achieves excellent utilities even at small~$\epsilon$ (e.g., $\epsilon<0.5$).

\vspace{0.12in}

\noindent In practice, the main obstacle for deploying DP at the source of data is the severe degradation of performance  because the amount of required noise is typically high. We hope our proposed series of DP (and iDP) algorithms, e.g., \textbf{DP-OPORP}, \textbf{DP-SignOPORP}, \textbf{iDP-SignRP},  will help promote the industrial applications of DP  in search, retrieval, ranking, and AI in general.
\end{abstract}

\newpage

\hypersetup{
colorlinks,linkcolor=red,filecolor=blue,citecolor=blue,urlcolor=red, linktoc=page}

\tableofcontents

\hypersetup{
colorlinks,linkcolor=red,filecolor=blue,citecolor=blue,urlcolor=red}

\newpage

\section{Introduction}  \label{sec:intro}

With the rapid growth of capable electronic devices, personal data has been continuously  collected by  companies/organizations. Protecting data privacy has  become an urgent need and a trending research topic in computer science, statistics, applied mathematics, etc. Among many notions of privacy, the ``differential privacy'' (DP)~\citep{dwork2006calibrating} has gained tremendous attention in both the research community and industrial applications. The intuition of DP is straightforward:

\vspace{0.1in}
\leftskip=0.3in\rightskip=0.3in
\textit{We want to find a randomized data output procedure, such that a small change in the database can hardly be detected by an adversary based on the observation of the output.}
\vspace{0.1in}

\leftskip=1pt\rightskip=1pt

\vspace{0.1in}
\noindent Differential privacy provides a formal mathematical definition of privacy (see Section~\ref{sec:pre} for the details) that is powerful in the sense that it protects the sensitive data regardless of how the adversary utilizes the data subsequently. The strength of DP is usually characterized by two parameters, $\epsilon$ and $\delta$. If $\delta=0$, we say that the algorithm is ``pure DP'' (i.e., $\epsilon$-DP); if $\delta>0$, the algorithm is called ``approximate DP'' (i.e., $(\epsilon,\delta)$-DP). The typical technique of DP is to randomize/perturb the algorithm output by adding noise or probabilistic sampling. DP has been widely applied to numerous tasks such as clustering~\citep{feldman2009private,gupta2010differentially}, regression and classification~\citep{chaudhuri2008privacy,zhang2012functional}, DP-SGD~\citep{abadi2016deep,agarwal2018cpsgd,fang2023improved}, principle component analysis~\citep{ge2018minimax}, empirical risk minimization~\citep{chaudhuri2011differentially}, matrix completion~\citep{blum2005practical,jain2018differentially}, graph distance estimation~\citep{kasiviswanathan2013analyzing,fan2022distances,fan2022private}. In industry, DP has been broadly deployed to collect user data for (e.g.,) frequency or mean estimation at, for example, Google~\citep{erlingsson2014rappor}, Apple\footnote{\url{https://docs-assets.developer.apple.com/ml-research/papers/learning-with-privacy-at-scale.pdf}}, and Microsoft~\citep{ding2017collecting}.

The data to be protected may have various forms and come from various sources, for example: (i) the user ``profile'' vectors that contain sensitive attributes like personal information, click history, etc.; (ii) the embedding vectors learned from the user data, which may also contain rich information about the users and could be attacked by malicious adversaries to infer user attributes~\citep{beigi2020privacy,zhang2021graph}. In a machine learning model, there are several choices on when to deploy differential privacy (DP): (a) at the data collection/processing stage~\citep{zhang2014privbayes,yang2015bayesian,berlioz2015applying,cormode2018marginal}; (b) during the model training stage~\citep{abadi2016deep,wei2020federated}; and (c) on the model output (or summary data statistics)~\citep{dwork2009differential}. While each of them may find its need depending on the application scenarios, applying DP as early as at the data collection stage provides strong protection in the sense that the subsequent operations or outputs will become private by the post-processing property of differential privacy. Moreover, the data holder may need to release the data to third-parties. In this case, a private data publishing mechanism (at an early stage) becomes necessary.

In this work, we study differential privacy algorithms for protecting data vectors, based on the broad family of random projections (RP) and sign random projections (SignRP). Our algorithms can be applied to data publishing and processing for  downstream machine learning tasks. In this paper, besides the standard DP, we also consider a relaxation of DP called ``individual differential privacy'' (iDP)~\citep{comas2017individual}, which was proposed with the aim of improving the utility of private algorithms. It turns out that, when it is applied to SignRP, iDP indeed provides a much better utility, a phenomenon which might be interesting to the DP researchers and practitioners and provides a plausible option for balancing the privacy and utility in practice.

\subsection{Random Projections (RP) and Very Sparse Random Projections (VSRP)}

In practice,  compression and dimension reduction techniques can be crucial when dealing with massive (high-dimensional) data. The random projection (RP) method is an important fundamental algorithm. Denote $u\in \mathbb R^{p}$ as the data vector with $p$ features. With some $k$ (and typically $k\ll p$), we define the random projection of $u$ as
\begin{align}  \label{def:RP}
    X = \frac{1}{\sqrt{k}}W^Tu, \ \ W\in \mathbb R^{p\times k}.
\end{align}
The entries of the random matrix $W$ typically follow the Gaussian distribution or Gaussian-like distribution such as uniform. It has been well-understood that, with a sufficient number of projections ($k$), the  similarity between two vectors (using the same projection $W$) is preserved within a small multiplicative error with high probability. The dimensionality reduction and geometry preserving properties make RP widely useful in numerous applications, such as distance estimation, nearest neighbor search, clustering, classification, compressed sensing, etc.~\citep{johnson1984extensions,indyk1998approximate,dasgupta2000experiments,bingham2001random,achlioptas2003database,fern2003random,vempala2005random, candes2006robust,donoho2006compressed,li2006very,frund2007leanring,dasgupta2008random,dahl2013large,li2019generalization,rabanser2019failing,tomita2020sparse,zhang2020optimal, li2023oporp}.

\vspace{0.1in}
Instead of using dense projections, it is also common to sample the entries of $W$ from the following ``very sparse'' distribution, i.e., VSRP~\citep{li2006very}:
\begin{align}%\label{eqn:vsrp}
\sqrt{s}\times \left\{\begin{array}{rrl}
-1 & \text{with prob.} &1/(2s)\\
0 & \text{with prob.} &1-1/s \ , \\
+1 & \text{with prob.} &1/(2s)
\end{array}\right.
\end{align}
which generalizes~\citet{achlioptas2003database} (for $s=1$ and $s=3$). Note that when $s=1$, it is also called the ``symmetric Bernoulli'' distribution or the ``Rademacher'' distribution. In a recent study by~\citet{li2023oporp}, the estimation accuracy of VSRP can be substantially improved via a normalization step.

\subsection{Sign Random Projections (SignRP)}

While RP is able to reduce the dimensionality to, e.g., hundreds, storing and transmitting the random projections might still be expensive for  large datasets. To this regard, one can further compress the RPs by quantization/discretization, where we only use a few bits to represent the projected data, instead of using full-precision (32 or 64 bits). Quantized random projections (QRP) and the extensions (e.g., to non-linear kernels) have many applications in various fields, e.g., compressed sensing, approximate nearest neighbor search, similarity estimation, large-scale machine learning~\citep{goemans1995improved,charikar2002similarity, datar2004locality,boufounos20081bit,dong2008asymmetric,zymnis2010compressed,leng2014random,li2014coding,knudson2016one,slawski2018trade,li2019sign,li2019random, xu2021locality,li2021quantization}. In particular, in this paper, we will consider the differential privacy of the extreme case of sign (1-bit) random projection (SignRP), also known as the ``SimHash'' in the  literature, where only the sign of the projected data is stored. Compared with storing the full-precision floats, SignRP  provides substantial memory and communication saving which could be beneficial for large-scale databases.

\newpage

From the sign random projections, one can estimate the angle/cosine between the data vectors. Consider two data vectors $u$ and $v$ with $\rho=\frac{u^Tv}{\|u\|\|v\|}$ and $\theta=\cos^{-1}(\rho)$. Denote $x=\frac{1}{\sqrt{k}}W^Tu$ and $y=\frac{1}{\sqrt{k}}W^Tv$, with $x,y\in\mathbb{R}^k$. The standard result~\citep{goemans1995improved,charikar2002similarity} gives the collision probability for a single sign random projection as
\begin{align}
    Pr\left(sign(x_j)=sign(y_j)\right)=1-\frac{\theta}{\pi},\ \ \forall j=1,...,k,  \label{eqn:signRP-collision-prob}
\end{align}
which leads to a straightforward estimator of the angle $\theta$ as
\begin{align}  \label{est:signRP}
    \hat\theta=\pi \left(1-\frac{1}{k}\sum_{j=1}^k \mathbbm 1\{sign(x_j)=sign(y_j)\} \right).
\end{align}
As $\mathbbm 1\{sign(x_j)=sign(y_j)\}$ follows a Bernoulli distribution, the mean and variance of $\hat\theta$ can be precisely computed:
\begin{align}  \label{eqn:var-signRP}
    \mathbb E[\hat\theta]=\theta,\quad Var(\hat\theta)=\frac{\theta(\pi-\theta)}{k}.
\end{align}

In addition to the mentioned benefits of achieving highly compact representations of the original data, SignRP has also been widely used for approximate near neighbor (ANN) search~\citep{friedman1975algorithm} by building hash tables from the SignRP bits~\citep{indyk1998approximate,charikar2002similarity,shrivastava2014defense}.

\subsection{Count-Sketch and OPORP: One Permutation + One Random Projection}

The celebrated count-sketch data structure~\citep{charikar2004finding} can be viewed as a special type of very sparse projections. Count-sketch is highly efficient because, as an option, it  only requires ``one permutation + one random projection'' (OPORP), as opposed to $k$ projections. The recent work by~\citet{li2023oporp} improved the original count-sketch in two aspects: (i) Using a fixed-length binning scheme makes the algorithms more convenient and also reduces the variance by a factor of $\frac{p-k}{p-1}$. (ii) The projected data (i.e., vectors in $k$ dimensions) should be normalized before they are used to estimate the original cosine. By doing so, the estimation variance can be  substantially reduced, essentially from $(1+\rho^2)/k$ (un-normalized)  to $(1-\rho^2)^2/k$ (normalized).

As the name OPORP suggests, the original data vector $u$ is first permuted by a random permutation, and then it is divided into $k$ bins. Within each bin, we apply a random projection of size $p/k$ to obtain one projected value. Thus, we still obtain $k$ projected values but the procedure is drastically more efficient than the standard random projections.  The count-sketch data structure and variants have been widely used in  applications. Recent examples include graph embedding~\citep{wu2019demo}, word \& image embedding~\citep{chen2017reading,zhang2020dynamic,abdullah2021finding,singhal2021federated,zhang2022kernelized},  model \& communication compression~\citep{weinberger2009feature, chen2015compressing,rothchild2020fetchsgd,haddadpour2020fedsketch,li2022gcwsnet} for linear models as well as deep neural nets, etc.

%\newpage

\subsection{DP Through RP and SignRP: Applications, Algorithms and Results} \label{sec:intro-setting}

In this paper, we propose to use random projection (RP) and sign random projection (SignRP) for data privatization. We name our methods ``differentially private random projections'' (DP-RP) and DP-SignRP. Additionally, we  develop a series of algorithms for OPORP, named DP-OPORP and DP-SignOPORP. Finally, we will present iDP-SignRP using individual differential privacy (iDP).

\newpage

The typical use cases of the proposed algorithms include:
\begin{itemize}
    \item In on-device data collection, each user can apply our methods locally (with shared projection matrix/vectors) before sending the data to the server in the data collection process.

    \item At the central data silo, the data curator can first apply our methods to the data attributes (or embeddings) and then publish the output (randomized data)  for data analysis or modeling.
\end{itemize}

By the geometry-preserving property of RP and SignRP, our approaches could be  naturally and effectively applied to a variety of distance/similarity based downstream tasks like (approximate) nearest search, clustering, classification, and regression. Additionally, with DP-RP and DP-SignRP, the data compression (especially for high-dimensional datasets) is also able to consequently reduce the computational cost and the communication overhead when transmitting the data.

More specifically, for a data vector $u\in\mathbb R^{p}$, our goal is to make its RP and SignRP, namely $x=W^Tu$ and $sign(X)$, differentially private. Then, for a database $U$ consisting of $n$ data vectors, applying this mechanism to all the points makes $U$ differentially private against a change in one data record. In the literature, there are works on randomized algorithms that assume the randomness of the algorithm (e.g., the hash keys or projection matrices) are ``internal'' and also kept private~\citep{blocki2012johnson,smith2020flajolet,dickens2022order}. We assume the projection matrix $W$ is known/public, which is stronger in privacy and enables broader applications of the released data. In addition, a ``known projection matrix'' is the typical case in the local DP model~\citep{kairouz2014extremal,cormode2018privacy}, since the projection matrix is created and shared among the users by the data aggregator.

\vspace{0.1in}

In summary, in this paper we make the following contributions:
\begin{itemize}
    \item In Section~\ref{sec:DP-RP}, for DP-RP, we first revisit/analyze the prior work~\citep{kenthapadi2013privacy} on the $(\epsilon,\delta)$-DP random projection based on Gaussian noise addition (called DP-RP-G). Then we propose an optimal Gaussian noise mechanism DP-RP-G-OPT, which improves the previous method. Then, we present DP-RP-G-OPT-B (which uses Rademacher projection and optimal noise) and DP-OPORP. We compare these variants of DP-RP with the strategy of adding (optimal) Gaussian noise to the raw data, in terms of the accuracy of inner product estimation.

    \item In Section~\ref{sec:DP-signRP}, we propose two algorithms for privatizing the (1-bit) DP-SignRP. The first method, DP-SignRP-RR, is based on the standard ``randomized response'' (RR). Then, we propose an improved method named DP-SignRP-RR-smooth based on a proposed concept of ``smooth flipping sensitivity'', thanks to the robustness brought by the ``aggregate-and-sign'' operation of SignRP. Finally, we extend the idea of smooth flipping probability to the OPORP variant, and show that DP-SignOPORP is more advantageous in terms of differential privacy.

    \item In Section~\ref{sec:experiment}, we conduct retrieval and classification experiments on benchmark datasets. Specifically, for DP-RP, the proposed optimal DP-RP-G-OPT method substantially improves DP-RP-G, which is the previous DP-RP method in the literature. DP-OPORP performs similarly to DP-RP-G-OPT-B. For the 1-bit variants, our proposed DP-SignOPORP outperforms DP-RP variants and DP-OPORP especially when $\epsilon$ is not large (e.g., $\epsilon = 5\sim10$).

    \item In Section~\ref{sec:individual DP}, we further design iDP-SignRP algorithms under the setting of individual differential privacy (iDP), which is a relaxed definition of the standard DP~\citep{comas2017individual}. Experiments demonstrate that iDP-SignRP is able to achieve excellent utilities (precision, recall, classification accuracy) even when $\epsilon$ is small (e.g.,  $\epsilon<0.5$). We anticipate that iDP-SignRP will be useful (e.g.,) as a data-publishing mechanism, among other applications.
\end{itemize}

\newpage

\section{Preliminaries}  \label{sec:pre}

Throughout the paper, $\|\cdot\|_1$ and $\|\cdot\|_2$ are the $l_1$ and $l_2$ norms, respectively. For convenience, $\|\cdot\|$ will denote the $l_2$ norm if there is no risk of confusion. Let $u\mathcal U\subset \mathbb R^p$ be a data vector of $p$ dimensions. We assume $\|u \|>0$, i.e., there is no all-zero data sample.

\subsection{Differential Privacy}

Differential privacy (DP) has been one of the standard tools in the literature to protect against (e.g.,) attribute inference attacks. The formal definition of DP is as follows.

\vspace{0.1in}

\begin{definition}[Differential Privacy~\citep{dwork2006calibrating}] \label{def:DP}
For a randomized algorithm $\mathcal M:\mathcal U\mapsto Range(\mathcal M)$, if for any two adjacent datasets $u$ and $u'$, it holds that
\begin{equation} \label{eq:DP-def}
    Pr[\mathcal M(u)\in O] \leq e^\epsilon Pr[\mathcal M(u')\in O]+\delta
\end{equation}
for $\forall O\subset Range(\mathcal M)$ and some $\epsilon,\delta\geq 0$,
then algorithm $\mathcal M$ is called $(\epsilon,\delta)$-differentially private. If $\delta=0$, $\mathcal M$ is called $\epsilon$-differentially private.
\end{definition}

Consider the classic noise addition approaches in DP. Given some definition of ``neighboring'', one important quantity is the ``sensitivity'' described below.

\begin{definition}[Sensitivity] \label{def:sensitivity}
Let $Nb(u)$ denote the neighbor set of $u$, and $\mathcal N(\mathcal U)=\{(u,u'):u'\in Nb(u), u,u'\in \mathcal U\}$ be the collection of all possible neighboring data pairs. The $l_1$-sensitivity and $l_2$-sensitivity of a function $f:\mathcal U\mapsto \mathbb R^k$ are defined respectively as
\begin{align*}
&\Delta_1=\max_{(u,u')\in \mathcal N(\mathcal U)}\|f(u)-f(u')\|_1,\quad \Delta_2=\max_{(u,u')\in \mathcal N(\mathcal U)}\|f(u)-f(u')\|_2.
\end{align*}
\end{definition}

In this work, our goal is to protect the data itself independent of subsequent tasks or operations. We follow the natural and standard definition in the DP literature of ``neighboring'' for data vectors, that two vectors are adjacent if they differ in one element (e.g.,~\citet{dwork2006calibrating,kenthapadi2013privacy,xu2013differentially,dwork2014algorithmic,smith2020flajolet,stausholm2021improved,dickens2022order,zhao2022differentially,li2023differentially}).

% \vspace{0.1in}
% \begin{definition}[Adjacent data] \label{def:neighbor}
% Let $u\in [-1,1]^p$. A vector $u'\in [-1,1]^p$ is said to be adjacent to $u$ if $u_i'=u_i$ for all dimensions except for one $i\in [p]$, and $|u_i-u_i'|\leq \beta$.
% \end{definition}

\vspace{0.1in}
\begin{definition}[$\beta$-adjacency] \label{def:neighbor}
Let $u\in [-1,1]^p$ be a data vector. A vector $u'\in [-1,1]^p$ is said to be $\beta$-adjacent to $u$ if $u'$ and $u$ differ in one dimension $i$, and $|u_i-u_i'|\leq \beta$.
\end{definition}

Data vectors $u$ and $u'$ satisfying Definition~\ref{def:neighbor} are called $\beta$-adjacent or $\beta$-neighboring. Several remarks follow:
\begin{itemize}
\item Compared with $l_1$ difference: $\|u-u'\|_1\leq \beta$, Definition~\ref{def:neighbor} (i.e., $u'$ and $u$ differ in one dimension $i$ with $|u_i-u_i'|\leq \beta$) considers the worst case of the $l_1$ difference in sensitivity analysis.

\item In the literature (e.g.,~\cite{kenthapadi2013privacy} for RP), $\beta=1$ is typically used. We allow a more general $\beta$ to accommodate the need from practical applications.

\item Assuming $u\in[-1,1]^p$ is merely for convenience, as one can easily extend the results to $u\in[-C, C]^p$ for an arbitrary $C>0$. Note that, coincidentally, the ``max normalization'' is a common data pre-processing strategy, where each feature dimension is normalized to $[-1,1]$.
\end{itemize}

\vspace{0.1in}
We make a more detailed explanation of the first point above. Since the neighboring data vector $u'$ and the raw data $u$ only differ in one attribute by at most $\beta$, by Definition~\ref{def:sensitivity}, it is easy to see that for RP the sensitivities are precisely $\frac{1}{\sqrt{k}}$ times the largest row norm of $W$:
\begin{align}
    \Delta_1&=\frac{1}{\sqrt{k}}\beta\max_{(u,u')\in \mathcal N(\mathcal U)} \|W^Tu-W^Tu'\|_1=\frac{1}{\sqrt{k}}\beta\max_{i=1,...,p} \|W_{[i,:]}\|_1, \label{eq:l1-sensitivity-known} \\
    \Delta_2&=\frac{1}{\sqrt{k}}\beta\max_{(u,u')\in \mathcal N(\mathcal U)} \|W^Tu-W^Tu'\|_2=\frac{1}{\sqrt{k}}\beta\max_{i=1,...,p} \|W_{[i,:]}\|_2. \label{eq:l2-sensitivity-known}
\end{align}
where $W_{[i,:]}$ is the $i$-th row of $W$. It is a known fact  that Definition~\ref{def:neighbor} for neighboring is the worst case for the alternative neighboring definition using the $l_1$ difference: ``$u$ and $u'$ are neighboring if $\|u-u'\|_1=\sum_{i=1}^p |u_i-u_i'|\leq \beta$.'' Particularly, in our work, we have $f(u) = \frac{1}{\sqrt{k}}W^T u$, and thus
\begin{align*}
    \frac{1}{\sqrt{k}}\max_{\|u-u'\|_1\leq \beta}\|W^Tu-W^Tu'\|_1&=\frac{1}{\sqrt{k}}\max_{\|u-u'\|_1\leq \beta}\|\sum_{i=1}^p (u_i-u_i')W_{[i,:]}  \|_1 \notag \\
    &\leq \frac{1}{\sqrt{k}}\max_{\|u-u'\|_1\leq \beta}\sum_{i=1}^p |u_i-u_i'|\|W_{[i,:]} \|_1 \notag \\
    &=\frac{1}{\sqrt{k}}\beta\max_{i=1,...,p} \|W_{[i,:]}\|_1 = \Delta_1.
\end{align*}
Note that the last line equals (\ref{eq:l1-sensitivity-known}) corresponding to the sensitivity based on Definition~\ref{def:neighbor}. Since the neighboring condition in Definition~\ref{def:neighbor} also satisfies $\|u-u'\|_1\leq \beta$, we know that the equality can be exactly attained. Similarly, for the $l_2$ sensitivity, we have
\begin{align*}
    \frac{1}{\sqrt{k}}\max_{\|u-u'\|_1\leq \beta}\|W^Tu-W^Tu'\|_2&=\frac{1}{\sqrt{k}}\max_{\|u-u'\|_1\leq \beta}\|\sum_{i=1}^p (u_i-u_i')W_{[i,:]}\|_2 \nonumber\\ &=\frac{1}{\sqrt{k}}\beta\max_{i=1,...,p} \|W_{[i,:]}\|_2 = \Delta_2.
\end{align*}

Therefore, for the sensitivity analysis, the ``$\beta$-adjacency'' in Definition~\ref{def:neighbor} in fact covers the definition of neighboring based on the $l_1$ difference.

\vspace{0.1in}

Once the sensitivities can be identified/computed, the following two standard noise addition mechanisms can be applied to make an algorithm differentially private.

\begin{theorem}[Laplace mechanism~\citep{dwork2006calibrating}] \label{prop:laplace mechanism}
Given any function $f:\mathcal U\mapsto \mathbb R^k$ and $\epsilon>0$, the randomized algorithm $\mathcal M(u)=f(u)+(Z_1,...,Z_k)$ is $\epsilon$-DP, where $Z_1,...,Z_k$ follow iid $Laplace(\Delta_1/\epsilon)$. The centered Laplace distribution with scale $\lambda$ has density function $g(x)=\frac{1}{2\lambda}e^{-\frac{|x|}{\lambda}}$.
\end{theorem}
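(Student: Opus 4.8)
The plan is to establish the pure $\epsilon$-DP guarantee by bounding the pointwise ratio of the output densities of $\mathcal M(u)$ and $\mathcal M(u')$ for an arbitrary pair of adjacent datasets, and then integrating this bound over any output set $O$. Since here $\delta=0$, it suffices to show that $p_u(z)\le e^\epsilon p_{u'}(z)$ for every $z\in\mathbb R^k$, where $p_u$ and $p_{u'}$ denote the densities of $\mathcal M(u)$ and $\mathcal M(u')$ respectively; integrating this inequality over $O$ then yields $Pr[\mathcal M(u)\in O]\le e^\epsilon Pr[\mathcal M(u')\in O]$, which is exactly Definition~\ref{def:DP} with $\delta=0$.

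First I would write each output density explicitly. Because the noise coordinates $Z_1,\dots,Z_k$ are independent, the density of $\mathcal M(u)$ at $z=(z_1,\dots,z_k)$ factorizes as $p_u(z)=\prod_{i=1}^k g(z_i-f(u)_i)$ with scale $\lambda=\Delta_1/\epsilon$. Forming the ratio and substituting $g(x)=\frac{1}{2\lambda}e^{-|x|/\lambda}$, the normalizing constants cancel and I obtain
\[
\frac{p_u(z)}{p_{u'}(z)} = \prod_{i=1}^k \exp\left(\frac{|z_i-f(u')_i|-|z_i-f(u)_i|}{\lambda}\right) = \exp\left(\frac{1}{\lambda}\sum_{i=1}^k \left(|z_i-f(u')_i|-|z_i-f(u)_i|\right)\right).
\]

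Next I would apply the reverse triangle inequality coordinatewise, $|z_i-f(u')_i|-|z_i-f(u)_i|\le |f(u)_i-f(u')_i|$. Summing over $i$ turns the exponent into at most $\frac{1}{\lambda}\|f(u)-f(u')\|_1$, and by Definition~\ref{def:sensitivity} this $l_1$ distance is at most the sensitivity $\Delta_1$ for adjacent $u,u'$. Hence the ratio is bounded by $\exp(\Delta_1/\lambda)=\exp(\epsilon)$, which completes the pointwise bound and, after integration, the theorem.

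This argument is essentially a routine verification, so I do not anticipate a genuine obstacle; the only step requiring care is invoking the independence of the noise coordinates to factorize the joint density \emph{before} the triangle inequality is applied, since it is precisely this factorization that converts the per-coordinate cancellations into the $l_1$ norm $\|f(u)-f(u')\|_1$ and thereby ties the privacy loss to the $l_1$-sensitivity $\Delta_1$ rather than to an $l_2$ quantity. I would also note in passing that the same computation is tight in the worst case (attained when $z$ places all coordinates ``beyond'' both means in the direction where $f(u)$ and $f(u')$ differ), which is what makes $\Delta_1$ the correct calibration scale for Laplace noise.
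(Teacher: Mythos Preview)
Your proof is correct and is the standard textbook argument for the Laplace mechanism. The paper itself does not prove this statement; it is simply quoted as a known result with a citation to \citet{dwork2006calibrating}, so there is no in-paper proof to compare against.
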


\vspace{0.1in}

\begin{theorem}[Gaussian mechanism~\citep{dwork2014algorithmic}] \label{prop:gaussian mechanism}
Given any function $f:\mathcal U\mapsto \mathbb R^k$, $0<\epsilon<1$, $0<\delta<1$, the randomized algorithm $\mathcal M(u)=f(u)+(Z_1,...,Z_k)$ is $(\epsilon,\delta)$-DP, where $Z_1,...,Z_k$ follow iid $N(0,\sigma^2)$ with $\sigma=\Delta_2\frac{\sqrt{2\log(1.25/\delta)}}{\epsilon}$.
\end{theorem}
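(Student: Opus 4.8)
The plan is to follow the classical privacy-loss random variable argument. Fix any pair of $\beta$-adjacent inputs $u,u'$ and write $v=f(u)-f(u')$, so that $\|v\|_2\le\Delta_2$ by Definition~\ref{def:sensitivity}. For a fixed output $z\in\mathbb R^k$, the Gaussian density ratio gives the privacy loss
\begin{align*}
L(z)=\ln\frac{p_{\mathcal M(u)}(z)}{p_{\mathcal M(u')}(z)}=\frac{1}{2\sigma^2}\left(\|z-f(u')\|_2^2-\|z-f(u)\|_2^2\right).
\end{align*}
The first reduction is the standard lemma that $\mathcal M$ is $(\epsilon,\delta)$-DP provided $\Pr_{z\sim\mathcal M(u)}[L(z)>\epsilon]\le\delta$ for every such pair: splitting any event $O$ according to whether $L(z)\le\epsilon$ and bounding the two pieces by $e^\epsilon\Pr[\mathcal M(u')\in O]$ and $\delta$ respectively yields (\ref{eq:DP-def}). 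So the whole problem reduces to a one-sided tail bound on $L$.

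Next I would compute the law of $L$ when $z=f(u)+Z$ with $Z\sim N(0,\sigma^2 I_k)$. Substituting $z-f(u)=Z$ and $z-f(u')=Z+v$ collapses the quadratic difference to the affine expression $L=\frac{\langle Z,v\rangle}{\sigma^2}+\frac{\|v\|_2^2}{2\sigma^2}$. Since only the component of $Z$ along $v$ enters, projecting onto $v/\|v\|_2$ reduces everything to one dimension: with $Z_v\sim N(0,\sigma^2)$ and the worst case $\|v\|_2=\Delta_2$, we get $L=\frac{\Delta_2 Z_v}{\sigma^2}+\frac{\Delta_2^2}{2\sigma^2}$, which is Gaussian. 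The requirement $\Pr[L>\epsilon]\le\delta$ therefore becomes $\Pr[Z_v>t]\le\delta$ with threshold $t=\frac{\sigma^2\epsilon}{\Delta_2}-\frac{\Delta_2}{2}$.

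Finally I would substitute $\sigma=\Delta_2\sqrt{2\log(1.25/\delta)}/\epsilon$ and verify the tail bound using the Gaussian estimate $\Pr[Z_v>t]\le\frac{\sigma}{t\sqrt{2\pi}}\exp(-t^2/(2\sigma^2))$, valid for $t>0$. This last step is the main obstacle: it is not a clean one-line calculation but the fiddly inequality chase that forces the specific constant $1.25$. After substituting, $t=\Delta_2\left(c^2/\epsilon-1/2\right)$ with $c^2=2\log(1.25/\delta)$, and one must show that the resulting product of a polynomial factor and an exponential factor stays at most $\delta$. I expect to use the hypothesis $\epsilon<1$ to control the cross terms, separating the exponent into the dominant $-c^2$ piece and lower-order corrections and checking that each lands on the correct side of the inequality. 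I would also note that the positivity $t>0$ required for the tail estimate holds because $c^2>1/2$ for every $0<\delta<1$; and since both the threshold computation and the tail bound used only $\|v\|_2=\Delta_2$, the estimate is uniform over all adjacent pairs, which completes the argument.
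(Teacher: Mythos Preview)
The paper does not prove this statement at all: Theorem~\ref{prop:gaussian mechanism} is quoted from \citet{dwork2014algorithmic} as background in the Preliminaries section, so there is no in-paper proof to compare against. Your proposal is exactly the standard privacy-loss argument from that reference (and from the later footnote in the paper describing the analysis of the classic Gaussian mechanism), and the reductions you outline --- splitting the event according to whether $L(z)\le\epsilon$, collapsing the quadratic to an affine function of $Z$, projecting to one dimension, and finishing with the Mill's-ratio tail bound $\Pr[Z_v>t]\le\frac{\sigma}{t\sqrt{2\pi}}e^{-t^2/(2\sigma^2)}$ --- are all correct and in the right order.

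One small slip: your claim that $c^2=2\log(1.25/\delta)>1/2$ for every $0<\delta<1$ is false near the right endpoint (it fails once $\delta>1.25\,e^{-1/4}\approx 0.973$). In that degenerate regime the threshold $t$ can become nonpositive and the Mill's-ratio bound is not applicable, so strictly speaking the argument needs $\delta$ below this value. This is harmless in practice and is the same caveat implicit in the Dwork--Roth proof; just state the needed range for $\delta$ rather than asserting it for all $\delta\in(0,1)$. Also, when you take $\|v\|_2=\Delta_2$ as the ``worst case,'' it is worth one line noting that the threshold $t(\|v\|_2)=\sigma^2\epsilon/\|v\|_2-\|v\|_2/2$ is decreasing in $\|v\|_2$, so the tail probability is indeed maximized at $\|v\|_2=\Delta_2$.
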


We remark that Theorem~\ref{prop:gaussian mechanism} requires $\epsilon<1$. The behavior of the Gaussian noise for general $\epsilon$ will be discussed shortly in Section~\ref{sec:DP-RP}. Furthermore, one of the nice facts about DP is that, multiple DP algorithms can be composed together, where the final output is still DP.

\vspace{0.1in}

\begin{theorem}[Composition Theorem~\citep{dwork2014algorithmic}]\label{theo:composition}
Let $\mathcal M_j:\mathcal U\rightarrow \mathbb R$ be an $(\epsilon_j,\delta_j)$-DP algorithm for $j=1,...,k$. Then $\mathcal M(u)=(\mathcal M_1(u), ..., \mathcal M_k(u))$ is $(\sum_{j=1}^k \epsilon_i,\sum_{j=1}^k \delta_j)$-DP.
	
\end{theorem}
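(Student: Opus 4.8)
The plan is to fix an arbitrary ordered pair of adjacent datasets $u,u'$ (since Definition~\ref{def:DP} quantifies over all such pairs, establishing the bound for each ordering suffices) and to reduce the claim to a pointwise estimate on probability densities. Passing to densities $p_j,q_j$ of $\mathcal M_j(u),\mathcal M_j(u')$ with respect to a common dominating measure is a standard measure-theoretic step; if the $\mathcal M_j$ are discrete these are just probability mass functions. I would stress at the outset that the pure-DP case ($\delta_j=0$) is immediate: independence factorizes the joint density, each factor obeys $p_j\le e^{\epsilon_j}q_j$ pointwise, so $\prod_j p_j\le e^{\sum_j\epsilon_j}\prod_j q_j$, and integrating over any $O$ gives the claim. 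All the real work lies in tracking the $\delta_j$ terms.

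The key device is a decomposition lemma. Writing $(x)_+=\max(x,0)$, the requirement that $\mathcal M_j$ be $(\epsilon_j,\delta_j)$-DP (applied to the worst-case ``bad set'' $\{p_j>e^{\epsilon_j}q_j\}$) is equivalent to $\int (p_j-e^{\epsilon_j}q_j)_+\le\delta_j$. Setting $r_j=(p_j-e^{\epsilon_j}q_j)_+\ge 0$ and $p_j'=p_j-r_j=\min(p_j,e^{\epsilon_j}q_j)$, I obtain for each $j$ a splitting $p_j=p_j'+r_j$ with $0\le p_j'\le e^{\epsilon_j}q_j$ pointwise and $\int r_j\le\delta_j$. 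This converts the integral-form DP guarantee into a clean ``good part plus small mass'' form, which is exactly what makes the factorization go through.

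Next I would invoke independence of the $\mathcal M_j$, so that the joint densities are $p=\prod_j p_j$ and $q=\prod_j q_j$. Expanding $\prod_j(p_j'+r_j)$, the all-good term $\prod_j p_j'$ is pointwise at most $e^{\sum_j\epsilon_j}\prod_j q_j=e^{\sum_j\epsilon_j}q$, hence $p\le e^{\sum_j\epsilon_j}q+R$ with remainder $R=\prod_j p_j-\prod_j p_j'\ge 0$. The crucial computation is that this remainder integrates to
\begin{equation*}
\int R=1-\prod_{j=1}^k\Big(1-\int r_j\Big)\le 1-\prod_{j=1}^k(1-\delta_j)\le\sum_{j=1}^k\delta_j,
\end{equation*}
using $\int p_j=1$ and the elementary inequality $1-\prod_j(1-\delta_j)\le\sum_j\delta_j$. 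Integrating $p\le e^{\sum_j\epsilon_j}q+R$ over an arbitrary $O\subset Range(\mathcal M)$ then yields $Pr[\mathcal M(u)\in O]\le e^{\sum_j\epsilon_j}Pr[\mathcal M(u')\in O]+\sum_j\delta_j$, which is precisely the $(\sum_j\epsilon_j,\sum_j\delta_j)$-DP guarantee.

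I expect the main obstacle to be exactly this $\delta$-bookkeeping. The tempting route of composing the mechanisms one at a time, feeding the output of the definition into the next step, is lossy: each transfer multiplies a leftover $\delta$ by a factor $e^{\epsilon}$, producing a bound like $e^{\epsilon_2}\delta_1+\delta_2$ rather than $\delta_1+\delta_2$. The decomposition lemma sidesteps this because the bad mass $r_j$ is peeled off as \emph{additive} slack before any exponential factor is applied, and the exact remainder computation above shows the cross terms telescope into $1-\prod_j(1-\delta_j)$, recovering the clean additive $\sum_j\delta_j$. (An equivalent route is induction on $k$, composing two mechanisms at a time with the same lemma; handling all $k$ simultaneously as above is the most transparent.)
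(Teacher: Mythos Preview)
The paper does not prove this theorem; it is stated as a cited preliminary result from \citet{dwork2014algorithmic} and used as a black box (e.g., in the proof of Theorem~\ref{theo:privacy-DPSignRP-RR-smooth}). Your argument is correct and is one of the standard proofs: the equivalence of $(\epsilon,\delta)$-DP with the integral bound $\int(p-e^{\epsilon}q)_+\le\delta$, the splitting $p_j=p_j'+r_j$ with $p_j'\le e^{\epsilon_j}q_j$ pointwise, and the remainder calculation $\int R=1-\prod_j(1-\int r_j)\le 1-\prod_j(1-\delta_j)\le\sum_j\delta_j$ are all valid. One small caveat worth making explicit in a write-up is the independence assumption: the statement as written in the paper does not say the $\mathcal M_j$ use independent randomness, but your density factorization $p=\prod_j p_j$ requires it (and this is indeed the intended hypothesis in the cited source).
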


\subsection{Local Sensitivity and Smooth Sensitivity}

The sensitivity in Definition~\ref{def:sensitivity} is a ``global'' worst-case bound over all possible neighboring data $(u,u')$ in the data domain, which in many cases leads to overly large noise added to the algorithm output. Intuitively, for each specific data $u$, to make $\mathcal M(u)$ and $\mathcal M(u')$ indistinguishable in the sense of DP, we may not always need the worst-case noise. This is grasped by the concept of ``local sensitivity'' as defined below.

\begin{definition}[Local sensitivity~\citep{nissim2007smooth}] \label{def:local-sensitivity}
For $u\in\mathcal U$ and a function $f:\mathcal U\mapsto \mathbb R^k$, the local sensitivities of $f$ at $u$ are
\begin{align*}
   LS_1(u)=\max_{u'\in Nb(u)} \|f(u)-f(u')\|_1,\quad  LS_2(u)=\max_{u'\in Nb(u)} \|f(u)-f(u')\|_2.
\end{align*}
\end{definition}

Instead of bounding the worst-case change in the output, the local sensitivity at $u$ just focuses on the neighborhood of each specific $u$. By the definitions, it is clear that $LS_1(u)\leq \Delta_1$ and $LS_2(u)\leq \Delta_2$ for all $u\in\mathcal U$, i.e., the global sensitivity is an upper bound of local sensitivity. That said, the noise level calibrated to the local sensitivity is always smaller than the noise magnitude calculated by the global sensitivity, which implies (substantially) better utility.

One however cannot simply add noise according to the instance-specific local sensitivity to achieve DP, for several reasons. (A) Only ensuring ``local'' indistinguishability at each specific data instance does not satisfy the rigorous DP condition (Definition~\ref{def:DP}), which requires the indistinguishability between all possible adjacent data pairs. (B) The noise level itself, when computed by the local sensitivity, is a function of $u$, which may reveal the information of the database. Intuitively, the issue of local sensitivity is that, although $u$ has small local sensitivity, its neighbor $u'$ may have very large local sensitivity. If the injected noise is determined according to the local sensitivity, then the very different noise scale may be detected by an adversary to distinguish $u$ from $u'$, violating the goal of DP. A simple example is provided in~\citet{nissim2007smooth} for reporting the median. In that same paper, the authors proposed ``smooth sensitivity'' which leverages local sensitivity in the noise calibration, at the same time also satisfies the DP definition. The intuition is that, whatever “local” notion of sensitivity at $u$ we employ must account for the more sensitive data which are nearby, but at a discounted factor depending on how far they are from $u$.

\begin{definition}[Smooth sensitivity~\citep{nissim2007smooth}] \label{def:smooth-sensitivity}
For $u,u'\in\mathcal U$, denote $d(u,u')$ as the minimal number of ``neighboring operations'' needed to derive $u'$ from $u$~\footnote{For example, if $u$ and $u'$ are two databases that differ in $t$ sample records, then $d(u,u')=t$.}. The $\epsilon$-smooth sensitivity of $f:\mathcal U\mapsto \mathbb R^k$ at $u$ is defined as
\begin{align*}
    SS(u)=\max_{u'\in\mathcal U} e^{-\epsilon d(u,u')}LS(u'),
\end{align*}
where $LS(u')$ is the local sensitivity (in proper metric) of $f$ at $u'$ in Definition~\ref{def:local-sensitivity}.
\end{definition}

The smooth sensitivity $SS(u)$ is an $\epsilon$-smooth upper bound of the local sensitivity $LS(u)$ satisfying two conditions: (i) $SS(u)\geq LS(u)$, $\forall u\in \mathcal U$; (ii) $SS(u)\leq e^\epsilon SS(u')$, $\forall u,u'\in\mathcal U$ with $d(u,u')=1$. Moreover, $SS(u)$ is the smallest upper bound $\forall u$ among all such $\epsilon$-smooth upper bounds\footnote{The global sensitivity (Definition~\ref{def:sensitivity}) is also an $\epsilon$-smooth upper bound of the local sensitivity.}. \citet{nissim2007smooth} also proposed several noise addition mechanisms tailored to the (instance-specific) smooth sensitivity, with rigorous DP guarantees. In particular, the smooth sensitivity usually works very well for robust statistics/estimators (e.g., the median, the interquartile range), in which the local sensitivity in the neighborhood of most data does not change much.

For DP-RP studied in our paper, it is easy to check that both the local sensitivity and smooth sensitivity, at any $u\in [-1,1]^p$, are the same as the global sensitivity by (\ref{eq:l1-sensitivity-known}) and (\ref{eq:l2-sensitivity-known}). Thus, for DP-RP, these two concepts may not help much. Fortunately, for DP-SignRP, the ``aggregate-and-sign'' operation brings robustness against small data changes. Therefore, we can leverage the idea of local or smooth sensitivity to develop novel and better DP algorithms than the standard bit/sign flipping approach (Section~\ref{sec:DP-SignRP-RR-smooth} and Section~\ref{sec:individual DP}).

\subsection{Relaxation: Individual Differential Privacy (iDP)}

Besides the smooth sensitivity framework, many extensions or relaxations of DP have also been proposed to improve the utility of DP mechanisms. Examples include concentrated differential privacy~\citep{dwork2016concentrated}, R{\'{e}}nyi differential privacy~\citep{mironov2017renyi}, and Gaussian differential privacy~\citep{dong2022gaussian}. These alternatives provide less strict requirement on the ``indistinguishability'' (\ref{eq:DP-def}) for privacy and better composition properties than those of the standard ``worst-case'' DP, thus reducing the noise needed~\citep{jayaraman2019evaluating}. Another possible direction to elevate the empirical performance of DP is to relax the DP definition by constraining the scope of neighboring datasets depending on the specific use case of DP~\citep{comas2017individual,desfontaines2020sok}. In this paper, we also consider the concept called ``individual differential privacy'' (iDP), also known as ``data-centric DP'', as follows.

\begin{definition}[Individual DP~\citep{comas2017individual}] \label{def:idp}
Given a dataset $u$, an algorithm $\mathcal M$ satisfies $(\epsilon,\delta)$-iDP for $u$ if for any dataset $u'$ that is adjacent to $u$, it holds that
\begin{align*}
    Pr[\mathcal M(u)\in O] \leq e^\epsilon Pr[\mathcal M(u')\in O]+\delta, \\
    Pr[\mathcal M(u')\in O] \leq e^\epsilon Pr[\mathcal M(u)\in O]+\delta.
\end{align*}
\end{definition}

We should emphasize that, individual DP does not satisfy the rigorous DP definition, as iDP only focuses on the ``point-wise'' guarantee of privacy. It protects the neighborhood of a specific dataset of interest, instead of fulfilling DP requirements for all possible adjacent databases. While deviating from the standard DP definition, we discuss it in our work because iDP might be sufficient for some applications, and the utility gain of iDP can be substantial.

\vspace{0.1in}

The intuition of iDP is that, while the standard DP (Definition~\ref{def:DP}) requires indistinguishability between any pair of neighboring databases, in some practical scenarios, the data custodian only holds one ``ground truth'' data vector $u$ that needs to be protected. Limiting the scope of the neighborhood could be reasonable in certain practical scenarios. For example, for the centralized data server to (non-interactively) publish the user data matrix or user embedding vectors to the public, we essentially care about preserving the privacy of the true user data at hand, as no information other than the perturbed database would be released. In this paper, while we will mainly focus on the standard DP definition, we also find that using iDP for sign-based RP methods provides good utilities. Therefore, we will also discuss iDP-based variants in Section~\ref{sec:individual DP}, as it may provide another direction/option for balancing the trade-off between privacy and utility in practice, based on specific applications.

\newpage

\section{DP-RP: Differentially Private Random Projections}  \label{sec:DP-RP}

In this section, we study the differentially private random projections (DP-RP). We first revisit the noise addition mechanisms reported in~\citet{kenthapadi2013privacy} for Gaussian random projections, an algorithm which we name ``DP-RP-G''. Then we apply the optimal Gaussian mechanism~\citep{balle2018improving} to DP-RP to obtain an improved algorithm named ``DP-RP-G-OPT''. In Section~\ref{sec:experiment}, our experiments confirm that DP-RP-G-OPT considerably improves DP-RP-G. Moreover, by replacing the Gaussian projection matrix with the Rademacher projection matrix, we obtain a further improved version named ``DP-RP-G-OPT-B''. Furthermore, one can obtain another variant called ``DP-OPORP'' by replacing the dense Rademacher  projections with OPORP, which is a variant of count-sketch and is computationally much more efficient than dense projections.

We also analytically compare those variants of DP-RP with the basic DP strategy of directly adding Gaussian noise to each dimension of the original data vectors. Our analysis  explains the improvements by using  DP-RP in high-dimensional data.

\subsection{Gaussian Noise Mechanism for DP-RP}  \label{sec:gaussian-known}

\begin{algorithm}[h]
	{
		\textbf{Input:} Data $u\in[-1,1]^p$, privacy parameters $\epsilon>0$, $\delta\in (0,1)$, number of projections~$k$
		
		\textbf{Output:}  $(\epsilon,\delta)$-differentially private random projections $\tilde{x}\in\mathbb{R}^k$
		
		Apply RP  $x=\frac{1}{\sqrt{k}}W^Tu$, where  $W\in\mathbb R^{p\times k}$ has iid entries sampled from $N(0,1)$
		
		Compute the sensitivity $\Delta_2$ by (\ref{eq:l2-sensitivity-known})
		
		Generate the random noise vector $G\in\mathbb R^k$ whose entries are iid samples from $N(0,\sigma^2)$ where $\sigma$ is obtained by Theorem~\ref{theo:kenthapadi} (DP-RP-G) or Theorem~\ref{theo:gauss-optimal} (DP-RP-G-OPT)

		Return  $\tilde x =x+G$
	}
	\caption{DP-RP-G and DP-RP-G-OPT}
	\label{alg:DP-RP-gaussian-noise}
\end{algorithm}

The general Gaussian noise mechanism for DP-RP is detailed in Algorithm~\ref{alg:DP-RP-gaussian-noise}. We present the algorithms for a single data point $u$, and the procedure is applied to every data vector in the database (with same $W$ but independent noise). Given the projection matrix $W\in \mathbb R^{d\times k}$, we first compute the $l_2$-sensitivity $\Delta_2$ of the RPs according to (\ref{eq:l2-sensitivity-known}). Then we generate the random Gaussian noise vector following iid $N(0,\sigma^2)$ where $\sigma$ is derived by Theorem~\ref{theo:kenthapadi} below, and output the projected vector plus the noise. The following Gaussian mechanism for random projections is known in the literature. We call it the ``DP-RP-G'' method.

\vspace{0.1in}
\begin{theorem}[DP-RP-G~\citep{kenthapadi2013privacy}] \label{theo:kenthapadi}
Let $\Delta_2$ be defined in (\ref{eq:l2-sensitivity-known}). For any $\epsilon>0$ and $0<\delta<\frac{1}{2}$, DP-RP-G in Algorithm~\ref{alg:DP-RP-gaussian-noise} is $(\epsilon,\delta)$-DP if $\sigma\geq \Delta_2 \frac{\sqrt{2(\log(1/\delta)+\epsilon)}}{\epsilon}$.
\end{theorem}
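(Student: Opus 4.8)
The plan is to analyze the privacy loss random variable of the Gaussian mechanism directly and then calibrate $\sigma$ so that its upper tail at level $\epsilon$ carries mass at most $\delta$. Writing $f(u)=\frac{1}{\sqrt k}W^Tu$ and $\mathcal M(u)=f(u)+G$ with $G\sim N(0,\sigma^2 I_k)$, fix any $\beta$-adjacent pair $u,u'$ and set $v=f(u)-f(u')$, so that $\|v\|_2\le\Delta_2$ by the definition of $\ell_2$-sensitivity in (\ref{eq:l2-sensitivity-known}). First I would compute, for an output $o$ drawn from $\mathcal M(u)=N(f(u),\sigma^2 I_k)$, the privacy loss $L(o)=\log\frac{p_{f(u)}(o)}{p_{f(u')}(o)}$. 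Expanding the two Gaussian densities and writing $o=f(u)+Z$ with $Z\sim N(0,\sigma^2 I_k)$, the $\|Z\|^2$ terms telescope and one is left with $L=\frac{Z^Tv}{\sigma^2}+\frac{\|v\|^2}{2\sigma^2}$; since $Z^Tv\sim N(0,\sigma^2\|v\|^2)$, the loss is itself Gaussian, $L\sim N(\eta^2/2,\eta^2)$ with $\eta\eqdef\|v\|_2/\sigma\le\Delta_2/\sigma$.

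Next I would invoke the standard sufficient condition for approximate DP: if $\Pr_{o\sim\mathcal M(u)}[L(o)>\epsilon]\le\delta$, then $\Pr[\mathcal M(u)\in O]\le e^\epsilon\Pr[\mathcal M(u')\in O]+\delta$ for every $O$. This follows in one line by splitting any event $O$ across the bad set $B=\{o:L(o)>\epsilon\}$: on $B^c$ the density ratio is at most $e^\epsilon$, so $\Pr[\mathcal M(u)\in O\cap B^c]\le e^\epsilon\Pr[\mathcal M(u')\in O]$, while $\Pr[\mathcal M(u)\in B]\le\delta$ absorbs the remainder. Because $u$ and $u'$ play symmetric roles (the noise is symmetric and $\beta$-adjacency is a symmetric relation), the same computation yields the reverse inequality, so it suffices to verify the one-sided tail bound.

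The third and most delicate step is the tail estimate. With $L\sim N(\eta^2/2,\eta^2)$ I would write $\Pr[L>\epsilon]=\Pr[Z>s]$ for a standard normal $Z$, where $s=\frac{\epsilon}{\eta}-\frac{\eta}{2}$. Since $s$ is strictly decreasing in $\eta$ and every neighbor pair satisfies $\eta\le\Delta_2/\sigma$, the tail is largest at $\eta=\Delta_2/\sigma$, so it suffices to treat the boundary case $\sigma=\Delta_2\sqrt{2(\log(1/\delta)+\epsilon)}/\epsilon$. Substituting this value and abbreviating $A=\log(1/\delta)+\epsilon$ gives $\eta=\epsilon/\sqrt{2A}$, hence $\frac{\epsilon}{\eta}=\sqrt{2A}$ and, after squaring, $s^2=2\log(1/\delta)+\epsilon+\frac{\epsilon^2}{8A}\ge 2\log(1/\delta)$; along the way one checks $s>0$, i.e. $A>\epsilon/4$, which holds since $A>\epsilon$. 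Applying the Chernoff tail bound $\Pr[Z>s]\le e^{-s^2/2}$ then gives $\Pr[L>\epsilon]\le e^{-s^2/2}=\delta\,e^{-\epsilon/2-\epsilon^2/(16A)}\le\delta$, and for any larger $\sigma$ the tail only shrinks, completing the argument.

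I expect the main obstacle to be matching the constants exactly. The cross term $-\eta/2$ in $s$ must be retained, since dropping it changes the required $\sigma$; and the choice of Gaussian tail inequality is what fixes the constant under the square root. Using the cruder Chernoff bound $e^{-s^2/2}$ rather than the sharper $\tfrac12 e^{-s^2/2}$ is precisely what produces the stated factor $\sqrt{2(\log(1/\delta)+\epsilon)}$, at the cost of the small slack $e^{-\epsilon/2-\epsilon^2/(16A)}$. This slack is exactly what frees the result from the $\epsilon<1$ restriction of the textbook Gaussian mechanism in Theorem~\ref{prop:gaussian mechanism} and lets the bound hold for all $\epsilon>0$.
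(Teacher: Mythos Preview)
The paper does not actually prove this theorem: it is stated with attribution to \citet{kenthapadi2013privacy} and used as a black box. There is therefore nothing to compare against in the paper itself.

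Your argument is correct. The privacy-loss computation $L\sim N(\eta^2/2,\eta^2)$ with $\eta=\|v\|_2/\sigma$ is standard and your algebra checks out: at the boundary value of $\sigma$ one indeed gets $s=\sqrt{2A}-\epsilon/(2\sqrt{2A})$ with $A=\log(1/\delta)+\epsilon$, hence $s^2=2\log(1/\delta)+\epsilon+\epsilon^2/(8A)\ge 2\log(1/\delta)$, and the Chernoff bound $\Pr[Z>s]\le e^{-s^2/2}$ closes the argument. Your monotonicity claim that $s=\epsilon/\eta-\eta/2$ is decreasing in $\eta$ is immediate since $ds/d\eta=-\epsilon/\eta^2-1/2<0$, so reducing to the extremal $\eta=\Delta_2/\sigma$ is justified. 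Two minor remarks: your proof only uses $\delta<1$ (so that $\log(1/\delta)>0$ and hence $A>\epsilon>\epsilon/4$, giving $s>0$), not the stated $\delta<1/2$; and your last paragraph correctly identifies that using $e^{-s^2/2}$ rather than the sharper Mills-ratio bound is exactly what yields the constant $\sqrt{2(\log(1/\delta)+\epsilon)}$ without an $\epsilon<1$ restriction.
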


According to Theorem~\ref{theo:kenthapadi}, one will choose $\sigma$ based on the observed $\Delta_2$ which is calculated according to~\eqref{eq:l2-sensitivity-known} for a given realization of the projection matrix $W$. For the convenience of theoretical analysis, it is often desirable to derive the criterion for choosing $\sigma$ directly in terms of the input parameters $\epsilon$, $\delta$, $p$, and $k$. We call the corresponding algorithm to be ``Analytic DP-RP-G''. This is achieved by bounding $\Delta_2$ in high probability.

\vspace{0.1in}
\begin{lemma}[\citet{laurent2000adaptive}]\label{lemma:chi-square tail}
Let $Y_1,...,Y_n$ be iid standard Gaussian random variables, and denote $Z_i=\sum_{i=1}^n Y_i^2$. Then for any $t>0$,
\begin{align*}
    Pr(Z\geq n+2\sqrt{nt}+2t)\leq \exp(-t).
\end{align*}
\end{lemma}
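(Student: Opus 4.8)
The plan is to use the standard Chernoff bound driven by the moment generating function of a chi-square random variable, then optimize over the free exponential parameter, and finally reduce the whole thing to an elementary Taylor-series comparison.

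First I would fix any $\lambda\in(0,1/2)$ and apply the exponential Markov inequality. Writing $a=n+2\sqrt{nt}+2t$, we have
\[
Pr(Z\geq a)=Pr\!\left(e^{\lambda Z}\geq e^{\lambda a}\right)\leq e^{-\lambda a}\,\mathbb E\!\left[e^{\lambda Z}\right].
\]
Since the $Y_i$ are iid standard Gaussian, $Z=\sum_{i=1}^n Y_i^2$ is chi-square with $n$ degrees of freedom and has the well-known moment generating function $\mathbb E[e^{\lambda Z}]=(1-2\lambda)^{-n/2}$, valid exactly for $\lambda<1/2$ (each factor comes from $\mathbb E[e^{\lambda Y_i^2}]=(1-2\lambda)^{-1/2}$ by completing the Gaussian integral). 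This yields
\[
Pr(Z\geq a)\leq \exp\!\left(-\lambda a-\tfrac{n}{2}\log(1-2\lambda)\right).
\]

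Next I would minimize the exponent over $\lambda\in(0,1/2)$. Differentiating in $\lambda$ shows the optimum occurs at $1-2\lambda=n/a$, equivalently $\lambda=(a-n)/(2a)$, which lies in $(0,1/2)$ precisely because $a>n$ whenever $t>0$. Substituting back and simplifying gives the rate-function bound
\[
Pr(Z\geq a)\leq \exp\!\left(-\tfrac{1}{2}\bigl[a-n-n\log(a/n)\bigr]\right).
\]
It then suffices to verify that the bracketed quantity is at least $2t$ for the specific choice $a=n+2\sqrt{nt}+2t$, since that forces the right-hand side to be at most $e^{-t}$.

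The final step is the main (though entirely elementary) obstacle: it is exactly here that the precise coefficients in $a$ matter. Setting $s=\sqrt{t/n}$, so that $a/n=1+2s+2s^2$ and $2t/n=2s^2$, the target inequality $a-n-n\log(a/n)\geq 2t$ becomes, after dividing by $n$, the statement $2s+2s^2-\log(1+2s+2s^2)\geq 2s^2$, i.e. $2s-\log(1+2s+2s^2)\geq 0$, i.e. $e^{2s}\geq 1+2s+2s^2$. This last inequality is immediate from the Taylor expansion $e^{2s}=1+2s+2s^2+\tfrac{4}{3}s^3+\cdots$, all of whose remaining terms are nonnegative for $s\geq 0$. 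The only points requiring care are confirming the optimizing $\lambda$ stays in the admissible interval $(0,1/2)$ and noting that the clean cancellation leaving just $e^{2s}\geq 1+2s+2s^2$ is what makes the coefficients $2\sqrt{nt}$ and $2t$ in $a$ the natural (and essentially tight) choice. This completes the argument.
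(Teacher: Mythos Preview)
Your proof is correct. The Chernoff-bound argument you give is the standard route to this inequality and is essentially the derivation in \citet{laurent2000adaptive}; the optimization over $\lambda$ and the reduction to $e^{2s}\geq 1+2s+2s^2$ are both carried out cleanly. Note, however, that the paper does not actually prove this lemma: it is simply quoted from \citet{laurent2000adaptive} and then applied in the proof of Lemma~\ref{lemma:l2 sensitivity bound}. So your contribution here is a self-contained verification of a cited result rather than a comparison point with the paper's own argument.
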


\begin{lemma}[Bounding $\Delta_2$] \label{lemma:l2 sensitivity bound}
Let $W\in \mathbb R^{p\times k}$ be a random matrix with iid $N(0,1)$ entries, and $\Delta_2$ be defined in (\ref{eq:l2-sensitivity-known}). Then for any $0<\delta<1$,
\begin{align*}
    Pr\left(\Delta_2\geq \beta\sqrt{1+2\sqrt{\log(p/\delta)/k}+2\log(p/\delta)/k} \right) \leq \delta.
\end{align*}
\end{lemma}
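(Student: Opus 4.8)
The plan is to reduce the statement to a tail bound on the maximum of $p$ chi-square random variables, and then combine the Laurent--Massart inequality (Lemma~\ref{lemma:chi-square tail}) with a union bound over the rows of $W$. The starting point is the explicit formula for the $l_2$-sensitivity in (\ref{eq:l2-sensitivity-known}), namely $\Delta_2 = \frac{\beta}{\sqrt{k}}\max_{1\le i\le p}\|W_{[i,:]}\|_2$, so the randomness in $\Delta_2$ is entirely governed by the largest row norm of $W$.

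First I would observe that since each row $W_{[i,:]}$ consists of $k$ iid $N(0,1)$ entries, the squared row norm $Z_i \eqdef \|W_{[i,:]}\|_2^2 = \sum_{j=1}^k W_{ij}^2$ follows a $\chi^2_k$ distribution. Controlling $\Delta_2$ is therefore equivalent to controlling $\max_{1\le i\le p} Z_i$, because $\Delta_2^2 = (\beta^2/k)\max_i Z_i$. Next I would apply Lemma~\ref{lemma:chi-square tail} to a single $Z_i$ with the choice $t=\log(p/\delta)$, which gives $Pr\big(Z_i \ge k + 2\sqrt{kt} + 2t\big) \le e^{-t} = \delta/p$. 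A union bound over the $p$ rows then yields
\begin{align*}
Pr\left(\max_{1\le i\le p} Z_i \ge k + 2\sqrt{k\log(p/\delta)} + 2\log(p/\delta)\right) \le p\cdot\frac{\delta}{p} = \delta.
\end{align*}

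The final step is to translate this chi-square threshold back into the claimed bound on $\Delta_2$. Using $\Delta_2^2 = (\beta^2/k)\max_i Z_i$, the event $\max_i Z_i \ge k + 2\sqrt{k\log(p/\delta)} + 2\log(p/\delta)$ is identical to $\Delta_2^2 \ge \beta^2\big(1 + 2\sqrt{\log(p/\delta)/k} + 2\log(p/\delta)/k\big)$, where I would verify the algebra by dividing the threshold by $k$ and factoring out $\beta^2$: the middle term becomes $2\sqrt{k\log(p/\delta)}/k = 2\sqrt{\log(p/\delta)/k}$ and the last becomes $2\log(p/\delta)/k$. Taking square roots reproduces exactly the stated expression, so the two probabilities coincide and the bound follows.

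There is no deep obstacle here: the argument is a direct application of a known concentration inequality. The only points requiring care are the choice $t=\log(p/\delta)$ (calibrated precisely so the union-bound factor $p$ is absorbed into $\delta$) and matching the algebra between the chi-square form and the stated form of the bound, which I would double-check term by term.
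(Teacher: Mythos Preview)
Your proposal is correct and matches the paper's own proof essentially line for line: both recognize that $\Delta_2^2 = (\beta^2/k)\max_i \|W_{[i,:]}\|_2^2$ with each $\|W_{[i,:]}\|_2^2 \sim \chi_k^2$, apply the Laurent--Massart tail bound (Lemma~\ref{lemma:chi-square tail}) to a single row, and union-bound over the $p$ rows with the calibration $t=\log(p/\delta)$ (equivalently, the paper sets $\delta'=\delta/p$). The algebraic translation back to the stated threshold is handled identically.
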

\begin{proof}
Let $e_i$ be the unit base vector with the $i$-th dimension being $1$ and all other dimensions being zero. For each $i=1,...,p$, $W^T e_i$ follows iid chi-square distribution with $k$ degree of freedom. Applying Lemma~\ref{lemma:chi-square tail} we have that for any $\delta'>0$,
\begin{align*}
    Pr\left(\|W^T e_i\|^2\geq k+2\sqrt{k\log(1/\delta')}+2\log(1/\delta') \right)\leq \delta',\ \ \forall i=1,...,p.
\end{align*}
Applying union bound gives
\begin{align*}
    Pr\left(\frac{1}{k}\beta^2\max_{i=1,...,p} \|W^T e_i\|^2\geq \beta^2\left(1+2\sqrt{\log(1/\delta')/k}+2\log(1/\delta')/k \right) \right)\leq p\delta'.
\end{align*}
Setting $\delta=p\delta'$ and taking the square root on both sides completes the proof.
\end{proof}

\vspace{0.1in}

With the high probability bound on the $l_2$-sensitivity $\Delta_2$, we have the following result.

\begin{theorem}[Analytic DP-RP-G]\label{theo:analytic-DP-RP-G}
For any $\epsilon>0$ and $0<\delta<\frac{1}{2}$, Algorithm~\ref{alg:DP-RP-gaussian-noise} achieves $(\epsilon,\delta)$-DP when $\sigma\geq \frac{ \beta\sqrt{1+2\sqrt{\log(2p/\delta)/k}+2\log(2p/\delta)/k} \sqrt{2(\log(2/\delta)+\epsilon)} }{\epsilon}$.
\end{theorem}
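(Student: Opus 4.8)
The plan is to combine the two results already established, namely the exact (data-dependent) Gaussian mechanism of Theorem~\ref{theo:kenthapadi} and the high-probability sensitivity bound of Lemma~\ref{lemma:l2 sensitivity bound}, by splitting the failure budget $\delta$ into two equal halves of $\delta/2$. One half absorbs the (small) probability that the realized projection matrix $W$ produces an atypically large sensitivity $\Delta_2$; the other half is the additive slack in the conditional Gaussian guarantee once $\Delta_2$ is under control. The key point is that the prescribed $\sigma$ depends only on $\epsilon,\delta,p,k$ and not on the realized $W$, so the final DP statement must quantify over the randomness in $W$ in addition to that of the added noise $G$.

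First I would instantiate Lemma~\ref{lemma:l2 sensitivity bound} with $\delta'=\delta/2$, which (using $\log(p/\delta')=\log(2p/\delta)$) produces the event
\[
E \;=\; \Big\{\, \Delta_2 \le \bar\Delta_2 \,\Big\}, \qquad \bar\Delta_2 \;:=\; \beta\sqrt{1+2\sqrt{\log(2p/\delta)/k}+2\log(2p/\delta)/k},
\]
with $Pr[E]\ge 1-\delta/2$ over the draw of $W$. Next I would apply Theorem~\ref{theo:kenthapadi} with privacy parameter $(\epsilon,\delta/2)$: since $\log(1/(\delta/2))=\log(2/\delta)$, its requirement reads $\sigma\ge \Delta_2\,\sqrt{2(\log(2/\delta)+\epsilon)}/\epsilon$. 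On any realization of $W$ with $\Delta_2\le\bar\Delta_2$, the stated choice $\sigma\ge \bar\Delta_2\,\sqrt{2(\log(2/\delta)+\epsilon)}/\epsilon$ — which is exactly the formula in the statement — guarantees that $\mathcal M(u)=x+G$ is $(\epsilon,\delta/2)$-DP over the noise $G$, conditional on that $W$. The side conditions hold because $0<\delta<\tfrac12$ gives $0<\delta/2<\tfrac12$ as required by Theorem~\ref{theo:kenthapadi} and $0<\delta/2<1$ as required by Lemma~\ref{lemma:l2 sensitivity bound}.

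The remaining and main step is the bookkeeping that folds the randomness of $W$ into Definition~\ref{def:DP}. Writing $Pr[\mathcal M(u)\in O]=\mathbb E_W\big[Pr_G[\mathcal M(u)\in O\mid W]\big]$ and splitting on $E$ and its complement, I would bound the contribution from $E$ using the conditional $(\epsilon,\delta/2)$-guarantee,
\[
\mathbb E_W\big[Pr_G[\mathcal M(u)\in O\mid W]\,\mathbbm{1}_E\big]\le e^{\epsilon}\,\mathbb E_W\big[Pr_G[\mathcal M(u')\in O\mid W]\,\mathbbm{1}_E\big]+\tfrac{\delta}{2}\,Pr[E]\le e^{\epsilon}Pr[\mathcal M(u')\in O]+\tfrac{\delta}{2},
\]
and bound the contribution from $E^c$ trivially by $Pr[E^c]\le \delta/2$. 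Adding the two pieces yields $Pr[\mathcal M(u)\in O]\le e^{\epsilon}Pr[\mathcal M(u')\in O]+\delta$, which is the claim. The subtlety to get right is precisely this decomposition: the ``bad-$W$'' probability must be charged as a pure additive $\delta/2$ (never inflated by the factor $e^\epsilon$), while on the ``good-$W$'' event the conditional slack $\delta/2$ is only multiplied into $Pr[E]\le 1$. Everything else is routine substitution of the two budget halves into the cited results.
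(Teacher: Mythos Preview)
Your proposal is correct and follows essentially the same approach as the paper's proof: split $\delta$ into two halves, invoke Lemma~\ref{lemma:l2 sensitivity bound} at level $\delta/2$ to control $\Delta_2$, and on the good event apply Theorem~\ref{theo:kenthapadi} at level $(\epsilon,\delta/2)$. Your writeup is in fact more careful than the paper's, since you explicitly carry out the conditioning-on-$W$ decomposition that the paper leaves implicit.
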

\begin{proof}
By Lemma~\ref{lemma:l2 sensitivity bound}, with probability $\delta/2$, $\Delta_2$ is bounded by $\beta\sqrt{1+2\sqrt{\log(2p/\delta)/k}+2\log(2p/\delta)/k}$. In this event, by Theorem~\ref{theo:kenthapadi}, Algorithm~\ref{alg:DP-RP-gaussian-noise} is $(\epsilon,\delta/2)$-DP when $\sigma\geq \Delta_2\frac{\sqrt{2(\log(2/\delta)+\epsilon)}}{\epsilon}$. Therefore, the approach is $(\epsilon,\delta)$-DP.
\end{proof}

We remark that, since this analytic version is built upon a high probability bound on the sensitivity, in most cases, the noise variance in Theorem~\ref{theo:analytic-DP-RP-G} will be slightly larger than the one computed by the realized/exact sensitivity in Theorem~\ref{theo:kenthapadi}.

% The utility (distance estimation variance) of Gaussian mechanism has been derived in \citet{kenthapadi2013privacy}. For completeness, we repeat the result as follows. Denote the (squared) distance estimator from Algorithm~\ref{alg:DP-RP-gaussian-noise} as $$\hat d_{Gauss}=\frac{1}{k}\|\tilde x_1-\tilde x_2\|^2-2\sigma^2.$$
% The following result holds.

% \begin{proposition}[\citet{kenthapadi2013privacy}] Recall $d=\|u-v\|^2$. We have
% \begin{align*}
%     \mathbb E[\hat d_{Gauss}]=d,\quad Var[\hat d_{Gauss}]=\frac{1}{k}(2d^2+8d\sigma^2+8\sigma^4).
% \end{align*}

% \end{proposition}

\subsection{The Optimal Gaussian Noise Mechanism for DP-RP}

In Theorem~\ref{theo:kenthapadi} (as well as the classical Gaussian mechanism, e.g.,~\citet{dwork2014algorithmic}), the analysis on the noise level is based on upper bounding the tail of Gaussian distribution\footnote{More specifically, the analysis of the classic Gaussian mechanism is based on bounding the probability of $X=\log\frac{p_{\mathcal M(u)}(y)}{p_{\mathcal M(u')}(y)}\geq \epsilon$, where $u$ and $u'$ are neighboring databases and $y$ is an output. $X$, called the privacy loss random variable, is shown to follow a Gaussian distribution~\citep{dwork2016concentrated} when the injected noise is Gaussian. The tail probability of $X$ is then bounded by standard concentration inequalities.}. While the noise level in Theorem~\ref{theo:kenthapadi} is rate optimal (see Remark~\ref{remark:optimal}), it can be  improved by computing the exact tail probabilities instead of the upper bounds. \citet[Theorem 8]{balle2018improving} proposed the optimal Gaussian mechanism based on this idea, using which allows us to improve Theorem~\ref{theo:kenthapadi} and obtain the optimal DP-RP-Gaussian-Noise (DP-RP-G-OPT) method as follows.

\vspace{0.1in}

\begin{theorem}[DP-RP-G-OPT]\label{theo:gauss-optimal}
Suppose $\Delta_2$ is defined as (\ref{eq:l2-sensitivity-known}). For any $\epsilon>0$ and $0<\delta<1$, DP-RP-G-OPT in Algorithm~\ref{alg:DP-RP-gaussian-noise} achieves $(\epsilon,\delta)$-DP if $\sigma\geq \sigma^*$ where $\sigma^*$ is the solution to the equation
\begin{align}
    \Phi\left(\frac{\Delta_2}{2\sigma}-\frac{\epsilon \sigma}{\Delta_2}\right) - e^\epsilon \Phi\left(-\frac{\Delta_2}{2\sigma}-\frac{\epsilon \sigma}{\Delta_2}\right)= \delta,  \label{eqn:optimal DP-RP}
\end{align}
where $\Phi(\cdot)$ is the cumulative distribution function (cdf) of the standard normal distribution.
\end{theorem}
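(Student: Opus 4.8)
The plan is to recognize that DP-RP-G-OPT is exactly an instance of the isotropic Gaussian mechanism applied to the query $f(u)=\frac{1}{\sqrt{k}}W^Tu$, and then to invoke the tight privacy profile of the optimal Gaussian mechanism of \citet{balle2018improving}. First I would fix the public projection matrix $W$, so that $f$ is a deterministic map whose $l_2$-sensitivity is $\Delta_2$ given by (\ref{eq:l2-sensitivity-known}). Since the injected noise $G\sim N(0,\sigma^2 I_k)$ is isotropic, the privacy analysis of any neighboring pair $(u,u')$ reduces by rotational invariance to a one-dimensional comparison of $N(0,\sigma^2)$ and $N(\|f(u)-f(u')\|_2,\sigma^2)$. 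Crucially, the discussion following Definition~\ref{def:neighbor} establishes that the bound $\|f(u)-f(u')\|_2\le\Delta_2$ is \emph{attained}, so the worst-case displacement is exactly $\Delta_2$ and the cited bound is tight rather than conservative.

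Next I would apply \citet[Theorem 8]{balle2018improving}, which gives the exact (tightest) value of $\delta$ for the Gaussian mechanism at a given $\epsilon$, sensitivity $\Delta_2$, and noise level $\sigma$, namely
\begin{align*}
    \delta(\sigma)=\Phi\left(\frac{\Delta_2}{2\sigma}-\frac{\epsilon\sigma}{\Delta_2}\right)-e^{\epsilon}\Phi\left(-\frac{\Delta_2}{2\sigma}-\frac{\epsilon\sigma}{\Delta_2}\right).
\end{align*}
The mechanism is $(\epsilon,\delta)$-DP precisely when $\delta(\sigma)\le\delta$, and equation (\ref{eqn:optimal DP-RP}) defines $\sigma^*$ as the noise level for which $\delta(\sigma^*)=\delta$.

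To convert this defining equality into the stated sufficient condition $\sigma\ge\sigma^*$, I would show that $\delta(\sigma)$ is strictly decreasing in $\sigma$. Writing $g=\frac{\Delta_2}{2\sigma}-\frac{\epsilon\sigma}{\Delta_2}$ and $h=-\frac{\Delta_2}{2\sigma}-\frac{\epsilon\sigma}{\Delta_2}$, the key simplification is the algebraic identity $g^2-h^2=(g-h)(g+h)=\frac{\Delta_2}{\sigma}\cdot\left(-\frac{2\epsilon\sigma}{\Delta_2}\right)=-2\epsilon$, which yields $\phi(g)=e^{\epsilon}\phi(h)$ for the standard normal density $\phi$. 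Differentiating $\delta(\sigma)$ and substituting this identity makes the two density terms combine, leaving
\begin{align*}
    \frac{d\delta}{d\sigma}=\phi(g)\left[\left(-\frac{\Delta_2}{2\sigma^2}-\frac{\epsilon}{\Delta_2}\right)-\left(\frac{\Delta_2}{2\sigma^2}-\frac{\epsilon}{\Delta_2}\right)\right]=-\phi(g)\frac{\Delta_2}{\sigma^2}<0.
\end{align*}
Hence $\delta(\sigma)$ is strictly decreasing, so $\sigma\ge\sigma^*$ implies $\delta(\sigma)\le\delta(\sigma^*)=\delta$, which gives $(\epsilon,\delta)$-DP.

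Because the heavy analytic work, namely the exact hockey-stick/privacy-profile computation for the Gaussian mechanism, is already packaged in \citet{balle2018improving}, I do not expect a genuine obstacle. The only steps requiring care are confirming the reduction to the one-dimensional worst case with displacement exactly $\Delta_2$ (so the cited profile is tight) and verifying the monotonicity; the latter is what licenses replacing the defining equality by the inequality $\sigma\ge\sigma^*$, and it becomes clean once the identity $\phi(g)=e^{\epsilon}\phi(h)$ is observed.
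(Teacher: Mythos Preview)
Your proposal is correct and takes essentially the same approach as the paper: the paper does not give a separate proof but simply invokes \citet[Theorem~8]{balle2018improving} for the optimal Gaussian mechanism applied to the query $f(u)=\frac{1}{\sqrt{k}}W^Tu$ with sensitivity $\Delta_2$ from (\ref{eq:l2-sensitivity-known}). Your explicit verification that $\delta(\sigma)$ is strictly decreasing via the identity $\phi(g)=e^{\epsilon}\phi(h)$ is a nice addition that justifies the inequality form $\sigma\ge\sigma^*$, which the paper leaves implicit.
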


\vspace{0.1in}

\begin{remark} \label{remark:optimal}
For both Theorem~\ref{theo:gauss-optimal} and Theorem~\ref{theo:kenthapadi}, the Gaussian noise level is $\sigma = \mathcal O(\frac{\Delta_2}{\epsilon})= \mathcal O(\frac{1}{\epsilon})$ when $\epsilon$ is small (i.e., $\epsilon\rightarrow 0$), and $\sigma=\mathcal O(\frac{1}{\sqrt\epsilon})$ when $\epsilon$ is large (i.e., $\epsilon\rightarrow\infty$). This is rate optimal as Gaussian mechanism must incur a noise $\sigma=\Omega(\frac{\Delta_2}{\sqrt\epsilon})$ when $\epsilon\rightarrow\infty$ \citep{balle2018improving}. Theorem~\ref{theo:gauss-optimal} reduces the variance of Theorem~\ref{theo:kenthapadi} by analyzing the tail probability exactly.
\end{remark}

\begin{figure}[h]
    \centering
\mbox{
    \includegraphics[width=3.2in]{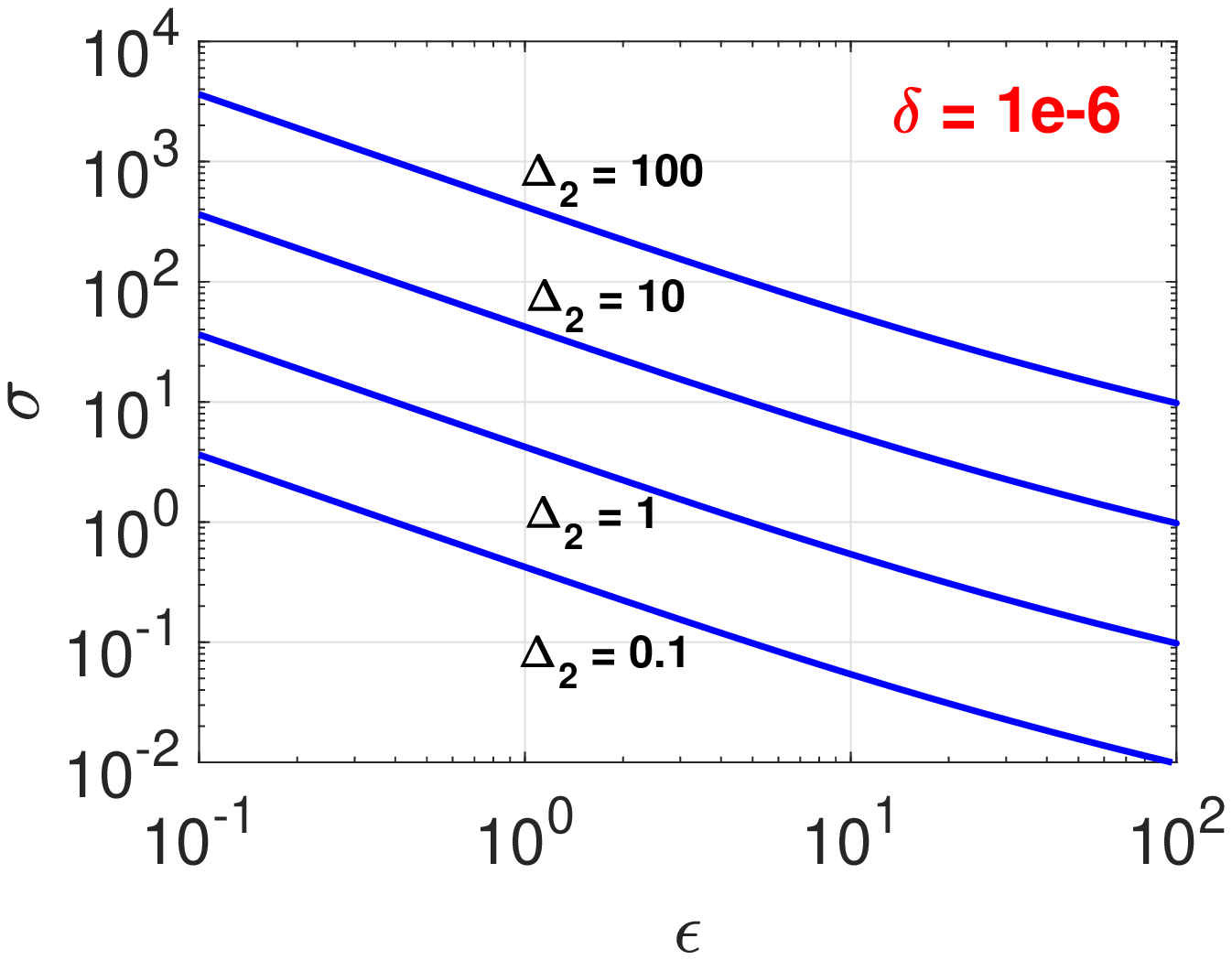}
    \includegraphics[width=3.2in]{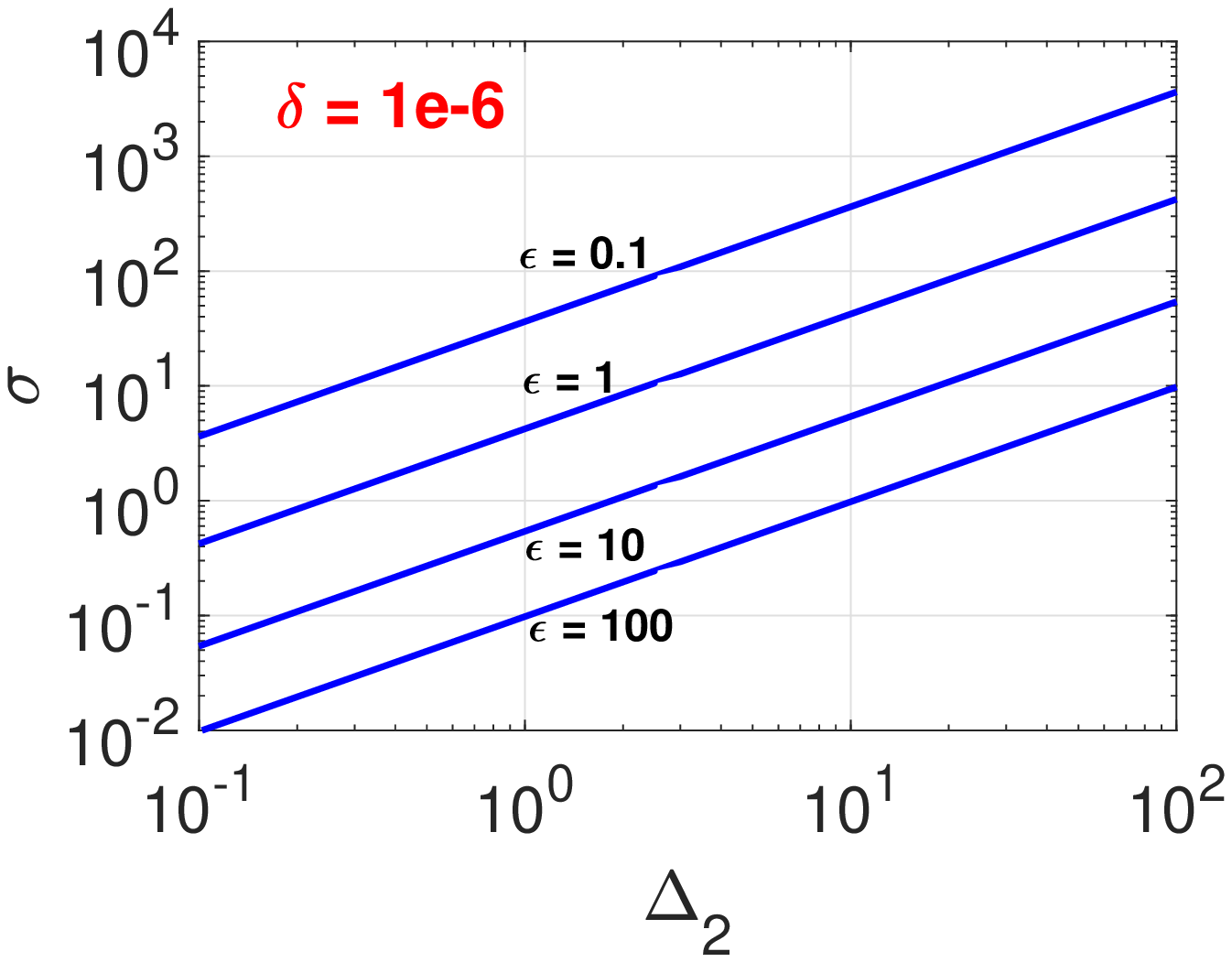}
}

\vspace{-0.1in}

    \caption{Left panel: the optimal Gaussian noise $\sigma$ versus $\epsilon$ for a series of $\Delta_2$ values, by solving the nonlinear equation~\eqref{eqn:optimal DP-RP} in Theorem~\ref{theo:gauss-optimal}, for $\delta = 10^{-6}$. Right panel: the  optimal Gaussian noise $\sigma$ versus $\Delta_2$ for a series of $\epsilon$ values. }
    \label{fig:optimal DP-RP}
\end{figure}

\subsection{The Laplace Noise Mechanism for DP-RP}  \label{sec:DP-RP laplace}

\begin{algorithm}[h]
	{
		\textbf{Input:} Data $u\in[-1,1]^p$, privacy parameters $\epsilon>0$, number of projections~$k$
		
		\textbf{Output:} $\epsilon$-differentially private random projections $\tilde{x}\in\mathbb{R}^k$
		
		Apply RP $x=\frac{1}{\sqrt{k}}W^Tu$, where $W\in\mathbb R^{p\times k}$ is a random $N(0,1)$ matrix
		
		Compute sensitivity $\Delta_1$ by (\ref{eq:l1-sensitivity-known})
		
		Generate iid random noise vector $L\in\mathbb R^k$ following $Laplace(\frac{\Delta_1}{\epsilon})$

		Return  $\tilde x =x+L$
	}
	\caption{DP-RP-L}
	\label{alg:DP-RP-laplace-noise}
\end{algorithm}

By Theorem~\ref{prop:laplace mechanism}, one may add $Laplace(\lambda)$ noise to the projected data with $\lambda=\Delta_1/\epsilon$ to achieve $\epsilon$-DP. For completeness, we detail the approach in Algorithm~\ref{alg:DP-RP-laplace-noise}, which is the same as Algorithm~\ref{alg:DP-RP-gaussian-noise} except for the noise added. $\Delta_1$ is the $l_1$-sensitivity for a given $W$ given in (\ref{eq:l1-sensitivity-known}): $\Delta_1=\frac{1}{\sqrt{k}}\beta\max_{i=1,...,p} \|W_{[i,:]}\|_1$. Each term in the maximum operator is a sum of $k$ iid standard half-normal random variables ($Y=|X|$ where $X$ is standard normal). The readers might be interested in a comparison between the noise level of the two noise addition mechanisms, i.e., DP-RP-G and DP-RP-L. To observe the dependence of the noise level $\lambda$ on $\epsilon$, $k$ and $p$, we derive a tail bound on the $l_1$-sensitivity. We first establish the following lemma.

\begin{lemma}[Half-normal tail bound] \label{lemma:half-normal-tail}
Let $X_1,...,X_n$ be iid $N(0,1)$ random variables and let $Y_i=|X_i|$, $\forall i$. Denote $Z=\sum_{i=1}^n Y_i$. Then for any $t>0$, it holds that
\begin{align*}
    Pr(Z\geq \sqrt{2n^2\log 2+2nt})\leq \exp(-t).
\end{align*}
\end{lemma}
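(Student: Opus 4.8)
The plan is to use the standard Chernoff (moment generating function) method for the sum of independent half-normal variables. The only nonroutine ingredient is a clean bound on the MGF of a single half-normal $Y_i=|X_i|$; once that is in hand, everything reduces to optimizing a Gaussian-type exponent and matching the resulting threshold to the claimed expression.

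First I would compute the MGF of one term. For $\lambda\geq 0$, completing the square in the exponent gives
\[
\mathbb E[e^{\lambda Y_i}]=\mathbb E[e^{\lambda|X_i|}]=2\int_0^\infty e^{\lambda x}\frac{1}{\sqrt{2\pi}}e^{-x^2/2}\,dx=2e^{\lambda^2/2}\Phi(\lambda)\leq 2e^{\lambda^2/2},
\]
where $\Phi$ is the standard normal cdf and the final inequality uses $\Phi(\lambda)\leq 1$. This is the crucial estimate: the half-normal is sub-Gaussian with the same proxy variance as the normal, at the cost of only a multiplicative factor $2$.

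Next I would combine the terms and apply Markov's inequality. By independence,
\[
\mathbb E[e^{\lambda Z}]=\prod_{i=1}^n \mathbb E[e^{\lambda Y_i}]\leq 2^n e^{n\lambda^2/2},
\]
so that for every $\lambda\geq 0$ and every $a>0$,
\[
Pr(Z\geq a)\leq e^{-\lambda a}\,\mathbb E[e^{\lambda Z}]\leq \exp\!\left(n\log 2+\frac{n\lambda^2}{2}-\lambda a\right).
\]
Minimizing the exponent over $\lambda\geq 0$ gives the optimizer $\lambda^\ast=a/n\geq 0$, and substituting it yields $Pr(Z\geq a)\leq \exp\!\left(n\log 2-\frac{a^2}{2n}\right)$.

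Finally I would choose $a=\sqrt{2n^2\log 2+2nt}$. Then $a^2/(2n)=n\log 2+t$, so the exponent becomes $n\log 2-(n\log 2+t)=-t$, giving $Pr(Z\geq a)\leq e^{-t}$ exactly as claimed. The main (and essentially only) obstacle is the MGF bound in the first step; after the factor-of-two sub-Gaussian estimate is established, the rest is a textbook Chernoff optimization, and the particular form $\sqrt{2n^2\log 2+2nt}$ is precisely what the identity $n\log 2-a^2/(2n)=-t$ produces.
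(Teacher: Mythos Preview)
Your proposal is correct and is essentially identical to the paper's own proof: both use the Chernoff/Markov bound, the MGF identity $\mathbb E[e^{\lambda|X|}]=2e^{\lambda^2/2}\Phi(\lambda)\leq 2e^{\lambda^2/2}$, optimize the exponent at $\lambda^\ast=a/n$, and then substitute $a=\sqrt{2n^2\log 2+2nt}$ to cancel the $n\log 2$ term.
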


\vspace{0.1in}

Using Lemma~\ref{lemma:half-normal-tail}, we obtain a high probability bound on the $l_1$-sensitivity as follows. The proof is omitted since it is similar to the proof of Lemma~\ref{lemma:l2 sensitivity bound}.

\begin{lemma}[Bounding $\Delta_1$] \label{lemma:l1 sensitivity bound}
Let $W\in \mathbb R^{p\times k}$ be a random matrix with iid $N(0,1)$ entries, and $\Delta_1$ be defined in (\ref{eq:l1-sensitivity-known}). Then for any $0<\delta<1$,
\begin{align*}
    Pr\left(\Delta_1\geq \beta\sqrt{2k\log 2+2\log(p/\delta)} \right) \leq \delta.
\end{align*}
\end{lemma}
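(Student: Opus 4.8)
The plan is to mirror the proof of Lemma~\ref{lemma:l2 sensitivity bound}, simply replacing the chi-square tail bound (Lemma~\ref{lemma:chi-square tail}) by the half-normal tail bound (Lemma~\ref{lemma:half-normal-tail}). The starting point is the exact expression $\Delta_1=\frac{1}{\sqrt{k}}\beta\max_{i=1,...,p}\|W_{[i,:]}\|_1$ from (\ref{eq:l1-sensitivity-known}), where each row $W_{[i,:]}$ consists of $k$ iid $N(0,1)$ entries. Consequently $\|W_{[i,:]}\|_1=\sum_{j=1}^k |W_{ij}|$ is exactly a sum of $k$ iid standard half-normal variables, which is the precise object controlled by Lemma~\ref{lemma:half-normal-tail} with $n=k$.

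First I would apply Lemma~\ref{lemma:half-normal-tail} to a single row: for any $t>0$ and each $i$, $Pr(\|W_{[i,:]}\|_1\geq \sqrt{2k^2\log 2+2kt})\leq e^{-t}$. Next, a union bound over the $p$ rows gives $Pr(\max_{i}\|W_{[i,:]}\|_1\geq \sqrt{2k^2\log 2+2kt})\leq p\,e^{-t}$. Scaling the threshold by $\frac{\beta}{\sqrt{k}}$ to convert $\max_i\|W_{[i,:]}\|_1$ into $\Delta_1$, and using the simplification $\frac{1}{\sqrt{k}}\sqrt{2k^2\log 2+2kt}=\sqrt{2k\log 2+2t}$, yields $Pr(\Delta_1\geq \beta\sqrt{2k\log 2+2t})\leq p\,e^{-t}$.

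Finally, I would choose $t=\log(p/\delta)$ (equivalently, set $\delta'=\delta/p$ so that $p\,e^{-t}=\delta$), which makes the right-hand side equal $\delta$ and turns $2t$ into $2\log(p/\delta)$. This gives precisely the claimed bound $Pr(\Delta_1\geq \beta\sqrt{2k\log 2+2\log(p/\delta)})\leq \delta$, completing the argument.

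As for the main obstacle: there is essentially none of substance, which is why the paper omits the proof. The only point requiring a little care is the algebraic step in which the $\frac{1}{\sqrt{k}}$ prefactor combines with the square-root threshold so that the $k^2$ and $k$ terms collapse cleanly into $2k\log 2+2t$; getting this $k$-dependence right is what makes the final bound match the stated form. The union-bound step and the calibration $\delta'=\delta/p$ are otherwise identical in spirit to the proof of Lemma~\ref{lemma:l2 sensitivity bound}.
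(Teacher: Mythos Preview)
Your proposal is correct and follows exactly the approach the paper intends: the paper explicitly omits the proof, noting that it is similar to that of Lemma~\ref{lemma:l2 sensitivity bound}, and your argument is precisely that---apply Lemma~\ref{lemma:half-normal-tail} to each row, take a union bound over the $p$ rows, and set $t=\log(p/\delta)$. The algebraic simplification $\frac{1}{\sqrt{k}}\sqrt{2k^2\log 2+2kt}=\sqrt{2k\log 2+2t}$ is also correct.
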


It is important to point out that, unlike in the Gaussian mechanism, simply inserting the upper bound in Lemma~\ref{lemma:l1 sensitivity bound} to replace the (exact) $\Delta_1$ in the noise level of $Laplace(\Delta_1/\epsilon)$ in Algorithm~\ref{alg:DP-RP-laplace-noise} does not provide pure $\epsilon$-DP. In other words, we must compute the actual $\Delta_1$ by (\ref{eq:l1-sensitivity-known}) for a realization of the projection matrix $W$. This is because, using the high probability bound of $\Delta_1$ would introduce a failure probability (i.e., when the actual $\Delta_1$ is larger than the bound) which violates $\epsilon$-DP.

\vspace{0.1in}
\noindent\textbf{Comparison with DP-RP-G-OPT.} Lemma~\ref{lemma:l1 sensitivity bound} provides an estimate of the noise level required by the Laplace mechanism. We can roughly compare DP-RP-L with DP-RP-OPT in terms of the noise variances, which equal $\sigma^2$ for Gaussian and $2\lambda^2$ for Laplace, respectively. Lemma~\ref{lemma:l1 sensitivity bound} shows that in most cases, the Laplace noise requires $\lambda=\mathcal O(\frac{\sqrt k}{\epsilon})$. Recall that in DP-RP-G and DP-RP-G-OPT, the Gaussian noise level $\sigma=\mathcal O(\frac{1}{\epsilon})$ when $\epsilon\rightarrow 0$ and $\sigma=\mathcal O(\frac{1}{\sqrt\epsilon})$ when $\epsilon\rightarrow \infty$. Thus, the Gaussian noise would be smaller with relatively small $\epsilon$, and the Laplace noise would be smaller only when $\epsilon=\Omega(k)$, which is usually too large to be useful since in practice $k$ is usually hundreds to thousands. As a result, we would expect DP-RP-G methods to perform better than DP-RP-L in common scenarios. Therefore, in the remaining parts of the paper, we will mainly focus on the Gaussian mechanism for noise addition.

\subsection{DP-RP-G-OPT-B: DP-RP with Rademacher Projections}  \label{sec:DP-RP Rade}

We have considered the Gaussian random projections. We can also adopt other types of projection matrices which might even work better for DP.  The following distributions of $w_{ij}$ are popular:

\begin{itemize}

\item The uniform distribution, $\sqrt{3}\times unif[-1,1]$. The $\sqrt{3}$ factor is placed here to have $\mathbb E(w_{ij}^2)=1$ by following the convention in the practice of random projections.

\item The ``very sparse'' distribution, as used in~\citet{li2006very}:
\begin{align}\label{eqn:vsrp}
w_{ij} = \sqrt{s}\times \left\{\begin{array}{rrl}
-1 & \text{with prob.} &1/(2s)\\
0 & \text{with prob.} &1-1/s,\\
+1 & \text{with prob.} &1/(2s)
\end{array}\right.
\end{align}
which generalizes~\citet{achlioptas2003database} (for $s=1$ and $s=3$). Note that when $s=1$, it is also called the ``symmetric Bernoulli'' distribution or the ``Rademacher'' distribution.
\end{itemize}

From Theorem~\ref{theo:kenthapadi} and Theorem~\ref{theo:gauss-optimal}, it is clear that the noise magnitude of Gaussian noise in DP-RP directly depends on the $l_2$-sensitivity $\Delta_2$, which, according to (\ref{eq:l2-sensitivity-known}), equals the largest row norm of the projection matrix $W$. Among the above-mentioned distributions, the dense Rademacher projection ($s=1$ in (\ref{eqn:vsrp})) has $\Delta_2=\frac{1}{\sqrt k}\beta\times \sqrt{k}=\beta$ which is independent of $p$. This could be much smaller than the dense Gaussian projection (i.e., DP-RP-G-OPT) as can be seen from the bound in Lemma~\ref{lemma:l2 sensitivity bound}).

\newpage

\begin{figure}[h]
    \centering
    \includegraphics[width=3.2in]{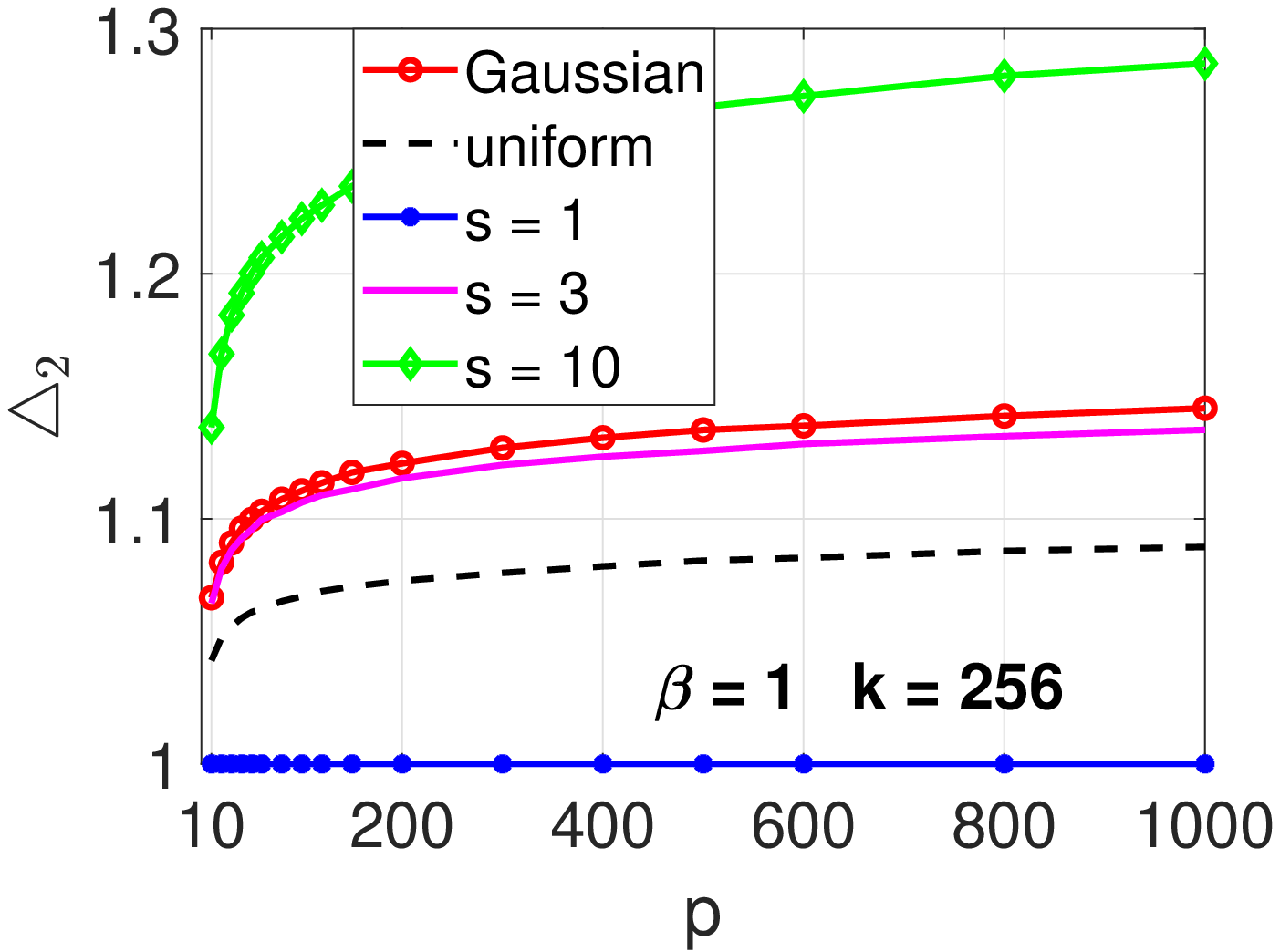}
    \includegraphics[width=3.2in]{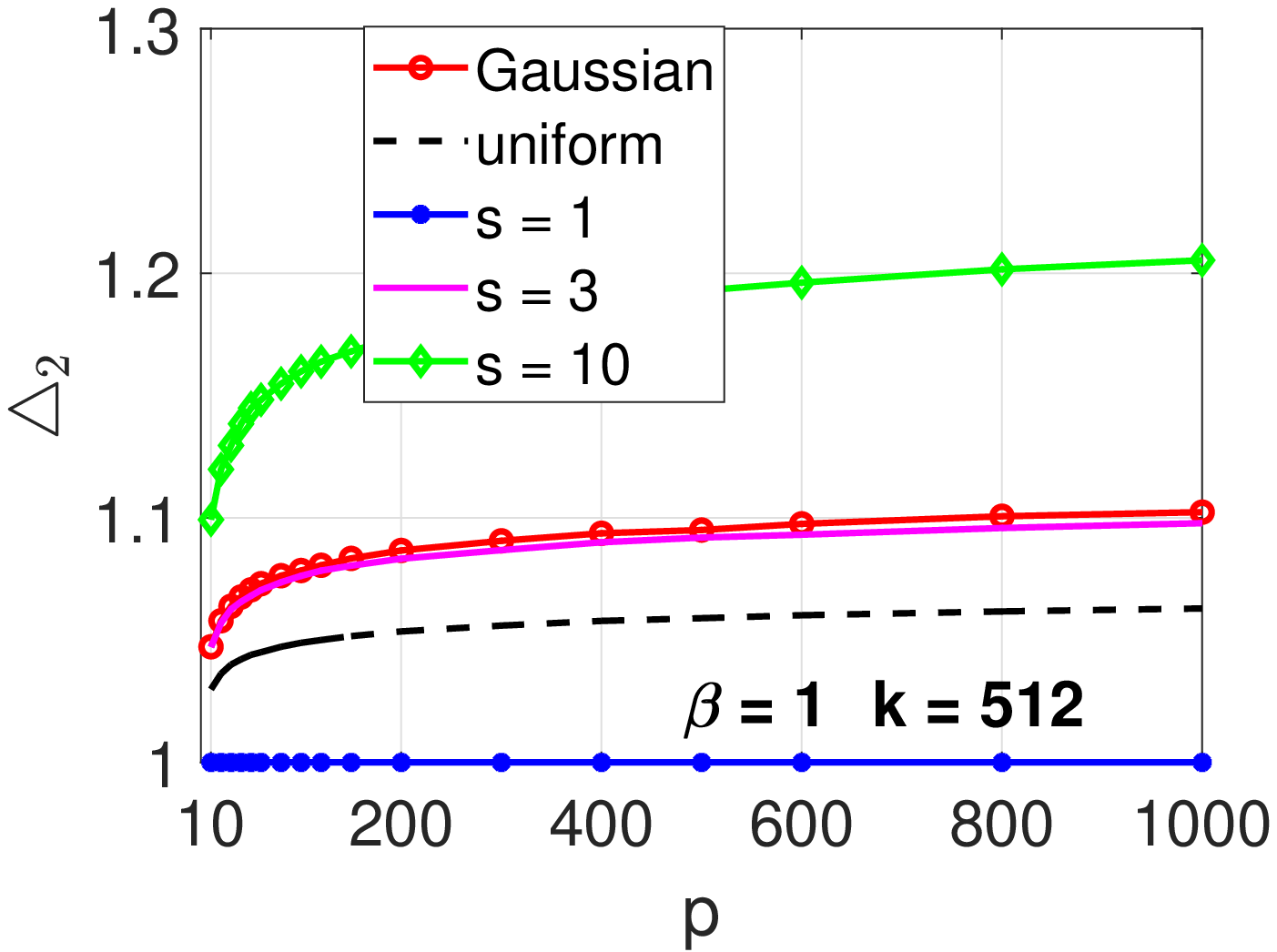}

\vspace{-0.1in}

    \caption{The $l_2$-sensitivity $\Delta_2$ (\ref{eq:l2-sensitivity-known}) for different types of random projection matrices against the data dimensionality $p$, at $k=256$ and $k=512$, respectively. $\beta=1$. }
    \label{fig:row-norm}
\end{figure}

In Figure~\ref{fig:row-norm}, we numerically simulate the $\Delta_2$ of different projection matrices, which shows that the Rademacher projection produces the smallest sensitivity. This, when plugged into the optimal Gaussian mechanism (Theorem~\ref{theo:gauss-optimal}), leads to smaller Gaussian noise variance needed. For clarity, we summarize the DP-RP based on Rademacher projections in Algorithm~\ref{alg:DP-RP-G-OPT-B}. We name it ``DP-RP-G-OPT-B'', where ``G'' stands for the \textbf{G}aussian noise mechanism and ``B'' stands for ``symmetric \textbf{B}ernoulli'' projections.

\begin{algorithm}[h]
	{
		\textbf{Input:} Data $u\in[-1,1]^p$, privacy parameters $\epsilon>0$, $\delta\in (0,1)$, number of projections~$k$
		
		\textbf{Output:} $(\epsilon,\delta)$-differentially private random projections $\tilde x\in \mathbb R^k$
		
		Apply RP $x=\frac{1}{\sqrt{k}}W^Tu$, where $W\in\mathbb R^{p\times k}$ is a random Rademacher matrix
		
		Generate iid random noise vector $G\in\mathbb R^k$ following $N(0,\sigma^2)$ where $\sigma$ is obtained by Theorem~\ref{theo:gauss-optimal} with $\Delta_2=\beta$

		Return  $\tilde x =x+G$
	}
	\caption{DP-RP-G-OPT-B}
	\label{alg:DP-RP-G-OPT-B}
\end{algorithm}

Does the reduction in noise magnitude translate into better utility? We first provide an approximate assessment as follows. Note that when the data $u\in [-1,1]^p$ is not too sparse, the distribution of $\sum_{i=1}^p w_iu_i$, where $w_i$ follows Rademacher distribution, is asymptotically normal according to the Central Limit Theorem (CLT), with mean $0$ and variance $\|u\|^2$. This distribution is the same as the distribution of $\sum_{i=1}^p w_iu_i$ when $w_i$ is from $N(0,1)$. Therefore, we may expect the projected data of Gaussian RP and Rademacher RP to have similar magnitude. Hence, the signal-to-noise ratio (SNR) of DP-RP-G-OPT-B using Rademacher projections would be higher than that of DP-RP-G-OPT using Gaussian projections, as a result of the smaller noise level. An analytical comparison in terms of inner product estimation variance will be given in Section~\ref{sec:variance comparison}.

\subsection{DP-OPORP: A More Efficient Alternative}

The standard random projection (\ref{def:RP}) requires $k$ projections which takes $O(kp)$ complexity. We can reduce the cost to $O(p)$ by the idea of ``binning + RP''. Basically, we split the data entries into $k$ bins, and apply RP in each bin to generate $k$ samples. \cite{zhao2022differentially} studied noise injection mechanism for count-sketch. Recently, \citet{li2023oporp} proposed OPORP (One Permutation + One Random Projection), an improved variant of count-sketch~\citep{charikar2004finding}, by using fixed-length binning and normalization. The steps of OPORP for a data vector $u$ are summarized in Algorithm~\ref{alg:OPORP} (for convenience, assume $p$ can be integer divided by $k$).

\begin{algorithm}[h]
	{
		\textbf{Input:} Data vector $u\in\mathbb R^p$, number of projected samples~$k$
		
		\textbf{Output:} $k$ OPORP samples
		
		Apply a permutation $\Pi:[p]\mapsto [p]$ to $u$ to get $u_{\Pi}$
		
		Split the $p$ permuted data columns (features) into $k$ consecutive fixed-length bins each containing $p/k$ features: $u_\Pi=[u_\Pi^{(1)},...,u_\Pi^{(k)}]$
		
		Generate one projection vector $w\in\mathbb R^p$ following a proper probability distribution (see Section~\ref{sec:DP-RP Rade}). Denote it as a concatenation of bins $w=[w^{(1)},...,w^{(k)}]$

		Return $k$ projected samples by $x_i={w^{(i)}}^T u_\Pi^{(i)}$, for $i=1,...,k$.
	}
	\caption{OPORP: count-sketch with fixed-length binning and normalization.}
	\label{alg:OPORP}
\end{algorithm}

Note that the same permutation $\Pi$ and projection vector $w$ should be used to process all the data vectors. Let $x_i$ be an OPORP of $u$, and $y_i$ be an OPORP of $v$. It can be shown that $\mathbb E[x_iy_i]=u^T v$, i.e., OPORP provides an unbiased estimator of the inner product. Regarding the estimation variance, note that in step 2, OPORP deploys a fixed-length binning scheme; in prior works, the lengths of the bins are random and follow a multinomial distribution. \citet{li2023oporp} showed that fixed-length binning can reduce the inner product and cosine estimation variance of variable-length binning, by a factor of $\frac{p-k}{p-1}$, when the projection vector follows the symmetric Bernoulli (Rademacher) distribution (see (\ref{eqn:vsrp})). Furthermore, the estimation variance can be substantially reduced by normalizing the output vector of OPORP. It was also shown by~\citet{li2023oporp} that one can substantially improve the estimates of ``very sparse random projections'' via the same normalization technique.

\vspace{0.1in}

Next, we consider the differential privacy of  OPORP and propose DP-OPORP analogously, as given in Algorithm~\ref{alg:DP-OPORP}. Since we have shown that Gaussian noise would be better than Laplace noise in most cases for DP-SignRP (see Section~\ref{sec:DP-RP laplace}), we present the optimal Gaussian noise mechanism here for conciseness. Also, while one may choose any projection vector $w$ in OPORP (Algorithm~\ref{alg:OPORP}), \citet{li2023oporp} showed that the Rademacher projection gives the smallest cosine and inner product estimation variance. Hence, we adopt the Rademacher projection in DP-OPORP.

\begin{algorithm}[h]
	{
		\textbf{Input:} Data $u\in[-1,1]^p$, privacy parameters $\epsilon>0$, $\delta\in (0,1)$, number of projections~$k$
		
		\textbf{Output:}   Differentially private OPORP
		
		Apply Algorithm~\ref{alg:OPORP} with a random Rademacher projection vector to obtain the OPORP $x$
		
		Set sensitivity $\Delta_2=\beta$
		
		Generate iid random vector $G\in\mathbb R^k$ following $N(0,\sigma^2)$ where $\sigma$ is computed by Theorem~\ref{theo:gauss-optimal}

		Return  $\tilde x =x+G$
	}
	\caption{DP-OPORP}
	\label{alg:DP-OPORP}
\end{algorithm}

In Algorithm~\ref{alg:DP-OPORP}, we formalize the DP-OPORP method. It is a variant of DP-RP-G-OPT-B, where Gaussian noise is added to the OPORP. Note that the sensitivity $\Delta_2=\beta$, due to the nature of OPORP: changing the data $u$ by $\beta$ (Definition~\ref{def:neighbor}) would lead to at most $\beta$ change in $x$ in term of $l_2$ distance.  Note that, if we directly add noise to each dimension of the original data, then the sensitivity by definition  $\Delta_2=\max_{(u,u')\in \mathcal N(\mathcal U)}\|f(u)-f(u')\|_2$ is also $\beta$.

\subsection{Comparison: Inner Product Estimation}   \label{sec:variance comparison}

We now analytically compare different DP algorithms, in terms of inner product estimation. Here for simplicity, we assume the data are normalized, i.e., the data vector has $l_2$ norm equal to 1. In this case, the inner product is also the cosine. The baseline method is the most straightforward: we add optimal Gaussian noise to each dimension of the original data (Raw-data-G-OPT). As mentioned earlier, the sensitivity is also $\Delta_2=\beta$. This means, when we compare all three methods: Raw-data-G-OPT, DP-RP-G-OPT-B, and DP-OPORP, the noise level $\sigma$ is the same. This makes it convenient to conduct the comparisons, from which we can gain valuable insights.

\begin{theorem}[Raw-data-G-OPT, i.e., adding optimal Gaussian noise on raw data]  \label{theo:original-data-inner}
Let $\sigma$ be the solution to (\ref{eqn:optimal DP-RP}) with $\Delta_2=\beta$. For any $u, v\in\mathcal U$, let $\tilde u_i=u_i+a_i$ and $\tilde v_i=v_i+b_i$ be the DP noisy vectors, with $a_i, b_i\sim N(0,\sigma^2)$ i.i.d.  Then, denote $\hat g_{org}=\sum_{i=1}^p \tilde u_i\tilde v_i$. we have
\begin{align}
\mathbb E[\hat g_{org}]=\sum_{i=1}^p u_iv_i,  \hspace{0.3in} Var\left(\hat g_{org}\right)
= \sigma^2\sum_{i=1}^p\left(u_i^2+v_i^2\right) + p\sigma^4.
\end{align}
\end{theorem}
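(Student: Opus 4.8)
The plan is to compute the first two moments of $\hat g_{org}=\sum_{i=1}^p \tilde u_i\tilde v_i$ directly, exploiting the facts that the noise is zero-mean and that all the noise variables $\{a_i\}$ and $\{b_i\}$ are mutually independent (across dimensions and across the two vectors, since the theorem stipulates independent noise for $u$ and $v$). The key structural observation is that $\tilde u_i\tilde v_i$ and $\tilde u_j\tilde v_j$ are independent for $i\neq j$, which reduces the variance of the sum to a sum of per-coordinate variances; this is what makes the whole computation routine rather than requiring any joint analysis.

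For the expectation, I would expand each term as
\begin{align*}
\tilde u_i\tilde v_i=(u_i+a_i)(v_i+b_i)=u_iv_i+u_ib_i+v_ia_i+a_ib_i,
\end{align*}
and take expectations term by term. Since $\mathbb E[a_i]=\mathbb E[b_i]=0$ and $\mathbb E[a_ib_i]=\mathbb E[a_i]\mathbb E[b_i]=0$ by independence, every stochastic term vanishes and $\mathbb E[\tilde u_i\tilde v_i]=u_iv_i$. Summing over $i$ gives $\mathbb E[\hat g_{org}]=\sum_{i=1}^p u_iv_i$, the claimed unbiasedness.

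For the variance, I would first invoke the cross-coordinate independence to write $Var(\hat g_{org})=\sum_{i=1}^p Var(\tilde u_i\tilde v_i)$, and then compute each term from the centered quantity $\tilde u_i\tilde v_i-u_iv_i=u_ib_i+v_ia_i+a_ib_i$. Squaring and taking expectations, the three ``diagonal'' contributions give $u_i^2\,\mathbb E[b_i^2]=u_i^2\sigma^2$, $v_i^2\,\mathbb E[a_i^2]=v_i^2\sigma^2$, and $\mathbb E[a_i^2b_i^2]=\sigma^4$, while all six cross terms are odd in at least one independent zero-mean Gaussian and hence vanish (e.g.\ $\mathbb E[a_i b_i^2]=\mathbb E[a_i]\mathbb E[b_i^2]=0$). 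This yields $Var(\tilde u_i\tilde v_i)=\sigma^2(u_i^2+v_i^2)+\sigma^4$, and summing produces $\sigma^2\sum_{i=1}^p(u_i^2+v_i^2)+p\sigma^4$.

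There is no genuine obstacle here; the only point demanding care is the bookkeeping of the cross terms, where one must confirm that each surviving moment factorizes correctly and that every mixed moment involving an odd power of a single Gaussian is zero. The normalization assumption $\|u\|_2=\|v\|_2=1$ mentioned in the surrounding text is not actually needed for the identities themselves (they hold for arbitrary $u,v$), but it is what lets one simplify $\sum_i(u_i^2+v_i^2)=2$ when later comparing this baseline against DP-RP-G-OPT-B and DP-OPORP under the matched noise level $\Delta_2=\beta$.
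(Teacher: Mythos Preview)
Your proof is correct and follows essentially the same approach as the paper: both expand $(u_i+a_i)(v_i+b_i)$, verify $\mathbb E[\tilde u_i\tilde v_i]=u_iv_i$, compute the per-coordinate variance $\sigma^2(u_i^2+v_i^2)+\sigma^4$, and sum using cross-coordinate independence. The only cosmetic difference is that the paper obtains the variance via $\mathbb E[(\tilde u_i\tilde v_i)^2]-(u_iv_i)^2$ whereas you square the centered quantity directly, which is of course equivalent.
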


\begin{proof}
To add Gaussian noise to the original data, it suffices to find the sensitivity, which, by Definition~\ref{def:neighbor},
is $\Delta_2=\beta$. Thus, the approach is $(\epsilon,\delta)$-DP according to the optimal Gaussian mechanism (Theorem~\ref{theo:gauss-optimal}). To compute the mean and variance, consider some $i\in [p]$. We have
\begin{align}\notag
&\mathbb E\left[\left(u_i + a_i\right)\left(v_i+b_i\right)\right]
= \mathbb E[u_iv_i+a_iv_i+b_iu_i+a_ib_i]
=u_iv_i.
\end{align}
Thus, taking the sum implies $\mathbb E[\hat g_{org}]=\sum_{i=1}^p u_iv_i$. For the variance, \begin{align}\notag
&\mathbb E\left[\left(u_i + a_i\right)\left(v_i+b_i\right)\right]^2
= \mathbb E[u_iv_i+a_iv_i+b_iu_i+a_ib_i]^2
=u_i^2v_i^2+\sigma^2\left(u_i^2+v_i^2\right) + \sigma^4,
\end{align}
which leads to
\begin{align}\notag
&Var\left(\left(u_i + a_i\right)\left(v_i+b_i\right)\right)
= \sigma^2\left(u_i^2+v_i^2\right) + \sigma^4.
\end{align}
Therefore, by independence,
\begin{align}\notag
Var\left(\hat g_{org}\right) = Var\left(\sum_{i=1}^p\left(u_i + a_i\right)\left(v_i+b_i\right)\right)
&= \sigma^2\sum_{i=1}^p\left(u_i^2+v_i^2\right) + p\sigma^4,
\end{align}
which proves the claim.
\end{proof}

For DP-RP-G-OPT-B  and DP-OPORP, we have the following results.

\begin{theorem}[DP-RP-G-OPT-B inner product estimation]  \label{theo:DP-RP-inner}
Let $\sigma$ be the solution to (\ref{eqn:optimal DP-RP}) with $\Delta_2=\beta$. In Algorithm~\ref{alg:DP-RP-G-OPT-B}, let $W\in\{-1,1\}^{p\times k}$ be a Rademacher random matrix. Denote $x=\frac{1}{\sqrt k}W^Tu$, $y=\frac{1}{\sqrt k}W^Tv$, and $a$, $b$ are two random Gaussian noise vectors following $N(0,\sigma^2)$. Let $\hat g_{rp}=\sum_{j=1}^k (x_j+a_j)(y_j+b_j)$. Then, $\mathbb E[\hat g_{rp}]=\sum_{i=1}^p u_iv_i$, and
\begin{align}
    Var\left(\hat g_{rp}\right)
=\sigma^2\sum_{i=1}^p\left(u_i^2+v_i^2\right) + k\sigma^4 + \frac{1}{k}\left(\sum_{i=1}^pu_i^2\sum_{i=1}^pv_i^2+\left(\sum_{i=1}^pu_iv_i\right)^2-2\sum_{i=1}^pu_i^2v_i^2\right).
\end{align}
\end{theorem}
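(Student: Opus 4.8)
\noindent The plan is to push the entire computation down to a single projection coordinate by exploiting independence across the $k$ projections, and then to carry out one fourth-moment calculation for the Rademacher weights. Throughout I use that the noise vectors $a,b$ are zero-mean, independent of each other and of $W$. For the mean, I would expand $(x_j+a_j)(y_j+b_j)=x_jy_j+x_jb_j+a_jy_j+a_jb_j$; the three terms containing a noise factor vanish in expectation by independence and zero mean, leaving $\mathbb E[x_jy_j]$. Writing $x_jy_j=\frac1k\sum_{i,i'}W_{ij}W_{i'j}u_iv_{i'}$ and using $\mathbb E[W_{ij}W_{i'j}]=\mathbbm 1\{i=i'\}$ gives $\mathbb E[x_jy_j]=\frac1k\sum_i u_iv_i$, and summing over $j$ yields $\mathbb E[\hat g_{rp}]=\sum_i u_iv_i$.

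The key structural observation for the variance is that the $k$ summands $(x_j+a_j)(y_j+b_j)$ are mutually independent and identically distributed: the $j$-th summand depends only on the $j$-th column of $W$ and on $(a_j,b_j)$, and these are independent across $j$. Hence $Var(\hat g_{rp})=k\,Var\big((x_1+a_1)(y_1+b_1)\big)$, and it remains to handle a single term. Writing $X=x_1$, $Y=y_1$, $a=a_1$, $b=b_1$, the same independence-and-zero-mean argument collapses the expansion of $(X+a)^2(Y+b)^2$ to
\[
\mathbb E[(X+a)^2(Y+b)^2]=\mathbb E[X^2Y^2]+\sigma^2\big(\mathbb E[X^2]+\mathbb E[Y^2]\big)+\sigma^4,
\]
while $\mathbb E[(X+a)(Y+b)]=\mathbb E[XY]=\frac1k\sum_i u_iv_i$. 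The two second moments are immediate, $\mathbb E[X^2]=\frac1k\sum_i u_i^2$ and $\mathbb E[Y^2]=\frac1k\sum_i v_i^2$, so the entire problem reduces to the single fourth moment $\mathbb E[X^2Y^2]$.

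\textbf{The main obstacle} is precisely this fourth moment, $\mathbb E[X^2Y^2]=\frac1{k^2}\sum_{i,i',l,l'}\mathbb E[W_{i1}W_{i'1}W_{l1}W_{l'1}]u_iu_{i'}v_lv_{l'}$. I would use that for i.i.d. Rademacher weights $\mathbb E[\prod W]$ equals $1$ when every distinct index appears an even number of times and $0$ otherwise, so only two types of configurations survive: the diagonal $i=i'=l=l'$, and the three ways of splitting the four positions into two equal-index pairs, namely $\{i=i',\,l=l'\}$, $\{i=l,\,i'=l'\}$, and $\{i=l',\,i'=l\}$. The delicate point is that the diagonal lies in all three pairings, so I would write each pairing as its off-diagonal part plus the common diagonal term $\sum_i u_i^2v_i^2$ to avoid triple counting. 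The off-diagonal parts contribute $\big(\sum_i u_i^2\big)\big(\sum_i v_i^2\big)$ from the first pairing and $\big(\sum_i u_iv_i\big)^2$ from each of the other two, and adding back the diagonal once yields $k^2\,\mathbb E[X^2Y^2]=\big(\sum_i u_i^2\big)\big(\sum_i v_i^2\big)+2\big(\sum_i u_iv_i\big)^2-2\sum_i u_i^2v_i^2$.

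Finally I would assemble the pieces: subtracting $(\mathbb E[XY])^2=\frac1{k^2}\big(\sum_i u_iv_i\big)^2$ cancels one copy of $\big(\sum_i u_iv_i\big)^2$ to give the single-term variance, and multiplying by $k$ produces the claimed expression, in which the $\sigma^2\sum_i(u_i^2+v_i^2)$ and $k\sigma^4$ terms come from the noise contributions and the $\frac1k\big(\sum_i u_i^2\sum_i v_i^2+(\sum_i u_iv_i)^2-2\sum_i u_i^2v_i^2\big)$ term comes from the fourth-moment calculation above.
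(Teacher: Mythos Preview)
Your argument is correct. It differs from the paper's route in structure: the paper conditions on the projections $(x_j,y_j)$, reads off the conditional mean $\sum_j x_jy_j$ and conditional variance $\sigma^2\sum_j(x_j^2+y_j^2)+k\sigma^4$ by reusing the raw-data calculation (Theorem~\ref{theo:original-data-inner}), and then applies the law of total variance, importing the formula for $Var\big(\sum_j x_jy_j\big)$ from the known Rademacher random-projection variance in \citet{li2006very}. You instead exploit column-wise independence to reduce to a single coordinate and compute the Rademacher fourth moment $\mathbb E[X^2Y^2]$ directly by enumerating the surviving index patterns. Your approach is fully self-contained and rederives the RP variance from scratch; the paper's approach is more modular, cleanly separating the noise contribution from the projection contribution and deferring the combinatorics to a citation.
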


\begin{proof}
The conditional mean and variance can be computed as
\begin{align}\notag
&\mathbb E\left[\sum_{j=1}^k\left(x_j + a_j\right)\left(y_j+b_j\right)|x_j, y_j, j = 1, ..., k\right]
=\sum_{j=1}^kx_jy_j,
\end{align}
\begin{align}\notag
&Var\left(\sum_{j=1}^k\left(x_j + a_j\right)\left(y_j+b_j\right)|x_j, y_j, j = 1, ..., k\right)
=\sigma^2\sum_{j=1}^k (x_j^2+y_j^2) +  k\sigma^4,
\end{align}
where the variance calculation follows from Theorem~\ref{theo:original-data-inner}. Hence, we have
\begin{align}\notag
&\mathbb E\left[\sum_{j=1}^k\left(x_j + a_j\right)\left(y_j+b_j\right)\right]
=\mathbb E\left[\sum_{j=1}^kx_jy_j\right] = \sum_{i=1}^p u_iv_i,
\end{align}
\begin{align}\notag
Var\left(\hat g_{rp}\right)
=&\mathbb E\left[\sigma^2\sum_{j=1}^k (x_j^2+y_j^2) +  k\sigma^4\right]+ Var\left(\sum_{j=1}^kx_jy_j\right)\\\label{eqn:var_g_rp2}
=&\sigma^2\sum_{i=1}^p\left(u_i^2+v_i^2\right) + k\sigma^4 + \frac{1}{k}\left(\sum_{i=1}^pu_i^2\sum_{i=1}^pv_i^2+\left(\sum_{i=1}^pu_iv_i\right)^2-2\sum_{i=1}^pu_i^2v_i^2\right).
\end{align}
In the above calculation, the formula of $Var\left(\sum_{j=1}^kx_jy_j\right)$ is from the result in~\citet{li2006very} with $s=1$ for Rademacher distribution.
\end{proof}

\begin{theorem}[DP-OPORP inner product estimation]  \label{theo:DP-OPORP-inner}
Let $\sigma$ be the solution to (\ref{eqn:optimal DP-RP}) with $\Delta_2=\beta$. Let $w\in\{-1,1\}^p$ be a Rademacher random vector. In Algorithm~\ref{alg:DP-OPORP}, let $x$ and $y$ be the OPORP of $u$ and $v$, and $a$, $b$ be two random Gaussian noise vectors following $N(0,\sigma^2)$. Denote $\hat g_{oporp}=\sum_{j=1}^k (x_j+a_j)(y_j+b_j)$. Then, $\mathbb E[\hat g_{oporp}]=\sum_{i=1}^p u_iv_i$, and
\begin{align}
    Var\left(\hat g_{oporp}\right)
=\sigma^2\sum_{i=1}^p\left(u_i^2+v_i^2\right) + k\sigma^4 + \frac{1}{k}\left(\sum_{i=1}^pu_i^2\sum_{i=1}^pv_i^2+\left(\sum_{i=1}^pu_iv_i\right)^2-2\sum_{i=1}^pu_i^2v_i^2\right)\frac{p-k}{p-1}.
\end{align}
\end{theorem}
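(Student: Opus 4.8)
The plan is to reuse, almost verbatim, the conditioning argument from the proof of Theorem~\ref{theo:DP-RP-inner}, replacing only the dense-projection inner-product variance by the OPORP variance established in~\citet{li2023oporp}. Concretely, I would expose the two independent sources of randomness in stages: first condition on the OPORP samples $x=(x_1,\dots,x_k)$ and $y=(y_1,\dots,y_k)$ and treat the Gaussian perturbations $a_j,b_j\sim N(0,\sigma^2)$ as the only remaining randomness, then average over the projection randomness (the permutation $\Pi$ and the Rademacher vector $w$) via the laws of total expectation and total variance. Since the choice of $\sigma$ is justified exactly as in the earlier theorems (a single coordinate change of size at most $\beta$ falls in one bin and, because $w$ is Rademacher, perturbs that one coordinate of $x$ by at most $\beta$, so $\Delta_2=\beta$ and Theorem~\ref{theo:gauss-optimal} applies), I would not dwell on privacy and would focus on the two moments.

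For the conditional stage, once $x,y$ are fixed, $\hat g_{oporp}=\sum_{j=1}^k (x_j+a_j)(y_j+b_j)$ is a sum of independent ``data-plus-noise'' products, so Theorem~\ref{theo:original-data-inner} applies directly at the level of the $k$ coordinates and yields
\[
\mathbb E[\hat g_{oporp}\mid x,y]=\sum_{j=1}^k x_jy_j, \qquad Var(\hat g_{oporp}\mid x,y)=\sigma^2\sum_{j=1}^k(x_j^2+y_j^2)+k\sigma^4.
\]
Taking the outer expectation, the law of total expectation combined with the unbiasedness of OPORP, $\mathbb E[\sum_{j} x_jy_j]=\sum_i u_iv_i$, immediately gives $\mathbb E[\hat g_{oporp}]=\sum_i u_iv_i$.

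For the variance I would invoke the law of total variance,
\[
Var(\hat g_{oporp})=\mathbb E\!\left[\sigma^2\sum_{j=1}^k(x_j^2+y_j^2)+k\sigma^4\right]+Var\!\left(\sum_{j=1}^k x_jy_j\right),
\]
and reduce the first term using that OPORP preserves squared norms in expectation, i.e.\ $\mathbb E[\sum_j x_j^2]=\|u\|^2$ and $\mathbb E[\sum_j y_j^2]=\|v\|^2$; this contributes precisely $\sigma^2\sum_i(u_i^2+v_i^2)+k\sigma^4$, identical to the dense case in Theorem~\ref{theo:DP-RP-inner}. The one place where DP-OPORP genuinely departs from the dense result is the last term $Var(\sum_j x_jy_j)$, and this is the crux: rather than the dense-Rademacher formula of~\citet{li2006very}, I would substitute the fixed-length-binning OPORP inner-product variance from~\citet{li2023oporp}, which equals the dense expression times the factor $\tfrac{p-k}{p-1}$. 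Substituting this term produces exactly the claimed formula.

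The ``hard part'' of the result, namely deriving the $\tfrac{p-k}{p-1}$ variance reduction of fixed-length binning, is imported as a known result, so the work that remains is really bookkeeping with total expectation/variance. The main subtlety I would be careful to state is that $\tfrac{p-k}{p-1}$ multiplies \emph{only} the projection-induced term $\frac{1}{k}\big(\sum_i u_i^2\sum_i v_i^2+(\sum_i u_iv_i)^2-2\sum_i u_i^2v_i^2\big)$ and not the noise terms $\sigma^2\sum_i(u_i^2+v_i^2)+k\sigma^4$, which are unchanged from DP-RP-G-OPT-B; conflating the two would be the natural place to slip.
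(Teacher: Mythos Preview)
Your proposal is correct and matches the paper's approach exactly: the paper's proof is a one-line remark that the argument is ``similar to that of Theorem~\ref{theo:DP-RP-inner}, with the help of the result in~\citet{li2023oporp},'' which is precisely the conditioning-plus-total-variance computation you spell out, with the OPORP inner-product variance from~\citet{li2023oporp} (carrying the $\tfrac{p-k}{p-1}$ factor) substituted for the dense-Rademacher variance. Your write-up is in fact more detailed than the paper's own proof.
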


\begin{proof}
The proof is similar to that of Theorem~\ref{theo:DP-RP-inner}, with the help of the result in~\citet{li2023oporp}.
\end{proof}

\vspace{0.2in}

The variance reduction factor $\frac{p-k}{p-1}$ can be quite beneficial when $p$ is not very large. Also, see~\citet{li2023oporp} for the normalized estimators for both OPORP and VSRP (very sparse random projections). The normalization steps can substantially reduce the estimation variance.

\newpage

\noindent\textbf{Comparison.} For the convenience of comparison, let us assume that the data are row-normalized, i.e., $\|u\|^2=1$ for all $u\in\mathcal U$. Let $\rho=\sum_{i=1}^p u_iv_i$. We have
\begin{align}\notag
&Var\left(\hat g_{org}\right)
= 2\sigma^2 + p\sigma^4,\\
 &Var\left(\hat g_{rp}\right)
=2\sigma^2 + k\sigma^4 + \frac{1}{k}\left(1+\rho^2-2\sum_{i=1}^pu_i^2v_i^2\right), \notag\\
&Var\left(\hat g_{oporp}\right)
=2\sigma^2 + k\sigma^4 + \frac{1}{k}\left(1+\rho^2-2\sum_{i=1}^pu_i^2v_i^2\right)\frac{p-k}{p-1}.  \notag
\end{align}
For high-dimensional data (large $p$), we see that $\hat g_{rp}$ and $\hat g_{oporp}$ has roughly the same variance, approximately $2\sigma^2+k\sigma^4+\frac{1}{k}$. We would like to compare this with $Var\left(\hat g_{org}\right)= 2\sigma^2 + p\sigma^4$ the variance for adding noise directly to the original data.

\vspace{0.1in}

Let's define the ratio of the variances:
\begin{align}\label{eqn:ratio}
R =  \frac{2\sigma^2+p\sigma^4}{2\sigma^2+k\sigma^4+\frac{1}{k}} \sim  \frac{p\sigma^4}{k\sigma^4} = \frac{p}{k} \ \  (\text{if } p \text{ is  large or } \sigma \text{ is high} )
\end{align}
to illustrate the benefit of RP-type algorithms (DP-RP and DP-OPORP) in protecting the privacy of the (high-dimensional) data. If $\frac{p}{k} = 100$, then it is possible that the ratio of the variances can be roughly 100. This would be a huge advantage. Figure~\ref{fig:ratio} plots the ratio $R$ for $p=1000$ and $p=10000$ as well as a series of $k/p$ values, with respect to $\sigma$.

\begin{figure}[h]
\mbox{
\includegraphics[width=3.2in]{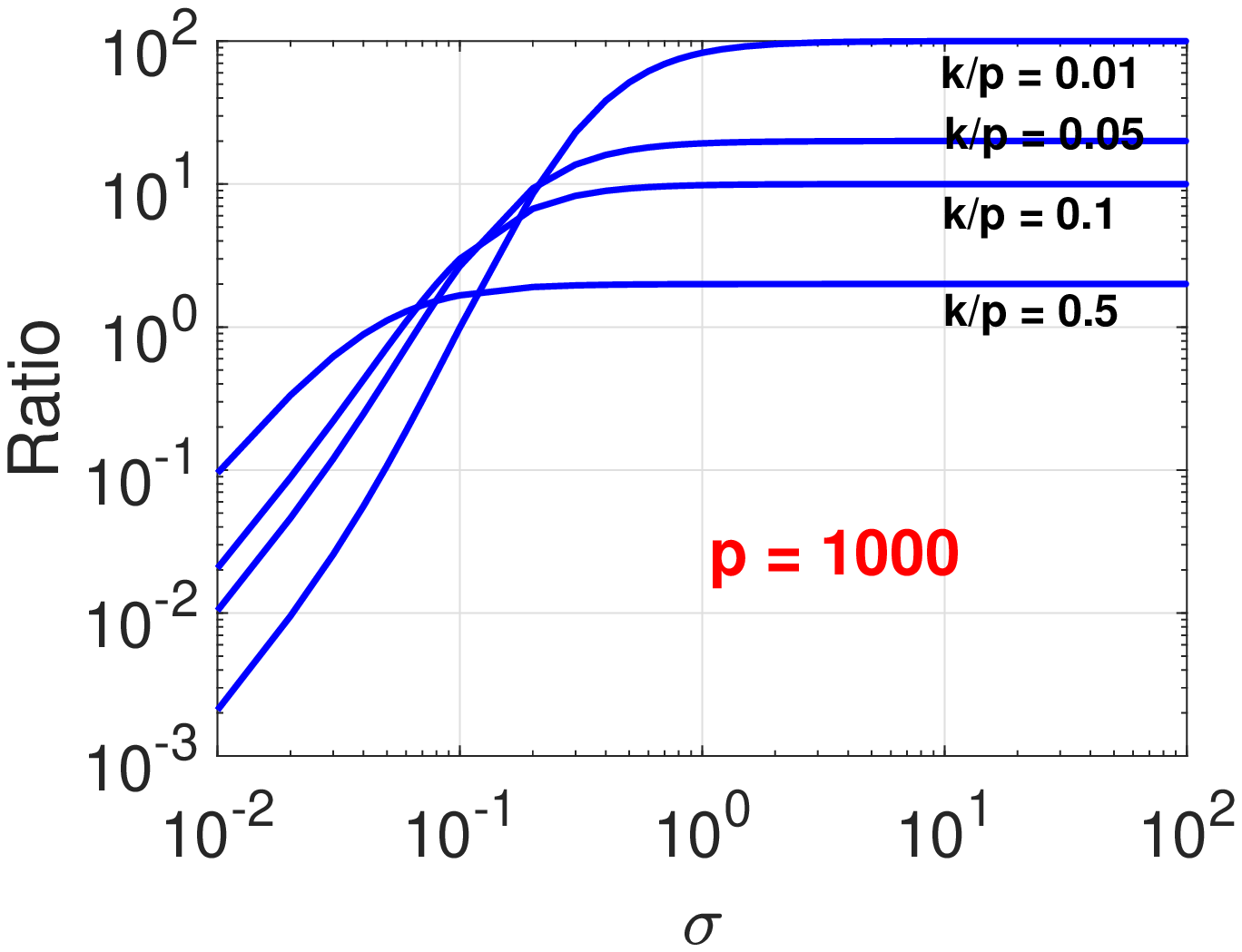}
\includegraphics[width=3.2in]{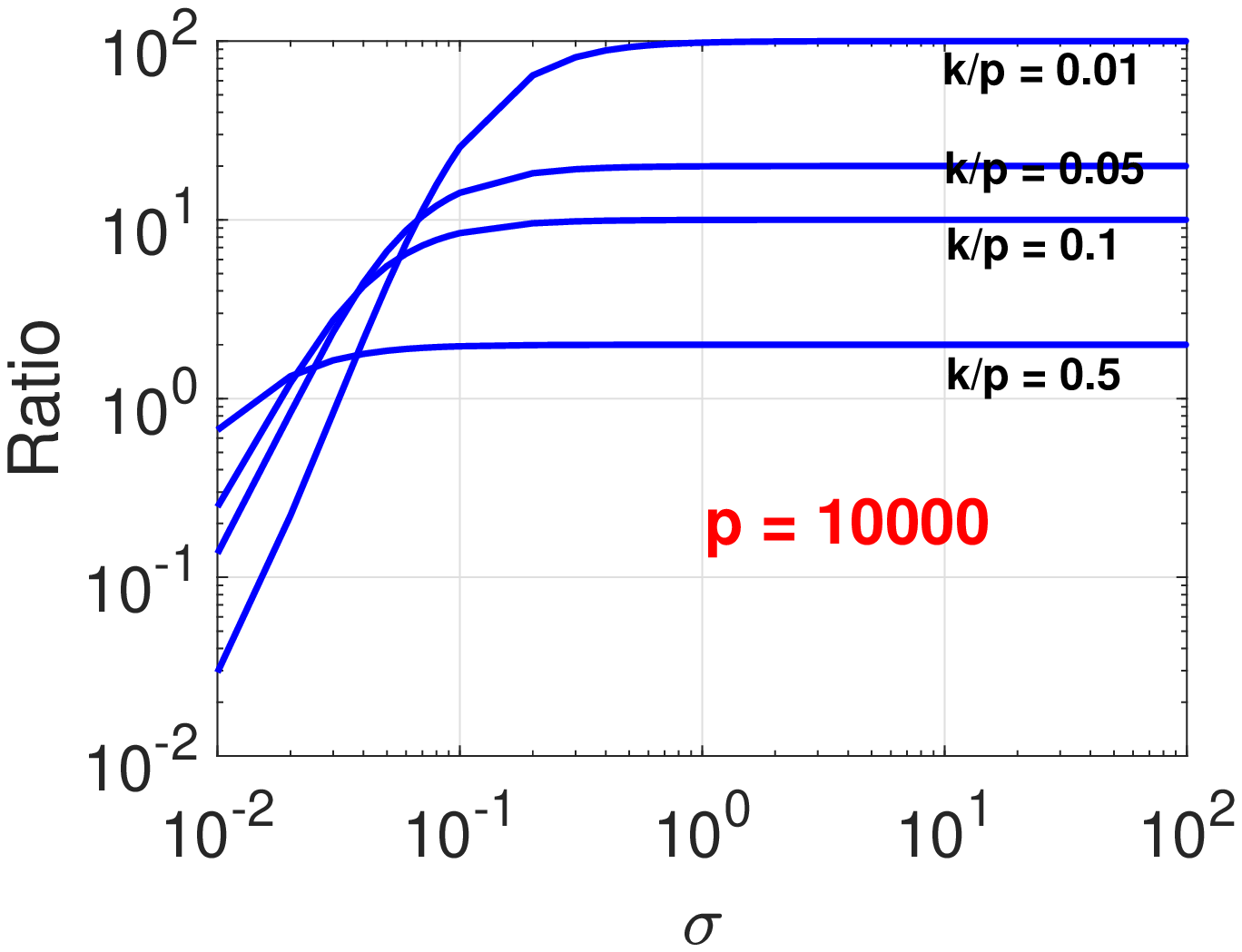}
}

\vspace{-0.1in}

    \caption{We plot the ratio of variances in~\eqref{eqn:ratio} for $p=1000$ and $p=10000$. We choose $k$ values with $k/p \in\{0.01, 0.05, 0.1, 0.5\}$. Then for any $\sigma$ value, we are able to compute the ratio $R$. For larger $\sigma$, we have $R\sim \frac{p}{k}$ as expected. See Figure~\ref{fig:optimal DP-RP}  for the relationship among $\sigma$, $\Delta$, and $\epsilon$ (and $\delta$).    }
    \label{fig:ratio}\vspace{0.2in}
\end{figure}

Figure~\ref{fig:ratio} also illustrates when it might be a good strategy to directly add noise to the original data. For example, when $p=1000$, the ratio can be below 1 if $\sigma<0.1$. Figure~\ref{fig:optimal DP-RP}  depicts the relationship among $\sigma$, $\Delta$, and $\epsilon$ (and $\delta$). One can verify that, in order for $\sigma<0.1$ at $\Delta_2=\beta=1$, we need $\epsilon>100$. In the literature, many DP applications typically require a much smaller $\epsilon$, such as $\epsilon\in [0.1,20]$ (e.g., \citet{haeberlen2014differential,kenny2021use}).

\newpage

\section{DP-SignRP: Differentially Private Sign Random Projections} \label{sec:DP-signRP}

We have reviewed  in the Introduction the advantage of sign random projections (SignRP) (in terms of storage, communications, and computations).  In this section, we propose algorithms that output SignRP with differential privacy guarantees. The  straightforward approach is directly taking the signs of DP-RP methods, e.g., from Algorithm~\ref{alg:DP-RP-gaussian-noise} or Algorithm~\ref{alg:DP-RP-G-OPT-B}. By the post-processing property of DP, the resultant signs are also DP. In our study, we focus on algorithms that randomize the signs (bit-vector) of the non-private RPs. That is, we first generate non-private SignRP, and then perturb the bits.

In the following, we first study the DP-SignRP from Gaussian random projections. Then we demonstrate the use of  Rademacher projections for improving the performance. Finally, we propose algorithms for DP-SignOPORP and discuss its advantages in terms of privacy protection.

\subsection{DP-SignRP-RR by Randomized Response} \label{sec:DP-signRP-RR}

\begin{algorithm}[h]
{
    \vspace{0.05in}
    \textbf{Input:} Data $u\in[-1,1]^p$; $\epsilon>0$, $0<\delta<1$, number of projections~$k$, norm lower bound $m$

    \vspace{0.05in}

    \textbf{Output:}   Differentially private sign random projections

    \vspace{0.05in}

    Apply RP by $x=\frac{1}{\sqrt k}W^Tu$, where $W\in\mathbb R^{p\times k}$ is a random $N(0,1)$ matrix

	Let $N_+(m,\delta,k,p)$ be computed as (\ref{eqn:N+ bound}) in Proposition~\ref{prop:num-of-changes}

    Compute $\tilde s_j=\begin{cases}
    sign(x_j), & \text{with prob.}\ \frac{e^{\epsilon'}}{e^{\epsilon'}+1}\\
    -sign(x_j), & \text{with prob.}\ \frac{1}{e^{\epsilon'}+1}
    \end{cases}$ for $j=1,...,k$, with $\epsilon'=\epsilon/N_+(m,\delta,k,p)$

    Return $\tilde s$ as the DP-SignRP of $u$
    }
    \caption{DP-SignRP-RR}
    \label{alg:DP-signRP-RR}
\end{algorithm}

Firstly, we develop a DP algorithm for SignRP based on the classic randomized response (RR) technique~\citep{warner1965randomized,dwork2014algorithmic} and call it DP-SignRP-RR. As summarized in Algorithm~\ref{alg:DP-signRP-RR}, after we apply random projection $x=\frac{1}{\sqrt k}W^Tu$, we take $s=sign(x)$. Then, for each $s_j$, we keep the sign with probability $\frac{e^{\epsilon'}}{e^{\epsilon'}+1}$ and flip the sign with probability $\frac{1}{e^{\epsilon'}+1}$. Here $\epsilon'=\epsilon/N_+$, where $N_+(m,\delta,k,p)$ is an upper bound on the number of different signs (among $k$ signs) of $x=\frac{1}{\sqrt k}W^Tu$ and $x=\frac{1}{\sqrt k}W^Tu'$ for any $\beta$-adjacent $(u,u')$, which will be derived later in Proposition~\ref{prop:num-of-changes}. The final output is $\tilde s$ after perturbing $s$ by the above procedure. Note that, in Algorithm~\ref{alg:DP-signRP-RR}, in order to achieve differential privacy, we additionally assume that there exists a lower bound on the $l_2$ norm of the data, i.e., $\|u\|\geq m$ for all $u$ in the data domain. We have the following privacy guarantee as a Theorem.

\vspace{0.1in}

\begin{theorem} \label{theo:DP-SignRP-RR privacy}
Algorithm~\ref{alg:DP-signRP-RR} is $(\epsilon,\delta)$-DP.
\end{theorem}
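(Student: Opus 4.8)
The plan is to combine the per-bit guarantee of randomized response with composition over the coordinates whose signs actually differ, and then absorb the failure probability of the sign-change bound $N_+$ into the $\delta$ of the $(\epsilon,\delta)$-DP definition. First I would fix a $\beta$-adjacent pair $(u,u')$ and condition on the projection matrix $W$. Writing $s=sign(x)$ and $s'=sign(x')$ with $x=\frac{1}{\sqrt k}W^Tu$ and $x'=\frac{1}{\sqrt k}W^Tu'$, let $n=n(u,u')$ be the number of coordinates $j$ with $s_j\neq s_j'$. For the independent bit-flip step, any coordinate with $s_j=s_j'$ induces identical conditional output laws, while a coordinate with $s_j\neq s_j'$ gives, by the standard randomized-response computation, a likelihood ratio $\frac{Pr[\tilde s_j=b\mid u]}{Pr[\tilde s_j=b\mid u']}$ bounded by $\frac{e^{\epsilon'}/(e^{\epsilon'}+1)}{1/(e^{\epsilon'}+1)}=e^{\epsilon'}$ for either value of $b\in\{-1,+1\}$. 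By independence of the flips across coordinates, the likelihood ratio of the full output vector factorizes over the differing coordinates and is therefore at most $e^{n\epsilon'}$.

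Next I would invoke Proposition~\ref{prop:num-of-changes}, which constructs $N_+(m,\delta,k,p)$ so that, over the randomness of $W$, the number of differing signs satisfies $n(u,u')\le N_+$ with probability at least $1-\delta$ for every $\beta$-adjacent pair. Let $E$ denote this good event, a function of $W$ alone. On $E$ we have $n\le N_+$, so with the choice $\epsilon'=\epsilon/N_+$ the factorized bound becomes $e^{n\epsilon'}\le e^{N_+\epsilon'}=e^{\epsilon}$; in other words, conditioned on $E$ the mechanism is pure $\epsilon$-DP for the pair $(u,u')$.

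Finally I would pass from this conditional guarantee to the unconditional $(\epsilon,\delta)$ statement by a splitting argument over $W$. For any measurable output set $O$, I would write $Pr[\mathcal M(u)\in O]=\mathbb E_W\!\big[Pr[\mathcal M(u)\in O\mid W]\,\mathbbm 1_E\big]+\mathbb E_W\!\big[Pr[\mathcal M(u)\in O\mid W]\,\mathbbm 1_{E^c}\big]$. On the first term I apply the conditional $\epsilon$-DP bound and then drop the indicator $\mathbbm 1_E$ (the integrand is nonnegative) to obtain $e^{\epsilon}Pr[\mathcal M(u')\in O]$; the second term is at most $Pr[E^c]\le\delta$. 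Summing the two gives $Pr[\mathcal M(u)\in O]\le e^{\epsilon}Pr[\mathcal M(u')\in O]+\delta$, which is exactly Definition~\ref{def:DP}.

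The step I expect to be the main obstacle is the handoff between the data-dependent random count $n(u,u')$ and the fixed budget split $\epsilon'=\epsilon/N_+$: one must be careful that the \emph{same} realization of $W$ governs both $\mathcal M(u)$ and $\mathcal M(u')$ (so $n(u,u')$ is a single well-defined quantity), that the event $E$ depends on $W$ only and is independent of the bit-flip randomness (so the conditioning is clean and the splitting is valid), and that the high-probability bound of Proposition~\ref{prop:num-of-changes} indeed applies to the specific adjacent pair appearing in the DP definition. Everything else—the per-bit ratio and the product over the differing coordinates—is the routine randomized-response calculation.
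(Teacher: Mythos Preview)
Your proposal is correct and follows essentially the same route as the paper: fix the adjacent pair, factorize the likelihood ratio of randomized response over the coordinates where the signs differ, bound that count by $N_+$ via Proposition~\ref{prop:num-of-changes} with probability at least $1-\delta$, and absorb the bad event into the $\delta$. Your explicit splitting argument over $E$ and $E^c$ is a slightly more careful version of the paper's one-line ``since this event occurs with probability at least $1-\delta$, the overall procedure is $(\epsilon,\delta)$-DP,'' but the structure and all the key ingredients are identical.
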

\begin{remark}  \label{remark:DP-SignRP-RR}
If we do not assume a lower bound on the data norm (i.e., $m=0$), then $N_+=k$. That is, we need to set $\epsilon'=\epsilon/k$ in Algorithm~\ref{alg:DP-signRP-RR}, and the algorithm becomes $\epsilon$-DP.
\end{remark}

\begin{proof}
For any data point $u\in \mathcal U$ and its $\beta$-adjacent neighbor $u'$, denote $s=sign(W^Tu)\in \{-1,+1\}^k$, $s'=sign(W^Tu')\in \{-1,+1\}^k$, and let $\tilde s$ and $\tilde s'$ be the corresponding randomized sign vectors output by Algorithm~\ref{alg:DP-signRP-RR}. Denote $S=\{i:s_j\neq s_j'\}$ and $S^c=[k]\setminus S$. For any vector $y\in \{-1,+1\}^k$, define $S_0=\{j\in S:s_j=y_j\}$, $S_1=\{j\in S:s_j\neq y_j\}$, $S^c_0=\{j\in S^c:s_j=y_j\}$ and $S^c_1=\{j\in S^c:s_j\neq y_j\}$. By Proposition~\ref{prop:num-of-changes}, we know that the event $\{|S|\leq N_+(m,\delta,k,p)\}$ happens with probability at least $1-\delta$. In this event, we have
\begin{align*}
    \log\frac{Pr(\tilde s=y)}{Pr(\tilde s'=y)}&=\log\frac{\prod_{j\in S^c_0}\frac{e^{\epsilon'}}{e^{\epsilon'}+1}\prod_{j\in S^c_1}\frac{1}{e^{\epsilon'}+1}\prod_{j\in S_0}\frac{e^{\epsilon'}}{e^{\epsilon'}+1}\prod_{j\in S_1}\frac{1}{e^{\epsilon'}+1}}{\prod_{j\in S^c_0}\frac{e^{\epsilon'}}{e^{\epsilon'}+1}\prod_{j\in S^c_1}\frac{1}{e^{\epsilon'}+1}\prod_{j\in S_0}\frac{1}{e^{\epsilon'}+1}\prod_{j\in S_1}\frac{e^{\epsilon'}}{e^{\epsilon'}+1}} \\
    &\leq \log\frac{\prod_{j\in S}\frac{e^{\epsilon'}}{e^{\epsilon'}+1}}{\prod_{j\in S}\frac{1}{e^{\epsilon'}+1}}=|S|\epsilon'\leq N_+ \epsilon'=\epsilon.
\end{align*}
Since this event occurs with probability at least $1-\delta$, the overall procedure is $(\epsilon,\delta)$-DP.
\end{proof}

\subsubsection{The flipping probability and calculation of $N_+$}

As in the proof of Theorem~\ref{theo:DP-SignRP-RR privacy}, $N_+$ is the upper bound on $S=\{i:s_j\neq s_j'\}$, where $s=sign(W^Tu)\in \{-1,+1\}^k$ and $s'=sign(W^Tu')\in \{-1,+1\}^k$ for neighboring data vectors $(u,u')$. $N_+$ also determines the flipping probability in DP-SignRP-RR, which is important for the utility. Next, we analyze $N_+$. The following is a useful lemma that describes some statistical properties of the absolute value of bivariate normal random variables, which might be of independent interest. The proof can be found in Appendix.

\vspace{0.1in}

\begin{lemma}  \label{lemma:conditional-gaussian}
Let $\begin{pmatrix}
X \\ Y
\end{pmatrix}\sim N\begin{pmatrix}
\sigma_x^2 & \rho\sigma_x\sigma_y \\ \rho\sigma_x\sigma_y & \sigma_y^2
\end{pmatrix}$. Denote $r=\sigma_x/\sigma_y$. Then we have:

1. $Pr(|X|>|Y|)=\frac{1}{\pi}\left[ \tan^{-1}\left( \frac{r-\rho }{\sqrt{1-\rho^2}} \right)+\tan^{-1}\left( \frac{r+\rho }{\sqrt{1-\rho^2}} \right) \right]$. When $r\leq 1$, the maximum is achieved at $\rho=0$, i.e., $\max_{\rho} Pr(|X|<|Y|)=\frac{2}{\pi}\tan^{-1}(r)$.

2. The conditional expectation:
\begin{align}
    \mathbb E\left[|X|\big| |X|>|Y|\right]=\sigma_x\sqrt{\frac{\pi}{2}}\cdot \frac{\frac{r-\rho}{\sqrt{1+r^2-2r\rho}}+\frac{r+\rho}{\sqrt{1+r^2+2r\rho}}}{\tan^{-1}\left( \frac{r-\rho }{\sqrt{1-\rho^2}} \right)+\tan^{-1}\left( \frac{r+\rho }{\sqrt{1-\rho^2}} \right)}.  \label{eqn:conditional-expectation}
\end{align}

3. The conditional tail probability: for any $r>0$, $\rho\in (-1,1)$, for any $t>0$,
\begin{align}
    Pr(|X|>t\big| |X|>|Y|)\leq \exp\left( -\frac{t^2}{2\sigma_x^2} \right).  \label{eqn:conditional-tail-prob}
\end{align}
\end{lemma}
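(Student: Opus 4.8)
The plan is to reduce all three parts to one-dimensional angular integrals via a single polar decomposition. Write $X=\sigma_x Z_1$ and $Y=\sigma_y(\rho Z_1+\sqrt{1-\rho^2}Z_2)$ with $Z_1,Z_2$ iid $N(0,1)$, and pass to polar coordinates $Z_1=R\cos\phi$, $Z_2=R\sin\phi$. Then $R$ and $\phi$ are independent, $\phi\sim\text{Unif}[0,2\pi)$, and $R$ is chi-distributed with $2$ degrees of freedom, so that $Pr(R>s)=e^{-s^2/2}$ and $\mathbb E[R]=\sqrt{\pi/2}$. The crucial observation is that the event $A=\{|X|>|Y|\}$ is \emph{purely angular}: canceling the common factor $R>0$ and using $\sigma_x/\sigma_y=r$, it reduces to $r|\cos\phi|>|\rho\cos\phi+\sqrt{1-\rho^2}\sin\phi|$, i.e. $\phi\in\Theta$ for a set $\Theta$ independent of $R$. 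This separation of radius and angle is what makes every quantity factor.

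For Part 1, I would determine $\Theta$ explicitly. On the right half $\cos\phi>0$ the condition reads $\frac{-r-\rho}{\sqrt{1-\rho^2}}<\tan\phi<\frac{r-\rho}{\sqrt{1-\rho^2}}$, giving the arc $\phi\in\left(-\tan^{-1}\frac{r+\rho}{\sqrt{1-\rho^2}},\ \tan^{-1}\frac{r-\rho}{\sqrt{1-\rho^2}}\right)$; by the $\phi\mapsto\phi+\pi$ symmetry the left half contributes an identical arc, so $Pr(A)=|\Theta|/(2\pi)=\frac{1}{\pi}\left[\tan^{-1}\frac{r-\rho}{\sqrt{1-\rho^2}}+\tan^{-1}\frac{r+\rho}{\sqrt{1-\rho^2}}\right]$, which is the claimed formula. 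For the maximization over $\rho$ when $r\le1$, note this expression $g(\rho)$ is even, so $g'(0)=0$; differentiating and simplifying, the numerator of $g'(\rho)$ collapses to a multiple of $-r\rho(1-r^2)$ over a positive denominator, so $g'$ has the sign of $-\rho$ whenever $r\le1$. Hence $g$ increases on $(-1,0)$ and decreases on $(0,1)$ and is maximized at $\rho=0$, where $Pr(A)$ equals $\frac{2}{\pi}\tan^{-1}(r)$.

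For Part 2, independence of $R$ and $\phi$ gives $\mathbb E[|X|\mathbbm 1_A]=\sigma_x\,\mathbb E[R]\cdot\frac{1}{2\pi}\int_\Theta|\cos\phi|\,d\phi=\sigma_x\sqrt{\pi/2}\cdot\frac{1}{2\pi}\int_\Theta|\cos\phi|\,d\phi$. On the right arc $\cos\phi>0$, so $\int\cos\phi\,d\phi$ evaluates to the difference of $\sin$ at the two endpoints; using $\sin(\tan^{-1}a)=a/\sqrt{1+a^2}$ together with $1+a^2=(1+r^2-2r\rho)/(1-\rho^2)$ and $(1+r^2+2r\rho)/(1-\rho^2)$ for the two endpoints converts them into $\frac{r-\rho}{\sqrt{1+r^2-2r\rho}}$ and $\frac{r+\rho}{\sqrt{1+r^2+2r\rho}}$. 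Doubling for the symmetric left arc and dividing by $Pr(A)$ from Part 1 yields exactly (\ref{eqn:conditional-expectation}), the factor $\sqrt{\pi/2}$ coming from $\mathbb E[R]$.

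Part 3 is the cleanest. Conditioning on $\phi$ and using the exponential radial tail, $Pr(|X|>t,A)=\frac{1}{2\pi}\int_\Theta Pr\!\left(R>\frac{t}{\sigma_x|\cos\phi|}\right)d\phi=\frac{1}{2\pi}\int_\Theta\exp\!\left(-\frac{t^2}{2\sigma_x^2\cos^2\phi}\right)d\phi$. Since $\cos^2\phi\le1$ on $\Theta$, each integrand is at most $\exp(-t^2/(2\sigma_x^2))$, so the integral is at most $\exp(-t^2/(2\sigma_x^2))\cdot|\Theta|/(2\pi)=\exp(-t^2/(2\sigma_x^2))\,Pr(A)$; dividing by $Pr(A)$ gives (\ref{eqn:conditional-tail-prob}). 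I expect the main obstacle to be the bookkeeping in Part 1 — correctly matching the two arc endpoints to the $\tan^{-1}$ terms and carrying out the derivative simplification for the maximization — whereas Parts 2 and 3 follow almost immediately once the radius/angle factorization and the facts $\mathbb E[R]=\sqrt{\pi/2}$, $Pr(R>s)=e^{-s^2/2}$ are in hand.
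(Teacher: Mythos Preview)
Your proposal is correct and takes a genuinely cleaner route than the paper. The paper works directly with the bivariate normal density: it conditions on $X$, reduces to integrals of the type $\int_0^\infty \phi(x)\Phi(bx)\,dx$ and $\int_0^\infty x\phi(x)\Phi(bx)\,dx$, and evaluates them via Owen's table of normal integrals to obtain Parts~1 and~2; only for Part~3 does it switch to polar coordinates in a transformed plane to reach the same angular integral you write down. Your approach is unified: by passing to iid standard normals $(Z_1,Z_2)$ and then polar coordinates up front, the scale-invariance of the event $\{|X|>|Y|\}$ makes it purely angular, so all three parts factor into a radial piece (trivial, since $R$ is $\chi_2$) times an angular piece (an explicit arc). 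This avoids the Owen identities entirely and makes Parts~2 and~3 essentially one-line consequences of $\mathbb E[R]=\sqrt{\pi/2}$ and $Pr(R>s)=e^{-s^2/2}$. The paper's approach has the minor advantage of staying in the original $(X,Y)$ coordinates longer, but yours is both more elementary and more transparent about why the $\tan^{-1}$ and $\sqrt{1+r^2\pm 2r\rho}$ expressions arise. Your derivative computation for the maximization in Part~1 matches the paper's, which finds $\partial P/\partial\rho = 2r\rho(r^2-1)/[(1+r^2-2r\rho)(1+r^2+2r\rho)\sqrt{1-\rho^2}]$.
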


\vspace{0.1in}

With Lemma~\ref{lemma:conditional-gaussian}, we have the following result on the maximum of a group of Gaussian variables.

\begin{lemma}  \label{lemma:p-vars-prob-bound}
Let $X_1,...,X_p$ be iid $N(0,\sigma_x^2)$ variables. Let $Y\sim N(0,\sigma_y^2)$ be another Gaussian random variable with arbitrary dependence structure with $X_i$'s. Let $r=\sigma_x/\sigma_y\leq 1.$ Then
\begin{align} \label{eqn:p-vars-prob-bound}
    P_+(r,p)\eqdef Pr(\max_{i=1,...,p}|X|>|Y|)\leq \int_0^\infty 2p[2\Phi(t)-1]^{p-1}[2\Phi(rt)-1]\phi(t) dt,
\end{align}
where $\phi(x)$ and $\Phi(x)$ are the standard Gaussian pdf and cdf, respectively.
\end{lemma}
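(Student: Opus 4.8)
The plan is to first identify the right-hand side of \eqref{eqn:p-vars-prob-bound} as the \emph{exact} value of $P_+(r,p)$ in the case where $Y$ is independent of the $X_i$, and then to argue that any dependence can only shrink the exceedance probability, so that independence is the extremal (largest) configuration. First I would standardize by writing $X_i=\sigma_x Z_i$ with $Z_i$ iid standard normal and $Y=\sigma_y W$, turning the event $\{\max_i|X_i|>|Y|\}$ into $\{\max_i|Z_i|>|W|/r\}$ with $r=\sigma_x/\sigma_y\le 1$. The backbone is the argmax decomposition
\begin{align*}
 Pr\Big(\max_i|X_i|>|Y|\Big)=\sum_{i=1}^p Pr\big(|X_i|\ge |X_j|\ \forall j\ne i,\ |X_i|>|Y|\big),
\end{align*}
valid since ties have probability zero. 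In the independent case, conditioning on the value $\sigma_x t$ of the maximizing coordinate, the other $p-1$ variables each fall in $(-\sigma_x t,\sigma_x t)$ with probability $2\Phi(t)-1$, the density of $|X_i|/\sigma_x$ at $t$ is $2\phi(t)$, and $Pr(|Y|<\sigma_x t)=2\Phi(rt)-1$. Summing the $p$ identical terms and changing variables reproduces exactly $\int_0^\infty 2p[2\Phi(t)-1]^{p-1}[2\Phi(rt)-1]\phi(t)\,dt$, so the inequality is an equality under independence and only the extremality of $\rho=0$ remains.

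For the dependence I would anchor on Lemma~\ref{lemma:conditional-gaussian}(1): for a single pair, the exceedance probability $Pr(|X|>|Y|)$ is maximized over the correlation at $\rho=0$ when $r\le1$. The goal is to lift this to the maximum over $p$ coordinates by viewing $P_+$ as a function of the correlation vector $(\rho_1,\dots,\rho_p)$, $\rho_i=\mathrm{Corr}(X_i,Y)$, and showing it is nonincreasing in each $|\rho_i|$. The mechanism is transparent from the representation $Y=\sum_i(\rho_i/r)X_i+Z_0$ with $Z_0\perp\{X_i\}$: a nonzero $\rho_i$ pushes $|Y|$ up exactly when $|X_i|$ is large, which makes $|X_i|>|Y|$ harder to satisfy. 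To make this rigorous I would use Gaussian interpolation (a Stein/Price-type integration-by-parts identity), differentiating $\mathbb{E}[f]$ for a smoothed version $f$ of $\mathbbm{1}\{\max_i|X_i|>|Y|\}$ in each covariance $\mathrm{Cov}(X_i,Y)$ and checking that the resulting mixed-derivative term carries the sign that forces monotonicity toward $\rho=0$; an alternative is a stochastic-domination coupling built directly from the representation above.

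The hard part will be precisely this comparison step, because conditioning on $Y=y$ does \emph{not} decouple the coordinates: given $Y=y$ the $X_i$ acquire shifted means $\rho_i r\,y$ and become mutually \emph{negatively} correlated (conditional covariance $-\rho_i\rho_j\sigma_x^2$), and one can check that the naive per-coordinate bound $Pr(|X_i|\le|y|\mid Y=y)\ge 2\Phi(|y|/\sigma_x)-1$ actually \emph{fails} for some $(y,\rho_i)$. Consequently no coordinatewise or product argument is available, and the monotonicity in $|\rho_i|$ must be established globally via the smoothing/interpolation argument (or the coupling). Everything else — standardization, the argmax decomposition, and the independent-case integral — is routine once the extremality of $\rho=0$ is secured.
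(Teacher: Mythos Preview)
Your approach is essentially the same as the paper's: identify the integral as the exact value of $P_+(r,p)$ when $Y$ is independent of the $X_i$'s, then argue that $\rho_i=0$ for all $i$ is the extremal configuration. The paper's own proof of the extremality step is in fact considerably lighter than your plan --- it simply asserts, citing Lemma~\ref{lemma:conditional-gaussian} (the $p=1$ case) together with the mutual independence of the $X_i$'s, that the maximum over dependence structures is attained at $\rho_i=0$, without spelling out how the single-pair result lifts to the maximum of $p$ coordinates; you correctly flag this lifting as the nontrivial part and propose a Gaussian interpolation or coupling argument to close it, which would be more rigorous than what the paper provides.
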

\begin{proof}
By Lemma~\ref{lemma:conditional-gaussian} and independence, we know that $Pr(\max_{i=1,...,p}|X|>|Y|)$ reaches its maximum when $X_i$ is independent of $Y$, i.e., when $\rho(X_i,Y)=0$, $\forall i=1,...,p$. Since $|X_i|$ follows a half-normal distribution with cdf being $\erf(\frac{x}{\sqrt 2 \sigma_x})$, we have
\begin{align*}
    Pr(\max_{i=1,...,p}|X_i|\leq t) = \erf\left(\frac{t}{\sqrt 2\sigma_x}\right)^p=\left[2\Phi\left(\frac{x}{\sigma_x}\right)-1 \right]^p,
\end{align*}
and probability density function $g(x)=2p[\Phi(\frac{x}{\sigma_x})-1]^{p-1}\frac{1}{\sqrt{2\pi}\sigma_x}e^{-\frac{x^2}{2\sigma_x^2}}$. When $Y$ is independent of all $X_i$'s (which gives the upper bound), we have
\begin{align*}
    Pr(\max_{i=1,...,p}|X|>|Y|)&=\int_0^\infty 2p\left[\Phi \left(\frac{x}{\sigma_x}\right)-1\right]^{p-1}\frac{1}{\sqrt{2\pi}\sigma_x}e^{-\frac{x^2}{2\sigma_x^2}}Pr(|Y|<x) dx \\
    &=\int_0^\infty 2p\left[\Phi \left(\frac{x}{\sigma_x}\right)-1\right]^{p-1}\frac{1}{\sqrt{2\pi}\sigma_x}e^{-\frac{x^2}{2\sigma_x^2}}\erf\left(\frac{x}{\sqrt 2\sigma_y}\right) dx \\
    &=\int_0^\infty 2p[2\Phi(t)-1]^{p-1}[2\Phi(rt)-1]\phi(t) dt,
\end{align*}
with a proper change of variables. This gives an upper bound as shown above.
\end{proof}

\vspace{0.1in}

In Lemma~\ref{lemma:p-vars-prob-bound}, $P_+(r,p)$ is an increasing function in both $r$ and $p$. In Figure~\ref{fig:P+},
we plot the $P_+(r,p)$ with different $r$ and $p$ values. We see that when $r$ is as small as 0.1, $P_+(r,p)$ is smaller than 0.4 when $p$ is as large as $10^6$.

\begin{figure}[h]

\centering

\includegraphics[width=3.2in]{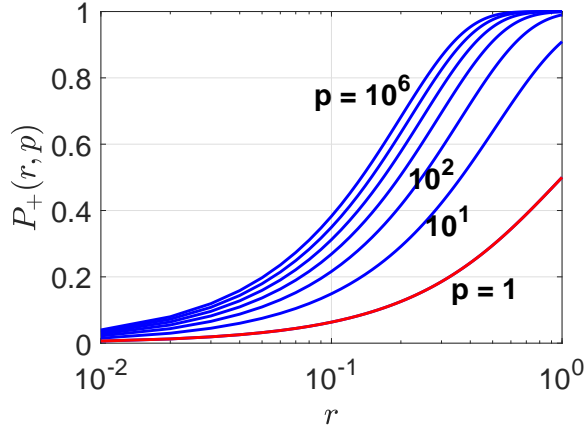}

\vspace{-0.15in}

\caption{$P(r,p)$ in (\ref{eqn:p-vars-prob-bound}) in Lemma~\ref{lemma:p-vars-prob-bound} against $r\leq 1$ at different $p$. The red curve ($p=1$) also equals $Pr(|X|>|Y|)$ in Lemma~\ref{lemma:conditional-gaussian}. }
\label{fig:P+}
\end{figure}

Equipped with Lemma~\ref{lemma:p-vars-prob-bound}, we can bound $N_+$ in Algorithm~\ref{alg:DP-signRP-RR}, the largest number of projected signs that are possible to change when $u$ is replaced by any neighboring data vector $u'$.

\begin{lemma}[Binomial tail bound] \label{lemma:binomial-tail}
Let $X\sim Binomial(n,p)$ and denote $\mu=\mathbb E[X]=np$. For any $\eta>0$, it holds that
\begin{align*}
    Pr(X\geq (1+\eta)\mu)\leq \exp(-\frac{\eta^2\mu}{\eta+2}).
\end{align*}
\end{lemma}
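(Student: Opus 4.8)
The plan is to use the standard Chernoff method---bounding the upper tail through the moment generating function (MGF)---and then to simplify the resulting exponent by an elementary analytic inequality. First I would write $X=\sum_{i=1}^n X_i$ with $X_i\sim \text{Bernoulli}(p)$ iid, so that for any $\lambda>0$, Markov's inequality applied to $e^{\lambda X}$ gives $Pr(X\geq (1+\eta)\mu)\leq e^{-\lambda(1+\eta)\mu}\,\mathbb E[e^{\lambda X}]$. By independence, $\mathbb E[e^{\lambda X}]=(1-p+pe^{\lambda})^n=(1+p(e^{\lambda}-1))^n$, and the elementary bound $1+x\leq e^x$ turns this into $\mathbb E[e^{\lambda X}]\leq \exp(\mu(e^{\lambda}-1))$, where $\mu=np$. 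This reduces the whole problem to controlling a single scalar exponent.

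Next I would optimize the exponent $\mu(e^{\lambda}-1)-\lambda(1+\eta)\mu$ over $\lambda>0$. Differentiating in $\lambda$ and setting the derivative to zero yields the minimizer $\lambda=\log(1+\eta)$ (which is positive precisely because $\eta>0$), and substituting back gives the classical multiplicative Chernoff bound
\begin{align*}
Pr(X\geq (1+\eta)\mu)\leq \exp\!\big(\mu(\eta-(1+\eta)\log(1+\eta))\big)=\left(\frac{e^{\eta}}{(1+\eta)^{1+\eta}}\right)^{\mu}.
\end{align*}
Up to this point everything is routine; the content of the lemma lies entirely in passing from this sharp-but-awkward bound to the clean stated form.

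The remaining and most delicate step is therefore the deterministic inequality
\begin{align*}
(1+\eta)\log(1+\eta)-\eta\geq \frac{\eta^2}{\eta+2}\qquad\text{for all }\eta>0,
\end{align*}
since exponentiating its negation is exactly what converts the bound above into $\exp(-\eta^2\mu/(\eta+2))$. I would prove it by setting $f(\eta)=(1+\eta)\log(1+\eta)-\eta-\frac{\eta^2}{\eta+2}$ and checking $f(0)=0$, $f'(0)=0$, together with convexity on $(0,\infty)$. A short computation gives $f''(\eta)=\frac{1}{1+\eta}-\frac{8}{(\eta+2)^3}$, so $f''(\eta)\geq 0$ is equivalent to $(\eta+2)^3\geq 8(1+\eta)$, which holds at $\eta=0$ with equality and whose derivative $3(\eta+2)^2-8$ is positive for $\eta\geq 0$; hence $f''\geq 0$, so $f'$ is increasing, $f'\geq f'(0)=0$, and finally $f\geq f(0)=0$. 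The main obstacle is this chain of calculus estimates: one must track the second derivative carefully, because the clean quadratic denominator $\eta+2$ is exactly tuned so that the inequality is tight at $\eta=0$, leaving no slack to absorb a careless bound.
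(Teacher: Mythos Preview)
Your proof is correct. The paper itself does not prove this lemma at all---it is stated without proof as a standard Chernoff-type bound and then applied inside Proposition~\ref{prop:num-of-changes}. Your derivation is the standard one: the Chernoff--MGF reduction to $\exp(\mu(\eta-(1+\eta)\log(1+\eta)))$ followed by the analytic inequality $(1+\eta)\log(1+\eta)-\eta\ge \eta^2/(\eta+2)$, and your verification of the latter via $f(0)=f'(0)=0$ and $f''(\eta)=\tfrac{1}{1+\eta}-\tfrac{8}{(\eta+2)^3}\ge 0$ is clean and correct.
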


\begin{proposition}[Bounding $N_+$] \label{prop:num-of-changes}
Suppose $u\in [-1,1]^p$ and $\beta\leq \|u\|$. Denote $r=\frac{\beta}{\|u\|}$ and $F_{\|u\|,p}=P_+(\frac{\beta}{\|u\|},p)$ as (\ref{eqn:p-vars-prob-bound}) in Lemma~\ref{lemma:p-vars-prob-bound}. Denote $s=sign(W^Tu)\in \{-1,+1\}^k$ and $s'=sign(W^Tu')\in \{-1,+1\}^k$ for $u'$ that is $\beta$-neighboring to $u$, and $S=\{i:s_j\neq s_j'\}$. Then, with probability $1-\delta$,
\begin{align} \label{eqn:N+ bound}
    |S| \leq N_+(\|u\|,\delta,k,p)= \min\{ F_{\|u\|,p} k+\frac{1}{2}\Big[\log(1/\delta)+\sqrt{(\log(1/\delta))^2+8F_{\|u\|,p} k \log(1/\delta)}\Big], k\}.
\end{align}
\end{proposition}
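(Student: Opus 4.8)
The plan is to construct a single random quantity, depending only on $W$, that simultaneously upper bounds the number of sign changes $|S|$ for \emph{every} $\beta$-neighbor $u'$, and then to control its binomial tail. Throughout, write $w^{(j)}=(w_{1j},\dots,w_{pj})$ for the $j$-th column of $W$, so that $x_j=\frac{1}{\sqrt k}\sum_{i=1}^p w_{ij}u_i$ and the columns are mutually independent.

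First I would reduce the worst case over neighbors to a per-column ``flippable'' event. A $\beta$-neighbor $u'$ differs from $u$ in a single coordinate $i_0$ with $|u_{i_0}-u'_{i_0}|\le\beta$, so $x'_j=x_j+\frac{1}{\sqrt k}w_{i_0 j}(u'_{i_0}-u_{i_0})$. As $i_0$ ranges over $[p]$ and the perturbation over $[-\beta,\beta]$, the reachable value $x'_j$ sweeps the interval $[\,x_j-\frac{\beta}{\sqrt k}\max_i|w_{ij}|,\ x_j+\frac{\beta}{\sqrt k}\max_i|w_{ij}|\,]$. Hence a sign flip at coordinate $j$ is achievable by some neighbor iff this interval straddles $0$, i.e. iff
\[
\beta\max_{i=1,\dots,p}|w_{ij}|>\Big|\sum_{i=1}^p w_{ij}u_i\Big|.
\]
Calling this event $E_j$, for every fixed neighbor $u'$ the set $S$ of its sign changes satisfies $|S|\le\sum_{j=1}^k\mathbbm 1\{E_j\}$, since changing a single coordinate can flip the sign at $j$ only when $E_j$ occurs (if $E_j^c$ holds, the largest perturbation $\frac{\beta}{\sqrt k}|w_{i_0 j}|\le|x_j|$ and $x'_j$ stays on the same side of $0$). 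Crucially the right-hand side is free of $u'$, giving the uniform-over-neighbors bound that DP requires.

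Next I would compute $Pr(E_j)$ via Lemma~\ref{lemma:p-vars-prob-bound}. Setting $X_i=w_{ij}\sim N(0,1)$ (so $\sigma_x=1$) and $Y=\frac1\beta\sum_{i=1}^p w_{ij}u_i\sim N(0,\|u\|^2/\beta^2)$ (so $\sigma_y=\|u\|/\beta$), the ratio is $r=\sigma_x/\sigma_y=\beta/\|u\|$, and the hypothesis $\beta\le\|u\|$ gives $r\le1$. Since $E_j=\{\max_i|X_i|>|Y|\}$ and the lemma's bound holds for \emph{arbitrary} dependence between $Y$ and the $X_i$ (the worst case being independence), we obtain $Pr(E_j)\le P_+(\beta/\|u\|,p)=F_{\|u\|,p}$. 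Because the columns $w^{(j)}$ are i.i.d., the indicators $\mathbbm 1\{E_j\}$ are i.i.d. Bernoulli with common parameter at most $F_{\|u\|,p}$, so $\sum_j\mathbbm 1\{E_j\}$ is stochastically dominated by $\mathrm{Binomial}(k,F_{\|u\|,p})$.

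Finally I would apply the binomial tail bound (Lemma~\ref{lemma:binomial-tail}) with $\mu=F_{\|u\|,p}\,k$. Writing $L=\log(1/\delta)$ and choosing $\eta$ so that $\exp\!\big(-\tfrac{\eta^2\mu}{\eta+2}\big)=\delta$ amounts to solving $\mu\eta^2-L\eta-2L=0$, whose positive root is $\eta=\frac{L+\sqrt{L^2+8\mu L}}{2\mu}$, giving $(1+\eta)\mu=\mu+\tfrac12\big(L+\sqrt{L^2+8\mu L}\big)$, exactly the first argument of the $\min$ in \eqref{eqn:N+ bound}. By monotonicity of the tail bound in the success probability this inequality transfers to the dominated sum, so $Pr\big(\sum_j\mathbbm 1\{E_j\}\ge N_+\big)\le\delta$, and the $\min$ with $k$ is immediate since $|S|\le k$ deterministically. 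The main obstacle is the first step: correctly arguing that the single random quantity $\sum_j\mathbbm 1\{E_j\}$ dominates $|S|$ for all neighbors at once (uniformity over the changed coordinate $i_0$), and recognizing that optimizing over the worst perturbation direction injects a maximum of $p$ i.i.d. Gaussians, which is precisely the $P_+(r,p)$ structure; the remaining steps are routine given Lemmas~\ref{lemma:p-vars-prob-bound} and~\ref{lemma:binomial-tail}.
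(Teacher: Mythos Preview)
Your proposal is correct and follows essentially the same approach as the paper: bound $|S|$ by the count of ``flippable'' columns $\sum_j\mathbbm 1\{E_j\}$ where $E_j=\{\beta\max_i|w_{ij}|>|w^{(j)\top}u|\}$, invoke Lemma~\ref{lemma:p-vars-prob-bound} to bound $Pr(E_j)\le F_{\|u\|,p}$, and then apply the binomial tail bound of Lemma~\ref{lemma:binomial-tail} with the same choice of $\eta$. If anything, your treatment of the stochastic-domination step (since $Pr(E_j)\le F_{\|u\|,p}$ rather than equality) and the uniformity over neighbors is slightly more explicit than the paper's, but the argument is the same.
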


\vspace{0.1in}

\begin{remark}  \label{remark:N_+}
Since $F_{\|u\|,p}=P_+(r,p)$ is an increasing function in $r=\beta/\|u\|$, $N_+$ would be smaller if $\beta/\|u\|$ is smaller. That is, the signs of $W^Tu$ and $W^Tu'$ are less likely to be different when the difference between neighboring data, $\beta$, is relatively small compared with the data norm $\|u\|$.
\end{remark}

\begin{figure}[t]
\begin{center}
    \includegraphics[width=3.2in]{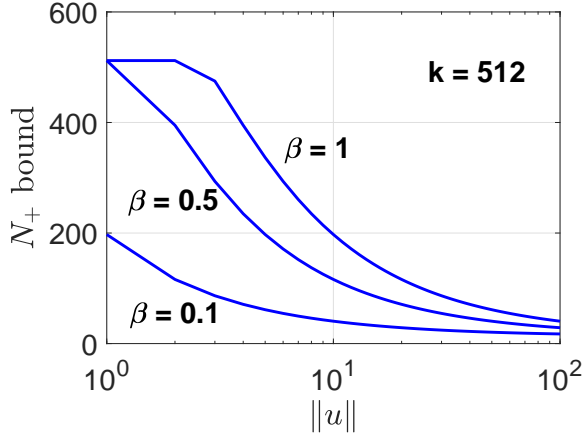}
\end{center}

\vspace{-0.2in}

\caption{The upper bound on $N_+(\|u\|,\delta,k,p)$ in Proposition~\ref{prop:num-of-changes} against data norm $\|u\|$ at different $\beta$. In the simulation, $p=1024$, $k=512$ and $\delta=10^{-6}$. }
\label{fig:N+}
\end{figure}

\begin{proof}
Consider a single Gaussian projection vector $w$ with iid $N(0,1)$ entries. Since $w^Tu=\sum_{i=1}^p u_iw_i$ and each $w_i\sim N(0,1)$, we know that $\begin{pmatrix} \beta w_i \\ x \end{pmatrix}\sim N\begin{pmatrix} \beta^2 & \rho_i\beta\|u\| \\ \rho_i\beta\|u\| & \|u\|^2 \end{pmatrix}$ where $\rho_i=\frac{u_i}{\|u\|}$ is the correlation coefficient. Since $|w^T(u-u')|\leq \beta\max_{i=1,...,p}|w_i|$ by Definition~\ref{def:neighbor} of $\beta$-neighboring (and more generally, when $\|u-u'\|_1\leq \beta$), we have
\begin{align*}
    Pr(\max_{u'\in Nb(u)}|w^T(u-u')|\geq |w^Tu|)= Pr(\beta\max_{i=1,...,p}|w_i|\geq |w^Tu|).
\end{align*}
Note that, $\beta\max_{i=1,...,p}|w_i|\geq |w^Tu|$ is a necessary condition for the event that there exists a neighbor such that $sign(w^Tu)\neq sign(w^Tu')$. Denote $I=\mathbbm 1\{\beta\max_{i=1,...,p}|w_i|\geq |w^Tu|\}$. Applying Lemma~\ref{lemma:p-vars-prob-bound} with $r=\beta/\|u\|\leq 1$ yields
\begin{align}
    \mathbb E[I]=Pr(\beta\max_{i=1,...,p}|w_i|\geq |w^Tu|)\leq F_{\|u\|,p}=\int_0^\infty 2p[2\Phi(t)-1]^{p-1}[2\Phi(rt)-1]\phi(t) dt   \label{eqn:E[I]}
\end{align}
as given by (\ref{eqn:p-vars-prob-bound}). Let $I_j$ be the corresponding indicator function w.r.t. each column in the projection matrix $W$. Denote $N_+=\sum_{j=1}^k I_j$, and by the above reasoning, we know that $|S|\leq N_+$ where $S$ is defined in the theorem. Since the columns of $W$ are independent, $N_+$ follows a $Binomial(k,\mathbb E[I])$ distribution with $k$ trials and success probability $\mathbb E[I]$ bounded as above.  Applying Chernoff's bound on binomial variable (Lemma~\ref{lemma:binomial-tail}), we obtain
\begin{align*}
    Pr(N_+\geq (1+\eta) F_{\|u\|,p} k)\leq \exp(-\frac{\eta^2F_{\|u\|,p} k}{\eta+2}).
\end{align*}
Setting the RHS to $\delta$ gives $\eta=\frac{\log(1/\delta)+\sqrt{(\log(1/\delta))^2+8F_{\|u\|,p} k \log(1/\delta)}}{2F_{\|u\|,p} k}$. Therefore, with probability $1-\delta$,
\begin{align*}
    N_+(\|u\|,\delta,k,p)\leq F_{\|u\|,p} k+\frac{1}{2}\Big[\log(1/\delta)+\sqrt{(\log(1/\delta))^2+8F_{\|u\|,p} k \log(1/\delta)} \Big].
\end{align*}
In addition, $N_+\leq k$ trivially. The proof is complete.
\end{proof}

\vspace{0.1in}

In Figure~\ref{fig:N+}, we plot the bound on $N_+(\|u\|,\delta,k,p)$ derived in Proposition~\ref{prop:num-of-changes} against $\|u\|$ with various $\beta$ and fixed $p$, $k$ and $\delta$. As expected, $N_+(\|u\|,\delta,k,p)$ shrinks rapidly as the data norm $\|u\|$ increases. Since $N_+(\|u\|,\delta,k,p)$ is decreasing in $\|u\|$ (because $F_{\|u\|,p}$ decreases with $\|u\|$), we know that for all $u$ with $\|u\|\geq m$, the $|S|$ in Proposition~\ref{prop:num-of-changes} would be upper bounded by $N_+(m,\delta,k,p)$ with high probability, as used in Algorithm~\ref{alg:DP-signRP-RR} and the proof of Theorem~\ref{theo:DP-SignRP-RR privacy}. Also, in DP-SignRP-RR, the probability of flipping the true SignRP, $\frac{1}{e^{\epsilon/N_+}+1}$, would be smaller when the data norm (or, its lower bound $m$) is large compared with $\beta$.

\subsubsection{Utility in angle estimation by DP-SignRP-RR}

Define the DP-SignRP-RR estimator of the angle between two data points $u$ and $v$ as
\begin{align}
&\hat\theta_{RR} = \pi(1-\hat P_{RR}), \label{est:DP-signRP-RR} \\
\text{where}\ &\hat P_{RR}=\frac{(e^{\epsilon'}+1)^2}{(e^{\epsilon'}-1)^2} \frac{1}{k}\sum_{j=1}^k \mathbbm 1\{\tilde s_{1j}=\tilde s_{2j}\}-\frac{2 e^{\epsilon'}}{(e^{\epsilon'}-1)^2}.  \nonumber
\end{align}
We have the following  result on the utility.

\vspace{0.1in}

\begin{theorem}  \label{theo:DP-signRP-utility}
Let $\rho=\cos(u,v)$ and $\theta=\cos^{-1}(\rho)$. Run Algorithm~\ref{alg:DP-signRP-RR} with $N_+$ given in Proposition~\ref{prop:num-of-changes} and define the DP-SignRP-RR angle estimator by (\ref{est:DP-signRP-RR}). We have $\mathbb E[\hat\theta_{RR}]=\theta$. As $k\rightarrow \infty$, the following holds:
\begin{align*}
    \hat\theta_{RR}\rightarrow N(\theta, \frac{V_{RR}}{k}),
\end{align*}
where $$V_{RR}=\theta(\pi-\theta)+\frac{2\pi^2e^{\epsilon/N_+}}{(e^{\epsilon/N_+}-1)^2}+\frac{4\pi^2e^{2\epsilon/N_+}}{(e^{\epsilon/N_+}-1)^4}.$$
\end{theorem}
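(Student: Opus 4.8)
The plan is to reduce the whole problem to a single Bernoulli collision event and then invoke the classical central limit theorem. First I would compute the probability that the two \emph{randomized} signs agree at a fixed coordinate $j$. Writing $q=\frac{1}{e^{\epsilon'}+1}$ for the flip probability and letting $B_{1j},B_{2j}$ be the independent flip indicators for the two vectors, the event $\{\tilde s_{1j}=\tilde s_{2j}\}$ occurs exactly when either the true signs agree and the flips have equal parity, or the true signs disagree and the flips have opposite parity. Using the non-private collision probability $p_c\eqdef Pr(s_{1j}=s_{2j})=1-\theta/\pi$ from (\ref{eqn:signRP-collision-prob}), a short computation gives
\begin{align*}
P_{flip}\eqdef Pr(\tilde s_{1j}=\tilde s_{2j}) &= 2q(1-q)+p_c(2q-1)^2 \\
&= \frac{2e^{\epsilon'}}{(e^{\epsilon'}+1)^2}+p_c\frac{(e^{\epsilon'}-1)^2}{(e^{\epsilon'}+1)^2},
\end{align*}
after substituting $2q-1=-\frac{e^{\epsilon'}-1}{e^{\epsilon'}+1}$ and $2q(1-q)=\frac{2e^{\epsilon'}}{(e^{\epsilon'}+1)^2}$.

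For unbiasedness I would observe that $\hat P_{RR}$ is an affine function $\hat P_{RR}=c\,\hat M-d$ of the empirical mean $\hat M=\frac{1}{k}\sum_j \mathbbm 1\{\tilde s_{1j}=\tilde s_{2j}\}$, with $c=\frac{(e^{\epsilon'}+1)^2}{(e^{\epsilon'}-1)^2}$ and $d=\frac{2e^{\epsilon'}}{(e^{\epsilon'}-1)^2}$. Since $\mathbb E[\hat M]=P_{flip}$, plugging the formula for $P_{flip}$ into $c\,P_{flip}-d$ makes the two flip-induced terms cancel and leaves exactly $p_c$; hence $\mathbb E[\hat P_{RR}]=1-\theta/\pi$ and $\mathbb E[\hat\theta_{RR}]=\pi(1-\mathbb E[\hat P_{RR}])=\theta$. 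The debiasing constants $c,d$ are designed precisely to produce this cancellation, so I would state it explicitly.

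For the limiting distribution, the crucial point is that the pairs $(s_{1j},s_{2j})$ are i.i.d.\ across $j$ (the columns of $W$ are independent) and the flip indicators are mutually independent, so the summands $\mathbbm 1\{\tilde s_{1j}=\tilde s_{2j}\}$ are i.i.d.\ $\mathrm{Bernoulli}(P_{flip})$. The ordinary CLT then gives $\sqrt k(\hat M-P_{flip})\to N(0,P_{flip}(1-P_{flip}))$, and since $\hat\theta_{RR}=\pi(1-c\,\hat M+d)$ is affine in $\hat M$, it is asymptotically normal with mean $\theta$ and variance $V_{RR}/k$ where $V_{RR}=\pi^2c^2P_{flip}(1-P_{flip})$.

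The remaining and most error-prone step is simplifying $V_{RR}$ into the stated closed form, and I expect this bookkeeping to be the main obstacle. The clean route is to write both factors symmetrically: with $D=(e^{\epsilon'}-1)^2$ one has $(e^{\epsilon'}+1)^2P_{flip}=2e^{\epsilon'}+p_cD$ and $(e^{\epsilon'}+1)^2(1-P_{flip})=2e^{\epsilon'}+(1-p_c)D$. Because $c^2=(e^{\epsilon'}+1)^4/D^2$, the powers of $(e^{\epsilon'}+1)$ cancel and the product collapses via $(1-p_c)+p_c=1$, yielding
\begin{align*}
c^2P_{flip}(1-P_{flip})=\frac{4e^{2\epsilon'}}{(e^{\epsilon'}-1)^4}+\frac{2e^{\epsilon'}}{(e^{\epsilon'}-1)^2}+p_c(1-p_c).
\end{align*}
Finally, multiplying by $\pi^2$ and using $\pi^2p_c(1-p_c)=\pi^2(1-\theta/\pi)(\theta/\pi)=\theta(\pi-\theta)$ together with $\epsilon'=\epsilon/N_+$ recovers exactly $V_{RR}=\theta(\pi-\theta)+\frac{2\pi^2e^{\epsilon/N_+}}{(e^{\epsilon/N_+}-1)^2}+\frac{4\pi^2e^{2\epsilon/N_+}}{(e^{\epsilon/N_+}-1)^4}$. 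The conceptual content lies entirely in the collision-probability computation of the first step; everything else is linear algebra and a textbook CLT.
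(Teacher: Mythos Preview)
Your proposal is correct and follows essentially the same route as the paper: compute the perturbed collision probability $P_{flip}$ by conditioning on whether the true signs agree, verify that the debiasing constants in $\hat P_{RR}$ recover $p_c=1-\theta/\pi$, apply the i.i.d.\ Bernoulli CLT to $\hat M$, and expand $\pi^2 c^2 P_{flip}(1-P_{flip})$. Your bookkeeping via $D=(e^{\epsilon'}-1)^2$ and the symmetric factorization $(2e^{\epsilon'}+p_cD)(2e^{\epsilon'}+(1-p_c)D)$ is in fact a slightly cleaner way to organize the same algebra the paper carries out line by line.
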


\begin{proof}
Let $s=sign(W^Tu)\in \{-1,+1\}^k$, $s'=sign(W^Tu')\in \{-1,+1\}^k$. We denote the collision probability of non-private SignRP as $$P_{SRP}=Pr(s_{1j}=s_{2j})=1-\frac{\cos^{-1}(\rho)}{\pi}=1-\frac{\theta}{\pi}.$$
Hence, the collision probability of DP-SignRP-RR can be computed as
\begin{align*}
    \tilde P\eqdef Pr(\tilde s_{1j}&=\tilde s_{2j})=Pr(s_{1j}=s_{2j}, \text{both change sign or not change sign}) \\
    &\hspace{2.2in} + Pr(s_{1j}\neq s_{2j}, \text{one sign changes})\\
    &=P_{SRP}[(\frac{e^{\epsilon'}}{e^{\epsilon'}+1})^2+(\frac{1}{e^{\epsilon'}+1})^2]+2(1-P_{SRP})\frac{e^{\epsilon'}}{(e^{\epsilon'}+1)^2}\\
    &=P_{SRP}\frac{(e^{\epsilon'}-1)^2}{(e^{\epsilon'}+1)^2}+\frac{2 e^{\epsilon'}}{(e^{\epsilon'}+1)^2},
\end{align*}
which increases linearly in $P_{SRP}$. Thus, it holds that
\begin{align*}
    \mathbb E[\hat P_{RR}]=\frac{(e^{\epsilon'}+1)^2}{(e^{\epsilon'}-1)^2} \tilde P-\frac{2 e^{\epsilon'}}{(e^{\epsilon'}-1)^2} =P_{SRP}=1-\frac{\theta}{\pi},
\end{align*}
which implies $\mathbb E[\hat\theta_{RR}]=\pi\left(1-(1-\frac{\theta}{\pi})\right)=\theta$. To compute the variance, we first estimate $\theta=\cos^{-1}(\rho)$~by
\begin{align*}
    \hat\theta=\pi(1-\hat P_{RR}).
\end{align*}
Then according to the Central Limit Theorem (CLT), for the sample mean of iid Bernoulli's, as $k\rightarrow \infty$, we have
\begin{align*}
    \frac{1}{k}\sum_{j=1}^k \mathbbm 1\{\tilde s_{1j}=\tilde s_{2j}\} \rightarrow N(\tilde P, \frac{\tilde P(1-\tilde P)}{k}).
\end{align*}
As a result, we have $\hat\theta\rightarrow N(\theta,\frac{V_{RR}}{k})$, where
\begin{align*}
    V_{RR}&=\frac{\pi^2(e^{\epsilon'}+1)^4}{(e^{\epsilon'}-1)^4}\Big[(1-\frac{\theta}{\pi})\frac{(e^{\epsilon'}-1)^2}{(e^{\epsilon'}+1)^2}+\frac{2 e^{\epsilon'}}{(e^{\epsilon'}+1)^2}  \Big] \Big[ \frac{e^{2\epsilon'}+1}{(e^{\epsilon'}+1)^2}-(1-\frac{\theta}{\pi})\frac{(e^{\epsilon'}-1)^2}{(e^{\epsilon'}+1)^2} \Big]\\
    &=\frac{\pi^2(e^{\epsilon'}+1)^4}{(e^{\epsilon'}-1)^4}\Big[(1-\frac{\theta}{\pi})\frac{(e^{\epsilon'}-1)^2}{(e^{\epsilon'}+1)^2}+\frac{2 e^{\epsilon'}}{(e^{\epsilon'}+1)^2}  \Big] \Big[ \frac{\theta}{\pi}\frac{(e^{\epsilon'}-1)^2}{(e^{\epsilon'}+1)^2}+\frac{2 e^{\epsilon'}}{(e^{\epsilon'}+1)^2} \Big] \\
    &=\frac{\pi^2\theta}{\pi}(1-\frac{\theta}{\pi})+(1-\frac{\theta}{\pi})\frac{2e^{\epsilon'}}{(e^{\epsilon'}-1)^2}+\frac{\theta}{\pi}\frac{2e^{\epsilon'}}{(e^{\epsilon'}-1)^2}+\frac{4e^{2\epsilon'}}{(e^{\epsilon'}-1)^4} \\
    &=\theta(\pi-\theta)+\frac{2\pi^2e^{\epsilon'}}{(e^{\epsilon'}-1)^2}+\frac{4\pi^2e^{2\epsilon'}}{(e^{\epsilon'}-1)^4}.
\end{align*}
We conclude the proof by replacing $\epsilon'=\epsilon/N_+$.
\end{proof}

\vspace{0.1in}

Theorem~\ref{theo:DP-signRP-utility} says that $\hat\theta_{RR}$ is an unbiased estimator of $\theta$ and asymptotically normal. Compared with the variance in (\ref{eqn:var-signRP}) of the estimator from SignRP, we see that $\hat\theta_{RR}$ incurs an extra variance (i.e., utility loss) of $\frac{\pi^2}{k}\Big[ \frac{2e^{\epsilon/N_+}}{(e^{\epsilon/N_+}-1)^2}+\frac{4e^{2\epsilon/N_+}}{(e^{\epsilon/N_+}-1)^4}\Big]$. This quantity increases as $\epsilon$ gets smaller, illustrating the utility-privacy trade-off of DP-SignRP-RR.

\vspace{0.1in}

\noindent\textbf{Optimal projection dimension $k^*$}. In Theorem~\ref{theo:DP-signRP-utility}, note that the variance $V_{RR}$ does not always decrease with larger $k$. In particular, in Proposition~\ref{prop:num-of-changes} since $N_+\asymp F_{\|u\|,p} k$, the term $\Big[ \frac{2e^{\epsilon/N_+}}{(e^{\epsilon/N_+}-1)^2}+\frac{4e^{2\epsilon/N_+}}{(e^{\epsilon/N_+}-1)^4}\Big]$ increases with $k$. Therefore, there exists an optimal $k^*$ that minimizes the estimation variance. When $k$ is sufficiently large, we have the approximation
\begin{align*}
    \frac{V_{RR}}{k}\approx \frac{\theta(\pi-\theta)}{k}+\frac{2\pi^2F_{\|u\|,p}^2 k}{\epsilon^2}+\frac{8\pi^2F_{\|u\|,p}^4 k^3}{\epsilon^4},
\end{align*}
using the approximation that $e^x-1\approx x$ when $x$ is small. To find the optimal $k$ minimizing this expression, we compute the derivative and set it as zero:
\begin{align*}
    -\frac{\theta(\pi-\theta)}{k^2}+\frac{2\pi^2F_{\|u\|,p}^2}{\epsilon^2}+\frac{24\pi^2F_{\|u\|,p}^4 k^2}{\epsilon^4}=0\ \ \Longrightarrow \ \ k^*\asymp \frac{\epsilon\theta(\pi-\theta)}{F_{\|u\|,p}}. % \label{eqn:optimal k}
\end{align*}
This analysis suggests that theoretically the optimal $k^*$ is larger when: (1) $\epsilon$ is large; (2) $F_{\|u\|,p}$ is small, which is typically true when the norm of the data is relatively large.

\newpage

\subsection{DP-SignRP-RR-Smooth Using Smooth Flipping Probability}  \label{sec:DP-SignRP-RR-smooth}

The DP-SignRP-RR algorithm is not satisfactory in terms of privacy preservation. For instance, when the norm of the data is not very large (e.g., $m=10$ and $\beta=1$), for $k=512$, $N_+$ can be as large as $200$ (see Figure~\ref{fig:N+}). This implies that, we need $\epsilon=400$ in order for the probability of flipping a true SignRP to be $\frac{1}{e^{\epsilon/N_+}+1}\approx 11.9\%$, which is still fairly large to achieve a good utility in practice. Thus, the privacy protection of DP-SignRP-RR might be too weak. Next, we develop a new algorithm that could reduce the flipping probability of DP-SignRP-RR, thanks to the ``robustness'' of SignRP brought by the ``aggregate-and-sign'' operation.

\begin{algorithm}[h]
{
    \vspace{0.05in}
    \textbf{Input:} Data $u\in[-1,1]^p$; $\epsilon>0$, number of projections~$k$

    \vspace{0.05in}

    \textbf{Output:}   Differentially private sign random projections

    \vspace{0.05in}

    Apply RP by $x=\frac{1}{\sqrt k}W^Tu$, where $W\in\mathbb R^{p\times k}$ is a random $N(0,1)$ matrix

    Compute $L_j=\lceil \frac{|x_j|}{\beta\max_{i=1,...,p} |W_{ij}|} \rceil$ for $j=1,...,k$

    Compute $\tilde s_j=\begin{cases}
    sign(x_j), & \text{with prob.}\ \frac{e^{\epsilon_j'}}{e^{\epsilon_j'}+1}\\
    -sign(x_j), & \text{with prob.}\ \frac{1}{e^{\epsilon_j'}+1}
    \end{cases}$ for $j=1,...,k$, with $\epsilon_j'=\frac{L_j}{k}\epsilon$

    Return $\tilde s$ as the DP-SignRP of $u$
    }
    \caption{DP-SignRP-RR-smooth using smooth flipping probability}
    \label{alg:DP-signRP-RR-smooth}
\end{algorithm}

\vspace{0.1in}
\noindent\textbf{Smooth flipping probability.}  We propose a novel sampling approach based on \textit{``smooth flipping probability''}. Recall the motivation of smooth sensitivity (Definition~\ref{def:smooth-sensitivity}): we may allow the use of local sensitivity at $u$ as long as we account for the local sensitivities of all $u'\in\mathcal U$, discounted exponentially by a factor $e^{-\epsilon d(u,u')}$ where $d(u,u')$ is the minimal number of ``taking-a-neighbor'' operations to reach $u'$ from $u$. Conceptually, smooth sensitivity says that, the farther $u$ is from a high local sensitivity point, the less noise is needed for $u$ to achieve DP.

%\newpage

What is the ``local perturbation'' for the 1-bit SignRP? Consider one projection $x_j=W[:,j]^T u$ and $s_j=sign(x_j)$. For a neighbor $u'$ of $u$, denote $x_j'=W[:,j]^T u'$ and $s_j'=sign(x_j')$. Obviously, by $\beta$-adjacency (Definition~\ref{def:neighbor}), $s_j'$ is possible to be different from $s_j$ only if $|x_j|< \beta\max_{i=1,...,p} |W_{ij}|$, because $|x_j-x_j'|\leq \beta\max_{i=1,...,p} |W_{ij}|$. Intuitively, this implies that projected values near zero would have larger ``local sensitivity''. More concretely,
\begin{itemize}
    \item When $|x_j|< \beta\max_{i=1,...,p} |W_{ij}|$, the ``local sign flipping'' (analogue to the local sensitivity for noise addition) is required, since $s_j'$ may change the sign of $s_j$. A typical choice is the standard randomized response (RR) strategy: keep the sign with probability $\frac{e^{\epsilon/k}}{e^{\epsilon/k}+1}$ and flip otherwise.

    \item When $|x_j|\geq \beta\max_{i=1,...,p} |W_{ij}|$, there does not exist $u'\in Nb(u)$ such that $s_j'\neq s_j$. Thus, the ``local sign flipping'' is not needed (i.e., locally, we do not need to apply random sign flips). However, due to the same issue as the local sensitivity, if we do not perturb $s_j$ in this case, then the algorithm does not satisfy DP. Instead, we leverage the idea of smooth sensitivity and propose \textit{``smooth flipping probability''} which applies less perturbation for points far away from the ``high-sensitivity region'' (i.e., the regime with $|x_j|< \beta\max_{i=1,...,p} |W_{ij}|$). Specifically, we define $L_j=\lceil \frac{|x_j|}{\beta\max_{i=1,...,p} |W_{ij}|} \rceil$. The probability of keeping the sign $s_j$ is $P_j^{(u)}=\frac{e^{\epsilon_j'}}{e^{\epsilon_j'}+1}$ where $\epsilon_j'=\frac{L_j}{k}\epsilon$. In Theorem~\ref{theo:privacy-DPSignRP-RR-smooth}, we will see that this flipping probability is ``smooth'': for $\forall u,u'$ that are neighbors, $P_j^{(u)}\leq e^{\epsilon/k}P_j^{(u')}$, which justifies its name.
\end{itemize}

\begin{figure}[h]
  \begin{center}
    \includegraphics[width=3.5in]{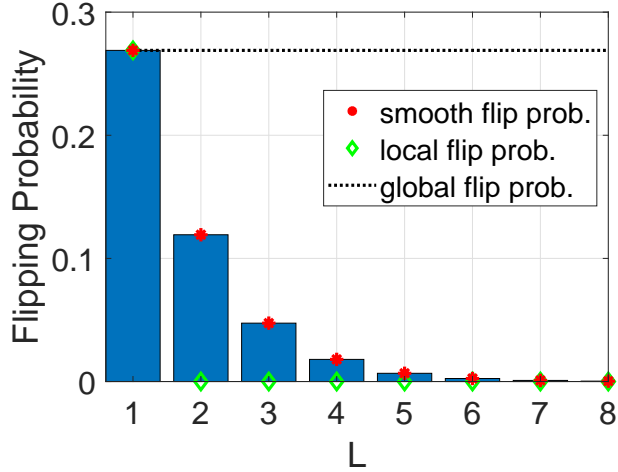}
  \end{center}
  \vspace{-0.25in}
  \label{fig:smooth_flip_prob}
  \caption{Illustration of the smooth flipping probability, ``local flipping probability'', and the global upper bound for DP-SignRP. $\epsilon=1$, $k=1$. $L$ is computed in Algorithm~\ref{alg:DP-signRP-RR-smooth}.}
  \vspace{0.1in}
\end{figure}

The concrete steps of DP-SignRP-RR-smooth are summarized in Algorithm~\ref{alg:DP-signRP-RR-smooth}. Note that, the above two cases can be unified into one, since the first case corresponds to $L_j=1$ which exactly recovers the flipping probability $\frac{e^{\epsilon/k}}{e^{\epsilon/k}+1}$. In Figure~\ref{fig:smooth_flip_prob}, we provide an illustrative example of the smooth flipping probability (red) and the ``local flipping probability'' (green) discussed above. We see that the local flipping probability is only non-zero at $L=1$, i.e., when $s_j$ and $s_j'$ are possible to be different. The proposed smooth flipping probability is non-zero for all $L>0$, but keeps shrinking as $L$ is larger, i.e., as the projected value is farther from $0$.

\begin{theorem} \label{theo:privacy-DPSignRP-RR-smooth}
Algorithm~\ref{alg:DP-signRP-RR-smooth} is $\epsilon$-DP.
\end{theorem}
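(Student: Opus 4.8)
The plan is to show that for every pair of $\beta$-adjacent $u,u'$ and every fixed output $y\in\{-1,+1\}^k$ one has $\log\frac{Pr(\tilde s=y)}{Pr(\tilde s'=y)}\le\epsilon$; summing over $y\in O$ then yields the $\epsilon$-DP inequality (with $\delta=0$) for every $O$. Because the projection matrix $W$ is fixed and public and the sign flips are independent across the $k$ coordinates, the likelihood ratio factorizes,
\begin{align*}
\log\frac{Pr(\tilde s=y)}{Pr(\tilde s'=y)}=\sum_{j=1}^k \log\frac{Pr(\tilde s_j=y_j)}{Pr(\tilde s_j'=y_j)},
\end{align*}
so it suffices to prove that each summand is at most $\epsilon/k$.

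First I would establish the single geometric fact that drives the argument: for $\beta$-adjacent $u,u'$ the integer labels obey $|L_j^{(u)}-L_j^{(u')}|\le 1$ for every $j$. Since $u$ and $u'$ differ in one coordinate $i$ with $|u_i-u_i'|\le\beta$, we have $|x_j-x_j'|=\frac{1}{\sqrt k}|W_{ij}|\,|u_i-u_i'|\le\frac{1}{\sqrt k}\beta\max_{i}|W_{ij}|$, and the denominator $\beta\max_i|W_{ij}|$ in the definition of $L_j$ depends only on the shared $W$; hence the argument of the ceiling moves by at most $\frac{1}{\sqrt k}\le1$ and its ceiling by at most one. Writing $g(t)=\frac{e^t}{e^t+1}$, the keep probability at coordinate $j$ is $g(\frac{\epsilon}{k}L_j^{(u)})$ and the flip probability is $1-g(\frac{\epsilon}{k}L_j^{(u)})=g(-\frac{\epsilon}{k}L_j^{(u)})$, so the two arguments for $u$ and $u'$ differ by at most $\frac{\epsilon}{k}|L_j^{(u)}-L_j^{(u')}|\le\frac{\epsilon}{k}$.

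Next I split each coordinate by whether the unperturbed signs agree. \textbf{If $s_j=s_j'$}, the per-coordinate ratio equals either $g(\frac{\epsilon}{k}L_j^{(u)})/g(\frac{\epsilon}{k}L_j^{(u')})$ (when $y_j=s_j$) or $g(-\frac{\epsilon}{k}L_j^{(u)})/g(-\frac{\epsilon}{k}L_j^{(u')})$ (when $y_j=-s_j$). Here I invoke the \emph{smoothness} of the flipping probability: since $\frac{d}{dt}\log g(t)=1-g(t)\in(0,1)$ and $\frac{d}{dt}\log g(-t)=-g(t)\in(-1,0)$, both $\log g$ and $\log g(-\cdot)$ are $1$-Lipschitz, so each ratio is bounded by $e^{|\frac{\epsilon}{k}L_j^{(u)}-\frac{\epsilon}{k}L_j^{(u')}|}\le e^{\epsilon/k}$. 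This is precisely the advertised bound $P_j^{(u)}\le e^{\epsilon/k}P_j^{(u')}$ together with its analogue for the flip probability — the latter being the part the ordinary local-sensitivity recipe cannot supply. \textbf{If $s_j\ne s_j'$}, the sign disagreement puts $x_j$ and $x_j'$ on opposite sides of $0$, whence $|x_j|+|x_j'|=|x_j-x_j'|\le\frac{1}{\sqrt k}\beta\max_i|W_{ij}|$; both ceiling arguments then lie in $(0,1]$, forcing $L_j^{(u)}=L_j^{(u')}=1$ (the event $x_j=0$ has probability zero under Gaussian $W$). The coordinate therefore reduces to ordinary randomized response between opposite signs, and the ratio is exactly $g(\frac{\epsilon}{k})/g(-\frac{\epsilon}{k})=e^{\epsilon/k}$.

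Summing the $k$ per-coordinate bounds gives $\log\frac{Pr(\tilde s=y)}{Pr(\tilde s'=y)}\le k\cdot\frac{\epsilon}{k}=\epsilon$ for all $y$, and finiteness of the output space upgrades this to $Pr(\tilde s\in O)\le e^{\epsilon}Pr(\tilde s'\in O)$ for every $O$, which is $\epsilon$-DP. I expect the main obstacle to be the second regime together with the flip-probability half of the smoothness bound: the whole design rests on still perturbing coordinates \emph{outside} the sensitive zone ($L_j>1$), so one must check that the flip probability $g(-\frac{\epsilon}{k}L_j)$, and not only the keep probability, varies by at most a factor $e^{\epsilon/k}$ between neighbors, and that a genuine sign flip can occur only when both neighbors sit at $L_j=1$. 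The remainder is routine Lipschitz-and-ceiling bookkeeping.
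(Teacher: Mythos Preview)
Your proposal is correct and follows essentially the same approach as the paper: factor the likelihood ratio over the $k$ independent coordinates, show each coordinate contributes at most $\epsilon/k$ by splitting into the case where the underlying signs agree (where $|L_j-L_j'|\le 1$ controls the ratio) and the case where they differ (which forces $L_j=L_j'=1$ and reduces to ordinary randomized response), and then compose. The only cosmetic difference is that the paper organizes the case split by whether $L_j\ge 2$ or $L_j=1$ and bounds the ratios by direct algebra (e.g., via $\frac{a+c}{b+c}\le\frac{a}{b}$), whereas you split by sign agreement and invoke the $1$-Lipschitzness of $\log g$ and $\log g(-\cdot)$; your treatment is arguably cleaner and also more careful about the $1/\sqrt{k}$ factor and the measure-zero event $x_j=0$.
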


\begin{proof}
Let us consider a single projection vector $w_j=W_{[:,j]}$. Denote $x_j=w_j^T u$ and $x_j'=w_j^T u'$ for a neighboring data $u'$ of $u$, and $s_j=sign(x_j)$, $s_j'=sign(x_j')$. Also, let $L_j=\lceil \frac{|x_j|}{\beta\max_{i=1,...,p} |W_{ij}|} \rceil$ and $L_j'=\lceil \frac{|x_j'|}{\beta\max_{i=1,...,p} |W_{ij}|} \rceil$. W.l.o.g., we can assume $s_j=1$ by the symmetry of random projection and the symmetry of DP. Consider two cases:
\begin{itemize}
    \item Case I: $L_j\geq 2$. In this case, we know that $s_j'=s_j$, i.e., the change from $u$ to $u'$ will not change the sign of the projection. Thus, in Algorithm~\ref{alg:DP-signRP-RR-smooth}, we have
    \begin{align*}
        \frac{Pr(\tilde s_j=1)}{Pr(\tilde s_j'=1)}=\exp(\frac{L_j-L_j'}{k}\epsilon) \frac{\exp(\frac{L_j'}{k}\epsilon)+1}{\exp(\frac{L_j}{k}\epsilon)+1}.
    \end{align*}
    By the definition of $\beta$-adjacency, $|L_j-L_j'|$ equals either $0$ or $1$. When $L_j=L_j'$, $\frac{Pr(\tilde s_j=1)}{Pr(\tilde s_j'=1)}=1$. When $L_j-L_j'=1$, we have
    \begin{align*}
        \frac{Pr(\tilde s_j=1)}{Pr(\tilde s_j'=1)}=\frac{\exp(\frac{L_j}{k}\epsilon)+\exp(\frac{1}{k}\epsilon)}{\exp(\frac{L_j}{k}\epsilon)+1}.
    \end{align*}
    Hence, we have $1\leq\frac{Pr(\tilde s_j=1)}{Pr(\tilde s_j'=1)}\leq e^{\frac{\epsilon}{k}}$ by the numeric identity $1\leq \frac{a+c}{b+c}\leq \frac{a}{b}$ for $a\geq b>0$ and $c>0$. Thus, by symmetry, $e^{-\frac{\epsilon}{k}}\leq \frac{Pr(\tilde s_j=1)}{Pr(\tilde s_j'=1)}\leq e^{\frac{\epsilon}{k}}$. On the other hand,
    \begin{align*}
        \frac{Pr(\tilde s_j=-1)}{Pr(\tilde s_j'=-1)}=\frac{\exp(\frac{L_j'}{k}\epsilon)+1}{\exp(\frac{L_j}{k}\epsilon)+1}.
    \end{align*}
    Similarly, when $L_j=L_j'$, the ratio equals $1$. When $L_j=L_j'-1$, we have $\frac{Pr(\tilde s_j=-1)}{Pr(\tilde s_j'=-1)}\leq\exp(\frac{L_j'}{k}\epsilon-\frac{L_j}{k}\epsilon)=e^{\frac{\epsilon}{k}}$. By symmetry we obtain $e^{-\frac{\epsilon}{k}}\leq \frac{Pr(\tilde s_j=-1)}{Pr(\tilde s_j'=-1)}\leq e^{\frac{\epsilon}{k}}$.

    \item Case II: $L_j=1$. In this case, $s_j$ might be different from $s_j'$. First, if $L_j'=2$, then the above analysis also applies that $\frac{Pr(\tilde s_j=1)}{Pr(\tilde s_j'=1)}$ and $\frac{Pr(\tilde s_j=-1)}{Pr(\tilde s_j'=-1)}$ are both lower and upper bounded by $e^{-\frac{\epsilon}{k}}$ and $e^{\frac{\epsilon}{k}}$, respectively. It suffices to examine the case when $L_j'=1$. In this case, if $s_j'=s_j=1$ then the probability ratios simply equal 1. If $s_j'=-1$, we have
    \begin{align*}
        \frac{Pr(\tilde s_j=1)}{Pr(\tilde s_j'=1)}=\frac{\frac{\exp(\frac{\epsilon}{k})}{\exp(\frac{\epsilon}{k})+1}}{\frac{1}{\exp(\frac{\epsilon}{k})+1}}=e^{\frac{\epsilon}{k}},\quad \frac{Pr(\tilde s_j=-1)}{Pr(\tilde s_j'=-1)}=\frac{\frac{1}{\exp(\frac{\epsilon}{k})+1}}{\frac{\exp(\frac{\epsilon}{k})}{\exp(\frac{\epsilon}{k})+1}}=e^{-\frac{\epsilon}{k}}.
    \end{align*}
\end{itemize}
Combining two cases, we have that $\log \frac{Pr(\tilde s_j=t)}{Pr(\tilde s_j'=t)}\leq \frac{\epsilon}{k}$, for $t=-1,1$, and for all $j=1,...,k$. That is, each single perturbed sign achieves $\frac{\epsilon}{k}$-DP. Since the $k$ projections are independent, by Theorem~\ref{theo:composition}, we know that the output bit vector $\tilde s=[\tilde s_1,...,\tilde s_k]$ is $\epsilon$-DP as claimed.
\end{proof}

\noindent\textbf{Comparison with DP-SignRP-RR (Algorithm~\ref{alg:DP-signRP-RR}).} In terms of algorithm design and privacy, DP-SignRP-RR-smooth (Algorithm~\ref{alg:DP-signRP-RR-smooth}) has the following two advantages over the standard DP-SignRP-RR method (Algorithm~\ref{alg:DP-signRP-RR}):
\begin{enumerate}
    \item DP-SignRP-RR-smooth does not require (assume) a lower bound $m$ on the data norms;

    \item DP-SignRP-RR-smooth achieves $\epsilon$-DP, while DP-SignRP-RR can only guarantee $(\epsilon,\delta)$-DP.
\end{enumerate}
Regarding the second point, as mentioned in Remark~\ref{remark:DP-SignRP-RR}, Algorithm~\ref{alg:DP-signRP-RR} can also achieve $\epsilon$-DP if we set the flipping probability to be $\frac{1}{e^{\epsilon/k}+1}$. Note that this equals the flipping probability in Algorithm~\ref{alg:DP-signRP-RR-smooth} when $L_j=1$, which is the ``worst'' global bound on the flipping probability. In many cases, a good proportion of $L_j$ $j=1,...,k$ would be greater than 1. Therefore, when both are $\epsilon$-DP, DP-SignRP-RR-smooth requires a much smaller flipping probability than DP-SignRP-RR.

\vspace{0.1in}
\subsection{DP-SignRP with Rademacher Projections}  \label{sec:rademacher-SignRP}
\vspace{0.1in}

At this point, it should be clear that the flipping probability of DP-SignRP (both DP-SignRP-RR and DP-SignRP-RR-smooth) essentially depends on how concentrated the projected data is around zero. Particularly, $N_+$ in Algorithm~\ref{alg:DP-signRP-RR}, as given in Proposition~\ref{prop:num-of-changes}, is a high probability upper bound on a Binomial random variable with success probability $Pr(\beta\max_{i=1,...,p}|w_i|\geq |w^Tu|)$ with $w\sim N(0,1)$. In Algorithm~\ref{alg:DP-signRP-RR-smooth}, $L_j=\lceil \frac{|w_j^Tu|}{\beta\max_{i=1,...,p} |W_{ij}|} \rceil$. For both quantities, a smaller value leads to a smaller sign flipping probability and thus better utility.

\vspace{0.2in}

\noindent\textbf{$N_+$ in DP-SignRP-RR.} We first consider the $N_+$ in Algorithm~\ref{alg:DP-signRP-RR}, which determines the flipping probability $\frac{1}{e^{\epsilon/N_+}+1}$. Particularly, $N_+$ in Algorithm~\ref{alg:DP-signRP-RR}, as given in Proposition~\ref{prop:num-of-changes}, is a high probability upper bound on a Binomial random variable with success probability \begin{align}\label{eqn:prob_flip}
P_+ = Pr\left(\beta\max_{i=1,...,p}|w_i|\geq |w^Tu|\right),
\end{align}
where $w$ is the $p$-dimensional projection vector. When $w_i$ is sampled from the Rademacher distribution, i.e., $w_i\in\{-1, +1\}$ with equal probabilities, the probability calculation can be simplified:
\begin{align}\label{eqn:prob_flip_b}
&P_{+,b} = Pr\left(\beta\max_{i=1,...,p}|w_i|\geq |\sum_{i=1}^p w_iu_i|\right)
=Pr\left(\beta\geq |\sum_{i=1}^p w_iu_i|\right)\approx 2\Phi\left(\frac{\beta}{\|u\|}\right)-1.
\end{align}
Based on the central limit theorem,  the normal approximation~\eqref{eqn:prob_flip_b} is  accurate unless $p$ is very small. Recall that, when $w_i$'s are sampled from the Gaussian distribution, we have already calculated an upper bound in~\eqref{eqn:E[I]}, which is re-written as below:
\begin{align}\label{eqn:prob_flip_g}
P_{+,g} = Pr\left(\beta\max_{i=1,...,p}|w_i|\geq |\sum_{i=1}^p w_iu_i|\right)\leq \int_0^\infty 2p[2\Phi(t)-1]^{p-1}[2\Phi(\beta t/\|u\|)-1]\phi(t) dt.
\end{align}

Next, we provide a simulation study to justify the approximation and compare different distributions in terms of their impact on the probability~\eqref{eqn:prob_flip}, for $\beta=1$ as well as $\beta=0.1$. For simplicity, we simulate the data as a $p$-dimensional vector of uniform random numbers sampled from $unif[-1,1]$. We experiment with five different choices of $w$: the standard Gaussian, the uniform, the ``very sparse'' distribution~\eqref{eqn:vsrp} with $s=1$, $s=3$, and $s=10$. We vary $p$ from 10 to 1000. For each case, we repeat the simulations $10^7$ times to ensure sufficient accuracy. Figure~\ref{fig:prob_flip_theory} verifies that the two approximations \eqref{eqn:prob_flip_b} and \eqref{eqn:prob_flip_g} are  accurate. In Figure~\ref{fig:prob_flip}, we provide the curves for more types of projection matrices. From both figures, we see that using the Rademacher projection can considerably reduce (\ref{eqn:prob_flip}) compared with Gaussian (and other) projections, leading to a smaller $N_+$ value. This typically implies better utility.

\begin{figure}[t]
    \centering
    \includegraphics[width=3.2in]{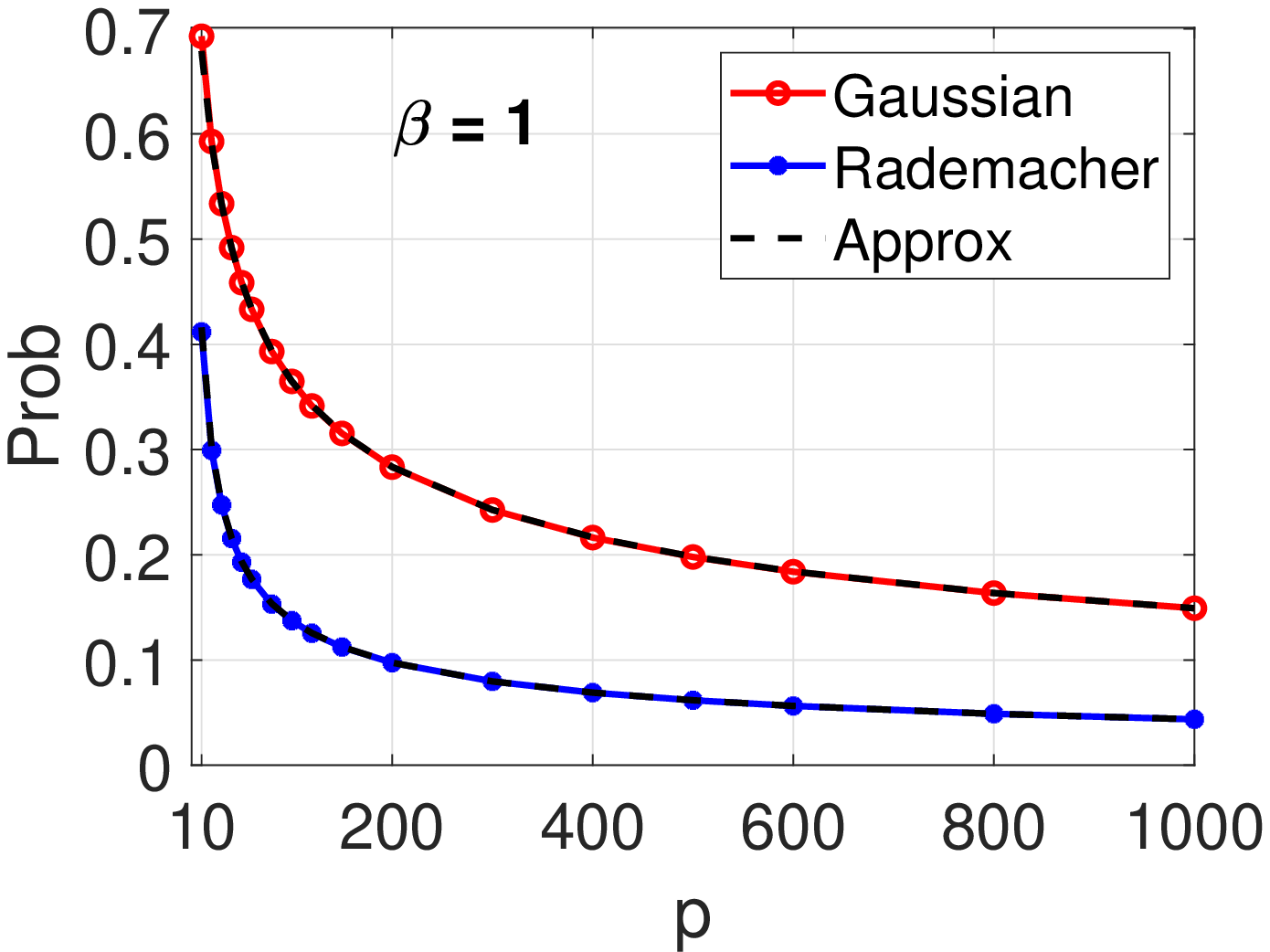}
    \includegraphics[width=3.2in]{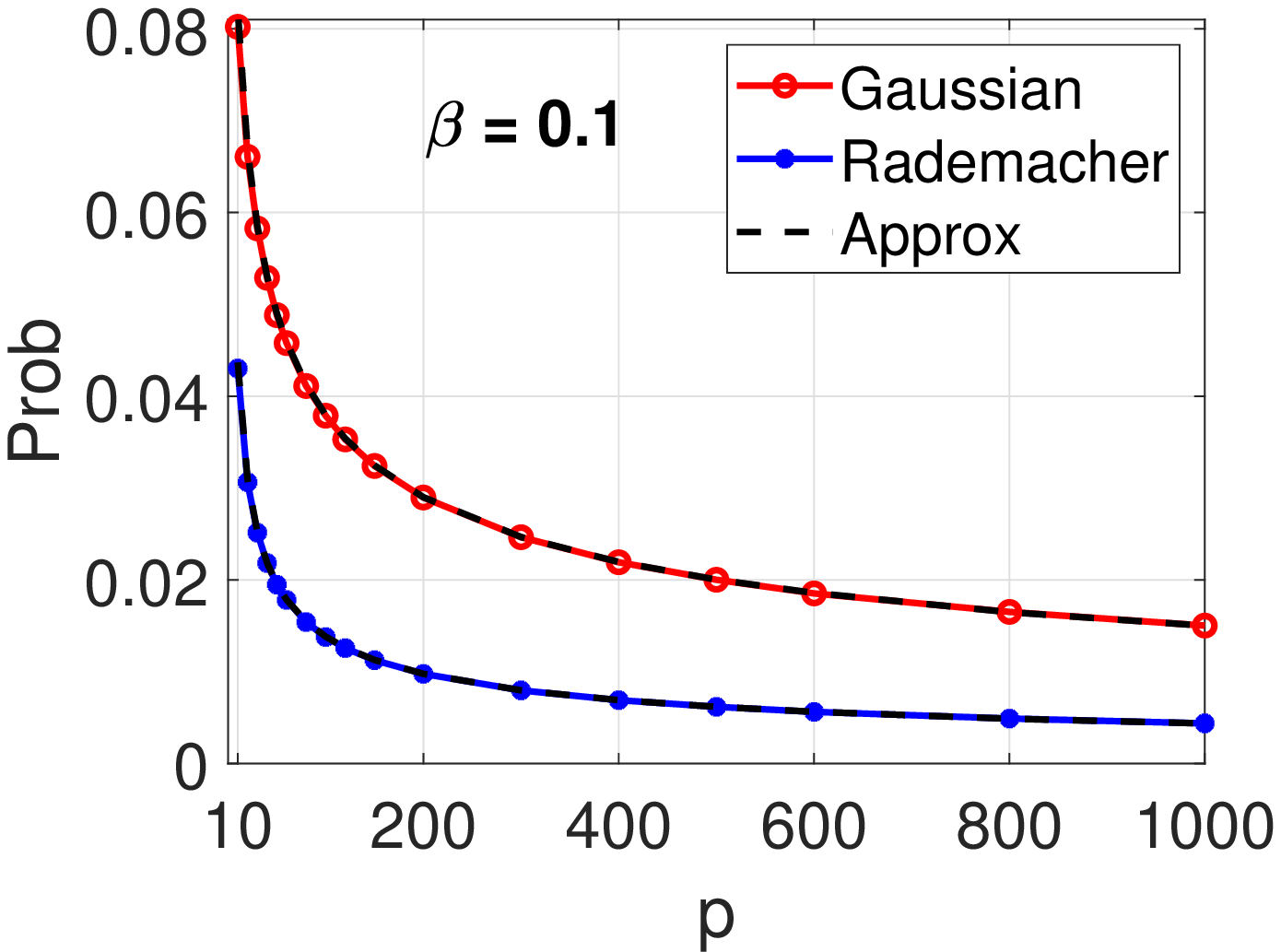}

\vspace{-0.1in}

    \caption{Simulations for evaluating~\eqref{eqn:prob_flip_b} and~\eqref{eqn:prob_flip_g}, using two  choices for $w$: the Gaussian distribution and the  Rademacher distribution (i.e.,~\eqref{eqn:vsrp} with $s=1$). We plot the two upper bounds~\eqref{eqn:prob_flip_b} and~\eqref{eqn:prob_flip_g} as black dashed curves, which essentially both overlap with their corresponding simulations.  }
    \label{fig:prob_flip_theory}
\end{figure}

\begin{figure}[h]
    \centering
    \includegraphics[width=3.2in]{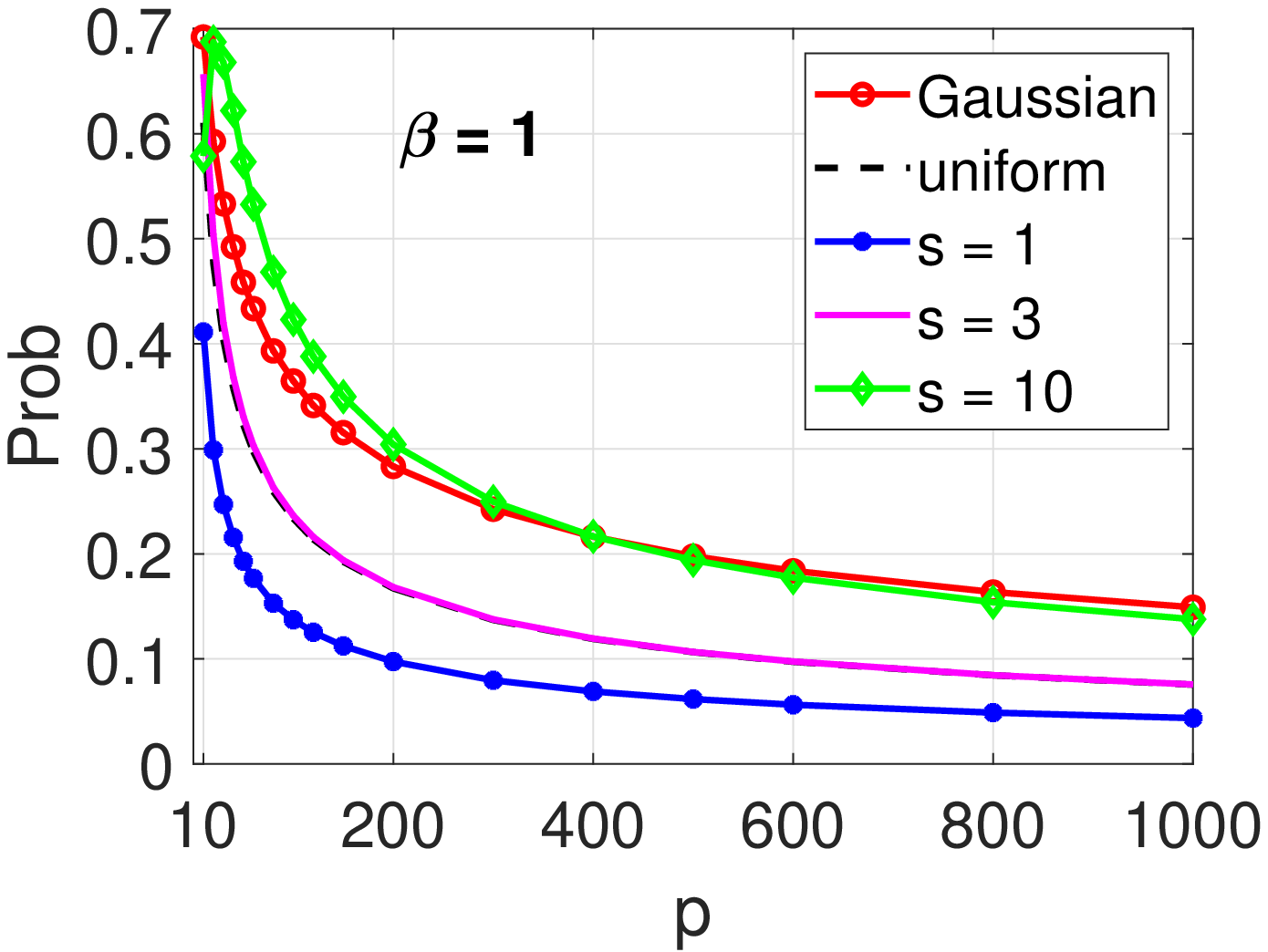}
    \includegraphics[width=3.2in]{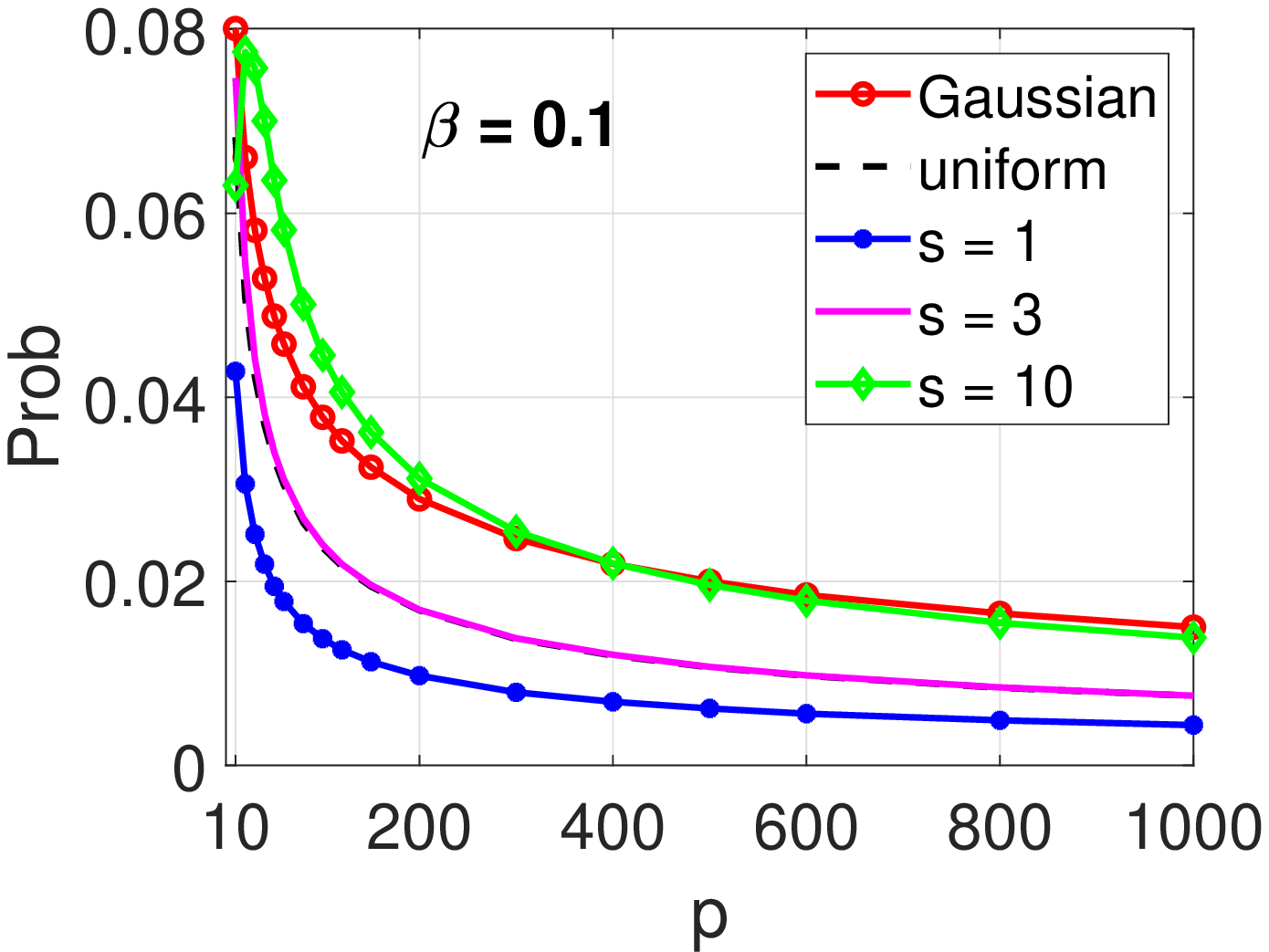}

\vspace{-0.1in}

    \caption{Simulations (same as in Figure~\ref{fig:prob_flip}) for evaluating~\eqref{eqn:prob_flip}, using five different choices for $w$: the Gaussian, the uniform, the ``very sparse'' distribution~\eqref{eqn:vsrp} with $s=1, 3$ and $10$. $s=1$ is the Rademacher distribution. The data vector is simulated by sampling each entry from $unif[-1,1]$. }
    \label{fig:prob_flip}
\end{figure}

\newpage

\noindent\textbf{$L_j$ in DP-SignRP-RR-smooth.} Similarly, we numerically evaluate the $L_j$ in Algorithm~\ref{alg:DP-signRP-RR-smooth}. We run Algorithm~\ref{alg:DP-signRP-RR-smooth} with $k=512$, which gives $512$ $L_j$ values. In Figure~\ref{fig:Lj_proportion}, we plot the proportion (or the approximated distribution) of the values of $L_j$ among $k$ projections. As we see, Rademacher projection produces the least number of small $L_j$ values and the largest number of higher $L_j$ values. As the smooth flipping probability equals $\frac{1}{\exp(\frac{L_j}{k}\epsilon)+1}$, larger $L_j$ leads to smaller probability of sign flipping. Hence, Rademacher is again the best choice for the projection matrix.

\begin{figure}[h]
    \centering
    \includegraphics[width=3.2in]{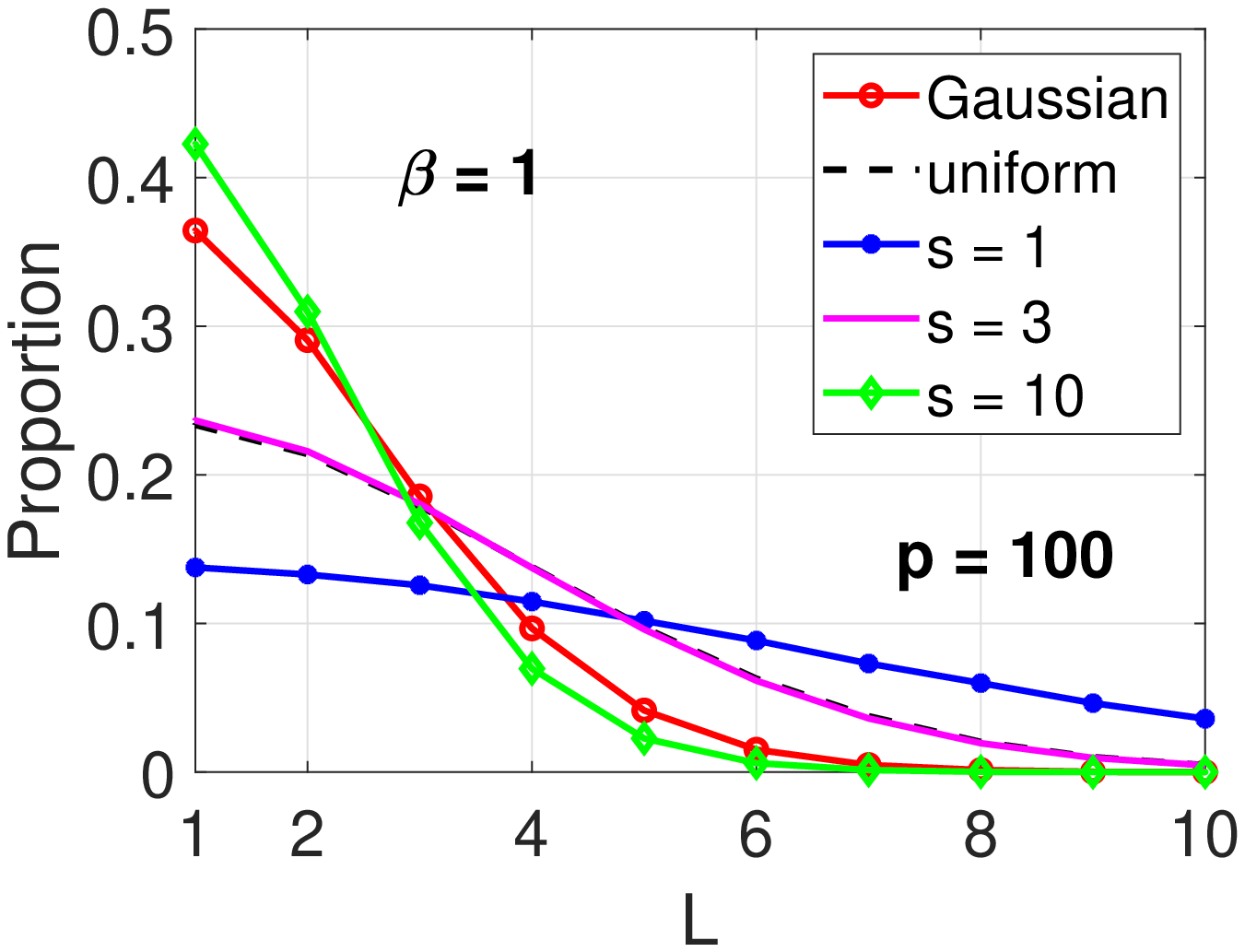}
    \includegraphics[width=3.2in]{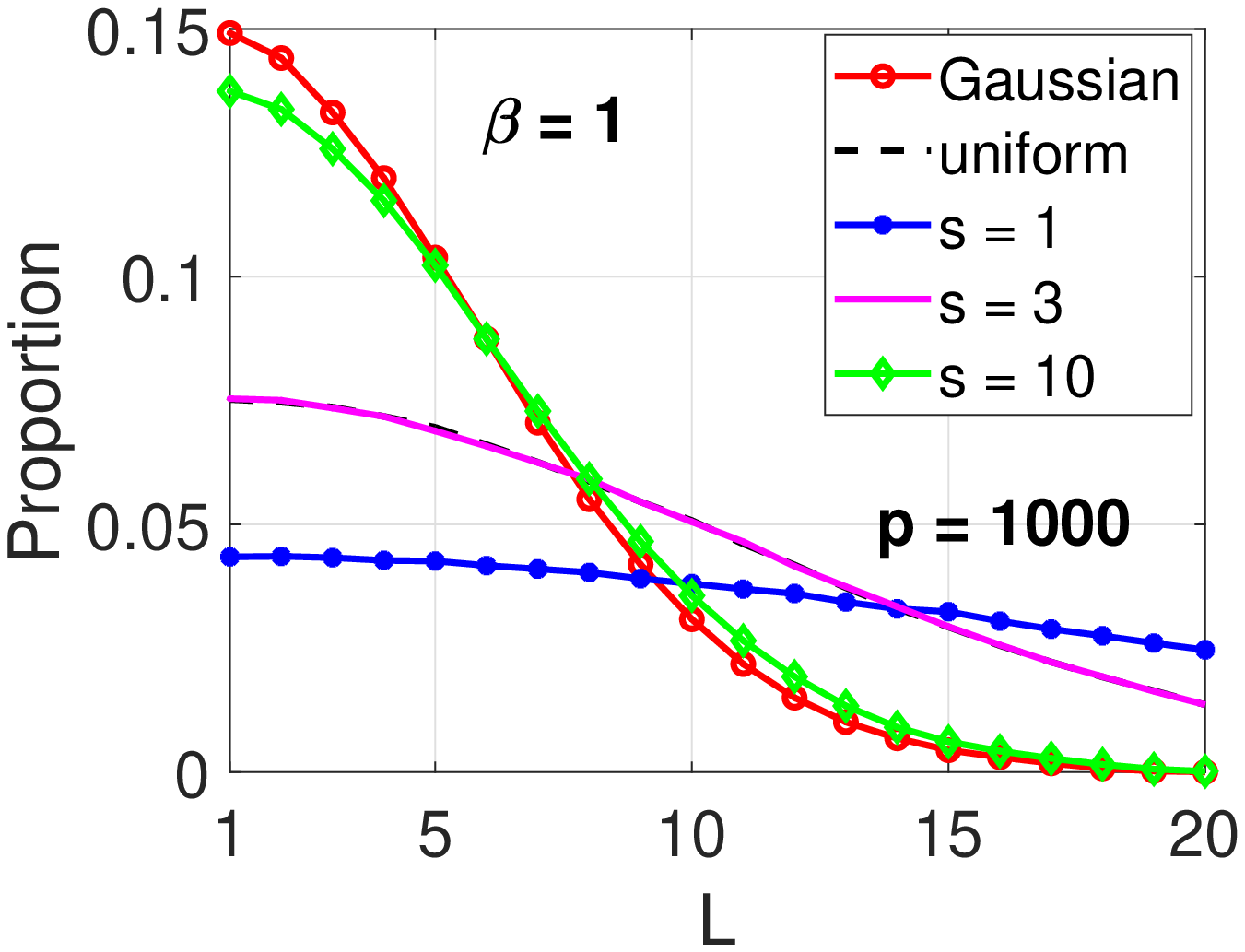}

\vspace{-0.1in}

    \caption{Simulations for evaluating $L_j$ in Algorithm~\ref{alg:DP-signRP-RR-smooth}, using different choices for $w$. $s=1$ is the Rademacher distribution. Left: $p=100$, right: $p=1000$. The $y$-axis is the proportion (normalized histogram) of the values of all the $L_j$, $j=1,...,k$ computed using $k=512$ projected samples.}
    \label{fig:Lj_proportion}
\end{figure}

% \begin{figure}
% \mbox{
% \includegraphics[width=3.2in]{}
% \includegraphics[width=3.2in]{fig/OptSigRatio1000p.eps}
% }

% \mbox{
% \includegraphics[width=3.2in]{fig/OptSigRatio10000p.eps}
% \includegraphics[width=3.2in]{}
% }
%     \caption{Caption}
%     \label{fig:my_label}
% \end{figure}

\vspace{-0.1in}

\subsection{DP-SignOPORP with Smooth Flipping Probability}  \label{sec:DP-signOPORP}

Similarly, we may also take the sign of OPORP studied before, and make it differentially private. The smoothed flipping probability can also be adopted in this case. As before, we present the variant with Rademacher projection for conciseness. Algorithm~\ref{alg:DP-signOPORP-RR} provides the general framework, covering two variants. The first one, DP-SignOPORP-RR, is based on the standard randomized response (RR) technique. The second variant, DP-SignOPORP-RR-smooth, is an improved version with our proposed ``smooth flipping probability''.

\begin{algorithm}[h]
{
    \vspace{0.05in}
    \textbf{Input:} Data $u\in[-1,1]^p$; $\epsilon>0$; Number of projections~$k$

    \vspace{0.05in}

    \textbf{Output:}   Differentially private sign OPORP

    \vspace{0.05in}

    Apply Algorithm~\ref{alg:OPORP} with a random Rademacher projection vector to get the OPORP $x$

    \vspace{0.05in}

    \nonl\textbf{DP-SignOPORP-RR:}

    Compute $\tilde s_j=\begin{cases}
    sign(x_j), & \text{with prob.}\ \frac{e^{\epsilon}}{e^{\epsilon}+1}\\
    -sign(x_j), & \text{with prob.}\ \frac{1}{e^{\epsilon}+1}
    \end{cases}$ for $j=1,...,k$

    \nonl\textbf{DP-SignOPORP-RR-smooth:}

    Compute $L_j=\lceil \frac{|x_j|}{\beta} \rceil$ for $j=1,...,k$

    Compute $\tilde s_j=\begin{cases}
    sign(x_j), & \text{with prob.}\ \frac{e^{\epsilon_j'}}{e^{\epsilon_j'}+1}\\
    -sign(x_j), & \text{with prob.}\ \frac{1}{e^{\epsilon_j'}+1}
    \end{cases}$ for $j=1,...,k$, with $\epsilon_j'=L_j\epsilon$

    For $\tilde s_j=0$, assign a random coin in $\{-1,1\}$

    Return $\tilde s$ as the DP-SignOPORP of $u$
    }
    \caption{DP-SignOPORP-RR and DP-SignOPORP-RR-smooth}
    \label{alg:DP-signOPORP-RR}
\end{algorithm}

\vspace{0.1in}
\noindent\textbf{Benefit of binning.} In Algorithm~\ref{alg:DP-signOPORP-RR}, we see that the key difference compared with DP-SignRP methods is that, the $\frac{1}{N_+}$ (in Algorithm~\ref{alg:DP-signRP-RR}) or $\frac{1}{k}$ (in Algorithm~\ref{alg:DP-signRP-RR-smooth}) factor is removed from the flipping probability formulas, which is a significant advantage in terms of privacy. This improvement is a result of the binning step in OPORP. By Definition~\ref{def:neighbor}, two neighboring data $u'$ and $u$ differ in one coordinate. For SignRP, since each projected data is an aggregation of the whole data vector, when we switch from $u$ to $u'$, (in principle) all $k$ projections are possible to change. In contrast, in OPORP, since each data entry appears in only one bin, $u'$ will only cause exactly one projected output to change, leaving all other bins untouched. This provides the intuition on the source of privacy gain of DP-SignOPORP.

\begin{theorem}  \label{theo:DP-signOPORP privacy}
Both variants in Algorithm~\ref{alg:DP-signOPORP-RR} are $\epsilon$-DP.
\end{theorem}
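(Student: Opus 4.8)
The plan is to isolate the one structural feature that distinguishes OPORP from plain SignRP: because Algorithm~\ref{alg:OPORP} partitions the permuted coordinates into \emph{disjoint} fixed-length bins, each coordinate of $u$ feeds into exactly one projected value $x_j$. Consequently a $\beta$-adjacent change (Definition~\ref{def:neighbor}), which alters a single coordinate $i$, can move only the single $x_{j_0}$ whose bin contains the permuted position of $i$; every other $x_j$ is left \emph{literally} unchanged. This localization is the whole point — it is why the full budget $\epsilon$ may be charged to one bin and why the $1/N_+$ (Algorithm~\ref{alg:DP-signRP-RR}) and $1/k$ (Algorithm~\ref{alg:DP-signRP-RR-smooth}) penalties disappear.

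First I would fix neighboring $u, u'$ differing in coordinate $i$, let $j_0$ index the affected bin, and observe that for every $j \ne j_0$ we have $x_j = x_j'$ exactly, hence $s_j = s_j'$ (and, in the smooth variant, $L_j = L_j'$). Therefore the conditional law of $\tilde s_j$ equals that of $\tilde s_j'$ for all $j \ne j_0$, and since the $k$ coordinates are perturbed with independent coins the privacy-loss ratio collapses to a single factor:
\begin{align*}
\frac{Pr(\tilde s = y)}{Pr(\tilde s' = y)} = \prod_{j=1}^k \frac{Pr(\tilde s_j = y_j)}{Pr(\tilde s_j' = y_j)} = \frac{Pr(\tilde s_{j_0} = y_{j_0})}{Pr(\tilde s_{j_0}' = y_{j_0})}.
\end{align*}
It then suffices to bound this one factor by $e^\epsilon$ for every $y_{j_0} \in \{-1,+1\}$.

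For DP-SignOPORP-RR the bound is immediate: the retain/flip probabilities $\frac{e^\epsilon}{e^\epsilon+1}$ and $\frac{1}{e^\epsilon+1}$ make the single-bin ratio at most $\frac{e^\epsilon/(e^\epsilon+1)}{1/(e^\epsilon+1)} = e^\epsilon$, so the mechanism is $\epsilon$-DP. For DP-SignOPORP-RR-smooth I would first use the Rademacher entries ($|w_i| = 1$) to get $|x_{j_0} - x_{j_0}'| = |w_i|\,|u_i - u_i'| \le \beta$, whence by the reverse triangle inequality $\big| |x_{j_0}| - |x_{j_0}'| \big| \le \beta$ and so $|L_{j_0} - L_{j_0}'| \le 1$. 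With this in hand, the single-bin bound is exactly the case analysis of Theorem~\ref{theo:privacy-DPSignRP-RR-smooth} (Case~I: $L_{j_0} \ge 2$ so the sign is fixed; Case~II: $L_{j_0} = 1$ so it may flip), the only change being that the exponent is $\epsilon_j' = L_j \epsilon$ instead of $\frac{L_j}{k}\epsilon$; reusing the numeric identity $1 \le \frac{a+c}{b+c} \le \frac{a}{b}$ for $a \ge b > 0$, $c > 0$ confines the ratio to $[e^{-\epsilon}, e^\epsilon]$. The tie-breaking coin (triggered only when $x_{j_0} = 0$, i.e.\ $L_{j_0} = 0$) outputs a uniform sign, and a direct check against $L_{j_0}' \in \{0,1\}$ keeps its ratio within $[e^{-\epsilon}, e^\epsilon]$ as well. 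Hence the single factor is bounded by $e^\epsilon$ and the mechanism is $\epsilon$-DP.

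The main obstacle is not the per-bin inequality, which is inherited almost verbatim from Theorem~\ref{theo:privacy-DPSignRP-RR-smooth}; it is making the localization airtight. One must argue directly from the disjoint fixed-length binning of Algorithm~\ref{alg:OPORP} that a one-coordinate edit leaves the other $k-1$ projected values \emph{exactly} equal (so that $k-1$ of the factors above are precisely $1$), rather than merely close or equal in distribution. Once that is secured, no composition over bins is required, and the entire privacy cost lives in the single bin $j_0$ — which is exactly the source of the privacy advantage claimed for DP-SignOPORP.
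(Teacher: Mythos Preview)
Your proposal is correct and follows essentially the same approach as the paper: isolate the single affected bin $j_0$ via the disjoint fixed-length binning of OPORP, collapse the privacy-loss ratio to one factor, handle DP-SignOPORP-RR by the standard randomized-response bound, and handle DP-SignOPORP-RR-smooth by reusing the case analysis of Theorem~\ref{theo:privacy-DPSignRP-RR-smooth} with exponent $L_j\epsilon$ in place of $\frac{L_j}{k}\epsilon$, plus a separate check for the random-coin tie-break. You are in fact a bit more explicit than the paper in two places: you spell out why $|x_{j_0}-x_{j_0}'|\le\beta$ (from $|w_i|=1$ for Rademacher) and why this forces $|L_{j_0}-L_{j_0}'|\le 1$, and you treat the coin case as $L_{j_0}=0$ checked against $L_{j_0}'\in\{0,1\}$ (which, by symmetry, also covers $L_{j_0}=1$, $L_{j_0}'=0$).
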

\begin{proof}
The proof follows the idea of the proofs of DP-SignRP methods, but we need to additionally consider the empty bins. Denote $x$ and $x'$ as the OPORP from Algorithm~\ref{alg:OPORP} using a same random vector $w$, and let $s=sign(x)$, $s'=sign(x')$. Suppose $u$ and $u'$ differ in dimension $i$, and dimension $i$ is assigned to the $j^*$-th bin with $j^*=\lceil \pi(i)/(p/k)\rceil$, where $\pi: [p]\mapsto [p]$ is the permutation in OPORP. For any $y\in\{-1,1\}^k$, it is easy to see that to compute $\log\frac{Pr(\tilde s=y)}{Pr(\tilde s'=y)}$, it suffices to look at the $j^*$-th output sample because other probabilities cancel out. For the $j^*$-th sample, when $s_{j^*}\neq 0$ and $s_{j^*}'\neq 0$, by the same arguments as in the proof of Theorem~\ref{theo:privacy-DPSignRP-RR-smooth}, we know that $e^{-\epsilon}\leq \frac{Pr(s_{j^*})=a}{Pr(s_{j^*}')=a}\leq e^\epsilon$, $a\in\{-1,1\}$, for both variants (DP-SignOPORP-RR and DP-SignOPORP-RR-smooth). When one of $s_{j^*}$ and $s_{j^*}'$ equals $0$, it is also easy to see that (assume $s_{j^*}=1$ and $s_{j^*}'$ is a random coin)
\begin{align*}
    1\leq \frac{Pr(s_{j^*})=1}{Pr(s_{j^*}')=1}=\frac{2e^\epsilon}{e^\epsilon+1}\leq e^\epsilon,\quad e^{-\epsilon}\leq \frac{Pr(s_{j^*})=-1}{Pr(s_{j^*}')=-1}=\frac{2}{e^\epsilon+1}\leq 1.
\end{align*}
This holds for both variants. In particular, this case corresponds to $L_{j^*}=1$ for the smooth flipping probability. This proves the theorem.
\end{proof}

\vspace{0.1in}
\noindent\textbf{Multiple repetitions.} In Algorithm~\ref{alg:DP-signOPORP-RR} Line 7, we see that if OPORP generates an empty bin (i.e., $x_j=0$), we must assign a random sign to maintain DP. This would undermine the utility since a random coin does not provide any useful information. Therefore, it is desirable to avoid empty bins for better utility. One simple way is to repeat the DP-SignOPORP (with smaller $k$) for $t>1$ times, and concatenate the output vectors. Repetition is a standard strategy for count-sketch~\citep{charikar2004finding}. For example, if the target $k=256$, we may run Algorithm~\ref{alg:DP-signOPORP-RR} for $t=4$ times, each time with $256/4=64$ projected values and privacy budget $\epsilon/4$. Since we are using fewer bins per run, the number of empty bins will be reduced. After concatenating the 4 output vectors, we still get a 256-dimensional bit vector with $\epsilon$-DP by the composition theorem. In the experiments, we will see that this strategy may improve the overall performance of DP-SignOPORP.

\section{Experiments}  \label{sec:experiment}

We present a set of experiments on retrieval and classification tasks to demonstrate the performance of different DP-RP and DP-SignRP methods. Specifically, we will compare the following algorithms:

\begin{itemize}
    \item Raw-data-G-OPT: the method of directly adding optimal Gaussian noise with sensitivity $\beta$ to the original data vectors.

    \item DP-RP-G (Algorithm~\ref{alg:DP-RP-gaussian-noise} + Theorem~\ref{theo:kenthapadi}): Gaussian DP-RP with the Gaussian mechanism.

    \item DP-RP-G-OPT (Algorithm~\ref{alg:DP-RP-gaussian-noise} + Theorem~\ref{theo:gauss-optimal}): DP-RP with Gaussian random projection matrix and the optimal Gaussian noise.

    \item DP-RP-G-OPT-B (Algorithm~\ref{alg:DP-RP-G-OPT-B}): DP-RP with Rademacher random projection matrix and the optimal Gaussian noise mechanism.

    \item DP-OPORP (Algorithm~\ref{alg:DP-OPORP}): OPORP algorithm with optimal Gaussian noise.

    \item DP-SignOPORP-RR and DP-SignOPORP-RR-smooth (Algorithm~\ref{alg:DP-signOPORP-RR}): Signed OPORP with standard randomized response and  smooth flipping probability, respectively.
\end{itemize}

As noted in Section~\ref{sec:DP-signOPORP}, due to the feature binning procedure, DP-SignOPORP is substantially better than DP-SignRP in terms of privacy protection. Therefore, we present the better-performing DP-SignOPORP method in our experiments. For conciseness, we will only present results for $\beta=1$ and $\epsilon\in [0.1,20]$. The chosen $\epsilon$ values cover the common range of $\epsilon$ in many DP applications (e.g., \citet{haeberlen2014differential,kenny2021use}). Also, we fix $\delta=10^{-6}$ for approximate $(\epsilon,\delta)$-DP algorithms.

\subsection{Similarity Search}

We first test the methods in similarity search problems, which is an important application of RP and SignRP in industrial applications. In this experiment, we use two standard image retrieval datasets, MNIST~\citep{lecun1998gradient} and CIFAR~\citep{krizhevsky2009learning}. The MNIST dataset contains 60000 28$\times$28 handwritten digits as the training set, and 10000 digits for testing. The CIFAR dataset includes 60000 natural images with size 32$\times$32 (gray-scale) in total, with a 50000/10000 train-test split. On these two datasets, the pixel values are between 0 and 1 by nature. For both datasets, we treat the training set as the database, and use the samples from the test set as the queries. We set the true neighbors for each query as the top-50 sample vectors with the highest cosine similarity to the query. To search with DP-RP (and DP-OPORP), we estimate the cosine between a query and a sample point by the cosine between their corresponding output noisy projected values. For DP-SignRP (and DP-SignOPORP), we compute the Hamming distances between the private bit vectors between the query and the database points. For both approaches, we return the samples with highest cosine or smallest Hamming distance to the query. The evaluation metrics are precision and recall, defined as
\begin{align*}
    &\text{precision}@R=\frac{\text{\# of true positives in}\ R \ \text{retrieved points}}{R}, \\
    &\text{recall}@R=\frac{\text{\# of true positives in}\ R \ \text{retrieved points}}{\text{\# of gold-standard neighbors}},
\end{align*}
and in our setting, the number of gold-standard neighbors per query is 50.  Our presented results are also averaged over 10 independent repetitions.

\begin{figure}[t]
\centering

    \mbox{
    \includegraphics[width=2.2in]{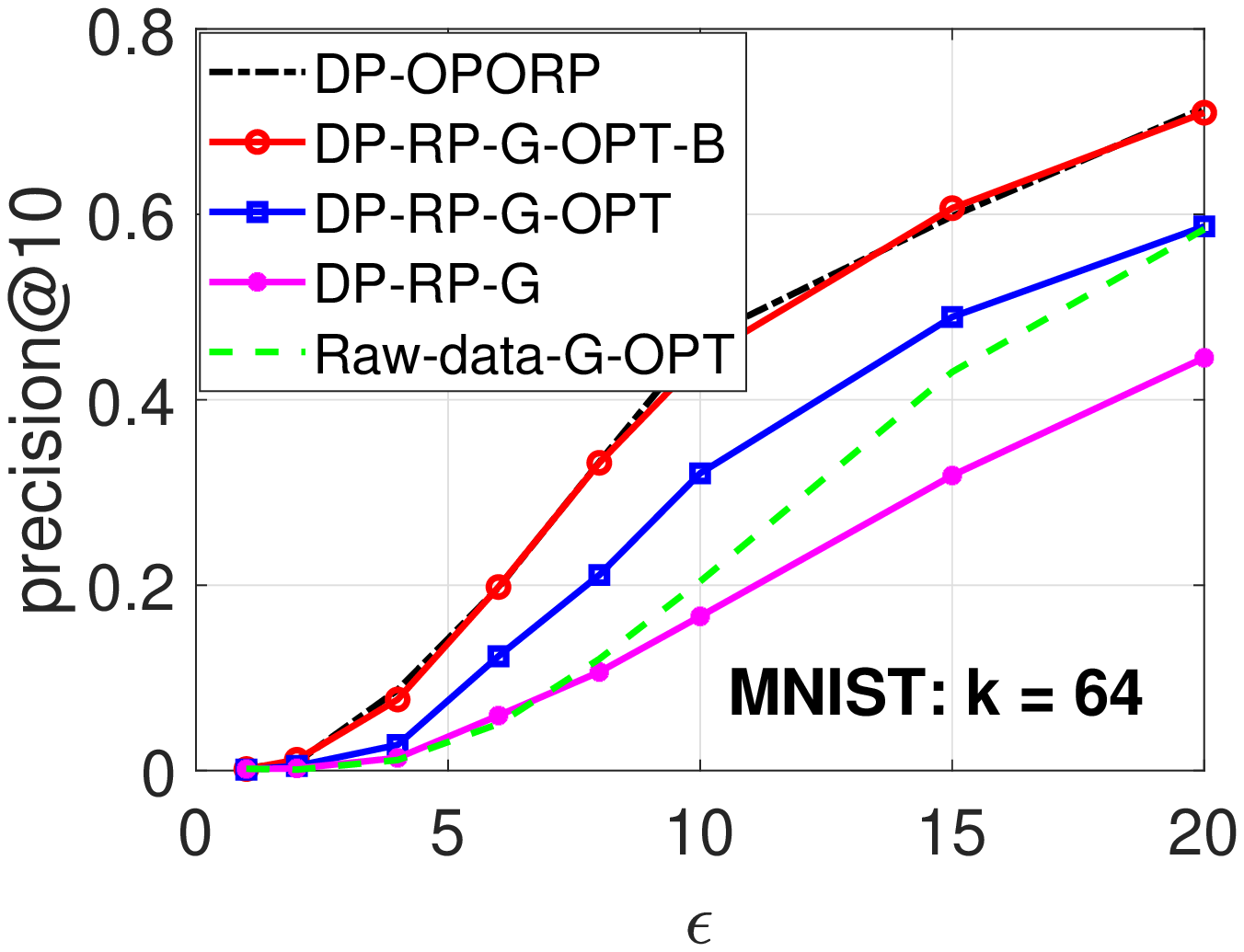} \hspace{-0.15in}
    \includegraphics[width=2.2in]{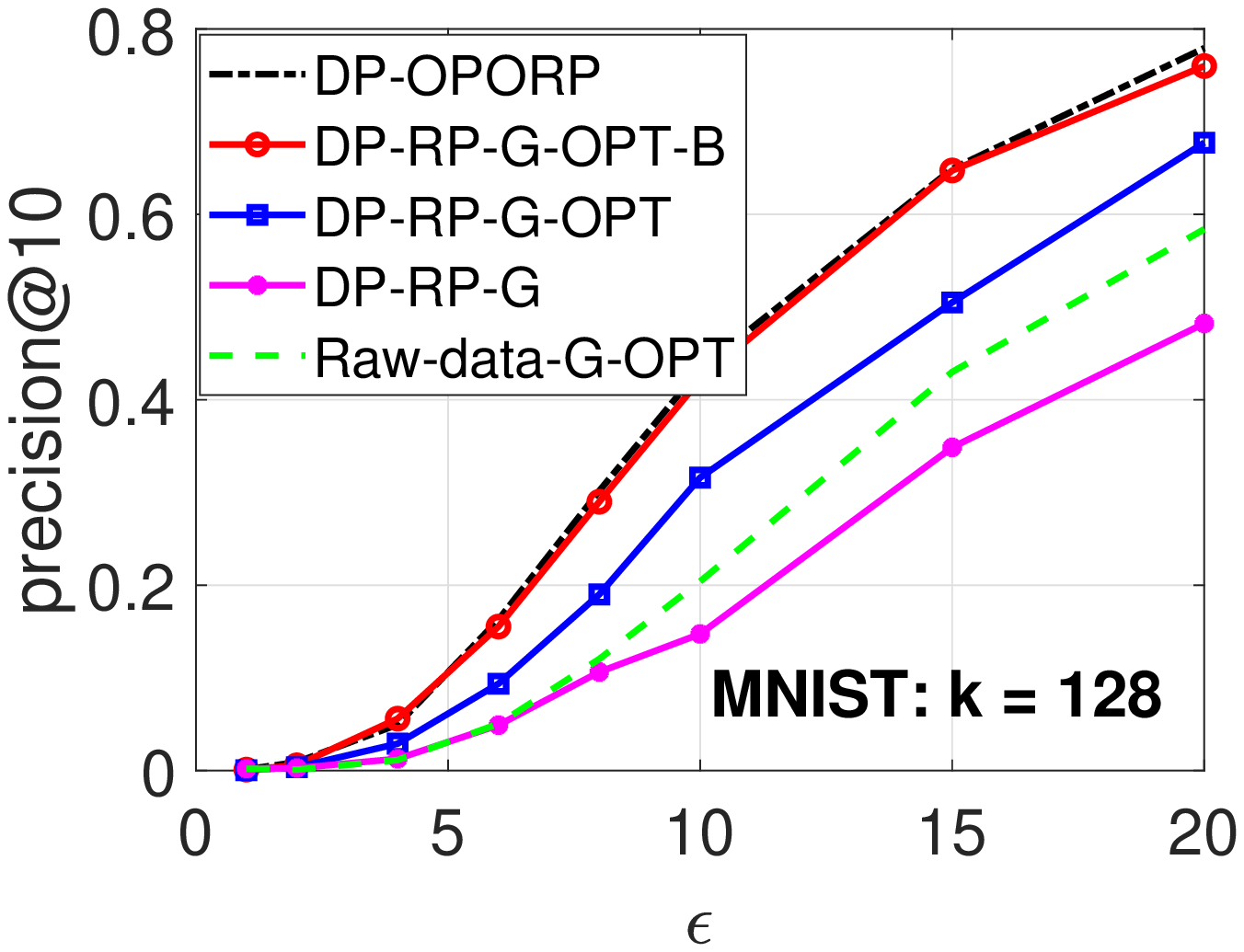}
    \includegraphics[width=2.2in]{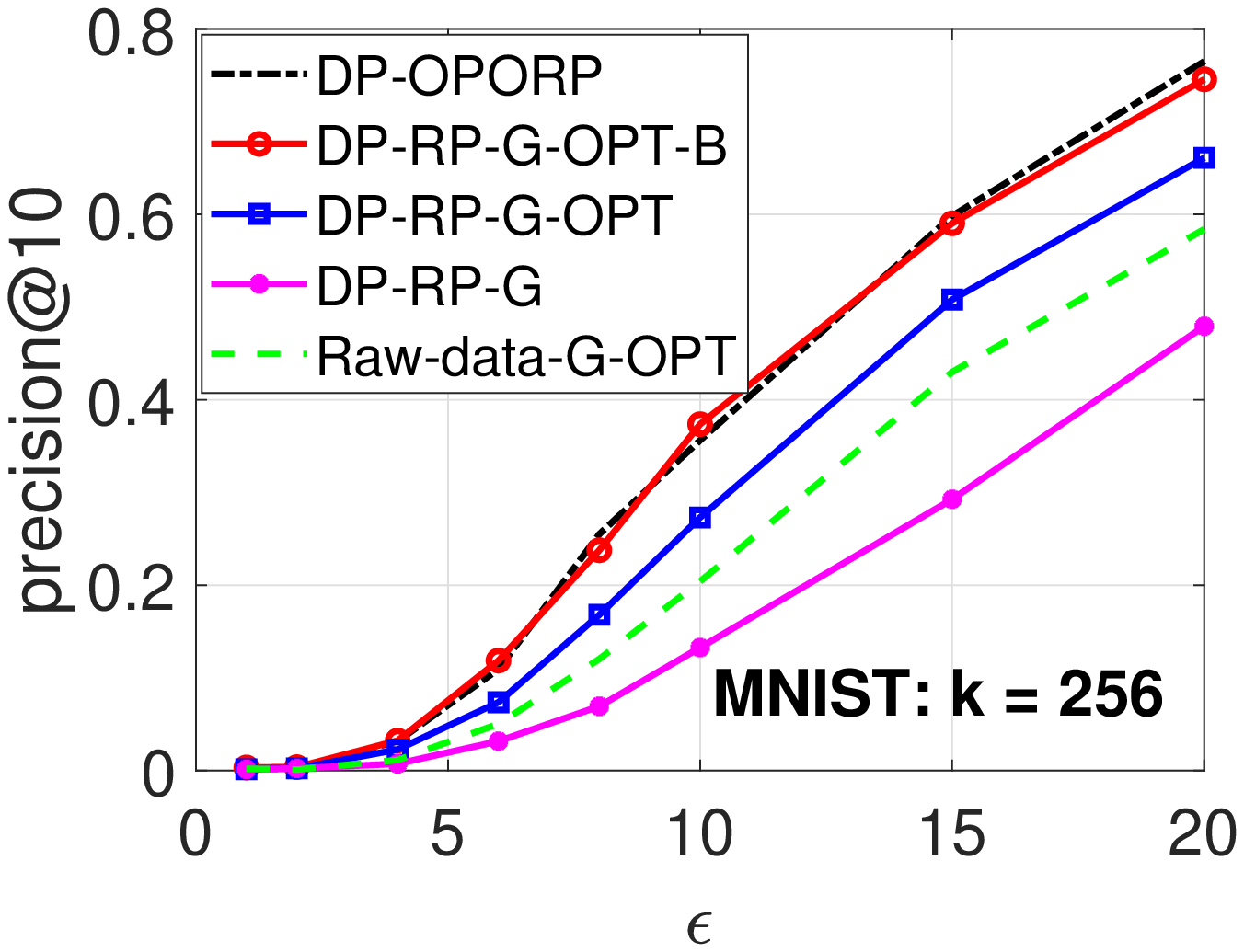}
    }

    \mbox{
    \includegraphics[width=2.2in]{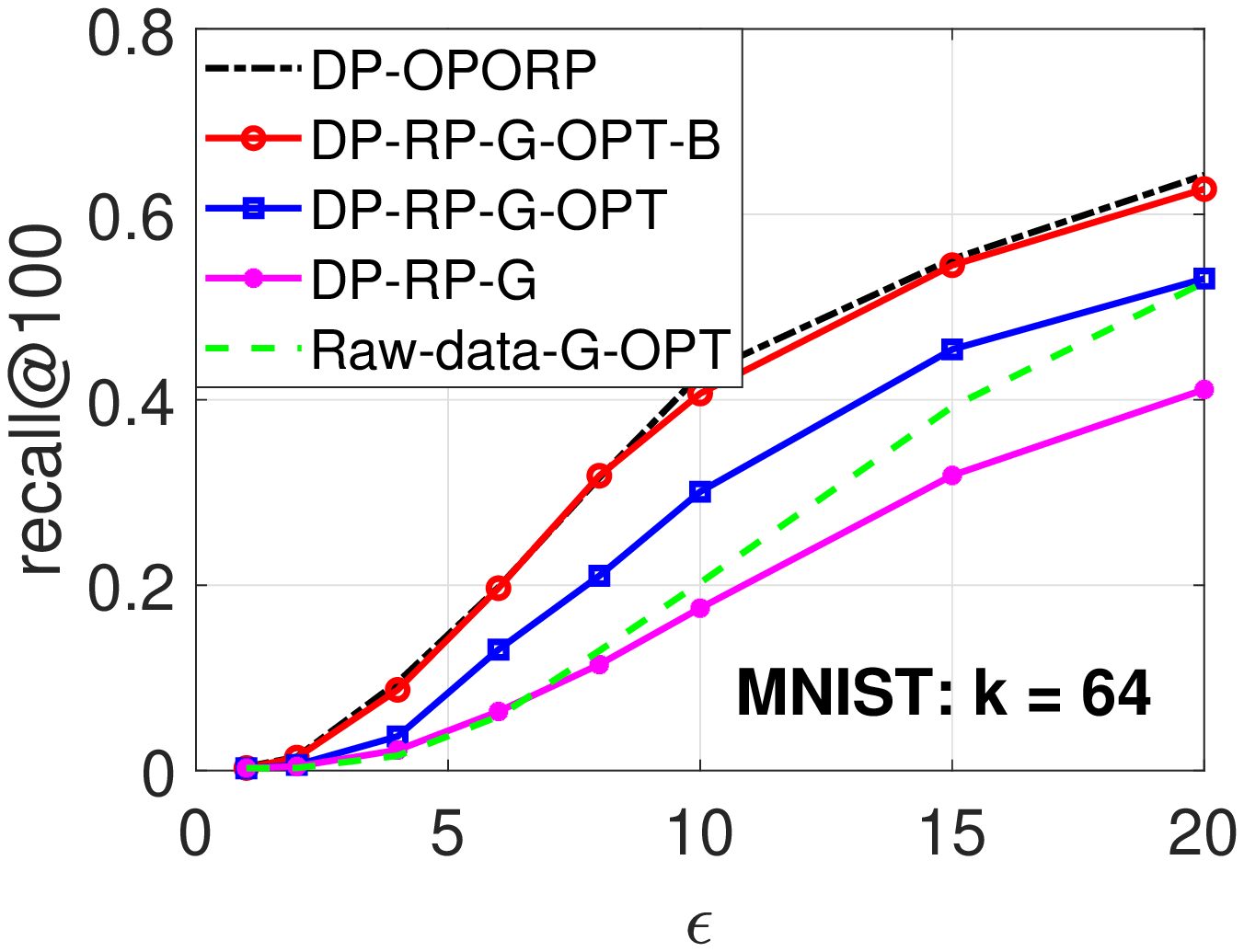} \hspace{-0.15in}
    \includegraphics[width=2.2in]{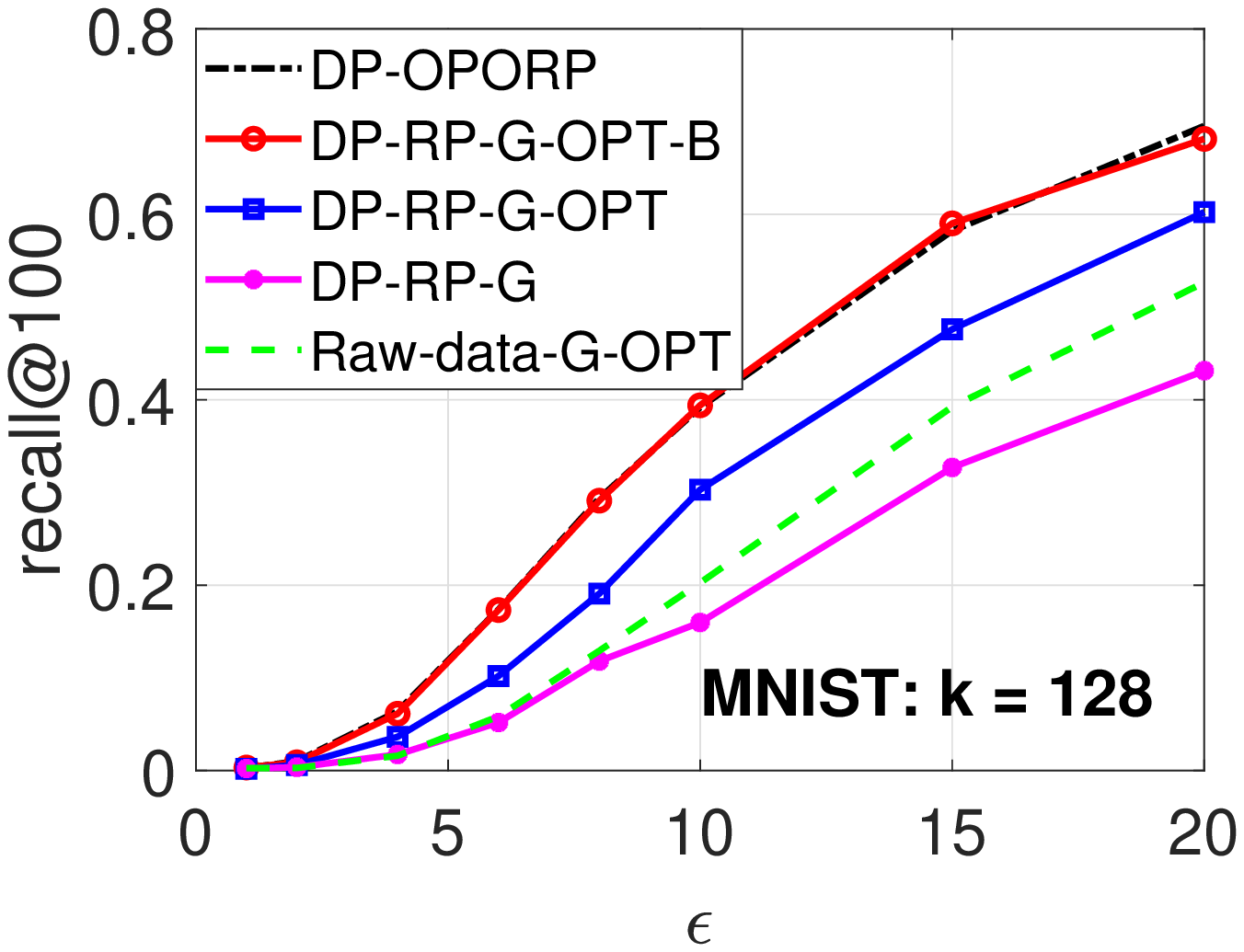}
    \includegraphics[width=2.2in]{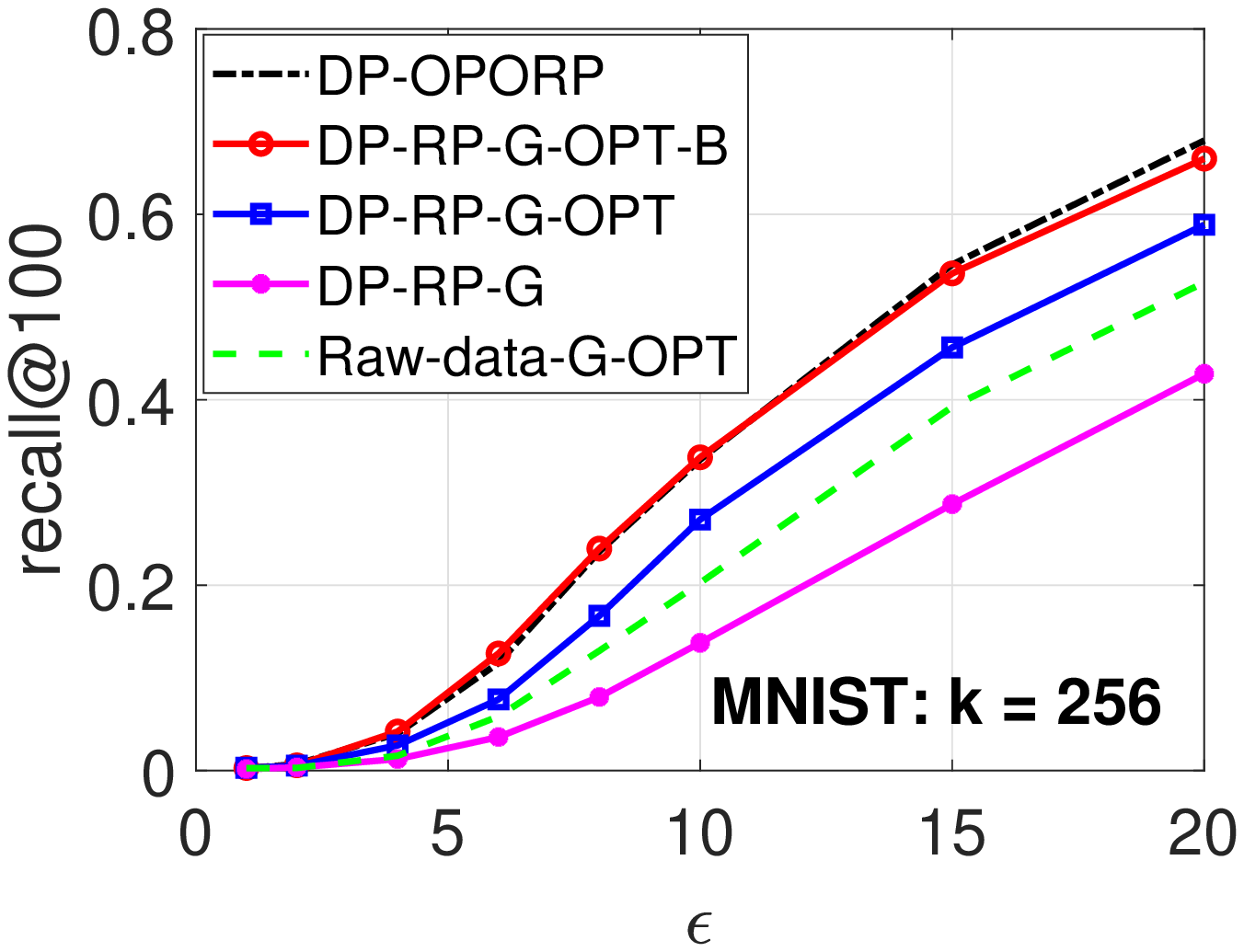}
    }

\vspace{-0.15in}

\caption{Retrieval recall and precision on MNIST, $\beta=1$, $\delta=10^{-6}$.}
\label{fig:MNIST_vs_eps_DP-RP}
\end{figure}

\begin{figure}[h!]
\centering

    \mbox{
    \includegraphics[width=2.2in]{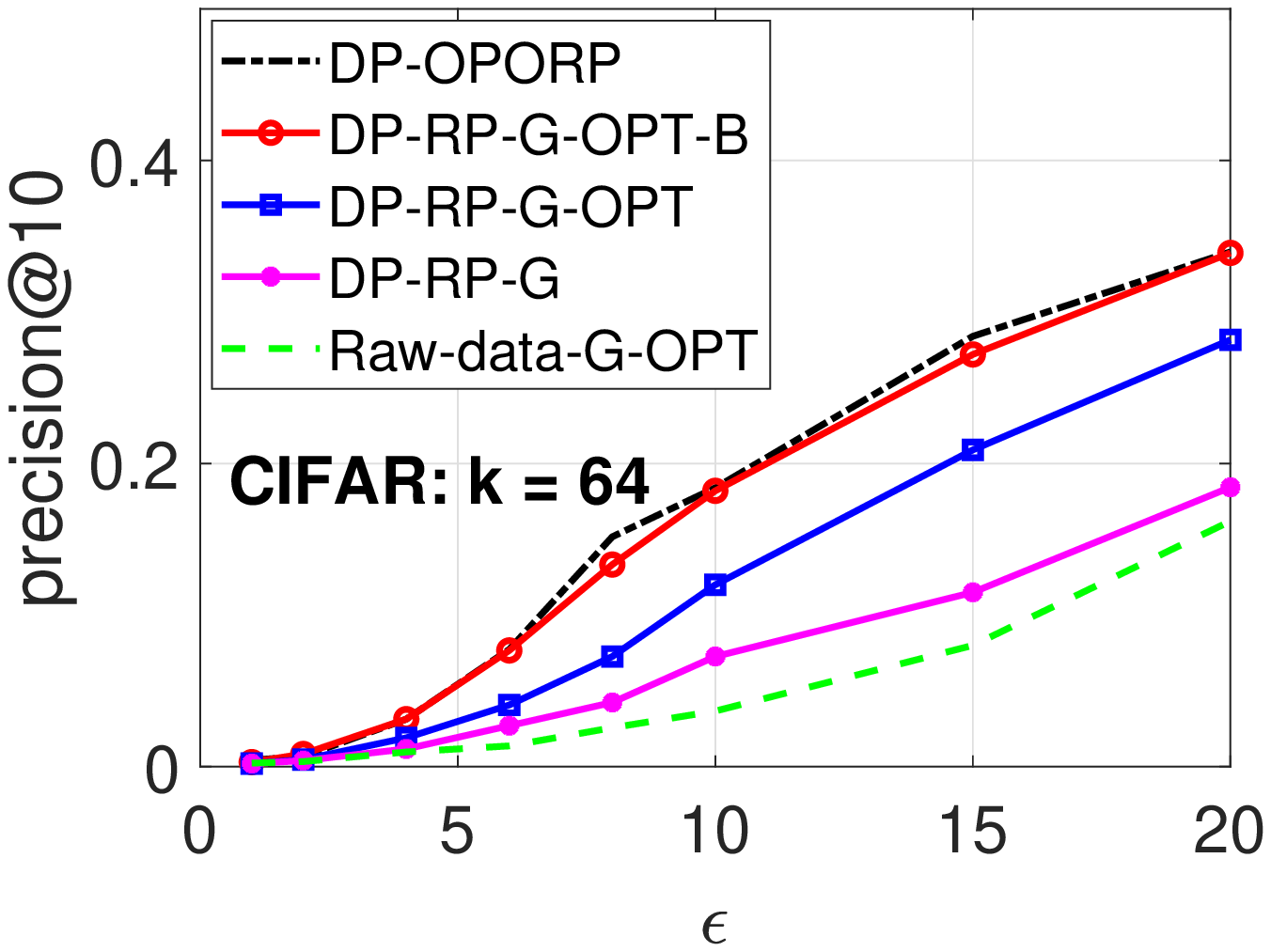} \hspace{-0.15in}
    \includegraphics[width=2.2in]{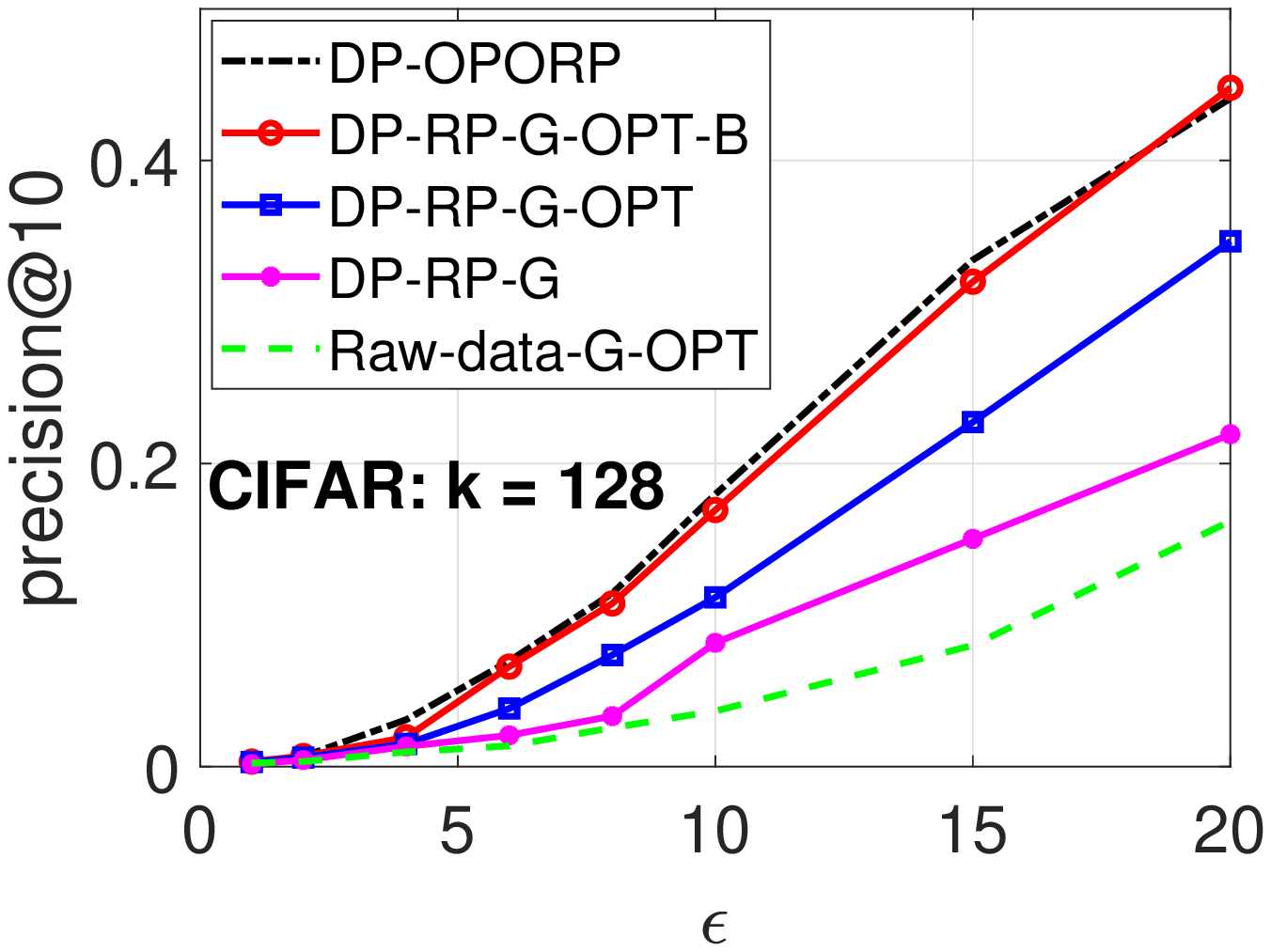}
    \includegraphics[width=2.2in]{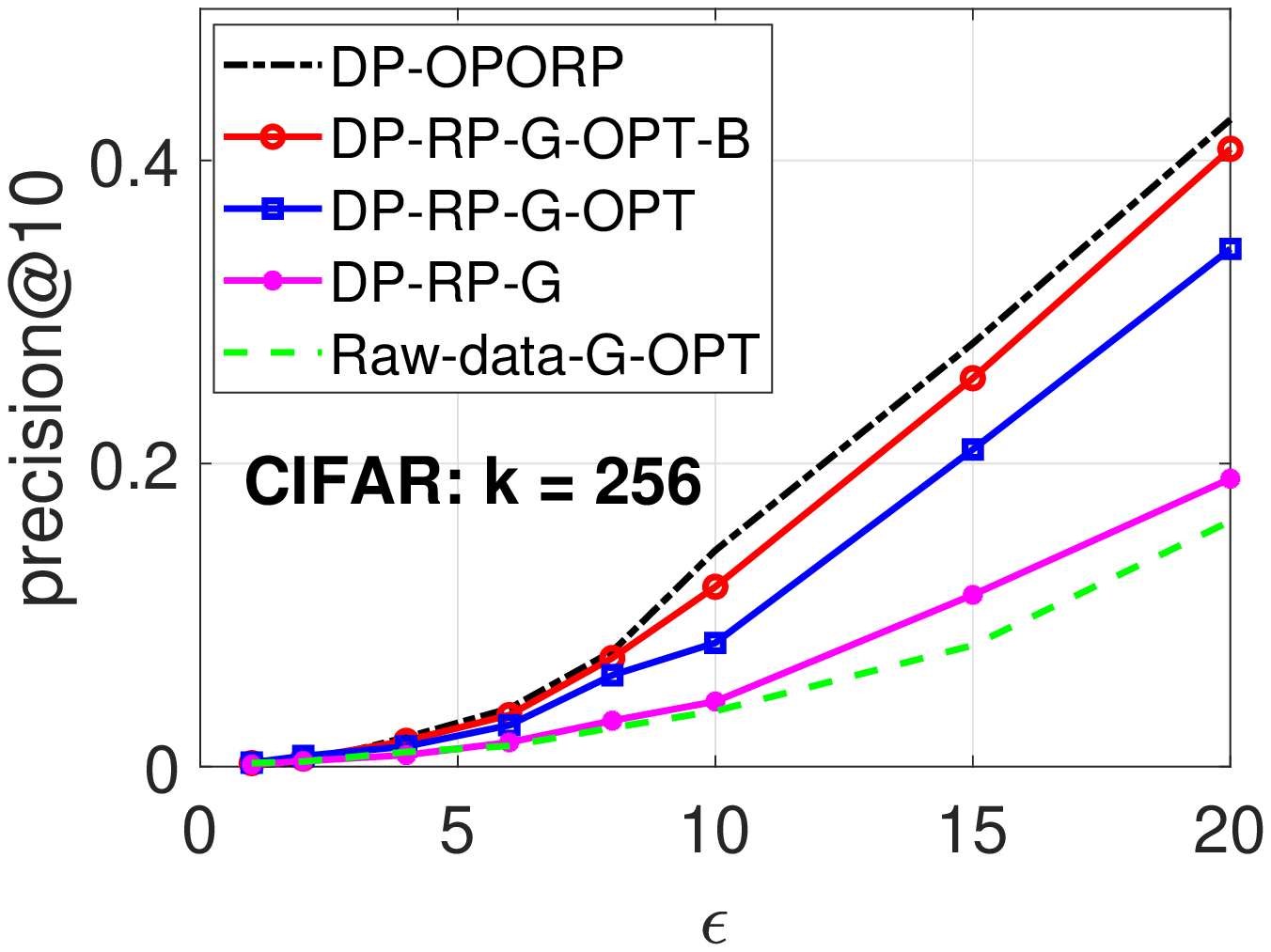}
    }

    \mbox{
    \includegraphics[width=2.2in]{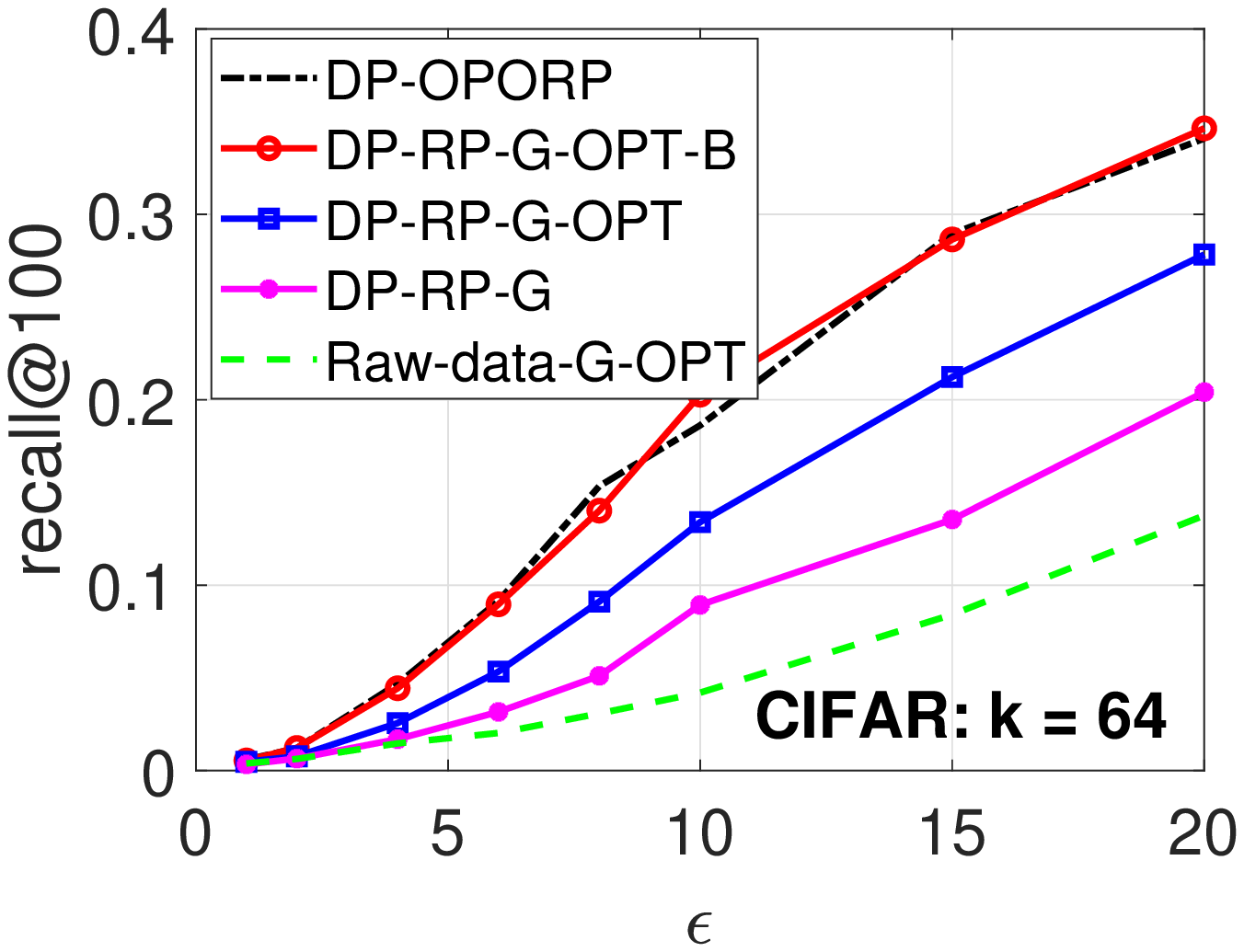} \hspace{-0.15in}
    \includegraphics[width=2.2in]{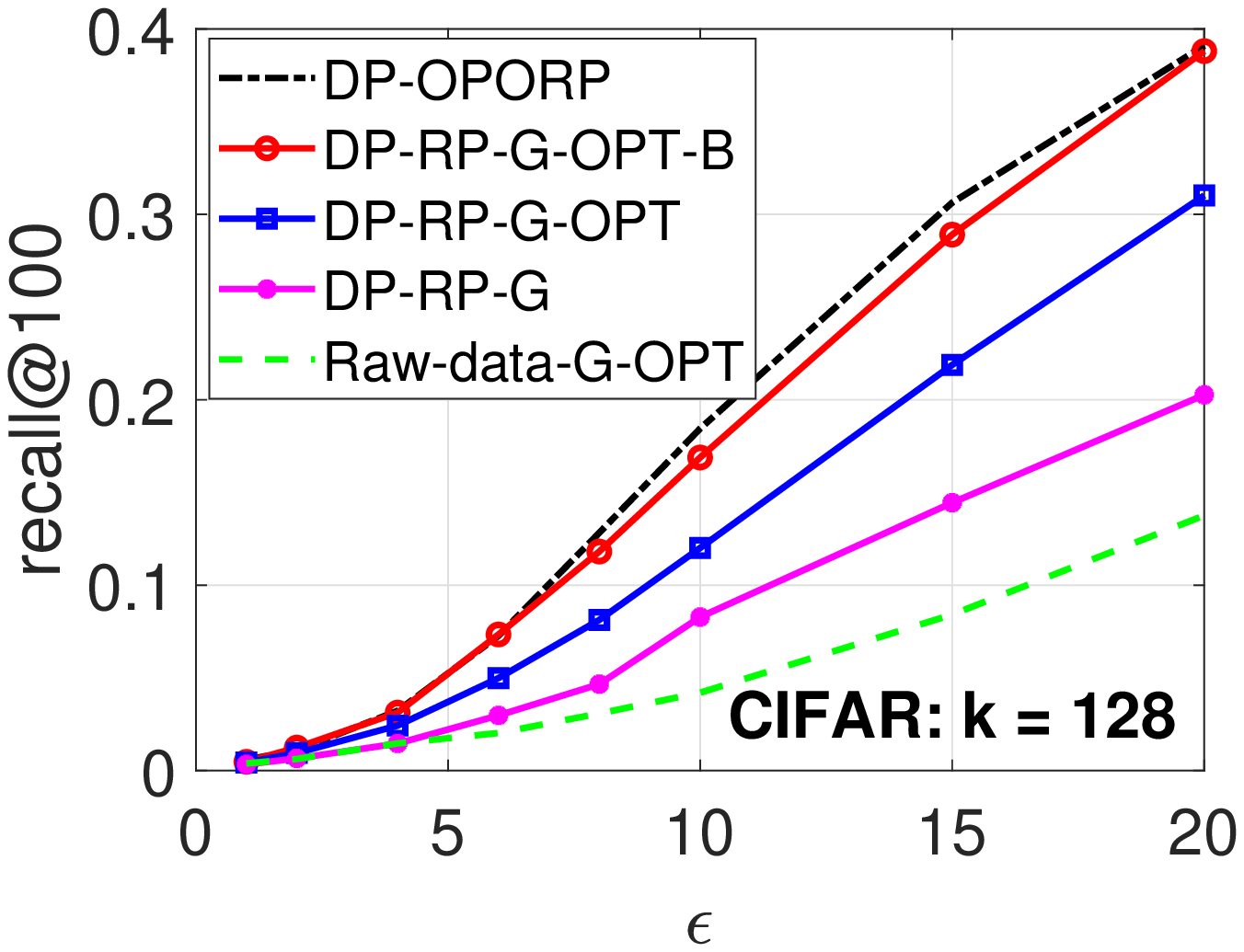}
    \includegraphics[width=2.2in]{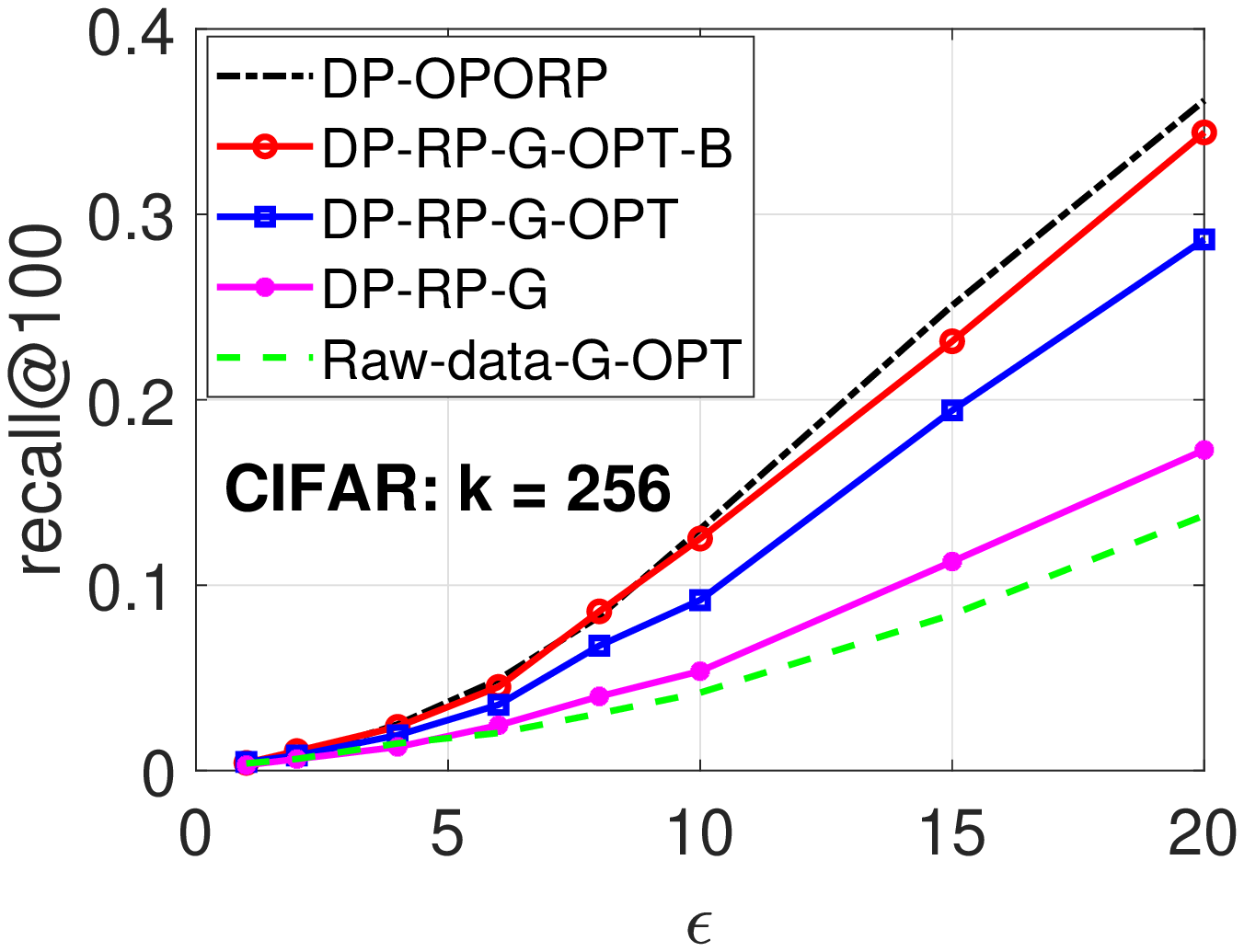}
    }

\vspace{-0.15in}

\caption{Retrieval recall and precision on CIFAR, $\beta=1$, $\delta=10^{-6}$.}
\label{fig:CIFAR_vs_eps_DP-RP}\vspace{-0.05in}
\end{figure}

\vspace{0.1in}

\noindent\textbf{Results.} In Figure~\ref{fig:MNIST_vs_eps_DP-RP} and Figure~\ref{fig:CIFAR_vs_eps_DP-RP}, we plot the precision and recall of DP-RP DP-OPORP algorithms on MNIST and CIFAR, respectively. In each plot, we fix a number of projections $k$ and plot the metrics against $\epsilon$. We observe the following:
\begin{itemize}
    \item As a result of the optimal Gaussian mechanism, DP-RP-G-OPT substantially outperforms the prior method, DP-RP-G, in terms of search precision and recall at all $\epsilon$. DP-RP-G-OPT-B further improves DP-RP-G-OPT considerably, demonstrating that the Rademacher projection matrix should be suggested for DP-RP.

    \item DP-OPORP achieves very similar utility (almost overlapping curves) with the best DP-RP variant, DP-RP-G-OPT-B, in this privacy regime. In practice, DP-OPORP might be more favorable because of its computational efficiency. These two methods provide much higher precision and recall than the Gaussian mechanism applied to the original data. On CIFAR, DP-OPORP performs slightly better than DP-RP-G-OPT-B, which may partially be explained by the extra variance reduction factor $\frac{p-k}{p-1}$ of DP-OPORP as in Theorem~\ref{theo:DP-RP-inner} and Theorem~\ref{theo:DP-OPORP-inner}.
\end{itemize}

\subsection{SVM Classification}

The RP and SignRP methods can be used to train machine learning models. We evaluate the performance of DP-SignRP-RR in classification problems trained by Support Vector Machine (SVM)~\citep{cortes1995support}. We the algorithms on the WEBSPAM dataset from the LIBSVM website~\citep{chang2011libsvm}. We apply the ``max normalization'', i.e., divide each data column by its largest magnitude such that the data entries are bounded in $[0,1]$. As the WEBSPAM dataset was generated from 3-grams (character $n$-grams), the data vectors are very high-dimensional and extremely sparse. Thus, this dataset  reveals a serious disadvantage of adding noise directly to the original data, because adding noise this way generates fully dense data vectors. In order to conduct the experiments, we only used 10K data samples for training and another 10K samples for testing and remove all features that are empty. By this preprocessing, the data dimensionality of the WEBSPAM dataset is reduced from the original 16 million to ``merely'' 250K. On the other hand, RP-type of methods (including OPORP) does not suffer from this problem because with $k$ RP samples we obtain  a dataset of only $k$ dimensions.

\begin{figure}[b!]

\vspace{-0.2in}

\centering
    \mbox{
    \includegraphics[width=2.2in]{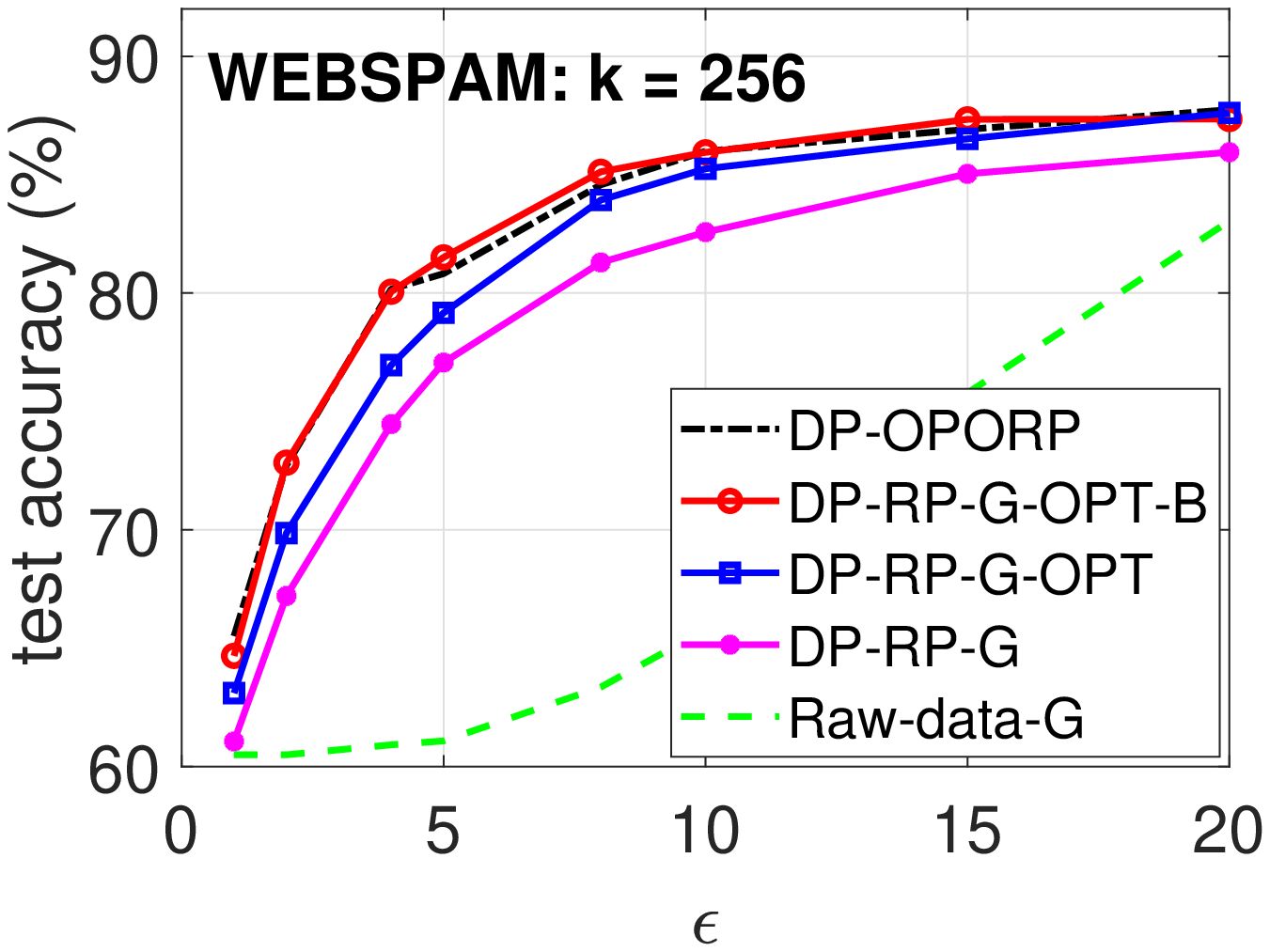} \hspace{-0.15in}
    \includegraphics[width=2.2in]{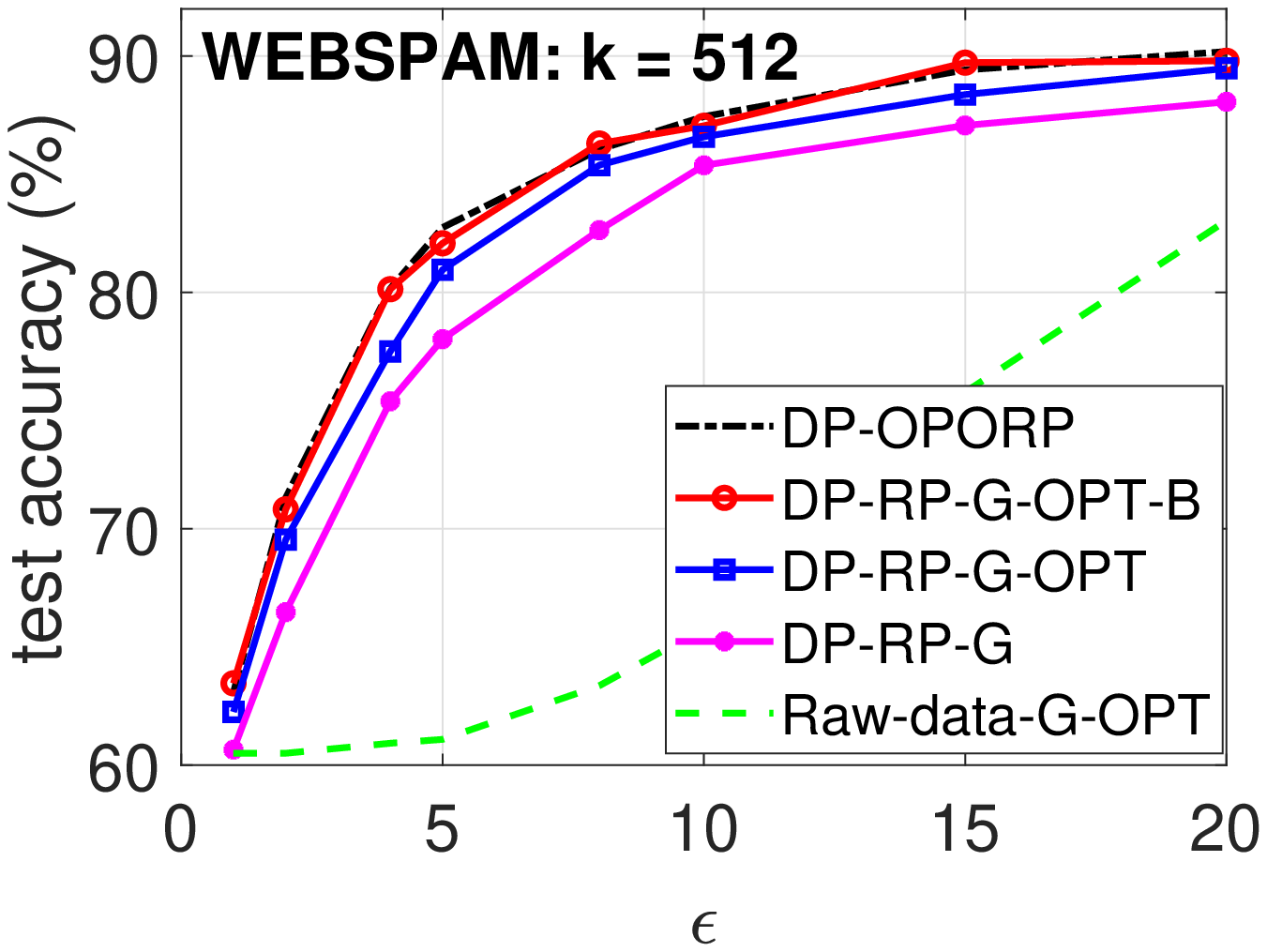} \hspace{-0.15in}
    \includegraphics[width=2.2in]{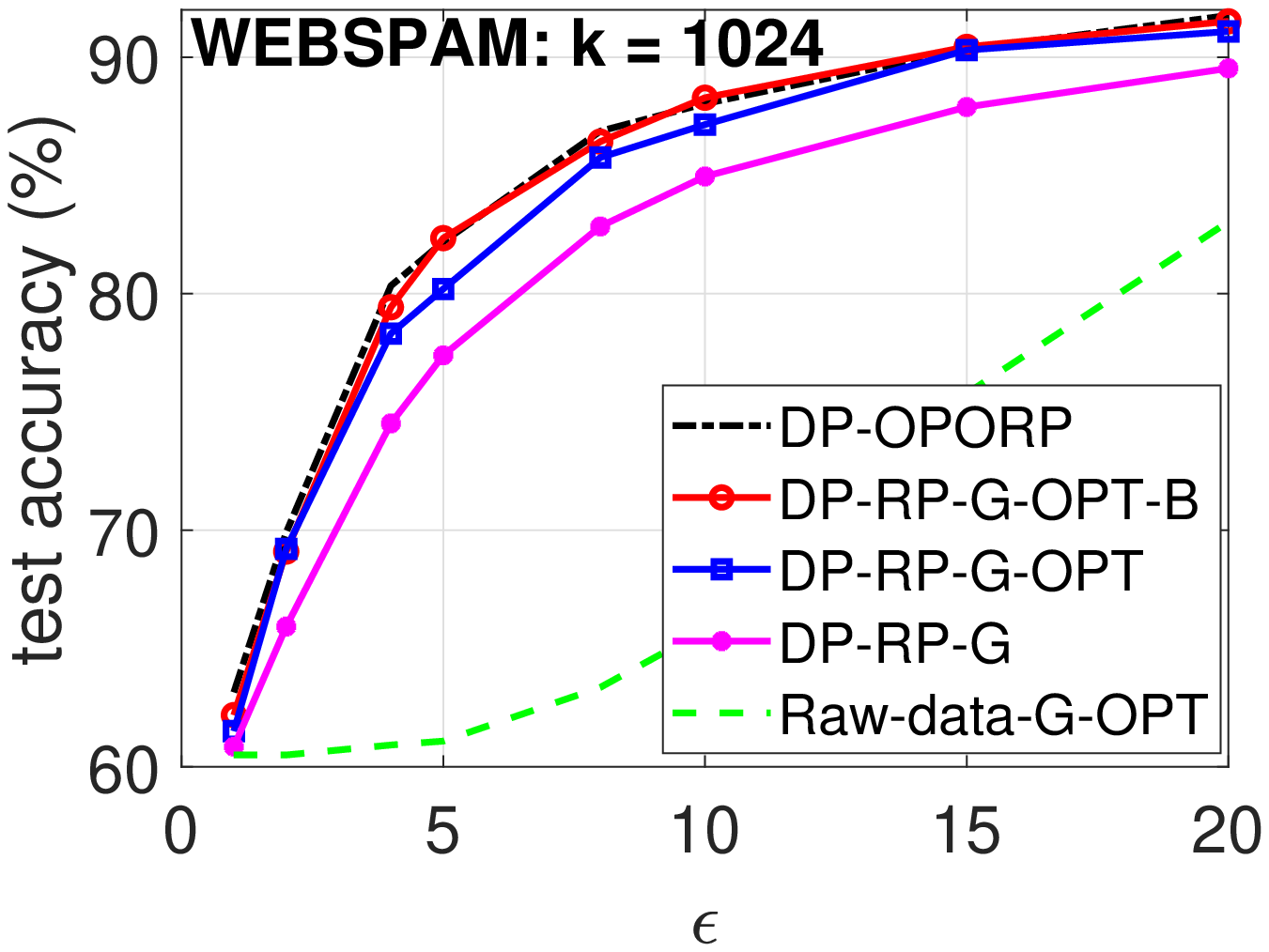}
    }

\vspace{-0.15in}

\caption{SVM classification on WEBSPAM, $\beta=1$, $\delta=10^{-6}$.}
\label{fig:SVM_vs_eps}\vspace{-0.25in}
\end{figure}

We report the test accuracy for various DP methods in Figure~\ref{fig:SVM_vs_eps}. The conclusions are similar to those in the retrieval tasks: (i) DP-RP-G-OPT-B and DP-OPORP attain the highest accuracy among all full-precision RP-type algorithms, and are significantly better than Raw-data-G-OPT. This experiment again confirms the effectiveness of DP-RP and DP-OPORP for data privatization.

\subsection{Results for DP-SignOPORP}

\begin{figure}[h]
\centering

    \mbox{
    \includegraphics[width=2.2in]{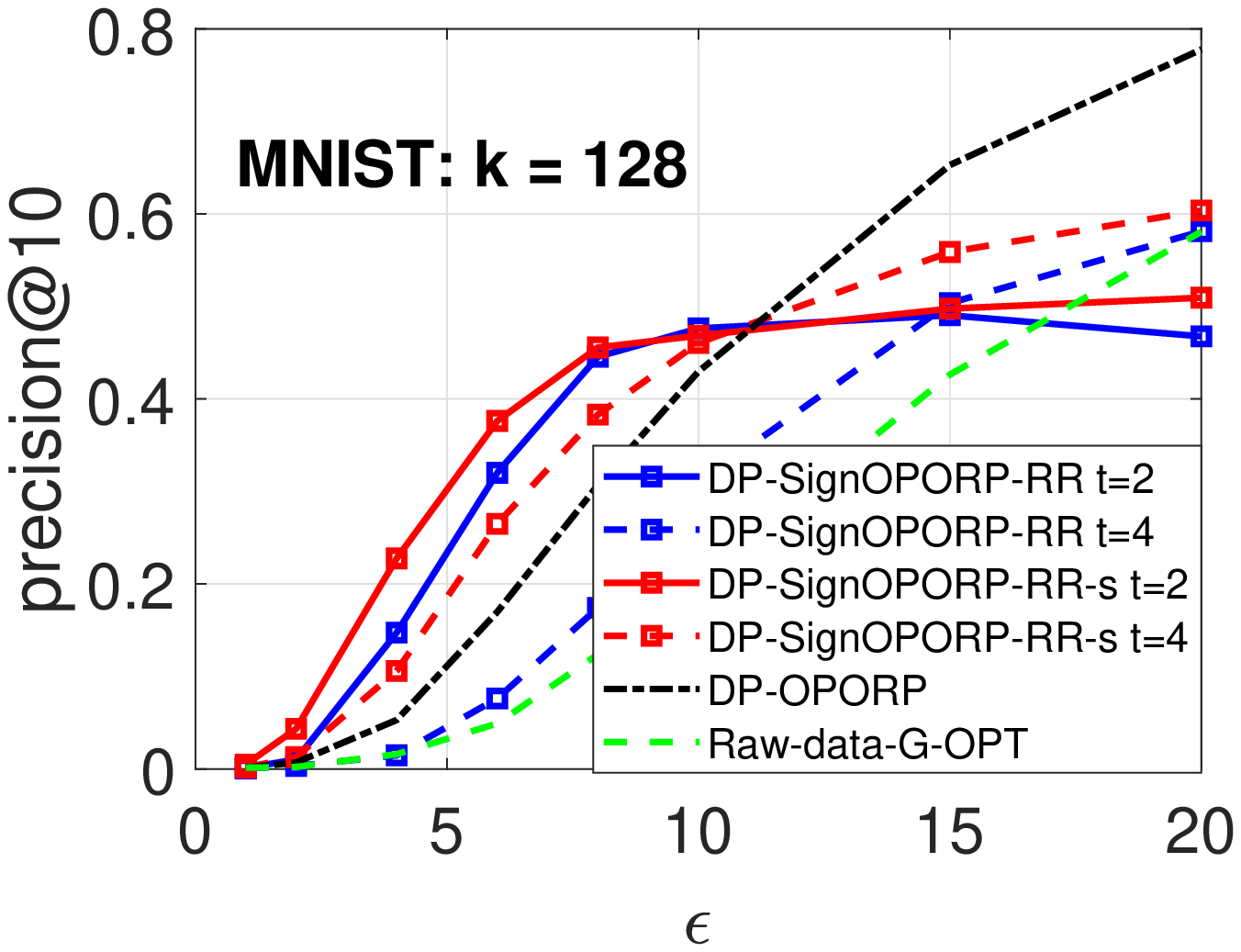} \hspace{-0.15in}
    \includegraphics[width=2.2in]{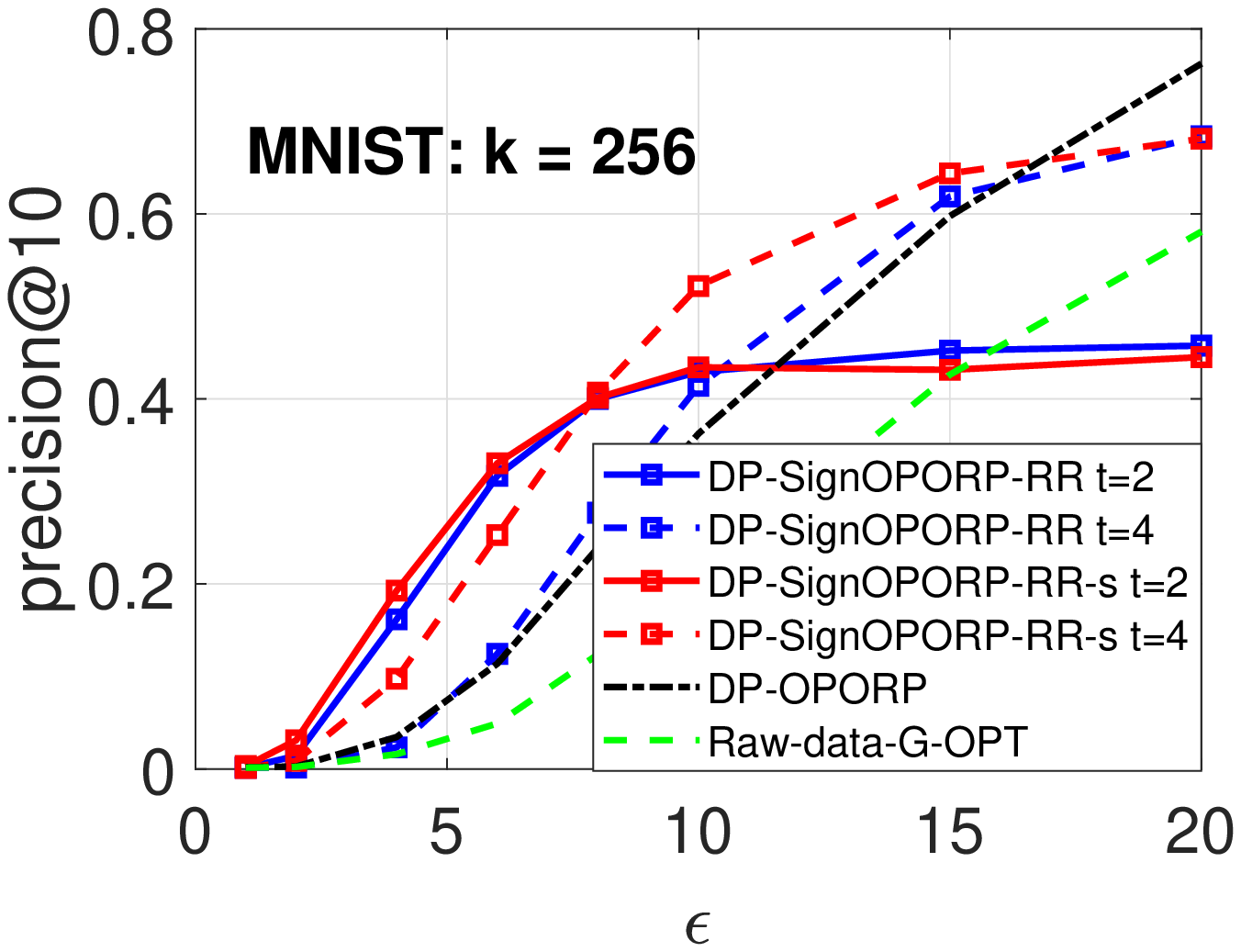}
    \hspace{-0.15in}
    \includegraphics[width=2.2in]{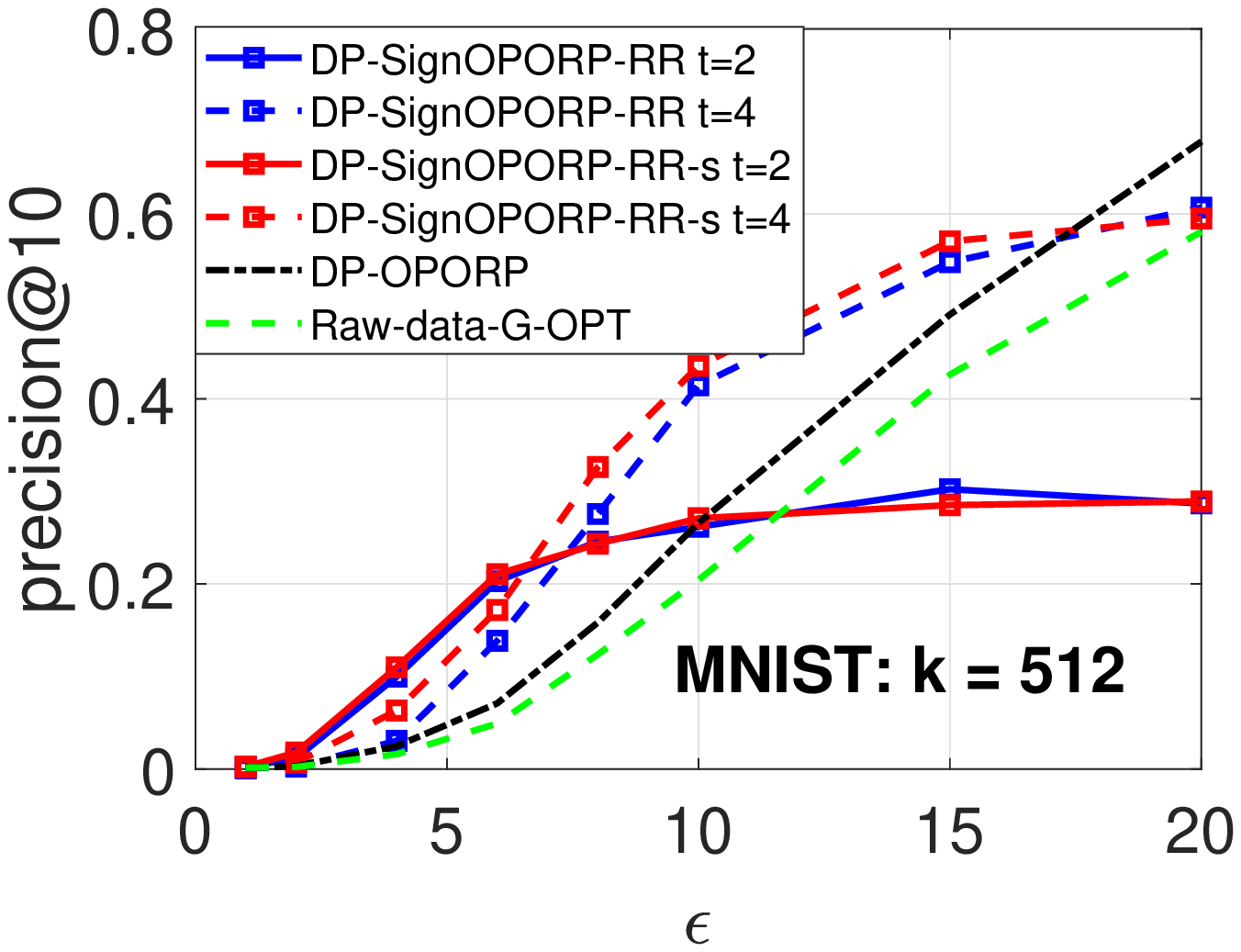}
    }

    \mbox{
    \includegraphics[width=2.2in]{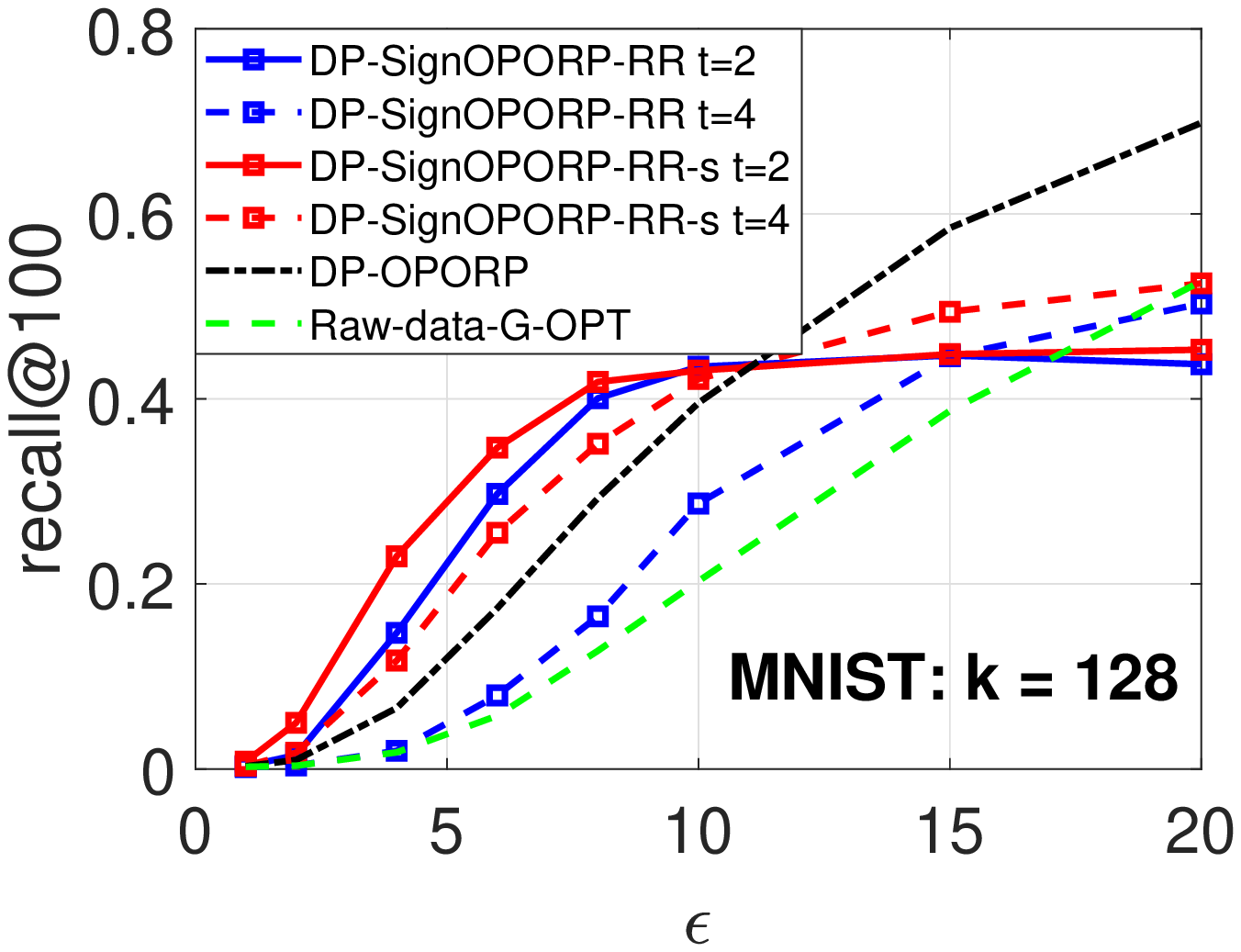} \hspace{-0.15in}
    \includegraphics[width=2.2in]{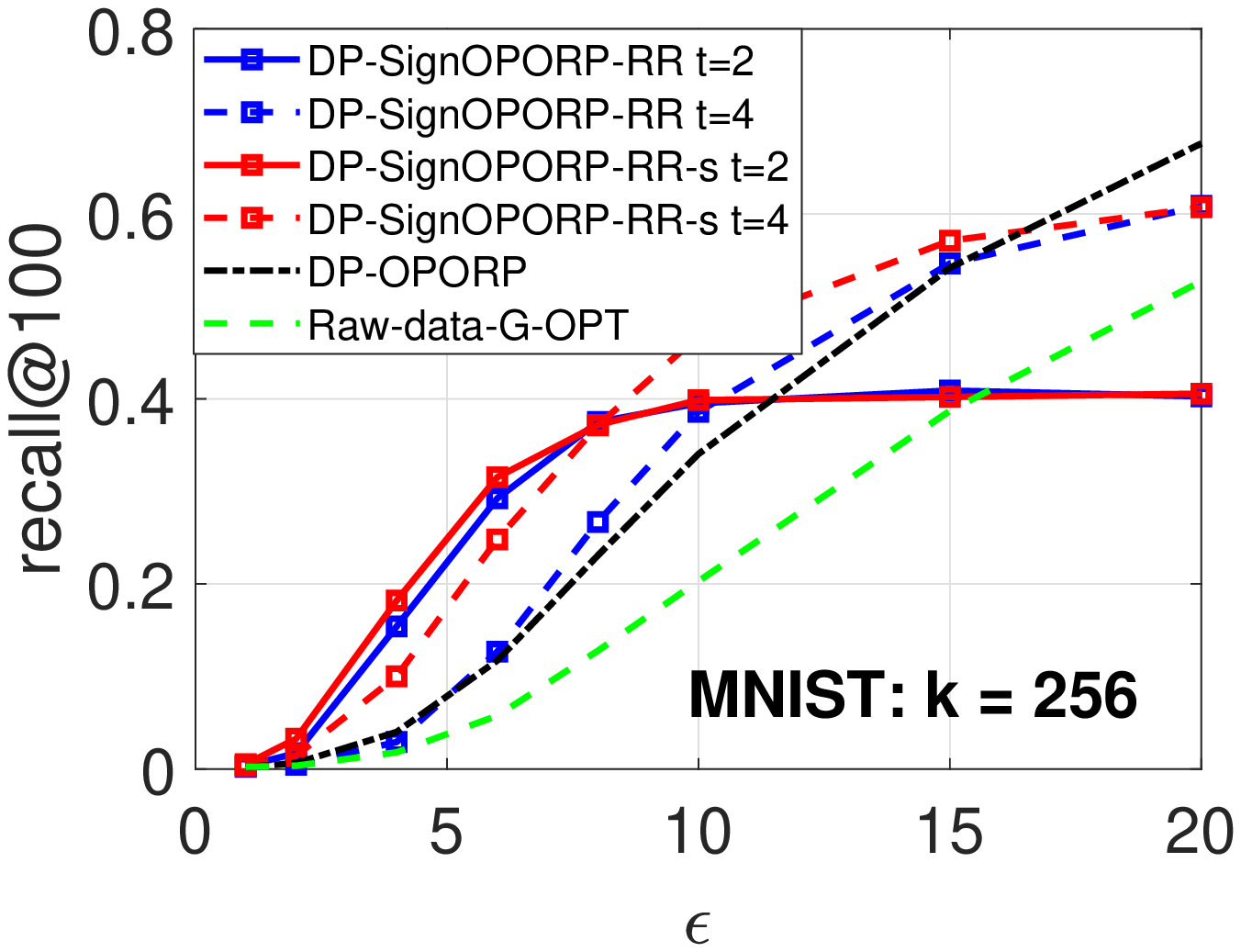}
    \hspace{-0.15in}
    \includegraphics[width=2.2in]{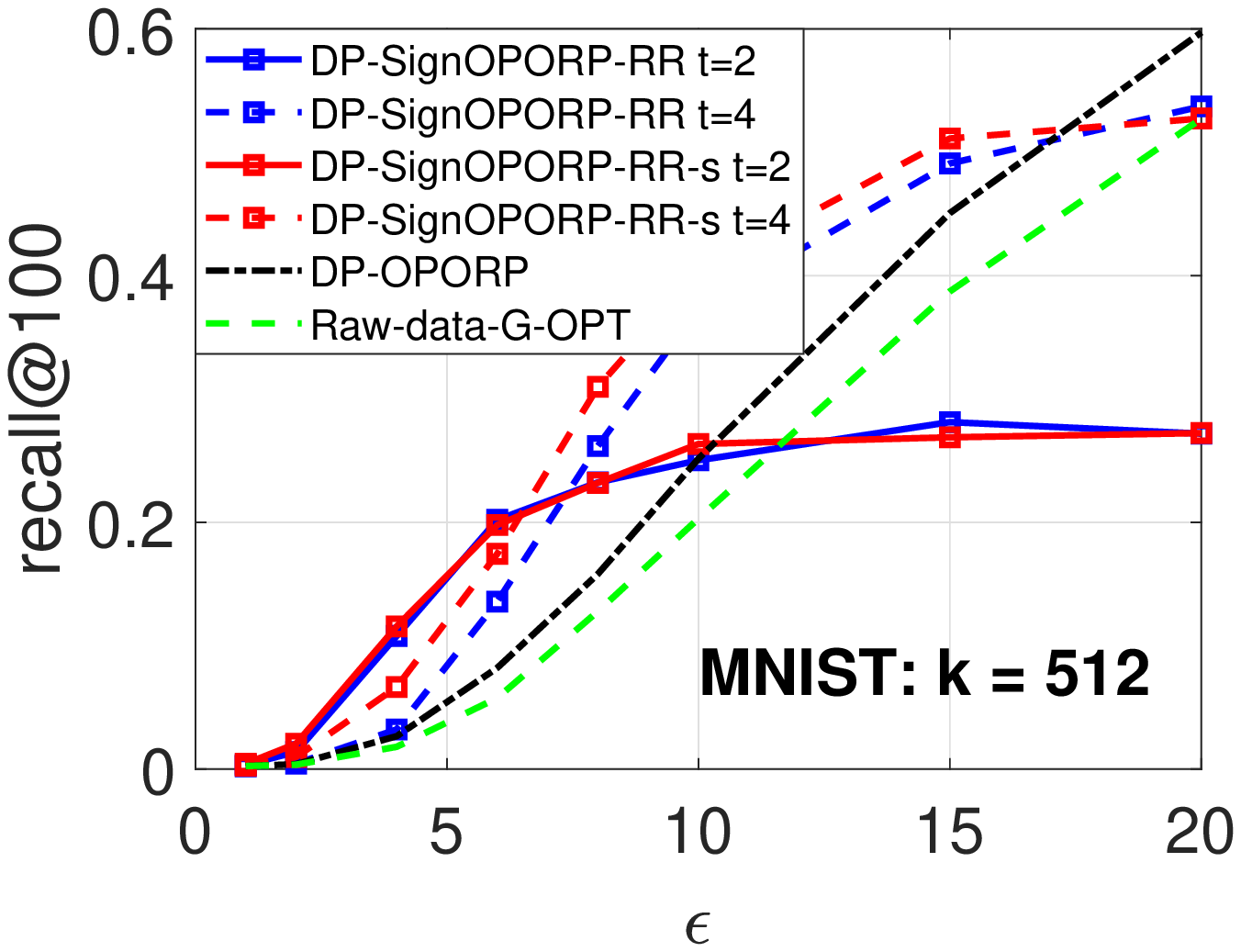}
    }

\vspace{-0.15in}

\caption{Retrieval on MNIST with DP-SignOPORP-RR and DP-SignOPORP-RR-smooth (in the caption, ``-s'' stands for ``-smooth''. For DP-OPORP and Raw-data-G-OPT, we let $\delta=10^{-6}$.}
\label{fig:MNIST_vs_eps_sign_oporp}\vspace{-0.1in}
\end{figure}

\begin{figure}[h]
\centering
    \mbox{
    \includegraphics[width=2.2in]{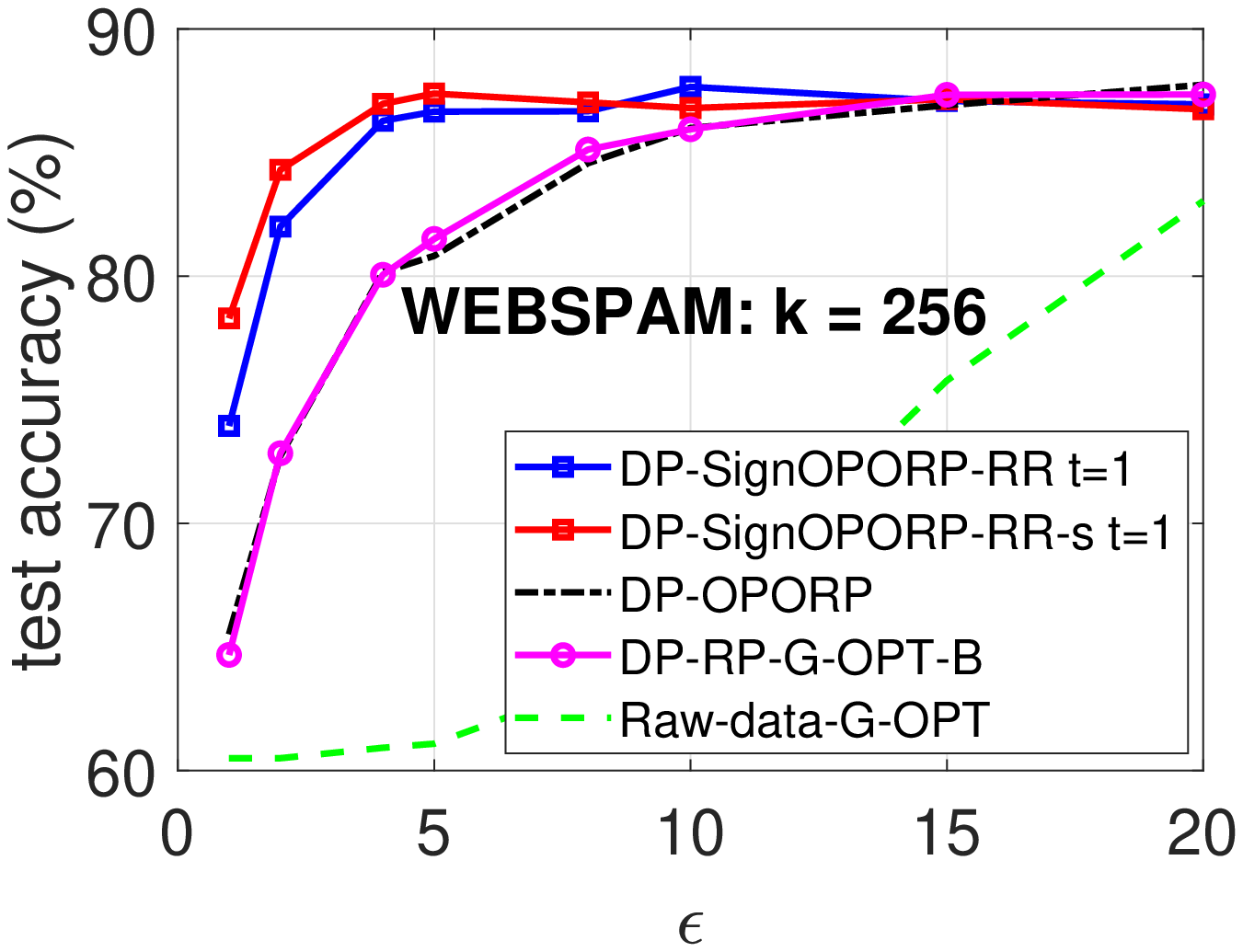} \hspace{-0.15in}
    \includegraphics[width=2.2in]{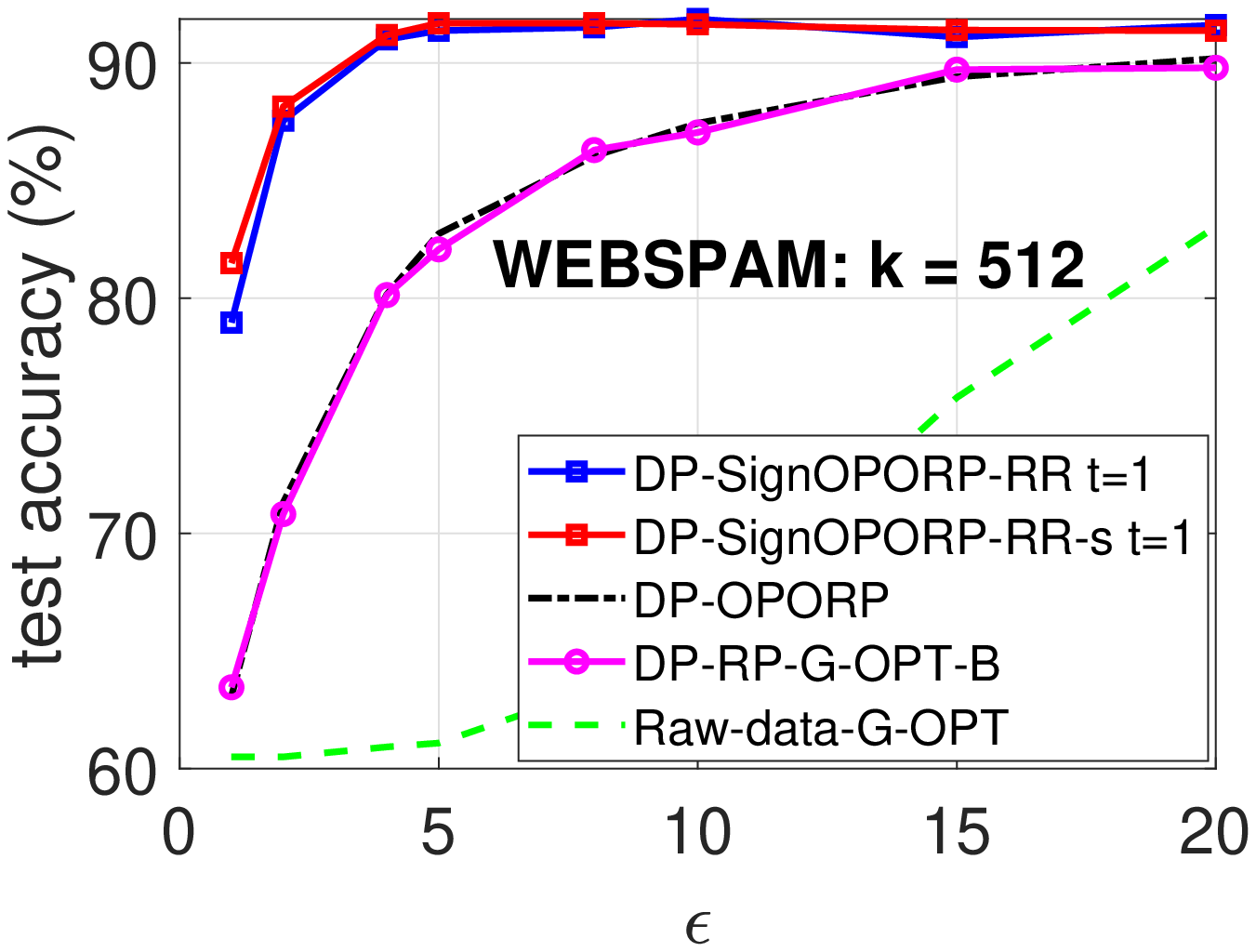} \hspace{-0.15in}
    \includegraphics[width=2.2in]{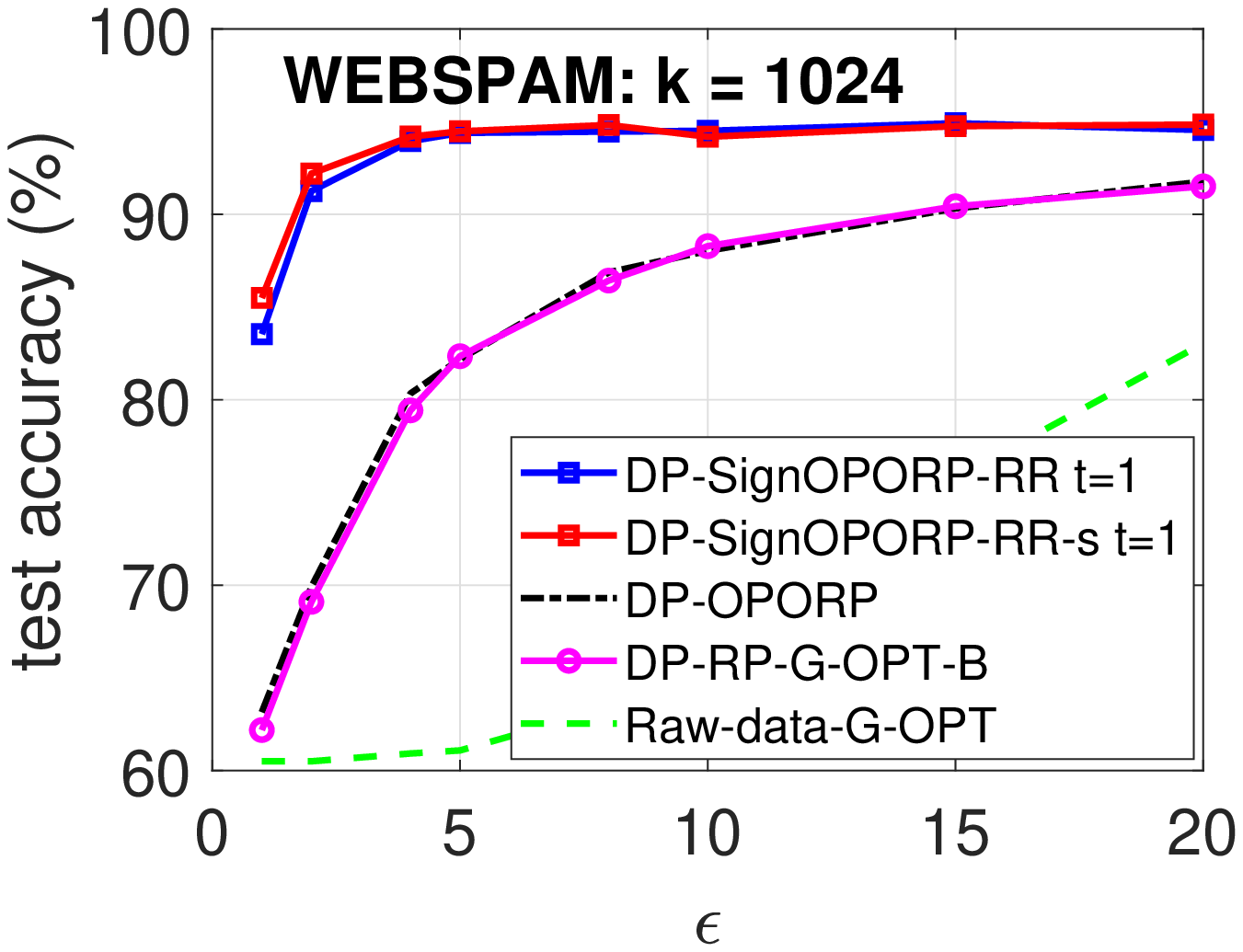}
    }

\vspace{-0.15in}

\caption{Classification accuracy on Webspam with DP-SignOPORP-RR and DP-SignOPORP-RR-s. For DP-OPORP and Raw-data-G-OPT,  we let $\delta=10^{-6}$.}
\label{fig:SVM_vs_eps_sign_oporp}
\end{figure}

In Figure~\ref{fig:MNIST_vs_eps_sign_oporp}, we plot the precision and recall of DP-SignOPORP-RR (with standard randomized response), and DP-SignOPORP-RR-smooth (with smooth flipping probability). In this figure, we present the results with repetition $t=2,4$ which yield good overall performance. As we can see, (i) the proposed smooth flipping probability considerably improves the standard randomized response technique, and (ii) DP-SignOPORP, in general, provides better search accuracy than DP-OPORP when $\epsilon< 10\sim 15$. This range of $\epsilon$ is common in the use cases of DP for providing reasonable privacy protection. Also, we note that using a smaller $t$ (repetitions) typically performs better when $\epsilon$  is relatively small, but worse when $\epsilon$ is large.

In Figure~\ref{fig:SVM_vs_eps_sign_oporp}, we report the results on SVM classification. For this task, we plot DP-SignOPORP with $t=1$. We again observe the advantage of DP-SignOPORP over DP-OPORP, and the advantage of the smooth flipping probability over the standard randomized response. Specifically, when $\epsilon\approx 5$, the test accuracy of Raw-data-G-OPT is around 60\% (for WEBSPAM this is almost the same as random guessing), but DP-SignOPORP can achieve 95\% accuracy with $k=1024$.

\section{iDP-SignRP Under Individual Differential Privacy (iDP)}   \label{sec:individual DP}

We provide additional algorithms and evaluation on the so-called ``individual differential privacy'' (iDP, Definition~\ref{def:idp})~\citep{comas2017individual}, a relaxation of standard DP which aims at improving the utility of private algorithms. iDP treats the dataset $u$ as the ``ground truth'' to be protected. The ``indistinguishability'' requirement is only cast on $u$ and its neighbors specifically, instead of on any possible dataset. Therefore, operationally, in the noise addition mechanisms for example, iDP essentially follows the local sensitivity (Definition~\ref{def:local-sensitivity}) when computing the noise level, which can be much smaller than that required by the standard DP. iDP has achieved excellent utility for computing robust statistics at small $\epsilon$~\citep{comas2017individual}. While iDP does not provide the same level of privacy protection as the ``worst-case'' standard DP, it might be sufficient in certain application scenarios, e.g., data publishing/release, when the procedure is non-interactive and the released dataset is indeed the target that one is interested in privatizing.

\vspace{0.1in}

For the DP algorithms that have been discussed previously in this paper, we first note that, for DP-RP and DP-OPORP, the local sensitivity at any $u\in\mathcal U$ equals the global sensitivity. In other words, iDP does not help improve DP-RP and DP-OPORP. Also, we will soon discuss  the reason why SignRP can be much better than SignOPORP under iDP. Therefore, we will mainly investigate the SignRP algorithms under iDP.

\vspace{0.1in}

We propose two iDP-SignRP methods, based on noise addition and sign flipping, respectively. Both approaches share the same key idea of iDP, that is, many signs of the projected values do not need perturbations. This can be seen from Figure~\ref{fig:smooth_flip_prob}, where the ``local flipping probability'' is non-zero only in the regime when the projected data is near $0$ (i.e., $L=1$ in Algorithm~\ref{alg:DP-signRP-RR-smooth}). Since in other cases the local flip probability is zero, perturbation is not needed. As a result, out of $k$ projections, only a fraction of the projected values needs to be perturbed. This significantly reduces the noise injected to SignRP and boosts the utility by a very large margin.

\subsection{iDP-SignRP-G by Gaussian Noise Addition}

In Algorithm~\ref{alg:DP-SignRP-gaussian-noise-individual}, we present the iDP-SignRP-G method for one data vector $u$. We use the ``local flipping probability'' (e.g., in Figure~\ref{fig:smooth_flip_prob}) to choose which projections are perturbed before taking signs. After applying random projection to get $k$ projected values, we do the following steps:

\begin{enumerate}
    \item We compute noise-indicators $(I_1,...,I_k)$ for each projected value in $x=\frac{1}{\sqrt k}W^Tu$ using Algorithm~\ref{alg:noise-indicator}. Denote $\mathcal A=\{I_j:I_j=1, j=1,...,k\}$ and $N_+=|\mathcal A|$. This is the maximal number of different signs of $x$ and $x'=W^Tu'$, $\forall u'\in Nb(u)$.

    \item We compute the sensitivity $\Delta_2=\beta\max_{i=1,...,p} \| W_{[i,\mathcal A]}\|$, where $W_{[i,\mathcal A]}$ denotes the $i$-th row of $W$ indexed at $\mathcal A$, which is an $N_+$-dimensional vector.

    \item We use the optimal Gaussian mechanism (Theorem~\ref{theo:gauss-optimal}) to compute $\sigma$, with $\Delta_2$ computed above and privacy parameters $(\epsilon,\delta)$.

    \item For $j=1,..,k$, if $j\notin\mathcal A$, we take $\tilde s_j=sign(x_j)$; if $j\in\mathcal A$, we take $\tilde s_j=sign(x_j+G)$ where $G\sim N(0,\sigma^2)$ is a Gaussian noise. Finally we output $\tilde s=[\tilde s_1,...,\tilde s_k]$.
\end{enumerate}

\begin{algorithm}[t]
	{
		\vspace{0.05in}
		\textbf{Input:} Data $u\in[-1,1]^p$; Privacy parameters $\epsilon>0$, $\delta\in (0,1)$; Number of projections~$k$
		
		\vspace{0.05in}
		
		\textbf{Output:}   Differentially private sign random projections
		\vspace{0.05in}		
		
		Apply RP by $x=\frac{1}{\sqrt k}W^Tu$, where $W\in\mathbb R^{p\times k}$ is a random Rademacher matrix
		
		For every projected value in $x$, compute $(I_1,...,I_k)$ by Algorithm~\ref{alg:noise-indicator}
		
		Let $\mathcal A=\{I_j:I_j=1, j=1,...,k\}$ and $\tilde N_+=|\mathcal A|$
		
		Compute sensitivity $\Delta_2=\beta\sqrt{\frac{\tilde N_+}{k} }$

		 Compute $\sigma$ by Theorem~\ref{theo:gauss-optimal} with $\Delta_2$ and privacy budget $\epsilon$ and $\delta$
		
		 Compute $\tilde s_j=\begin{cases}
		 sign(x_j), & j\notin \mathcal A\\
		 sign(x_j+G), & j\in\mathcal A
		 \end{cases}$, where $G\sim N(0,\sigma^2)$ is iid Gaussian noise

		Return $\tilde s = [\tilde s_1,...,\tilde s_k]$
	}
	\caption{iDP-SignRP-G (DP-SignRP with Gaussian noise)}
	\label{alg:DP-SignRP-gaussian-noise-individual}
\end{algorithm}

\begin{algorithm}[t]
	{
		\vspace{0.05in}
		\textbf{Input:} Data $u\in[-1,1]^p$; one projected value $z$; adjacency parameter $\beta$
		
% 		\vspace{0.05in}
		
		\textbf{Output:}  Indicator $I$ w.r.t. projection $w$ for data vector $u$
		
		\vspace{0.05in}		

		$I=0$
		
		\textbf{If} $\beta/\sqrt{k}\geq |z|$
		
		\hspace{0.2in}$I=1$

		\textbf{End If}

	}
	\caption{Compute noise-indicator of iDP-SignRP-G for one projection}
	\label{alg:noise-indicator}
\end{algorithm}

Let's explain the intuition behind DP-SignRP-G. Since a neighboring data vector $u'$ only differs from $u$ in one dimension by at most $\beta$, for each single projection $w$, when $\beta\max_{i=1,...,p}|w_i|\leq |w^Tu|$, there is no neighbor $u'$ of $u$ that may change the sign of the projected value of $u$, i.e., $sign(w^Tu')\neq sign(w^Tu)$. In other words, when $\beta\max_{i=1,...,p}|w_i|\leq |w^Tu|$,  no noise is needed for this projected value to attain iDP. This is the reason why we call the output of Algorithm~\ref{alg:noise-indicator} a ``noise-indicator''. Consequently, in step 4 of iDP-SignRP-G it suffices to add Gaussian noise only to those projected values $x_j$ with $j\in\mathcal A$, instead of to all $k$ projections as in DP-RP-G-OPT.

\vspace{0.1in}

\begin{theorem}[iDP-SignRP-G] \label{theo:privacy-DP-SignRP-Gaussian}
Algorithm~\ref{alg:DP-SignRP-gaussian-noise-individual} is $(\epsilon,\delta)$-iDP for data $u$.
\end{theorem}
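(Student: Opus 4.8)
The plan is to reduce the claim to the optimal Gaussian mechanism (Theorem~\ref{theo:gauss-optimal}) applied \emph{only} to the noised coordinates, followed by the post-processing property of DP. Throughout I treat the projection matrix $W$ as fixed/public and consider an arbitrary neighbor $u'$ of $u$ in the sense of Definition~\ref{def:neighbor}, i.e. $u$ and $u'$ differ in a single coordinate $i_0$ with $|u_{i_0}-u_{i_0}'|\le\beta$. Since iDP (Definition~\ref{def:idp}) only asks for two-sided indistinguishability between $\mathcal M(u)$ and $\mathcal M(u')$, I am entitled to calibrate the set $\mathcal A$ and the noise scale $\sigma$ to the protected point $u$; this is exactly the use of \emph{local} sensitivity (Definition~\ref{def:local-sensitivity}) that the iDP relaxation licenses and that would be illegal under full DP.

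First I would establish the structural fact that the signs outside $\mathcal A$ are frozen across the entire neighborhood of $u$. For $j\notin\mathcal A$ we have $|x_j|>\beta/\sqrt k$, while for a Rademacher $W$ every entry satisfies $|W_{i_0 j}|=1$, so
\[
|x_j-x_j'|=\frac{1}{\sqrt k}\,|W_{i_0 j}|\,|u_{i_0}-u_{i_0}'|\le \frac{\beta}{\sqrt k}<|x_j|,
\]
which forces $\mathrm{sign}(x_j')=\mathrm{sign}(x_j)$. Hence on $\mathcal A^c$ both $\mathcal M(u)$ and $\mathcal M(u')$ emit the \emph{same deterministic} bits, contributing nothing to the privacy loss. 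Next I would identify the $\ell_2$-sensitivity of the noised part: restricting to $\mathcal A$, only the $i_0$-th input coordinate changes, so
\[
\|x_{\mathcal A}-x_{\mathcal A}'\|_2=\frac{1}{\sqrt k}\,|u_{i_0}-u_{i_0}'|\,\|W_{[i_0,\mathcal A]}\|_2\le \frac{\beta}{\sqrt k}\sqrt{\tilde N_+}=\beta\sqrt{\tilde N_+/k}=\Delta_2,
\]
using that the Rademacher row $W_{[i_0,\mathcal A]}$ has exactly $\tilde N_+$ entries of unit magnitude. Thus $\Delta_2$ in the algorithm is precisely the local $\ell_2$-sensitivity of $u\mapsto x_{\mathcal A}$ at $u$.

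With these two facts in hand the conclusion follows quickly. By Theorem~\ref{theo:gauss-optimal}, adding i.i.d. $N(0,\sigma^2)$ noise with $\sigma=\sigma^*$ makes $x_{\mathcal A}+G$ and $x_{\mathcal A}'+G$ satisfy the two-sided $(\epsilon,\delta)$ bound. Taking coordinate-wise signs is a data-independent post-processing of these noised values, so the signs on $\mathcal A$ inherit the same $(\epsilon,\delta)$ guarantee; combining this with the identical frozen bits on $\mathcal A^c$ yields $(\epsilon,\delta)$-iDP for $u$, as claimed.

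The main obstacle is conceptual rather than computational: the mechanism computes $\mathcal A$ (and hence $\sigma$) from its \emph{input}, so naively re-running on $u'$ could use a different set $\mathcal A'$ and scale $\sigma'$, and a coordinate that is noised under one input but deterministic under the other would create an uncontrolled likelihood ratio (one probability could be $1$ against the other's $\approx 1/2$). The resolution — and the crux of the argument — is that iDP protects the fixed ground-truth $u$, so the comparison is between the $u$-calibrated ``Gaussian-on-$\mathcal A$'' mechanism evaluated at $u$ and at $u'$; with $\mathcal A$ and $\sigma$ held fixed, both runs perturb exactly the same coordinates at the same scale, and the only coordinates that could distinguish them lie in $\mathcal A$, which are controlled by the Gaussian mechanism. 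I would make this point explicit, emphasizing that off $\mathcal A$ the signs provably coincide (the structural step above), so nothing leaks there even at the boundary between $\mathcal A$ and its complement.
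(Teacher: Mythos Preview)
Your proposal is correct and follows essentially the same approach as the paper: split into $\mathcal A$ and $\mathcal A^c$, show the signs on $\mathcal A^c$ are frozen for every neighbor of $u$, bound the local $\ell_2$-sensitivity on $\mathcal A$ by $\beta\sqrt{\tilde N_+/k}$, apply the optimal Gaussian mechanism there, and finish by post-processing. Your final paragraph explicitly articulates the iDP-specific point---that $\mathcal A$ and $\sigma$ are calibrated to the ground-truth $u$ and held fixed in the comparison with $\mathcal M(u')$---which the paper's proof leaves implicit; this is a useful clarification rather than a departure.
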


\begin{proof}
For a data vector $u$, let $Nb(u)$ be its neighbor set with vectors that differ from $u$ by at most $\beta$ in one dimension. Denote $x=\frac{1}{\sqrt k}W^Tu$ and $x'=\frac{1}{\sqrt k}W^Tu'$. Let $(I_1,...,I_k)$ be the noise-indicators from Algorithm~\ref{alg:noise-indicator} and $\mathcal A=\{i: I_j=1\}$, $\tilde N_+=|\mathcal A|$. Consider the two sets separately:

\begin{itemize}
    \item For $j\in [k]\setminus \mathcal A$, by the condition $\beta/\sqrt{k}\leq |z|$, we know that $\forall u'\in Nb(u)$, it holds that $sign(x_i)=sign(x_i')$.

    \item For $j\in\mathcal A$, consider the sub-vector $x_{\mathcal A}$. Adding iid Gaussian noise to $x_{\mathcal A}$ according to Theorem~\ref{theo:gauss-optimal} with $\Delta_2=\beta\sqrt{\frac{\tilde N_+}{k} }$ ensures the $(\epsilon,\delta)$-DP of $x_{\mathcal A}$. By the post-processing property of DP, we know that $sign(x_{\mathcal A})$ is also $(\epsilon,\delta)$-DP. Thus, for any $Q\in \{-1,1\}^{N_+}$, we have $Pr(sign(x_{\mathcal A})=Q)-e^\epsilon Pr(sign(x'_{\mathcal A})=Q)\leq \delta$, $\forall u'\in Nb(u)$.
\end{itemize}
Combining two parts, we have for any $Q\in \{-1,1\}^k$,
\begin{align*}
    Pr(sign(x)=Q)-e^\epsilon Pr(sign(x')=Q)=Pr(sign(x_{\mathcal A})=Q)-e^\epsilon Pr(sign(x'_{\mathcal A})=Q)\leq \delta,
\end{align*}
for all $u'\in Nb(u)$. By the symmetry of DP (on the sub-vector $x_{\mathcal A}$), we also know that $Pr(sign(x')=Q)-e^\epsilon Pr(sign(x)=Q)\leq \delta$. This proves the $(\epsilon,\delta)$-iDP by Definition~\ref{def:idp}.
\end{proof}

\subsection{iDP-SignRP-RR by Randomized Response}

\begin{algorithm}[h]
{
    \vspace{0.05in}
    \textbf{Input:} Data $u\in[-1,1]^p$, privacy parameters $\epsilon>0$, $0<\delta<1$, number of projections~$k$

    \vspace{0.05in}

    \textbf{Output:}   Differentially private sign random projections

    \vspace{0.05in}

    Apply RP by $x=\frac{1}{\sqrt k}W^Tu$, where $W\in\mathbb R^{p\times k}$ is a random Rademacher matrix

	For every column in $W$, compute $(I_1,...,I_k)$ by Algorithm~\ref{alg:noise-indicator}
	
	Let $\mathcal A=\{I_j:I_j=1, j=1,...,k\}$ and $\tilde N_+=|\mathcal A|$

    Compute $\tilde s_j=\begin{cases}
    sign(x_j), & j\notin\mathcal A\\
    sign(x_j), &j\in\mathcal A\ \text{with prob.}\ \frac{e^{\epsilon'}}{e^{\epsilon'}+1}\\
    -sign(x_j), &j\in\mathcal A\ \text{with prob.}\ \frac{1}{e^{\epsilon'}+1}
    \end{cases}$ for $j=1,...,k$, with $\epsilon'=\epsilon/\tilde N_+$

    Return $\tilde s$ as the DP-SignRP of $u$
    }
    \caption{iDP-SignRP-RR}
    \label{alg:DP-signRP-RR-individual}
\end{algorithm}

Similar to Section~\ref{sec:DP-signRP}, we also have an iDP-SignRP-RR method with pure $\epsilon$-DP by random sign flipping, as summarized in Algorithm~\ref{alg:DP-signRP-RR-individual}. After we apply random projection $x=\frac{1}{\sqrt k}W^Tu$, we call the same procedure as in iDP-SignRP-G to determine set $\mathcal A$ representing the projected values that need perturbation for iDP. For $j\notin\mathcal A$, we use the original $\tilde s_j=sign(x_j)$. For $j\in\mathcal A$, we keep $sign(x_j)$ with probability $\frac{e^{\epsilon'}}{e^{\epsilon'}+1}$ and flip the sign otherwise, where $e^{\epsilon'}=\epsilon/\tilde N_+$ with $\tilde N_+=|\mathcal A|$.

\begin{theorem} \label{theo:DP-SignRP}
Algorithm~\ref{alg:DP-signRP-RR-individual} achieves $\epsilon$-iDP for data $u$.
\end{theorem}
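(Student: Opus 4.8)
The plan is to mirror the proof of Theorem~\ref{theo:privacy-DP-SignRP-Gaussian} for the coordinate-splitting part and the randomized-response computation in the proof of Theorem~\ref{theo:DP-SignRP-RR privacy} for the flipped coordinates, but now carried out only over the data-dependent index set $\mathcal A$. Fix the protected vector $u$ and an arbitrary $\beta$-adjacent neighbor $u'$, differing only in coordinate $i$. Write $x=\frac{1}{\sqrt k}W^Tu$, $x'=\frac{1}{\sqrt k}W^Tu'$, $s=sign(x)$, $s'=sign(x')$, and let $\tilde s,\tilde s'$ be the outputs of Algorithm~\ref{alg:DP-signRP-RR-individual}. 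I will verify both inequalities of Definition~\ref{def:idp} with $\delta=0$ by bounding the likelihood ratio $Pr(\tilde s=y)/Pr(\tilde s'=y)$ for every $y\in\{-1,1\}^k$.

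First I would establish the per-coordinate displacement bound that drives everything. Since $W$ is Rademacher and $u,u'$ differ only in coordinate $i$ by at most $\beta$, for every $j$ we have $|x_j-x_j'|=\frac{1}{\sqrt k}|W_{ij}|\,|u_i-u_i'|\le \beta/\sqrt k$. Consequently, for any $j\notin\mathcal A$, i.e. $|x_j|>\beta/\sqrt k$ by the threshold in Algorithm~\ref{alg:noise-indicator}, the displacement cannot cross zero, so $s_j=s_j'$; moreover the algorithm outputs $\tilde s_j=s_j=s_j'$ deterministically on such coordinates, so they contribute a factor $1$ to the likelihood ratio and can be dropped.

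Next I would restrict attention to $\mathcal A$ and reuse the randomized-response bookkeeping of Theorem~\ref{theo:DP-SignRP-RR privacy}. Let $S=\{j:s_j\neq s_j'\}$; by the previous step $S\subseteq\mathcal A$. For a fixed target $y$, partition $\mathcal A$ according to whether $s_j$ agrees with $y_j$, exactly as in that proof. On coordinates of $\mathcal A\setminus S$ the sign is unchanged and the flip probability is identical under $u$ and $u'$, so those factors cancel; on each coordinate of $S$ the factor is at most $e^{\epsilon'}$. Hence $\log\frac{Pr(\tilde s=y)}{Pr(\tilde s'=y)}\le |S|\,\epsilon'\le \tilde N_+\,\epsilon'=\epsilon$, using $\epsilon'=\epsilon/\tilde N_+$ and $|S|\le|\mathcal A|=\tilde N_+$. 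The matching lower bound $\ge-\epsilon$ follows by swapping the roles of $u$ and $u'$, yielding $\epsilon$-iDP for $u$.

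The main obstacle is the data-dependence of the perturbation: both the index set $\mathcal A$ and the per-bit budget $\epsilon'=\epsilon/\tilde N_+$ are computed from $u$, whereas a naive reading of $\mathcal M(u')$ would recompute them from $u'$, producing a possibly different set $\mathcal A'$ and count $\tilde N_+'$. The troublesome coordinates are the boundary indices with $|x_j|$ near $\beta/\sqrt k$, which can move in or out of the flip-set, so that one mechanism randomizes a coordinate while the other fixes it. The step I expect to require the most care is therefore to argue, exactly as in the combination step of Theorem~\ref{theo:privacy-DP-SignRP-Gaussian} (which evaluates $sign(x'_{\mathcal A})$ with $\mathcal A$ anchored at $u$), that the iDP guarantee \emph{for} $u$ is assessed with the perturbation structure fixed by $u$, and that every coordinate outside $\mathcal A$ is sign-locked ($s_j=s_j'$) for all neighbors and hence cancels. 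Once this anchoring is justified, the per-bit randomized-response ratio composes over the $\tilde N_+$ noised coordinates to give $\epsilon$, which completes the argument.
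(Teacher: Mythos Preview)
Your proposal is correct and follows essentially the same route as the paper: split coordinates into $\mathcal A$ and its complement, argue that signs are locked outside $\mathcal A$, and then run the randomized-response likelihood-ratio computation from Theorem~\ref{theo:DP-SignRP-RR privacy} over $\mathcal A$ to obtain $|S|\epsilon'\le\tilde N_+\epsilon'=\epsilon$. You are in fact more explicit than the paper about the anchoring subtlety---that $\mathcal A$ and $\epsilon'$ are computed from $u$ and held fixed when evaluating $\mathcal M$ at $u'$---whereas the paper's proof simply writes the same $\mathcal A$ and $\epsilon'$ in numerator and denominator without comment, relying implicitly on the same convention used in the proof of Theorem~\ref{theo:privacy-DP-SignRP-Gaussian}.
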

\begin{proof}
The high-level proof idea is similar to that of Theorem~\ref{theo:privacy-DP-SignRP-Gaussian}. For $u\in [-1,1]^p$ let $u'$ be an $\beta$-neighboring data. Let $s=sign(W^Tu)\in \{-1,+1\}^k$, $s'=sign(W^Tu')\in \{-1,+1\}^k$, and denote $\tilde s$ and $\tilde s'$ as the randomized output of $s$ and $s'$ by Algorithm~\ref{alg:DP-signRP-RR-individual}, respectively. Consider $\mathcal A$ in Algorithm~\ref{alg:DP-signRP-RR-individual}. By Algorithm~\ref{alg:noise-indicator}, we know that for $j\notin\mathcal A$, $Pr(\tilde s_j=\tilde s_j')=Pr(s_j=s_j')=1$, $\forall u'\in Nb(u)$. For projections in $\mathcal A$, denote $S=\{j\in\mathcal A:s_j\neq s_j'\}$ and $S^c=\mathcal A\setminus S$. For any vector $y\in \{-1,+1\}^k$, we further define $S_0=\{j\in S:s_j=y_j\}$, $S_1=\{j\in S:s_j\neq y_j\}$, $S^c_0=\{j\in S^c:s_j=y_j\}$ and $S^c_1=\{j\in S^c:s_j\neq y_j\}$. Since the $k$ projections are independent, by composition we have
\begin{align*}
    \log\frac{Pr(\tilde s=y)}{Pr(\tilde s'=y)}&=\log\frac{\prod_{j\notin\mathcal A}Pr(\tilde s_j=y_j)\prod_{j\in S^c_0}\frac{e^{\epsilon'}}{e^{\epsilon'}+1}\prod_{j\in S^c_1}\frac{1}{e^{\epsilon'}+1}\prod_{j\in S_0}\frac{e^{\epsilon'}}{e^{\epsilon'}+1}\prod_{j\in S_1}\frac{1}{e^{\epsilon'}+1}}{\prod_{j\notin\mathcal A}Pr(\tilde s_j'=y_j)\prod_{j\in S^c_0}\frac{e^{\epsilon'}}{e^{\epsilon'}+1}\prod_{j\in S^c_1}\frac{1}{e^{\epsilon'}+1}\prod_{j\in S_0}\frac{1}{e^{\epsilon'}+1}\prod_{j\in S_1}\frac{e^{\epsilon'}}{e^{\epsilon'}+1}} \\
    &\leq \log\frac{\prod_{j\in S}\frac{e^{\epsilon'}}{e^{\epsilon'}+1}}{\prod_{j\in S}\frac{1}{e^{\epsilon'}+1}}=|S|\epsilon'\leq \tilde N_+ \epsilon'=\epsilon,
\end{align*}
which proves the $\epsilon$-iDP according to Definition~\ref{def:idp}.
\end{proof}

The number of projections that requires noise addition, $\tilde N_+$, is also tightly related to the $P_+(\|u\|,p)$ (Proposition~\ref{prop:num-of-changes} and (\ref{eqn:p-vars-prob-bound})). Particularly, $\tilde N_+$ would be small when the data has a relatively large norm compared with the change in neighboring data $\beta$. Therefore, both iDP-SignRP methods would have better utility when the data norm is large. The reduction from $k$ to $\tilde N_+$ also leads to smaller Gaussian noise or smaller flipping probability for the values that need to be perturbed. Specifically, note that in Algorithm~\ref{alg:DP-SignRP-gaussian-noise-individual}, the optimal Gaussian mechanism is deployed with sensitivity $\Delta_2=\beta\sqrt{\frac{\tilde N_+}{k}}$, instead of $\Delta_2=\beta$ as in (\ref{eq:l2-sensitivity-known}) for DP-RP-G-OPT.

\vspace{0.15in}
\noindent\textbf{iDP-SignOPORP}. Similarly, we can also apply iDP to the SignOPORP method. Basically, we only need to replace $x$ in Line 3 in both Algorithm~\ref{alg:DP-SignRP-gaussian-noise-individual} and Algorithm~\ref{alg:DP-signRP-RR-individual} by the OPORP of $u$. However, we note that this iDP-SignOPORP procedure is considerably worse than iDP-SignRP in performance. This is because, by the binning step in OPORP, the average scale of each projected value becomes much smaller. This implies that in Algorithm~\ref{alg:noise-indicator}, the magnitude of $z$ would be much smaller, so a lot more projected values will require perturbation, which leads to a utility loss. This illustrates the superiority of SignRP under iDP: since each RP aggregates the whole data vector, SignRP is more robust to a small change in the data. Hence, less noise is needed.

\subsection{Empirical Results on iDP}

\begin{figure}[h]

\vspace{-0.2in}

\centering

    \mbox{
    \includegraphics[width=2.2in]{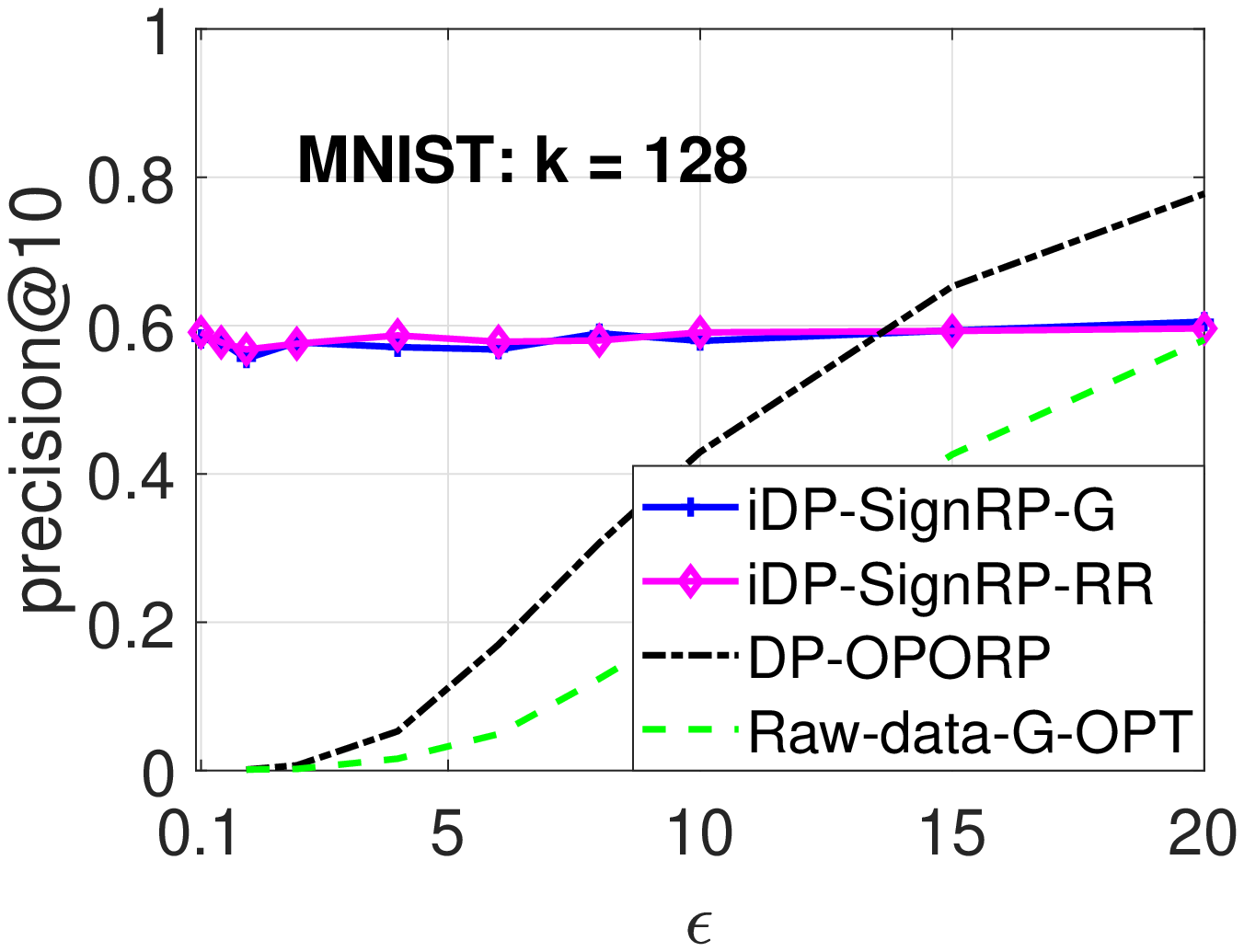} \hspace{-0.15in}
    \includegraphics[width=2.2in]{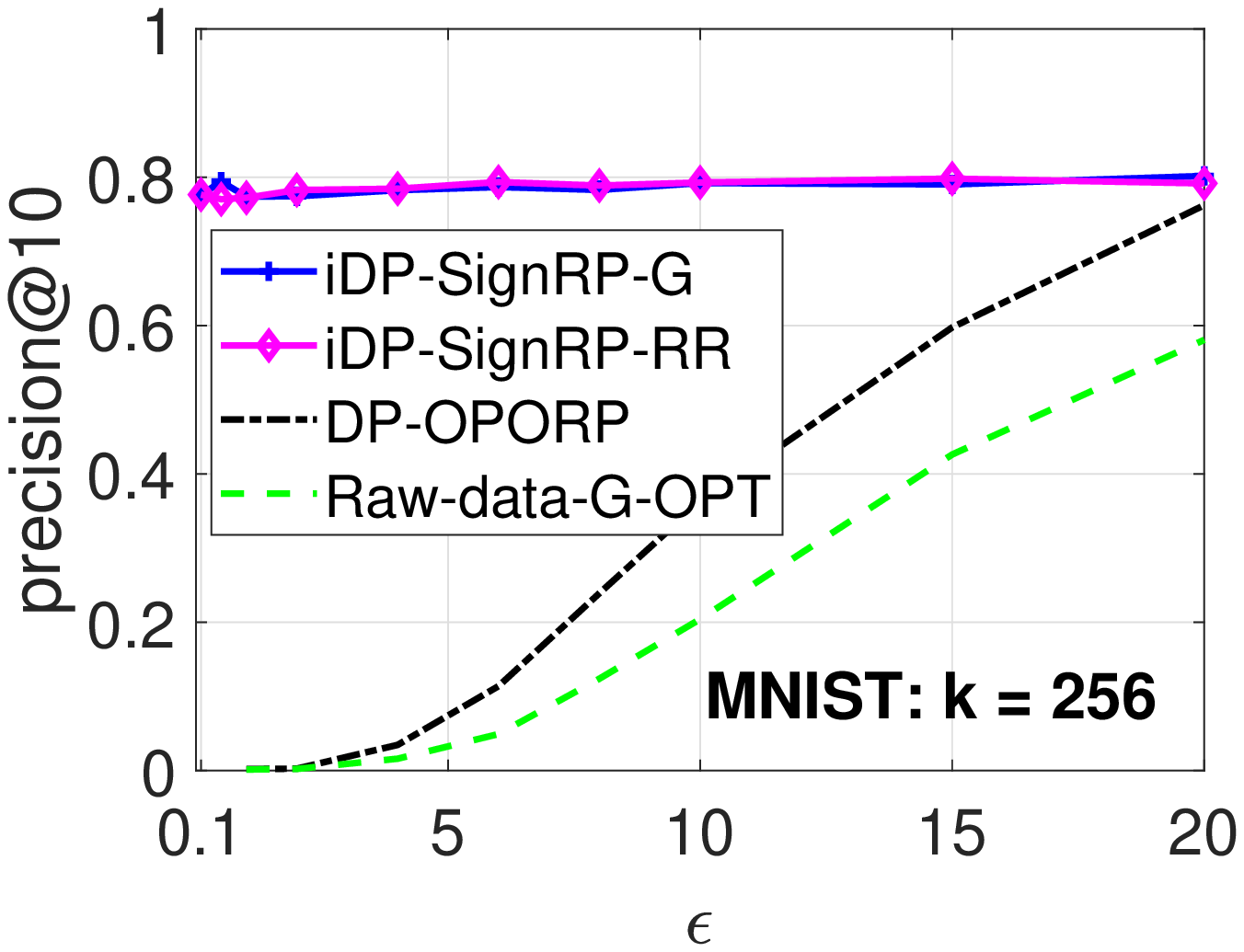}
    \hspace{-0.15in}
    \includegraphics[width=2.2in]{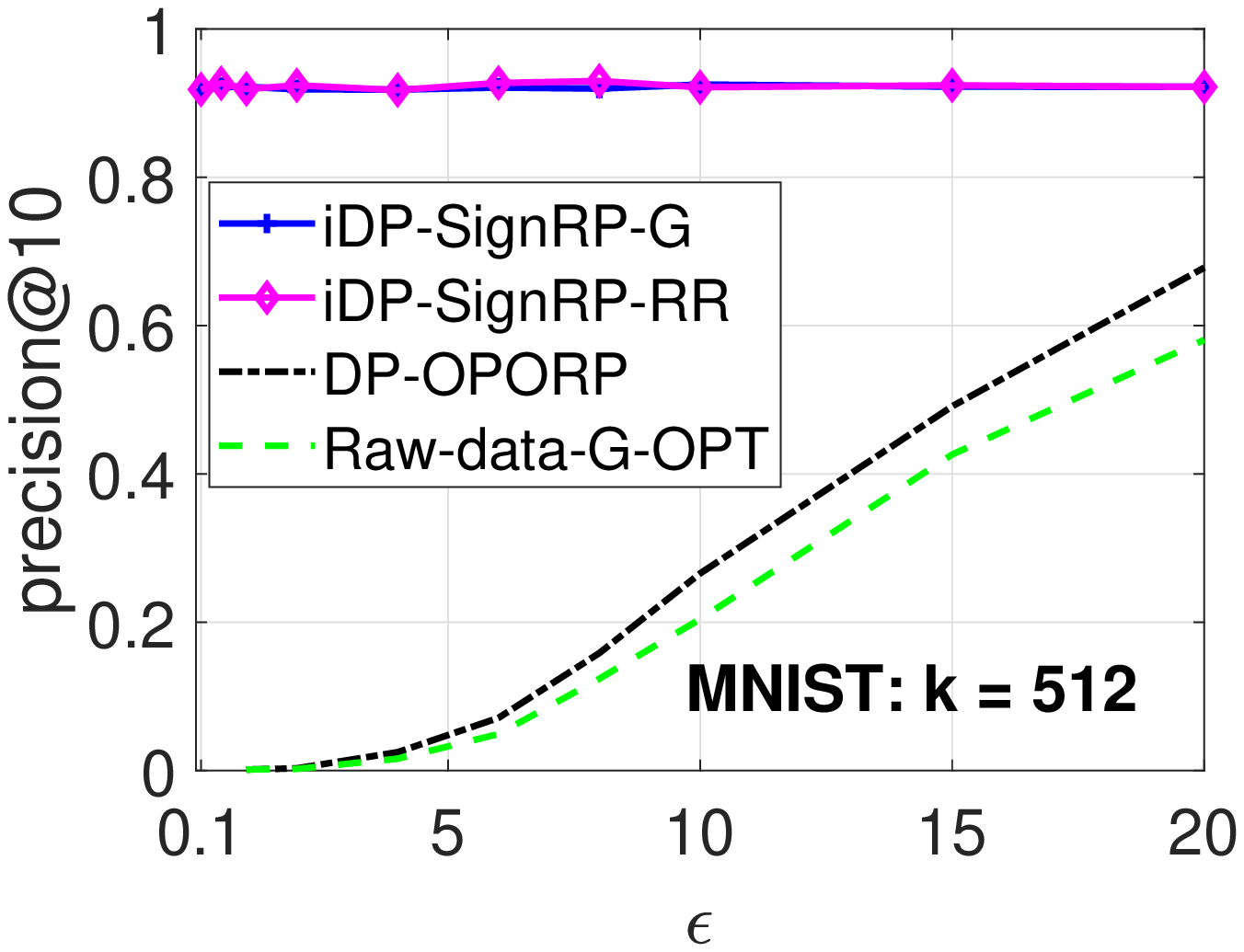}
    }

    \mbox{
    \includegraphics[width=2.2in]{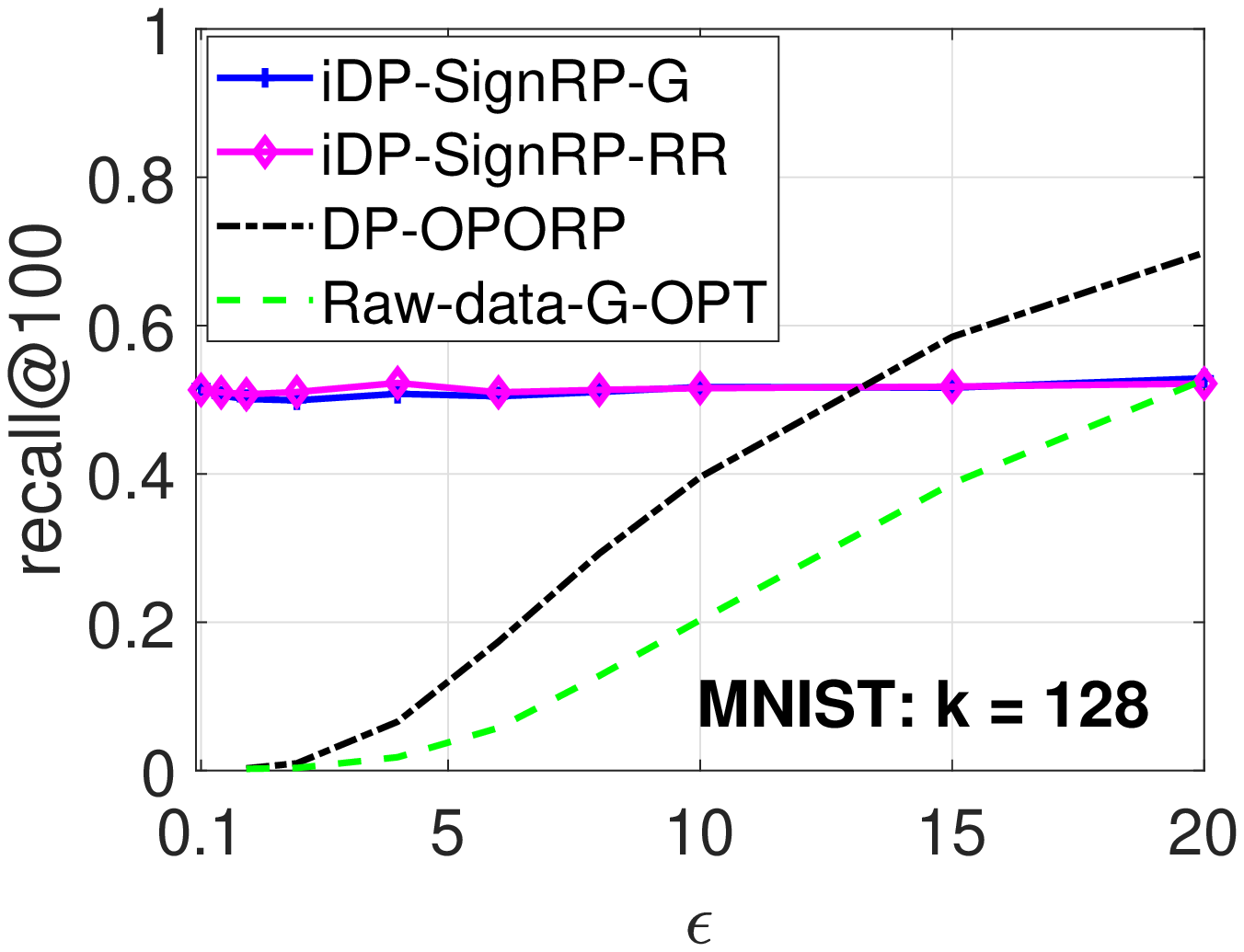} \hspace{-0.15in}
    \includegraphics[width=2.2in]{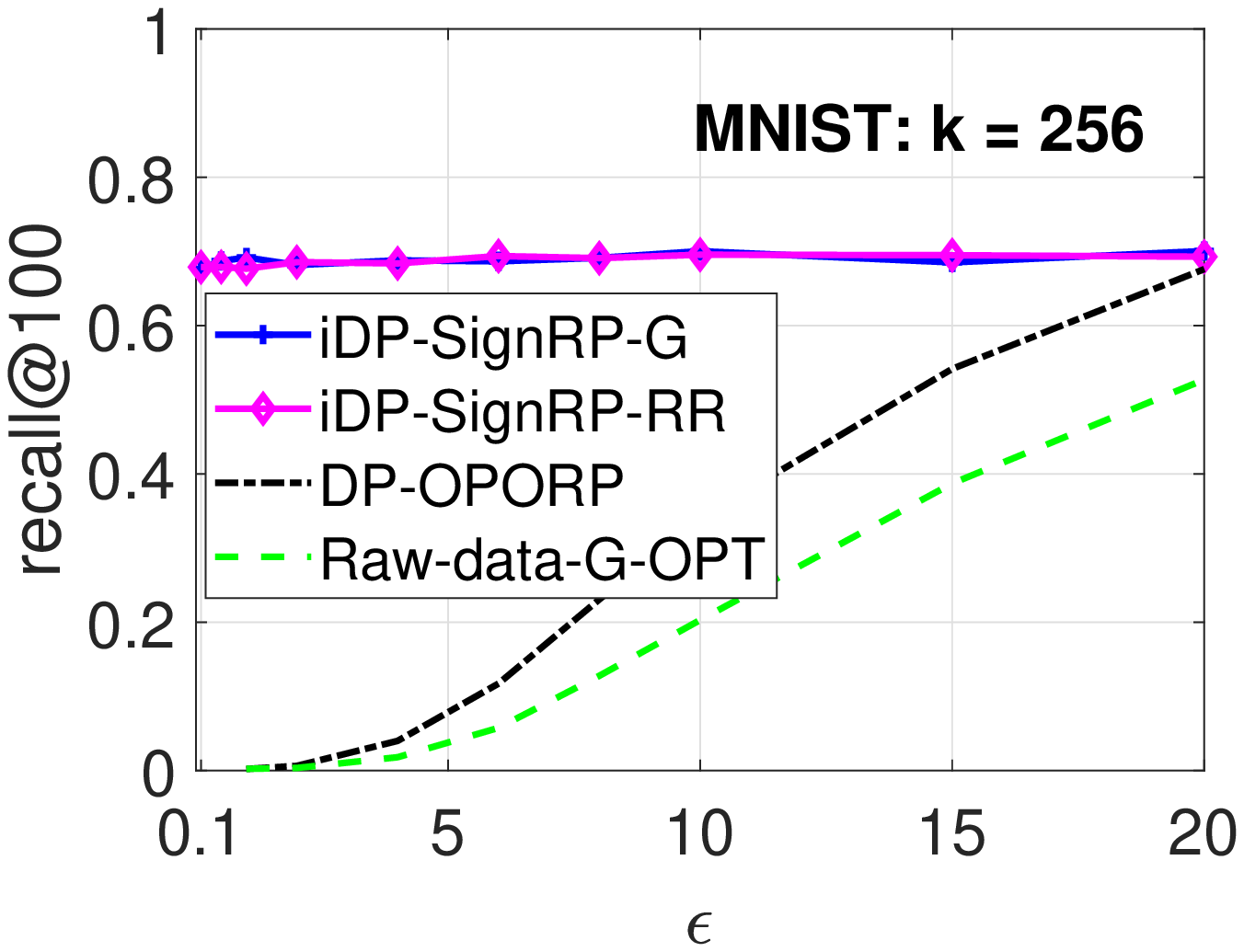}
    \hspace{-0.15in}
    \includegraphics[width=2.2in]{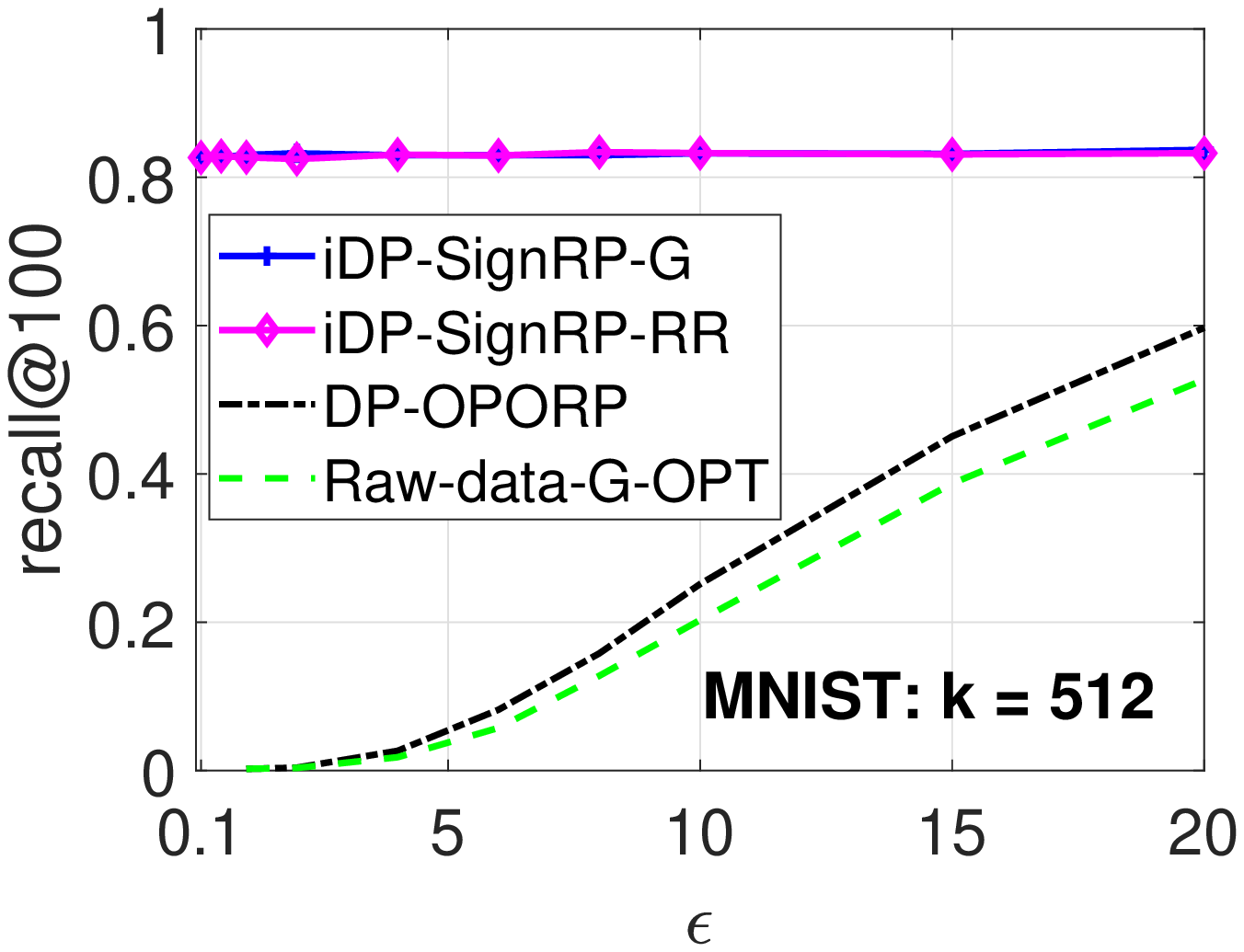}
    }

\vspace{-0.15in}

\caption{Retrieval on MNIST with iDP-SignRP, $\beta=1$, $\delta=10^{-6}$.}
\label{fig:MNIST_vs_eps_iDP}
\end{figure}

\begin{figure}[h]

\vspace{-0.15in}

\centering

    \mbox{
    \includegraphics[width=2.2in]{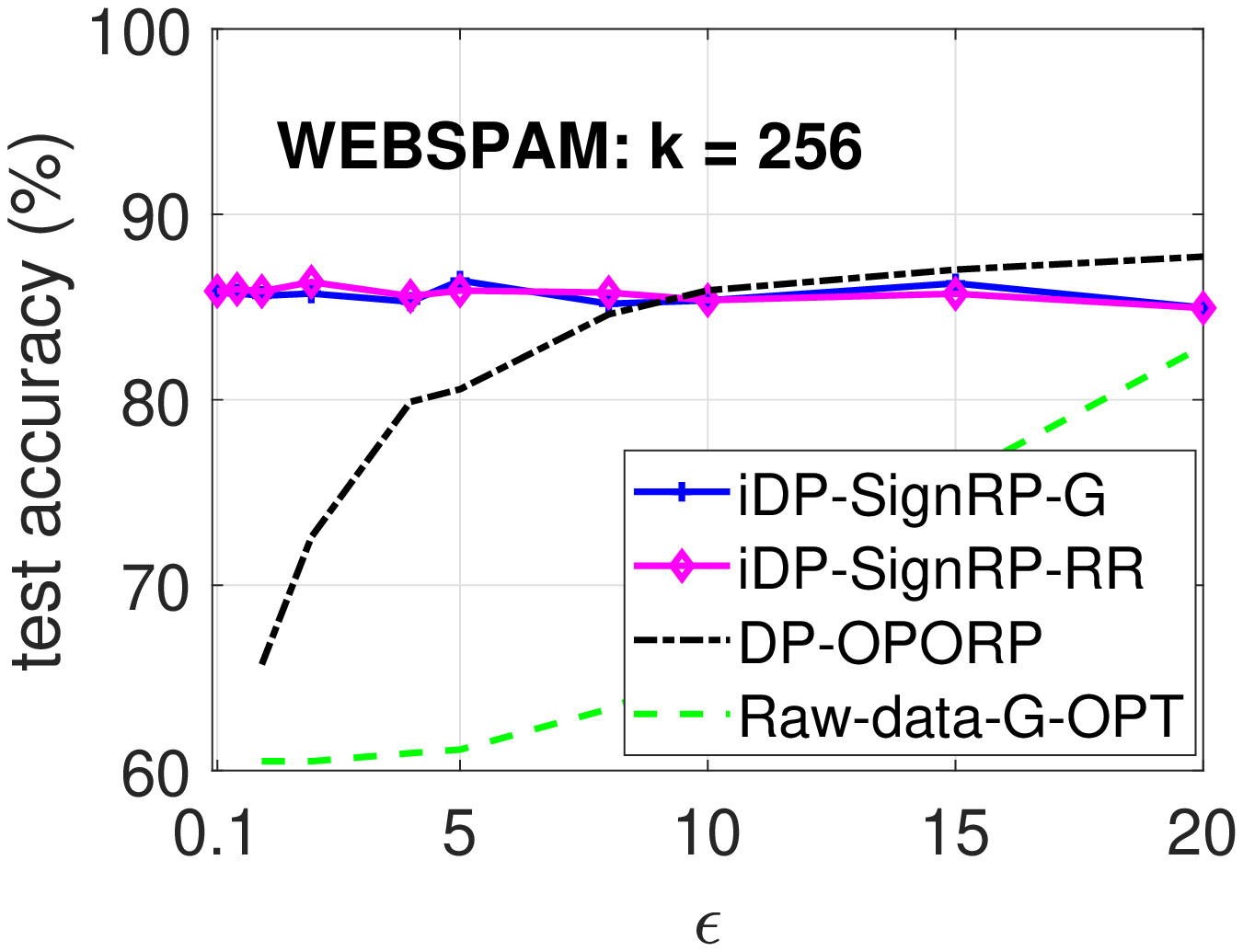} \hspace{-0.15in}
    \includegraphics[width=2.2in]{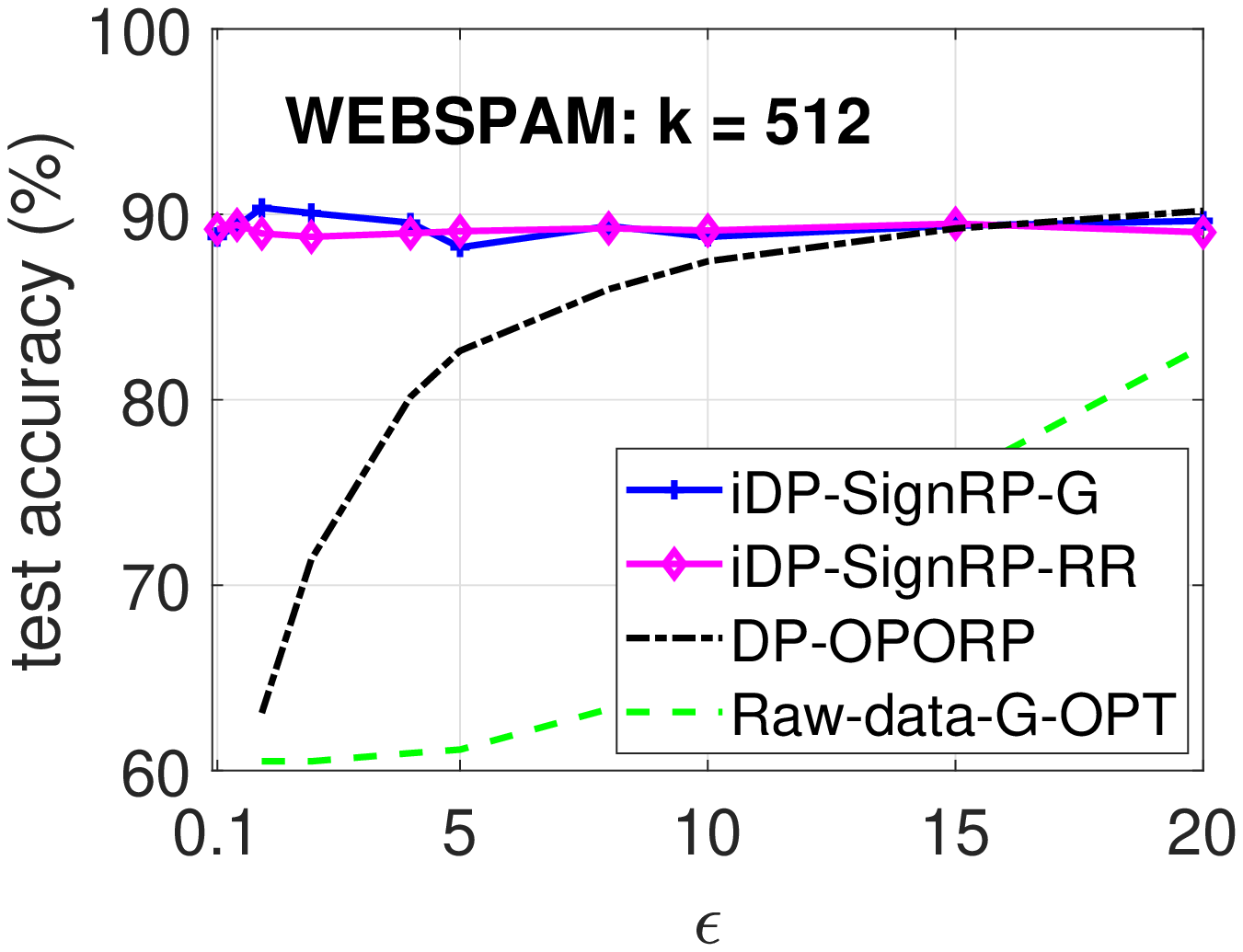} \hspace{-0.15in}
    \includegraphics[width=2.2in]{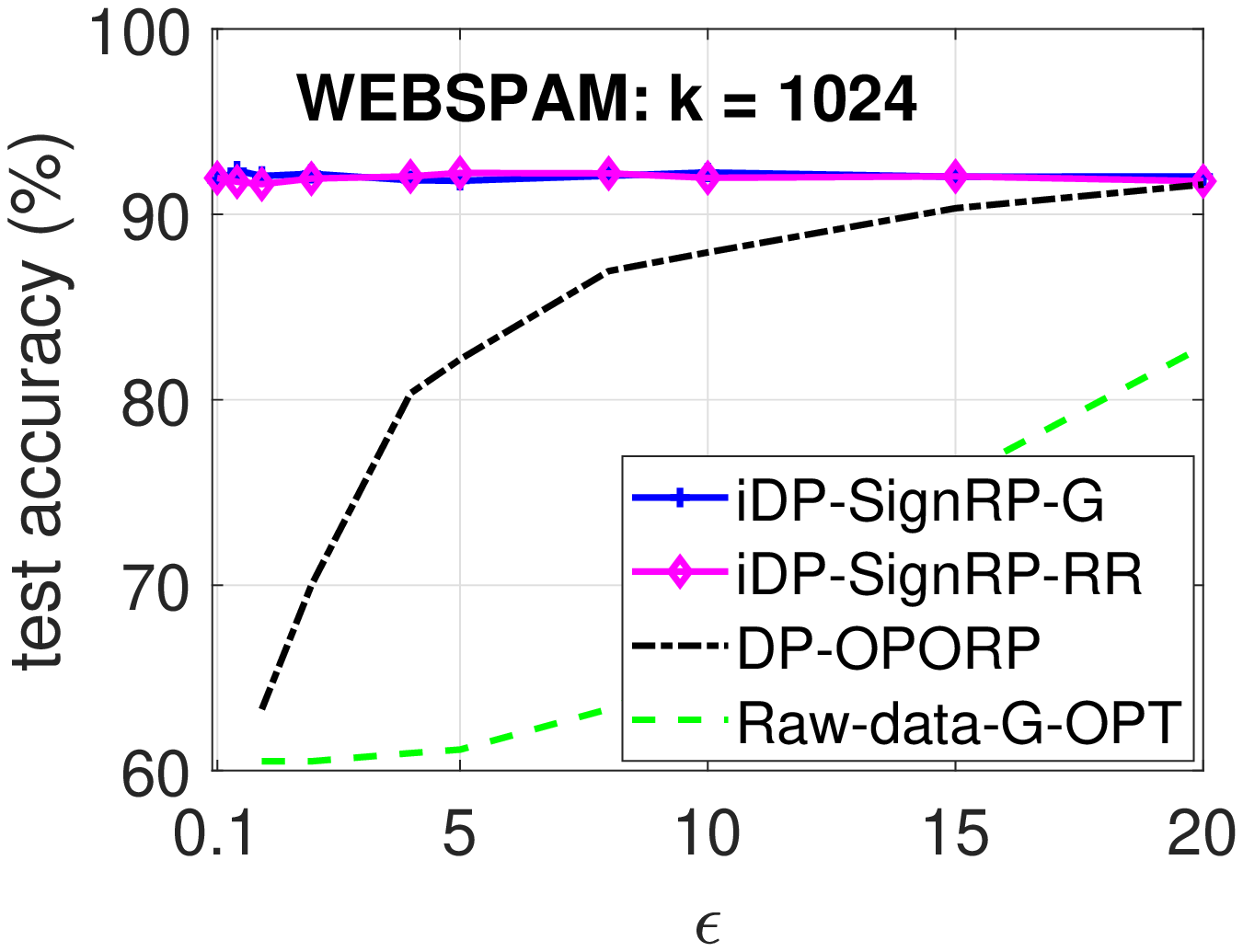}
    }

\vspace{-0.15in}

\caption{SVM on WEBSPAM with iDP-SignRP, $\beta=1$, $\delta=10^{-6}$.}
\label{fig:SVM_vs_eps_iDP}%\vspace{-0.1in}
\end{figure}

%\vspace{0.2in}

To demonstrate the empirical gain in utility of iDP-SignRP, we conduct the same set of experiments as in Section~\ref{sec:experiment}. Figure~\ref{fig:MNIST_vs_eps_iDP} reports the precision and recall on MNIST, and Figure~\ref{fig:SVM_vs_eps_iDP} presents the SVM test accuracy on WEBSPAM. As we can see, iDP-SignRP achieves very high utility even when $\epsilon<0.1$. We see that the curves of iDP-SignRP are almost flat. This is because only a small fraction of projected values are perturbed, so the untouched projected values already provide rich information for search and classification. In other words, the experimental results illustrate that the SignRP itself is already very strong in protecting the individual differential privacy. In other words, SignRP itself is already a strong method to protect the privacy of each specific dataset with respect to the individual DP, e.g., in non-interactive data publishing tasks.

% \newpage\clearpage

\section{Conclusion} \label{sec:conclusion}
\vspace{-0.05in}

The  concept of differential privacy (DP) has been widely accepted as an elegant mathematical framework for protecting data privacy. In practice, however, deploying DP algorithms often leads to considerably worse performance (utility). For example, if we directly add noise to high-dimensional data vectors, the required noise level (i.e., the variance of the noise random variables) is often quite high which can severely degrade the performance of the subsequent tasks. Furthermore, inserting noise to the original data completely destroys data sparsity. For example, in text data or data for commercial advertising CTR models, for each data vector, typically only a very small fraction of the coordinates are non-zero. Adding noise to the original dataset can easily generate dense vectors in millions or even billions dimensions. Therefore, DP algorithms based on random projections (RP), which is an effective tool for dimension reduction, can be desirable. This is our motivation.

\vspace{0.1in}

\noindent Our key contribution is the observation that the signs of the randomly projected data vectors can be very stable when the original data vectors are perturbed. That is, we let $x = \sum_{i=1}^p u_i w_i$, where $u$ is the original $p$-dimensional data vector and $w$ is a projection vector. The sign $sign(x) = sign\left(\sum_{i=1}^p u_i w_i\right)$ can be changed if $u$ is modified to be $u^\prime$, with a probability which depends on the choice of the distribution of $w$. Our study reveals that such (sign-flipping) probability can be quite low (e.g., $<0.1$), under the common definition of ``neighbors'' in DP. This is the basis for our work.

\vspace{0.1in}

\noindent In this study, we have developed two lines of strategies for taking advantage of this sign-flipping probability. By leveraging the idea of smooth sensitivity, we propose a ``smooth flipping'' strategy which satisfies the rigorous definition of DP. The algorithm we recommend is named \textbf{DP-SignOPORP} by using a variant of the count-sketch for generating the projection vectors. In a recent work, \citet{li2023oporp} proposed ``one-permutation + one random projection'' (OPORP) which improves the original count-sketch in two aspects: (i) fixed-length binning, and (ii) normalization. In this study, we have observed that DP-SignOPORP noticeably outperforms other DP-RP algorithms.

\vspace{0.1in}

\noindent We have also exploited the other direction for utilizing the sign-flipping probability of random projections, by leveraging the concept of ``individual differential privacy'' (iDP) and ``local sensitivity''. The proposed  ``iDP-SignRP'' algorithms perform remarkably well, in that they are able to achieve excellent utility even at very small $\epsilon$ values (e.g., $\epsilon<0.5$). Although iDP does not strictly satisfy  DP, it is anticipated that iDP might be still useful for certain applications such as releasing datasets.

\vspace{0.1in}
\noindent The family of DP-RP and DP-SignRP algorithms can be conveniently used in training AI models for protecting the privacy of input features. A notable use case would be the embedding-based retrieval (EBR), which is nowadays widely adopted in commercial applications. Embeddings are trained/stored and then might be shared across different countries/companies/business units. For example, it is common that embeddings trained from one model might be used as the input feature data for other models. Understandably, privacy alarms might be  triggered in those applications.

%\vspace{0.6in}

%\newpage\clearpage

\section*{Acknowledgement}

The authors would like to thank colleagues from Microsoft Research: Janardhan Kulkarni, Yin Tat Lee, Sergey Yekhanin, for helpful discussions.

\bibliography{refs_scholar}
\bibliographystyle{plainnat}

%\newpage

% \noindent\textbf{\large Appendix}

\newpage

\appendix

\section{Deferred Proofs}

\subsection{Proof of Lemma~\ref{lemma:half-normal-tail}}

\noindent \textbf{Lemma 3.6} (Half-normal tail bound).
%\begin{lemma}[Half-normal tail bound]
Let $X_1,...,X_n$ be iid $N(0,1)$ random variables and let $Y_i=|X_i|$, $\forall i$. Denote $Z=\sum_{i=1}^n Y_i$. Then for any $t>0$, it holds that
\begin{align*}
    Pr(Z\geq \sqrt{2n^2\log 2+2nt})\leq \exp(-t).
\end{align*}
%\end{lemma}
\begin{proof}
By the Markov inequality, we write (with some $\zeta>0$)
\begin{align*}
    Pr(Z\geq t)=Pr(\exp(\zeta\sum_{i=1}^n Y_i)\geq \exp(\zeta t))\leq \frac{\mathbb E[\exp(\zeta\sum_{i=1}^n Y_i]}{\exp(\zeta t)}.
\end{align*}
By the independence of $Y_i$ and the moment generating function of half-normal distribution, we have
\begin{align*}
    \mathbb E[\exp(\zeta\sum_{i=1}^n Y_i]=\prod_{i=1}^n \mathbb E[\exp(\zeta Y_i)]=\exp(\frac{n\zeta^2}{2})(2\Phi(\zeta))^n&=\exp(\frac{n\zeta^2}{2}+n\log 2\Phi(\zeta)) \\
    &\leq \exp(\frac{n\zeta^2}{2}+n\log 2).
\end{align*}
Hence, we have
\begin{align*}
    Pr(Z\geq t)&\leq \exp(\frac{n\zeta^2}{2}-t\zeta+n\log 2),
\end{align*}
which is minimized at $\zeta=\frac{t}{n}$, leading to
\begin{align*}
    Pr(Z\geq t)&\leq \exp(-\frac{t^2}{2n}+n\log 2).
\end{align*}
Taking $t=\sqrt{2n^2\log 2+2nt}$ completes the proof.
\end{proof}

\subsection{Proof of Lemma~\ref{lemma:conditional-gaussian} }  \label{app-sec:proof}

\begin{proof}

The bivariate normal density function is
\begin{align*}
    f(x,y)&=\frac{1}{2\pi\sigma_x\sigma_y\sqrt{1-\rho^2}}\exp\left(-\frac{\frac{x^2}{\sigma_x^2}-\frac{2\rho xy}{\sigma_x\sigma_y}+\frac{y^2}{\sigma_y^2}}{2(1-\rho^2)} \right)\\
    &=\frac{1}{2\pi\sigma_x\sigma_y\sqrt{1-\rho^2}} \exp\left(-\frac{x^2}{2\sigma_x^2}\right)\exp\left( -\frac{(\frac{y}{\sigma_y}-\rho\frac{x}{\sigma_x})^2}{2(1-\rho^2)}\right).
\end{align*}
Therefore, we have $\mathbb E\left[|X|\big| |X|>|Y|\right]=\frac{A}{P}$, with
\begin{align*}
    A&=\int_{-\infty}^\infty \frac{|x|}{\sqrt{2\pi}\sigma_x\sqrt{1-\rho^2}} \exp\left(-\frac{x^2}{2\sigma_x^2}\right)dx\int_{-|x|}^{|x|}\frac{1}{\sqrt{2\pi}\sigma_y}\exp\left( -\frac{(\frac{y}{\sigma_y}-\rho\frac{x}{\sigma_x})^2}{2(1-\rho^2)}\right) dy, \\
    P&=\int_{-\infty}^\infty \frac{1}{\sqrt{2\pi}\sigma_x\sqrt{1-\rho^2}} \exp\left(-\frac{x^2}{2\sigma_x^2}\right)dx\int_{-|x|}^{|x|}\frac{1}{\sqrt{2\pi}\sigma_y}\exp\left( -\frac{(\frac{y}{\sigma_y}-\rho\frac{x}{\sigma_x})^2}{2(1-\rho^2)}\right) dy.
\end{align*}
Note that $P=Pr(|X|>|Y|)$ in the first statement of the theorem. Our calculation will use the following two identities involving the Gaussian functions~\citep{owen1980table}:
\begin{align}
    &\int_0^\infty \phi(ax)\Phi(bx)dx=\frac{1}{2\pi |a|}\left( \frac{\pi}{2} +\tan^{-1}\left( \frac{b}{|a|} \right)\right),  \label{eqn:gp} \\
    &\int_0^\infty x\phi(ax)\Phi(bx)dx=\frac{1}{2\sqrt{2\pi}}\left( 1 +\frac{b}{\sqrt{1+b^2}}\right),  \label{eqn:gx}
\end{align}
where $\phi(x)$ and $\Phi(x)$ are the pdf and cdf of the standard Gaussian distribution.

With a proper change of random variables, we can compute $A$ as
\begin{align*}
    A=&\int_{-\infty}^\infty \frac{|x|}{\sqrt{2\pi}\sigma_x\sqrt{1-\rho^2}} \exp\left(-\frac{x^2}{2\sigma_x^2}\right)dx\int_{\frac{\frac{-|x|}{\sigma_y}-\rho\frac{x}{\sigma_x}}{\sqrt{1-\rho^2}}}^{\frac{\frac{|x|}{\sigma_y}-\rho\frac{x}{\sigma_x}}{\sqrt{1-\rho^2}}} \sqrt{1-\rho^2}\frac{1}{\sqrt{2\pi}}e^{-s^2}ds \\
    =&\int_{-\infty}^\infty \frac{|x|}{\sqrt{2\pi}\sigma_x} \exp\left(-\frac{x^2}{2\sigma_x^2}\right)\left[ \Phi\left( \frac{\frac{|x|}{\sigma_y}-\rho\frac{x}{\sigma_x}}{\sqrt{1-\rho^2}}\right) - \Phi\left( \frac{\frac{-|x|}{\sigma_y}-\rho\frac{x}{\sigma_x}}{\sqrt{1-\rho^2}}\right) \right]dx \\
    =&\int_{-\infty}^\infty \frac{\sigma_x|t|}{\sqrt{2\pi}}\exp\left(-\frac{t^2}{2}\right)\left[ \Phi\left( \frac{\frac{\sigma_x}{\sigma_y}|t|-\rho t}{\sqrt{1-\rho^2}}\right) - \Phi\left( -\frac{\frac{\sigma_x}{\sigma_y}|t|-\rho t}{\sqrt{1-\rho^2}}\right) \right]dt\\
    \eqdef& A_1-A_2.
\end{align*}
For the first term we have
\begin{align*}
    A_1&=\sigma_x\left[ \int_{0}^\infty \frac{t}{\sqrt{2\pi}}\exp\left(-\frac{t^2}{2}\right)\Phi\left( \frac{\frac{\sigma_x}{\sigma_y}-\rho }{\sqrt{1-\rho^2}}t\right)dt + \int_{-\infty}^0 \frac{-t}{\sqrt{2\pi}}\exp\left(-\frac{t^2}{2}\right)\Phi\left( -\frac{\frac{\sigma_x}{\sigma_y}+\rho }{\sqrt{1-\rho^2}}t \right)dt \right] \\
    &=\sigma_x\left[ \int_{0}^\infty \frac{t}{\sqrt{2\pi}}\exp\left(-\frac{t^2}{2}\right)\Phi\left( \frac{\frac{\sigma_x}{\sigma_y}-\rho }{\sqrt{1-\rho^2}}t\right)dt + \int_0^\infty \frac{s}{\sqrt{2\pi}}\exp\left(-\frac{s^2}{2}\right)\Phi\left( \frac{\frac{\sigma_x}{\sigma_y}+\rho }{\sqrt{1-\rho^2}}s \right)ds \right] \\
    &=\sigma_x\left[ \frac{1}{2\sqrt{2\pi}}\left( 1+\frac{\frac{\frac{\sigma_x}{\sigma_y}-\rho }{\sqrt{1-\rho^2}}}{\sqrt{1+\frac{(\frac{\sigma_x}{\sigma_y}-\rho)^2 }{1-\rho^2}}} \right) +\frac{1}{2\sqrt{2\pi}} \left( 1+\frac{\frac{\frac{\sigma_x}{\sigma_y}+\rho }{\sqrt{1-\rho^2}}}{\sqrt{1+\frac{(\frac{\sigma_x}{\sigma_y}+\rho)^2 }{1-\rho^2}}} \right) \right]\\
    &=\sigma_x\left[ \frac{1}{\sqrt{2\pi}} + \frac{1}{2\sqrt{2\pi}}\left( \frac{r-\rho}{\sqrt{1+r^2-2r\rho}}+\frac{r+\rho}{\sqrt{1+r^2+2r\rho}} \right) \right],
\end{align*}
where we denote $r=\frac{\sigma_x}{\sigma_y}$ and use (\ref{eqn:gx}). Similarly, we have that
\begin{align*}
    A_2&=\sigma_x\left[ \int_{0}^\infty \frac{t}{\sqrt{2\pi}}\exp\left(-\frac{t^2}{2}\right)\Phi\left( -\frac{r+\rho }{\sqrt{1-\rho^2}}t\right)dt + \int_{-\infty}^0 \frac{-t}{\sqrt{2\pi}}\exp\left(-\frac{t^2}{2}\right)\Phi\left( \frac{r-\rho }{\sqrt{1-\rho^2}}t \right)dt \right] \\
    &=\sigma_x\left[ \int_{0}^\infty \frac{t}{\sqrt{2\pi}}\exp\left(-\frac{t^2}{2}\right)\Phi\left( -\frac{r+\rho }{\sqrt{1-\rho^2}}t\right)dt + \int_0^\infty \frac{s}{\sqrt{2\pi}}\exp\left(-\frac{s^2}{2}\right)\Phi\left( -\frac{r-\rho }{\sqrt{1-\rho^2}}s \right)ds \right] \\
    &=\sigma_x\left[ \frac{1}{\sqrt{2\pi}} - \frac{1}{2\sqrt{2\pi}}\left( \frac{r-\rho}{\sqrt{1+r^2-2r\rho}}+\frac{r+\rho}{\sqrt{1+r^2+2r\rho}} \right) \right].
\end{align*}
Therefore, we obtain
\begin{align}
    A(\rho,r) = A_1-A_2 = \frac{\sigma_x}{\sqrt{2\pi}}\left( \frac{r-\rho}{\sqrt{1+r^2-2r\rho}}+\frac{r+\rho}{\sqrt{1+r^2+2r\rho}} \right).  \label{eqn:A}
\end{align}

To compute $P$, by doing a similar change of variables, we have
\begin{align*}
    P=&\int_{-\infty}^\infty \frac{1}{\sqrt{2\pi}}\exp\left(-\frac{t^2}{2}\right)\left[ \Phi\left( \frac{r|t|-\rho t}{\sqrt{1-\rho^2}}\right) - \Phi\left( -\frac{r|t|-\rho t}{\sqrt{1-\rho^2}}\right) \right]dt\\
    \eqdef& P_1-P_2.
\end{align*}
Using (\ref{eqn:gp}), we obtain
\begin{align*}
    P_1&= \int_{0}^\infty \frac{1}{\sqrt{2\pi}}\exp\left(-\frac{t^2}{2}\right)\Phi\left( \frac{r-\rho }{\sqrt{1-\rho^2}}t\right)dt + \int_{-\infty}^0 \frac{1}{\sqrt{2\pi}}\exp\left(-\frac{t^2}{2}\right)\Phi\left( -\frac{r+\rho }{\sqrt{1-\rho^2}}t \right)dt  \\
    &=\int_{0}^\infty \frac{1}{\sqrt{2\pi}}\exp\left(-\frac{t^2}{2}\right)\Phi\left( \frac{r-\rho }{\sqrt{1-\rho^2}}t\right)dt + \int_0^\infty \frac{1}{\sqrt{2\pi}}\exp\left(-\frac{s^2}{2}\right)\Phi\left( \frac{r+\rho }{\sqrt{1-\rho^2}}s \right)ds  \\
    &=\frac{1}{2\pi}\left(  \frac{\pi}{2} + \tan^{-1}\left( \frac{r-\rho }{\sqrt{1-\rho^2}} \right)\right) + \frac{1}{2\pi}\left(  \frac{\pi}{2} + \tan^{-1}\left( \frac{r+\rho }{\sqrt{1-\rho^2}} \right)\right)\\
    &=\frac{1}{2}+\frac{1}{2\pi}\left[ \tan^{-1}\left( \frac{r-\rho }{\sqrt{1-\rho^2}} \right)+\tan^{-1}\left( \frac{r+\rho }{\sqrt{1-\rho^2}} \right) \right],\\
    P_2&=\frac{1}{2}-\frac{1}{2\pi}\left[ \tan^{-1}\left( \frac{r-\rho }{\sqrt{1-\rho^2}} \right)+\tan^{-1}\left( \frac{r+\rho }{\sqrt{1-\rho^2}} \right) \right],
\end{align*}
which leads to
\begin{align}
    P(\rho,r)=P_1-P_2=&\frac{1}{\pi}\left[ \tan^{-1}\left( \frac{r-\rho }{\sqrt{1-\rho^2}} \right)+\tan^{-1}\left( \frac{r+\rho }{\sqrt{1-\rho^2}} \right) \right] \label{eqn:P}%\\\notag
 %   =&\frac{1}{\pi}\left[ \tan^{-1}\left( \frac{2r\sqrt{1-\rho^2} }{{1-r^2}} \right)\right]
\end{align}
Therefore, we know that
\begin{align*}
    \mathbb E\left[|X|\big| |X|>|Y|\right]=\frac{A(\rho,r)}{P(\rho,r)}=\sigma_x\sqrt{\frac{\pi}{2}}\cdot \frac{\frac{r-\rho}{\sqrt{1+r^2-2r\rho}}+\frac{r+\rho}{\sqrt{1+r^2+2r\rho}}}{\tan^{-1}\left( \frac{r-\rho }{\sqrt{1-\rho^2}} \right)+\tan^{-1}\left( \frac{r+\rho }{\sqrt{1-\rho^2}} \right)},
\end{align*}
with $r=\sigma_x/\sigma_y$. We now investigate the derivative of $P$. By some algebra, we can show that
% \begin{align*}
%     &\frac{\partial A(\rho,r)}{\partial\rho}=\frac{(r\rho-1)(1+r^2+2r\rho)^{3/2} + (r\rho+1)(1+r^2-2r\rho)^{3/2}}{(1+r^2+2r\rho)^{3/2}(1+r^2-2r\rho)^{3/2}},\\
%     &\frac{\partial P(\rho,r)}{\partial\rho}=\frac{2r\rho(r^2-1)}{(1+r^2-2r\rho)(1+r^2+2r\rho)\sqrt{1-\rho^2}}.
% \end{align*}
\begin{align*}
    \frac{\partial P(\rho,r)}{\partial\rho}=\frac{2r\rho(r^2-1)}{(1+r^2-2r\rho)(1+r^2+2r\rho)\sqrt{1-\rho^2}}.
\end{align*}
When $0<r\leq 1$, $\frac{\partial P(\rho,r)}{\partial\rho}\geq 0$ when $\rho\leq 0$ and $\frac{\partial P(\rho,r)}{\partial\rho}\leq 0$ when $\rho> 0$. Therefore, $\max_\rho P(\rho,r)=P(0,r)=\frac{2}{\pi}\tan^{-1}(r)$.

\vspace{0.1in}
\noindent\textbf{Tail bound.} By our previous calculations, the conditional distribution of $X$ given $|X|>|Y|$ is
\begin{align*}
    f(x\big| |X|>|Y|)=\frac{\frac{1}{\sqrt{2\pi}}\exp\left(-\frac{x^2}{2}\right)\left[ \Phi\left( \frac{r|x|-\rho x}{\sqrt{1-\rho^2}}\right) - \Phi\left( -\frac{r|x|-\rho x}{\sqrt{1-\rho^2}}\right) \right]}{P},\quad x\in \mathbb R,
\end{align*}
with $P=Pr(|X|>|Y|)$ in (\ref{eqn:P}) the normalizing constant to make the integral equal to 1.

The conditional tail probability can be computed as follows. For some $t>0$, by symmetry,
\begin{align*}
    &Pr(|X|>t, |X|>|Y|)\\
    =&2\int_{t}^\infty \frac{1}{\sqrt{2\pi}\sigma_x}\exp\left(-\frac{x^2}{2\sigma_x^2}\right)\left[ \Phi\left( \frac{r|x|-\rho x}{\sqrt{1-\rho^2}}\right) - \Phi\left( -\frac{r|x|-\rho x}{\sqrt{1-\rho^2}}\right) \right]dx \\
    =&2\int_{\frac{t}{\sigma_x}}^\infty \frac{1}{\sqrt{2\pi}}\exp\left(-\frac{x^2}{2}\right)\left[ \Phi\left( \frac{r|x|-\rho x}{\sqrt{1-\rho^2}}\right) - \Phi\left( -\frac{r|x|-\rho x}{\sqrt{1-\rho^2}}\right) \right]dx\\
    \eqdef& 2(\tilde P_1- \tilde P_2).
\end{align*}
For $\tilde P_1$, using polar coordinates we have
\begin{align*}
    \tilde P_1=&\frac{1}{2\pi}\int_{\frac{t}{\sigma_x}}^\infty e^{-\frac{x^2}{2}} dx \int_{-\infty}^{\frac{r-\rho}{\sqrt{1-\rho^2}}x} e^{-\frac{y^2}{2}}dy\\
    =&\frac{1}{2\pi}\int_{-\frac{\pi}{2}}^{\tan^{-1}( \frac{r-\rho}{\sqrt{1-\rho^2}})}d\theta \int_{\frac{t}{\sigma_x\cos(\theta)}}^\infty e^{-\frac{r^2}{2}}r dr \\
    =&\frac{1}{2\pi} \int_{-\frac{\pi}{2}}^{\tan^{-1}( \frac{r-\rho}{\sqrt{1-\rho^2}})} \exp\left(-\frac{t^2}{2\sigma_x^2\cos^2(\theta)}\right) d\theta.
\end{align*}
Similarly,
\begin{align*}
    \tilde P_2&=\frac{1}{2\pi} \int_{-\frac{\pi}{2}}^{\tan^{-1}( -\frac{r+\rho}{\sqrt{1-\rho^2}})} \exp\left(-\frac{t^2}{2\sigma_x^2\cos^2(\theta)}\right) d\theta.
\end{align*}
Therefore, we obtain
\begin{align*}
    Pr(|X|>t, |X|>|Y|)&=\frac{1}{\pi}\int_{\tan^{-1}( -\frac{r+\rho}{\sqrt{1-\rho^2}})}^{\tan^{-1}( \frac{r-\rho}{\sqrt{1-\rho^2}})} \exp\left(-\frac{t^2}{2\sigma_x^2\cos^2(\theta)}\right) d\theta \\
    &=\frac{1}{\pi}\int_{-\tan^{-1}( \frac{r+\rho}{\sqrt{1-\rho^2}})}^{\tan^{-1}( \frac{r-\rho}{\sqrt{1-\rho^2}})} \exp\left(-\frac{t^2}{2\sigma_x^2\cos^2(\theta)}\right) d\theta \\
    &\leq e^{-\frac{t^2}{2\sigma_x^2}}\frac{1}{\pi}\int_{-\tan^{-1}( \frac{r+\rho}{\sqrt{1-\rho^2}})}^{\tan^{-1}( \frac{r-\rho}{\sqrt{1-\rho^2}})} d\theta,
\end{align*}
since $\cos^2(\theta)\in [0,1]$. Notice that $P$ in (\ref{eqn:P}) can be written as $P=\frac{1}{\pi}\int_{-\tan^{-1}( \frac{r+\rho}{\sqrt{1-\rho^2}})}^{\tan^{-1}( \frac{r-\rho}{\sqrt{1-\rho^2}})} d\theta$. Hence, we know that the conditional tail probability is
\begin{align*}
    Pr(|X|>t\big| |X|>|Y|) = \frac{Pr(|X|>t, |X|>|Y|)}{Pr(|X|>|Y|)}\leq \exp\left( -\frac{t^2}{2\sigma_x^2} \right),\quad \forall r>0, \rho\in (-1,1).
\end{align*}
At the boundaries $\rho=1$, $\rho=-1$, one can verify $Pr(|X|>|Y|)=0$. This concludes the proof.
\end{proof}

\end{document}